\newcommand*\samethanks[1][\value{footnote}]{\footnotemark[#1]}
\newcommand{\F}{\mathcal{F}}
\newcommand{\E}{\mathbf{E}}
\newcommand{\std}[1]{#1-$\mathcal{D}_{\mathtt{sp}}$\xspace}
\newcommand{\lstd}{\std{$\lambda$}}
\newcommand{\stdmath}[1]{#1-\mathcal{D}_{\mathtt{sp}}}
\newcommand{\lstdmath}{\stdmath{\lambda}}
\newcommand{\spandist}{\ensuremath{\mathcal{D}_{\mathtt{sp}}}\xspace}
\declaretheorem{theorem}
\declaretheorem[sibling=theorem]{claim}
\declaretheorem[sibling=theorem]{lemma}
\declaretheorem[sibling=theorem]{corollary}
\newtheorem{conjecture}{Conjecture}
\newtheorem{observation}[theorem]{Observation}
\newtheorem{example}[theorem]{Example}
\theoremstyle{definition}
\newtheorem{definition}{Definition}
\newtheorem{subclaim}{Subclaim}[claim]
\crefname{subclaim}{subclaim}{subclaims}
\Crefname{subclaim}{Subclaim}{Subclaims}
\crefname{claim}{claim}{claims}
\crefname{observation}{observation}{observations}
\Crefname{observation}{Observation}{Observations}
\renewcommand{\sp}[1]{\texttt{sp}(#1)}
\title{Balanced Spanning Tree Distributions Have Separation Fairness}
\author{
  Harry Chen\thanks{This work was done while the author was at Duke  University.} \\
  MIT \\
  \texttt{harryc60@mit.edu} \\
  \and
 Kamesh Munagala \thanks{Supported by NSF grant IIS-2402823.} \qquad Govind S. Sankar\samethanks[2] \\
  Duke University\\
  \texttt{\{munagala,govind.subash.sankar\}@duke.edu} 
}
\date{}
\begin{document}
\maketitle
\renewcommand{\F}{\mathcal{F}}
\newcommand{\R}{\mathbb{R}}
\newcommand{\linv}{L^+}
\newcommand{\linvsq}{L^{2+}}
\newcommand{\dbias}{\mathcal{D}_\mathrm{biased}}
\newcommand{\dspan}{\mathcal{D}_\mathrm{uniform}}
\newcommand{\psep}{\mathrm{Pr}_{\mathrm{sep}}}
\newcommand{\lr}[1]{\left(#1\right)}

\newcommand{\gmn}{\boxplus_{m\times n}}
\newcommand{\dual}[1]{#1^*}
\newcommand{\dualvertex}[1]{\dual{#1}}
\newcommand{\dualgraph}[1]{\dual{#1}}
\newcommand{\outerface}{\dualvertex{v_0}}
\newcommand{\gmnd}{\dualgraph{\gmn}}

\newcommand{\dualedge}[2]{\langle\dualvertex{#1},\dualvertex{#2}\rangle}
\newcommand{\dualcorresp}[2]{#1=\mathrm{dual}(#2)}

\begin{abstract}

Sampling-based methods such as ReCom are widely used to audit redistricting plans for fairness, with the balanced spanning tree distribution playing a central role since it favors compact, contiguous, and population-balanced districts. However, whether such samples are truly representative or exhibit hidden biases remains an open question. In this work, we introduce the notion of separation fairness, which asks whether adjacent geographic units are  separated with at most a constant probability (bounded away from one) in sampled redistricting plans. Focusing on grid graphs and two-district partitions, we prove that a smooth variant of the balanced spanning tree distribution satisfies separation fairness. Our results also provide theoretical support for popular MCMC methods like ReCom, suggesting that they maintain fairness at a granular level in the sampling process.
Along the way, we prove a novel local-interchangeability lemma for 2-partitions on grids, showing that any separation of adjacent vertices can be undone via a constant-sized modification. This lemma, along with our other tools for analyzing the structure of partitions and loop-erased random walks, may be of independent interest.



\end{abstract}

\section{Introduction}

Redistricting, the process of redrawing the boundaries of electoral units, is a fundamental aspect of representative democracies.
These boundaries are typically redrawn at regular intervals to reflect changes in population demographics or in response to legislative mandates. While intended to ensure that each district contains a roughly equal number of people, this process presents an opportunity for political parties to sway the outcome of an election\footnote{Throughout this paper, we only consider first-past-the-post voting rules.} to their benefit.
Such manipulation is called gerrymandering.

Gerrymandering typically manifests in two primary forms: ``packing" and ``cracking"~\cite{StephanopoulosMcGhee2015}.
Packing concentrates voters of a particular demographic or political affiliation into a single district, minimizing their influence in  other districts. Conversely, cracking disperses a cohesive group of voters across several districts, preventing them from achieving a majority in any single district.  Both tactics undermine the principle of fair representation, effectively diluting the voting power of targeted groups. 
Gerrymandering has recently found particular interest both academically and legislatively in several states in the US \cite{McGann_Smith_Latner_Keena_2016}, leading to several high-profile trials, with varying outcomes.
Academic research has become increasingly vital in these trials, directly providing data-driven evidence used by expert witnesses in legal challenges to allegedly gerrymandered maps \cite{Mattingly2019_expert,Pegden2017_expert}.

Since the power to draw the districts often rests with incumbent legislators or appointed committees, academic research has largely focused on evaluating proposed plans for fairness and identifying gerrymandering. One significant line of work along this direction is through Monte-Carlo Markov Chains (MCMC) that produce an ensemble of redistricting plans from ``unbiased''
distributions over plans that satisfy the basic requirements enshrined in the U.S. Constitution. If the characteristics of the proposed plan fall outside the typical range observed in the ensemble of randomly generated plans, it can provide statistical evidence that the plan is an outlier and potentially the result of gerrymandering. These characteristics are either scores based on compactness of the plan (such as the Reock~\cite{reock1961note} and Polsby-Popper~\cite{polsby1991third} scores), or scores based on partisan outcomes generated by the plan (such as the efficiency gap~\cite{StephanopoulosMcGhee2015}, mean-median gap~\cite{wang2016three}, partisan symmetry~\cite{warrington2018quantifying}, and the GEO metric~\cite{campisi2022geography}), or scores based on competitiveness of the plan~\cite{deford2020computational}. Many of these measures are used in publicly available tools~\cite{GEO,princeton}.

\subsection{Redistricting Plans and the Spanning Tree Distribution}
The above sampling problem is the focus of this paper. Formally, the state is modeled as a planar graph $G(V,E)$ on precincts\footnote{The term used to refer to the geographic unit varies depending on the state and country. For example, Voting Tabulation District (VTD) is another term used in the United States.}, where there is an edge between two nodes (or precincts) if they are adjacent. A node is weighted by its population. A redistricting plan is a partitioning of this graph into $k$ regions or {\em districts}, each of which elects one representative.
The basic requirements on a redistricting plan are threefold --- {\em contiguity}, meaning each district forms a connected component; {\em balance}, meaning the total weight of nodes in each district is approximately equal; and {\em compactness}, meaning a region is not too oblong. The last property is measured in several ways, one popular method being to minimize the total number of edges cut by the district boundaries.

The challenge now is to define a distribution over redistricting plans satisfying the above desirable properties, in a way that sampling from this distribution is efficient. These samples are then used to audit the characteristics of the proposed plan as described above. The most popular sampling distribution is called the {\em spanning tree distribution}, $\spandist$. In this distribution, a partition (or redistricting plan) $P$ is sampled proportional to its spanning tree score $\sp{P}$. For a graph $G$, define $\sp{G}$ to be the number of spanning trees of $G$. The spanning tree score of a partition $P=G_1,\ldots,G_k$ is defined as the product of the spanning tree scores $\sp{G_i}$ of each part (or district) $G_i$. 

\newcommand{\imb}[1]{\ensuremath \texttt{imb}(#1)}

Clearly, any partition is sampled with non-zero probability only if its individual parts have a non-zero number of spanning trees and hence, form a connected component. It is shown in~\cite{procaccia2022compact} that this distribution is biased towards plans with small number of cut edges between districts, and are hence more compact.  Properties like compactness and balance can also be achieved by assigning a score $f(\cdot)$ to partitions and sampling a partition $P$ proportional to $f(P) \cdot \sp{P}$~\cite{forest-recom-herschlag21,deford2019recombination,georgia-herschlag-mattingly}. For balance, the scoring function $f$ can assign $1$ to partitions that satisfy population balance up to a certain tolerance and $0$ otherwise. This would correspond to simply rejecting partitions that are imbalanced beyond the prescribed tolerance. This is called the {\em balanced spanning tree} distribution. The work of~\cite{jamie_grid24} samples exactly from the balanced spanning tree distribution and shows that vanilla rejection sampling runs in polynomial time for constant $k$ on a grid graph. However, this method is not practical since the running time depends exponentially on $k$. Instead, MCMC methods such as ReCom are the method of choice in practice.

\paragraph{The ReCom Markov Chain.} The well-known ReCom heuristic~\cite{deford2019recombination} attempts to sample from the balanced spanning tree distribution (though the exact distribution it samples from cannot be characterized). At each step of the Markov chain, given a redistricting plan with $k$ districts, two adjacent districts are chosen at random, and merged into a super-district. This super-district is split into two districts by sampling from the balanced spanning tree distribution of this super-district with $2$ parts. Though this Markov chain is not reversible, and mixes in exponential time in the worst case~\cite{Charikar-Recom}, for maps that arise in practice, this method is easy to implement and efficient.
Several variations of ReCom exist, including one whose steady state distribution is exactly the balanced spanning tree distribution~\cite{cannon2022spanning}.

\subsection{Separation Fairness and Our Result}
    Though the balanced spanning tree distribution is widely used, one desideratum that is under-explored is whether the distribution over plans it generates is ``unbiased'' in any sense. Given these samples are used for auditing the fairness characteristics of a given redistricting plan, it would be imperative to show that the samples themselves are not biased in an unfair way.

The hurdle with exploring this question is the complexity of characterizing the balanced spanning tree distribution itself. We therefore study a simple bias-freeness property based on the marginals this distribution induces on individual edges of the graph being cut. We call this property \textit{separation fairness}.

\begin{definition}
Given a graph $G(V,E)$ and constant $\alpha \in (0,1]$, a distribution over redistricting plans is $\alpha$-fair if every pair of adjacent nodes (precincts) is not separated, {\em i.e.}, assigned to the same district, with probability at least $\alpha$. 
\end{definition}

As a motivating example, if two adjacent precincts are separated in every redistricting plan in the support of the distribution, then this could be perceived as an example of ``cracking'' as mentioned earlier. In other words, we would like the local picture from the perspective of any edge in the graph to look random, so that the edge is not always separated. This is one way to capture the notion of ``unbiasedness'' alluded to above.
We therefore ask: 

\begin{quote}
    Is the balanced spanning tree distribution $\alpha$-fair for constant $\alpha > 0$?
\end{quote}

\paragraph{Main Result.} We answer this question in the affirmative for a smooth version of this distribution, under the restrictions that $k = 2$ and the graph is a $m \times n$ grid.  For $k = 2$ and constant $\lambda > 0$, the $\lambda$-{\em smooth} spanning tree distribution (henceforth \lstd) samples partitions the following way:
\begin{equation}
\Pr[P \text{ is sampled}]\propto \sp{P} e^{-{\lambda \cdot \imb{P}}} \label{eqn:biased-distribution}    
\end{equation}
where for a partition $P=(G_1,G_2)$, the imbalance $\imb{P}=\frac{\abs{\abs{G_1}-\abs{G_2}}}{2}$.  It follows from~\cite{jamie_grid24} that for a $m \times n$ grid graph, the imbalance of samples from this distribution is $O\left(\frac{\ln mn}{\lambda}\right)$ with probability $1 - \frac{1}{\mbox{poly}(mn)}$. (See \Cref{sec:imbalance} for a proof.) As $m,n \rightarrow \infty$, the imbalance therefore goes to $0$ as a fraction of $mn$. 

We note that the smooth distribution is well-motivated in practice: Several existing works sample from similar distributions that down-weight partitions with high population imbalance \cite{forest-recom-herschlag21,mattingly_will_of_the}.
We remark that the specific term incorporating the imbalance is not important for our arguments; we only require that it decays exponentially with imbalance.

Our main result is the following theorem:
\begin{theorem}\label{thm:main-intro}
For any constant $\lambda \geq 0$, there is a constant $f_{\lambda} > 0$ such that for the distribution \lstd on a $n \times m$ grid (with $m,n = \Omega(1)$) with $k = 2$ is $f_{\lambda}$-fair. 
\end{theorem}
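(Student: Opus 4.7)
The plan is to exploit the planar structure of the grid. Compactifying $G_{m \times n}$ with an outer face and taking the planar dual $G^*$, any 2-partition $(A,B)$ with both parts connected is in bijection with a simple dual path $\pi$ joining two designated outer dual vertices; the edges of $\pi$ are exactly the primal edges cut by $(A,B)$. Consequently, for any primal edge $e$, $\Pr[e \text{ separated}] = \mu_\lambda(\{\pi : e^* \in \pi\})$, where $\mu_\lambda(\pi) \propto \sp{A_\pi}\cdot\sp{B_\pi}\cdot e^{-\lambda\, \imb{(A_\pi, B_\pi)}}$. The task thus reduces to an edge-marginal upper bound on a distribution over simple dual paths.

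The heart of the proof will be a local-surgery injection. Given $\pi$ with $e^* \in \pi$, I would define a canonical rerouting $\phi(\pi)$ that deletes $e^*$ and replaces it with a short detour around one of the two primal faces incident to $e^*$; in the primal picture this flips $O(1)$ vertices between $A$ and $B$ in a bounded-radius neighborhood of $e$. The surgery changes $\imb{P}$ by at most an additive constant $\Delta$, contributing a smoothing-weight ratio of at least $e^{-\lambda \Delta}$. A canonical tiebreaking rule (e.g., always choose the leftmost valid detour) makes $\phi$ at most $C$-to-one for an absolute constant $C$. It remains to lower bound the spanning-tree ratio $r(\pi) := \sp{A'}\sp{B'}/(\sp{A}\sp{B})$ for $(A',B') = \phi(\pi)$: since $\phi$ alters $O(1)$ vertices per part, $r(\pi)$ factors into a bounded number of single-vertex changes of the form $\sp{H \cup \{u\}}/\sp{H}$, which by the matrix-tree theorem reduce to effective-conductance quantities controlled via Rayleigh monotonicity and the bounded degree of the grid. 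I expect a uniform lower bound $r(\pi) \geq c > 0$. Combining the ingredients yields $\mu_\lambda(\{\pi : e^* \in \pi\}) \leq C/(C + c\, e^{-\lambda \Delta}) = 1 - f_\lambda$ for a constant $f_\lambda > 0$.

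The main obstacle will be making the surgery uniform across all local configurations of $\pi$ near $e^*$, and controlling $r(\pi)$ uniformly across all admissible shapes of $(A,B)$. If $\pi$ winds tightly around the dual face intended for the detour --- for instance, using several of its boundary edges --- the naive rerouting may self-intersect, and a small finite case analysis will be needed to exhibit an alternative $O(1)$-length detour in every local configuration; edges on the boundary of the grid also need separate treatment since only one adjacent primal face is available. The spanning-tree ratio is the more delicate issue: when a flipped vertex $u$ connects to the rest of its part only through a thin dendrite, $\reff$ from $u$ can be large, shrinking $\sp{H \cup \{u\}}/\sp{H}$ unacceptably. Establishing a configuration-independent lower bound on $r(\pi)$ will likely require the loop-erased random walk tools hinted at in the abstract, either to route surgery through a provably thick local neighborhood of $e$ or to argue that partitions with pathological local shape near $e$ carry negligible mass under $\lstdmath$.
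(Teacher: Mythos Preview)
Your high-level architecture---pass to dual cycles, build a local surgery from separating to non-separating cycles, charge via a bounded-to-one map---is the paper's. The gap is exactly where you locate it: the uniform lower bound on $r(\pi) = \sp{A'}\sp{B'}/(\sp{A}\sp{B})$. Your proposed route through single-vertex ratios and Rayleigh monotonicity does not close it. When the flipped vertex $w$ has two neighbors $a,b$ in its part $B$, one has $\sp{B}/\sp{B\setminus\{w\}} = 2 + \reff_{B\setminus\{w\}}(a,b)$; if $B$ is a thin corridor near $w$---which no local hypothesis on $\pi$ rules out---this resistance is $\Theta(n)$, and bounded degree alone gives nothing. Whatever compensating growth occurs in $\sp{A'}/\sp{A}$ is global and configuration-dependent, not an electrical-network triviality. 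You are right that loop-erased walks are the fix, but not by routing through thick regions or discarding pathological shapes.

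The paper changes the probability space rather than bounding $r(\pi)$ directly. It samples the partition via a modified Wilson's algorithm on $\gmnd$: a loop-erased random walk $\Pi$ between adjacent dual vertices $\dualvertex{x},\dualvertex{y}$ yields the cycle $C$, and already $\Pr[\text{loop-erasure of }\Pi = C\setminus\{\dualedge{x}{y}\}]\propto\sp{X}\sp{Y}$. The comparison then happens walk-by-walk (\Cref{thm:walk-collapse}): a one-to-one map $\Pi\mapsto\hat\Pi$ sends walks erasing to the separating path $D$ to walks erasing to the non-separating $D'$, with $\hat\Pi$ using at most $3\beta^2$ additional directed edges and no extra edges through the high-degree outer-face vertex $\outerface$. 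Since every other dual vertex has degree at most $4$, $\Pr[\hat\Pi]\geq 4^{-3\beta^2}\Pr[\Pi]$, and summing over $\Pi$ recovers the spanning-tree ratio bound for free. Constructing this walk map is itself intricate---every erased loop of $\Pi$ must be re-spliced onto $\hat\Pi$ in an invertible order---and is one of the paper's two main technical pieces. The other is the surgery (\Cref{thm:main-reconnect}): a single-face detour frequently disconnects a part, and the paper's replacement is an exhaustive case analysis on the local configuration around $(u,v)$, sometimes flipping several vertices and invoking structural lemmas (1-thin and 2-thin structures, island walks) to certify connectivity, imbalance change at most $2$, and no increase in edges to $\outerface$; it is far more than a small finite case analysis.
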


To interpret the technical significance of our result, suppose we define the fractional imbalance as $\frac{\imb{P}}{mn}$. Our result shows constant separation fairness when the fractional imbalance vanishes as the size of the grid increases.  \Cref{thm:main-intro} provides, to our knowledge, the first property proven for this challenging regime. 

The proof of \cref{thm:main-intro} hinges on the following local-interchangeability result which is of interest on its own. Such a result is non-trivial even for 2-partitions on a grid. Our proof requires a careful case-analysis, aided by tools we develop in \cref{sec:tools} that are essential in pruning the number of cases to a tractable level.

\begin{theorem}[Informal version of \cref{thm:main-reconnect}]
    Given any 2-partition $P$ of a $n\times m$ grid (with $n,m=\Omega(1)$), and two adjacent vertices $u,v$ that are in different parts in $P$, there is a partition $P'$ such that $u,v$ are in the same part in $P'$, and $P,P'$ differ only in a constant-sized neighborhood around $u,v$.
\end{theorem}

Our results also have implications for groups of such separation events. We show that the probability that any $q$ sufficiently separated edges are simultaneously separated decays rapidly as $q$ increases. 

\begin{corollary}[Informal version of \cref{thm:group-separation}]\label{thm:group-separation-informal}
    For any set of $q$ sufficiently separated edges of a $n\times m$ grid (with $n,m=\Omega(1)$), the probability that a 2-partition sampled from \lstd
    separates all the $q$ edges is at most $e^{-\Omega(q)}$.
\end{corollary}

This shows a strong independence property for separation events in spanning tree distributions.

\paragraph{Connection to ReCom.} Though we restrict ourselves to the case of $k=2$ on a grid, our results have practical implications for the ReCom MCMC method. Note that at any step of the ReCom chain, the process samples from the balanced spanning tree distribution on two parts. Informally, our result suggests that any individual step of the chain will be $\alpha$-fair (for constant $\alpha$) for the super-district that is being partitioned. Since the super-district is chosen at random, an edge that was a boundary edge in the previous step becomes an interior edge of the super-district with probability $\Omega\left(\frac{1}{k}\right)$. Conditioned on this, the $\alpha$-fairness condition implies this edge stays an interior edge with constant probability. This means that in steady state, the distribution that ReCom samples from is likely to be $\alpha$-fair for $\alpha = \Omega\left(\frac{1}{k}\right)$.

Though the above is {\em not} a formal statement, we show that it holds true empirically. We run the ReCom chain on the maps of North Carolina and Pennsylvania, and on a $50 \times 50$ grid graph, with $k = 14,18$ and $10$ respectively. We run the chain $N = 10^5$ steps to obtain $1000$ maps, and consider the fraction of times an edge is separated, yielding the empirical probability of separation.  In \Cref{fig:experiment}, we present the histogram of this empirical probability. Note that the maps are $\alpha$-fair for $\alpha \ge 0.6$. This suggests that MCMC methods like ReCom offer strong separation fairness on real-world maps, and \Cref{thm:main-intro} provides a theoretical basis for explaining this phenomenon.

\begin{figure}[htbp]
 \centering
     \hfill
     \begin{subfigure}[b]{0.3\textwidth}
         \centering
         \includegraphics[width=\textwidth]{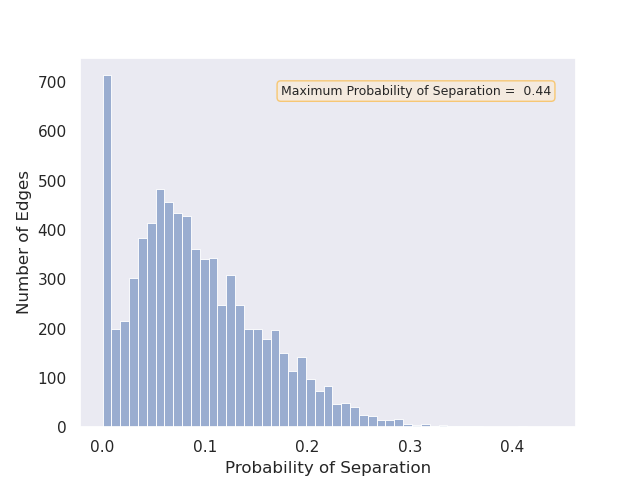}
         \caption{The North Carolina precinct graph\footnotemark with $k=14$ districts.}
     \end{subfigure}
     \hfill
     \begin{subfigure}[b]{0.3\textwidth}
         \centering
         \includegraphics[width=\textwidth]{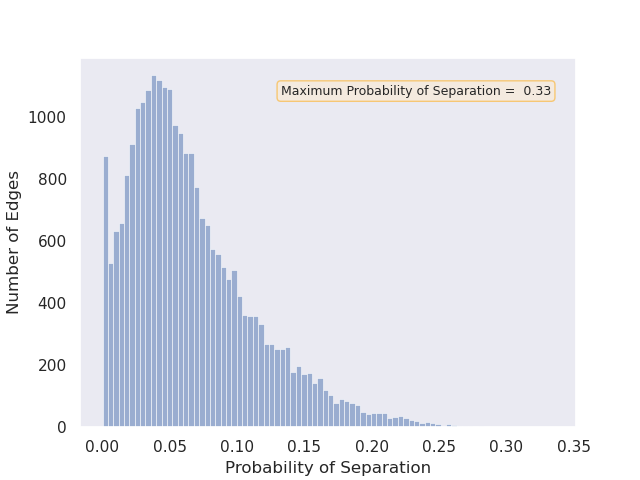}
         \caption{The Pennsylvania precinct graph with $k=18$ districts.}
     \end{subfigure}
     \hfill
    \begin{subfigure}[b]{0.3\textwidth}
         \centering
         \includegraphics[width=\textwidth]{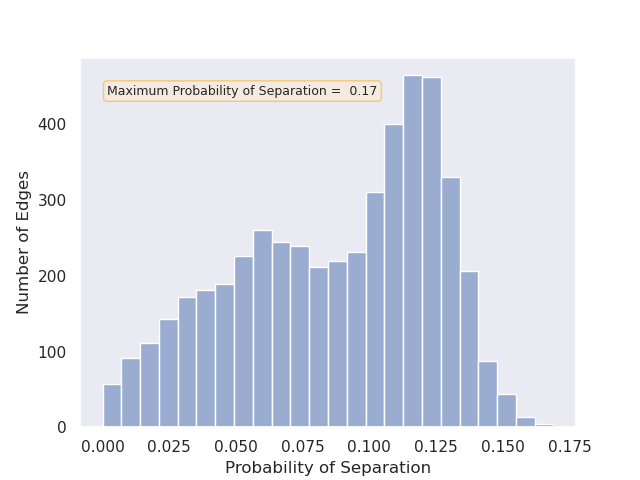}
         \caption{A $50\times 50$ grid graph with $k=10$ districts.}
     \end{subfigure}
     \hfill
    \caption{Histogram of separation probabilities of the endpoints of edges in three different graphs over 1000 redistricting plans sampled from ReCom with a population imbalance of $1\%$.}
    \label{fig:experiment}
\end{figure}
\footnotetext{Both the North Carolina and Pennsylvania precinct graphs, along with the population data, were obtained from \cite{mgggstates}.}

Finally, we show that our results also imply constant separation fairness for the standard spanning tree distribution restricted to exactly balanced partitions, or to partitions where the imbalance is an additive 3, under some conjectures on the properties of the spanning tree distribution.
We discuss this further in \Cref{sec:discussion}.

\paragraph{Technical Contribution.} The balanced spanning tree distribution is a tricky object --- even properties that seem intuitively true are not easy to show, and can often be false. For instance, though the ReCom chain seems quite intuitive, it is not only  irreversible, but in the worst-case, also torpidly mixing~\cite{Charikar-Recom} and not ergodic~\cite{tuckerfoltz2024lockedpolyominotilings}. On the other hand, rejection sampling --- where a random spanning tree is chosen and partitioned into $k$ equal parts (with rejection if such a partition is not possible) --- runs in polynomial time for constant $k$.

In the same vein, though separation fairness may seem quite intuitive, we are considering (almost) balanced partitions generated by the spanning tree distribution, which biases towards compactness~\cite{procaccia2022compact}. This creates a fundamental tension. The bias towards compactness favors short, straight partition boundaries, which would cut the same central edges in every instance. However, to ensure fairness, the boundary must be sufficiently random, so that it does not cut any single edge deterministically. The core of our technical work is to show that the distribution is not \textit{so} compact that this randomness is lost. We show that the separation probability is a constant bounded away from 1 for grids, even with vanishingly small fractional imbalance, which is a priori not obvious.


Similarly, though the spanning tree distribution has a close connection to random walks on the underlying graph, a significant amount of the randomness is ``lost'' when constraints such as balance are added. For example, consider the probability that two vertices are separated in the spanning tree distribution \spandist. For $k=2$, this probability can be shown to be directly proportional to the \textit{effective resistance} between the two vertices \cite{barrett2019spanning2forestsresistancedistance,wu15-sepprob,wu23-sepprob}.
On the other hand, no general expression is known for this probability even under mild modifications of the distribution towards balance.

Building upon the established connection between random walks on the dual graph and random partitions~\cite{jamie_grid24}, our work introduces a novel perspective: viewing these partitions as originating from random cycles on the dual graph – an idea subtly present in their work. 
This shift in viewpoint allows us to develop a modified version of Wilson's Algorithm~\cite{wilson_randomwalk} that directly samples these cycles, while preserving the same underlying distribution.
We then analyze this modified process and show that such a random cycle that separates two adjacent vertices $u,v$ can be modified with a few changes to a cycle that does not separate them. 
We seek such a modification that preserves balance between the partitions.
Though one would expect that this kind of modification is always possible locally near $u,v$ while keeping the balance constraint intact, this is not true: it may be the case that changes need to be made far away from $u,v$ (see \Cref{fig:bad-rebalance}). 

\Cref{def:unsep} and \Cref{thm:main-exact} form the heart of our technical contributions. The former and the associated \Cref{thm:walk-collapse} shows that our modified process indeed allows us to analyze partitions purely from the perspective of random cycles.  Our main contribution here is a constructive mapping between random walks: given a walk whose erasure separates two vertices, we devise an intricate algorithm that builds a corresponding walk that does not, carefully re-splicing all the original loops to ensure the mapping is one-to-one, and the number of different directed edges is small. This formalizes the intuition that a local change to a partition's boundary should correspond to a local change in the underlying random walk.
This result can be viewed as a complement to the main result from \cite{procaccia2022compact} - they show that two partitions whose boundaries greatly differ (in terms of length) must have greatly different probabilities of being sampled whereas ours shows that two partitions with very similar boundaries have similar probabilities of being sampled.
The latter result, \Cref{thm:main-exact}, proves that such a local modification is always possible. To do this, we develop a suite of graph-theoretic tools in \cref{sec:tools} specifically for analyzing 2-partitions on grid graphs. 
Our key insight is the following: After switching vertices from one part to the other, either the original partitions were both    ``thick'' and this did not disconnect any part, or they were ``thin'' and there must exist some structures that we can identify and perform local surgery on (\Cref{lem:1-thin,lem:2-thin}).
Through an extensive case analysis facilitated by these tools, we prove that any separating partition can be modified in a small neighborhood to become non-separating, with only a constant change to the imbalance.


\subsection{Related Work}

\paragraph{Redistricting as Optimization.} The idea of using computational tools in redistricting dates back to the 1960s~\cite{vickrey1961prevention, hess1965nonpartisan}. Since then, an extensive line of work (see~\cite{becker2020redistricting} for a comprehensive survey) has cast the redistricting task as an optimization problem. As mentioned before, the objective and constraints capture the population balance, contiguity, and compactness criteria of the districts. If we consider a property such as compactness, minimizing number of cut-edges immediately makes the problem NP-Complete by itself, for $k>2$, since it is precisely the $k$-Cut problem, and the additional requirement of population balance makes it NP-Complete even for $k=2$, since it is the Graph Bisection problem~\cite{cohen-addad2021computational,cohen-addad2018balanced,GareyJohnson1979}.  Multiple efficient algorithmic approaches have been proposed~\cite{altman2011bard, gawrychowski2021voronoi,king2015efficient,levin2019automated,liu2016pear} --- each of these methods comes with trade-offs in the guarantees that can be achieved.

\paragraph{Redistricting as Ensembles.} Instead of optimizing and finding a single best redistricting plan, another line of work focuses on generating a large \emph{ensemble} of redistricting plans, with the goal of auditing a given plan for fairness, or even finding a plan that is fair. These methods include Flood Fill~\cite{cirincione2000assessing, magleby2018new}, Column Generation~\cite{gurnee2021fairmandering}, and the widely adopted Markov Chain Monte Carlo (MCMC) approach~\cite{deford2019recombination,fifield2020automated, lin2022auditing,tam2016toward}.  These ensemble based approaches provide a natural, statistical way of auditing a given plan for fairness~\cite{ herschlag2020quantifying,herschlag2017evaluating}. 

Though MCMC methods like ReCom~\cite{deford2019recombination} are empirically very successful and hence widely used, they offer little in terms of formal guarantees on properties like compactness of the resulting distribution. Nevertheless, there have been some positive and negative results. On the positive side, it is shown in~\cite{procaccia2022compact} that ReCom provably biases towards compact plans. On the negative side, it is known~\cite{Charikar-Recom} that this chain mixes in exponential time in the worst case. Our work is in this vein, where we show a different (and positive) property --- separation fairness --- for this method.

\paragraph{Separation Fairness.} Similar notions to separation fairness have been considered before, although not in a probabilistic sense. For example, Chen et al \cite{chen2022turning} discuss the importance of keeping ``Communities of Interest'' together in electoral districts, highlighting the impact of gerrymandering on reducing the influence of the Asian-American voting bloc in city-level elections in New York City. In fact, in certain jurisdictions, there are already legal requirements to keep such Communities of Interest intact~\cite{aceproject_communityofinterest}.
In a probabilistic sense, a small separation probability is a key feature of low-diameter decompositions~\cite{LinialSaks1991}, which have found broad applications in metric embeddings. Such properties have also been studied under names such as ``pairwise fairness'' or ``community cohesion'' in graph clustering problems~\cite{brubach2020pairwise,munagala-individualfair}.

Most related to our paper is the work of~\cite{jamie_grid24}, who present a polynomial time algorithm (for constant $k$) for sampling from the balanced spanning tree distribution. Their work implies constant separation fairness, but only in the regime where the fractional imbalance is a non-vanishing constant. In contrast, our work addresses the more 
challenging regime where the fractional imbalance asymptotically vanishes. This necessitates the development of novel analysis techniques. The analysis in~\cite{jamie_grid24} lower bounds the distribution's ``global'' randomness by showing a constant probability that the boundary of a sampled 2-partition of the $n\times n$ grid is within $\epsilon n$ of certain guiding curves. On the other hand, our work analyzes the distribution at a much finer resolution, proving that it also behaves ``locally'' randomly.

\paragraph{Scaling Limits.}
A different line of work examines the limits of the random walks underlying the sampling methods used for spanning tree distributions. It is a classical result that random walks on a grid, when properly scaled, converge to Brownian motion~\cite{durrett2019probability}. 
On the other hand, loop-erased random walks central to Wilson's algorithm, when properly scaled, converge to a stochastic process known as Schramm-Loewner Evolution (SLE). Specifically, the seminal work of Lawler, Schramm, and Werner~ \cite{LSW2004} established that the scaling limit of a loop-erased random walk is SLE$_2$. While these limiting approaches provide insight into the large-scale geometric properties of random partitions, our focus is at a finer level that is not directly addressed by this framework.
\section{Preliminaries}\label{sec:prelim}
We will begin with basic graph-theoretic definitions, introduce the key redistricting distributions, and then describe the connection between these distributions and random walks on a graph's dual, which is central to our analysis.

Our input is a connected graph $G=(V(G),E(G))$, where $V(G)$ is the set of vertices representing precincts and $E(G)$ is the set of edges connecting adjacent precincts. A \textbf{$k$-partition} of this graph is a collection of $k$ vertex-disjoint subgraphs, $(G_1, \ldots, G_k)$, whose vertices partition $V(G)$. We will often refer to a partition by its vertex sets $(S_1, \ldots, S_k)$, which induce the subgraphs $G[S_1], \ldots, G[S_k]$. For a 2-partition $(S,T)$, the set of \textbf{cut edges}, denoted $E(S,T)$, consists of all edges with one endpoint in $S$ and the other in $T$.

Throughout this work, our primary example is the $m \times n$ \textbf{grid graph}, denoted $\gmn$. Its vertices are identified by coordinates $(i,j)$ for $i \in \{1,\ldots,m\}$ and $j \in \{1,\ldots,n\}$, and an edge exists between two vertices if their coordinates differ by exactly 1 in one dimension and are identical in the other. This definition also implies a natural planar embedding of the graph where vertices are embedded in their corresponding coordinates in $\mathbb{R}^2$.

Our analysis relies on the \textbf{dual graph}, $\dualgraph{G}$, of a planar graph $G$. The dual graph has a vertex for each face of $G$'s planar embedding and an edge connecting two dual vertices if their corresponding faces in $G$ share a primal edge. 

\begin{definition}[Dual Graph]
    The dual graph $\dualgraph{G}$ of a planar graph $G$ is a graph with a vertex $\dualvertex{f}$ for each face $f$ of the planar embedding of $G$, and an edge $\dual{e}=\dualedge{f_1}{f_2}$ for every edge $e\in E(G)$ that lies on the boundary of faces $f_1$ and $f_2$ of $G$.
    We say that $\dual{e}$ is the dual edge corresponding to $e$ and indicate this as $\dualcorresp{\dual{e}}{e}$.
\end{definition}
The dual of a dual edge is a primal edge. We will primarily distinguish between these two using the type of bracket used to refer to them, or with an asterisk.
We will use $\gmnd$ to refer to the dual of the grid $\gmn$. We will refer to the dual node corresponding to the outer face in the canonical embedding of $\gmn$ as $\outerface$.

\subsection{Key Definitions for Redistricting}

To evaluate and sample redistricting plans, we rely on several key metrics and probability distributions.

\begin{definition}[Imbalance $\imb{\cdot}$]
For a 2-partition $P=(S,T)$ of $G$, we will define the imbalance of $P$ as $\imb{P}=\frac{\abs{\abs{S}-\abs{T}}}{2}$.
\end{definition}

\begin{definition}[Spanning Tree Score and $\sp{\cdot}$]
For any graph $G$, define $\sp{G}$ to be the number of spanning trees of $G$.
For a $k$-partition $P=(G_1,\ldots,G_k)$ of $G$, define $\sp{P}=\prod_{i=1,\ldots,k}\sp{G_i}$ to be the spanning tree score of $P$.
\end{definition}

This score is used to define the probability distributions over partitions that we study.
The \textbf{Spanning Tree Distribution ($\spandist$)} samples a partition $P$ with probability proportional to its spanning tree score, i.e., $\Pr[P] \propto \sp{P}$. This distribution naturally favors compact districts, as partitions with fewer cut edges tend to have higher spanning tree scores.
However, on its own, $\spandist$ may sample partitions that are highly imbalanced.
Balance is typically achieved in practice by modifying this distribution in one of two ways : (1) rejecting sampled plans if they are imbalanced or (2) explicitly favoring balanced plans in the probability distribution.

The \textbf{$\lambda$-smooth Spanning Tree Distribution} (\lstd) is a variation that also accounts for population balance, falling in the latter category. It samples a partition $P$ with probability proportional to $\sp{P} e^{-\lambda \cdot \imb{P}}$. This can also be viewed as sampling from $\spandist$ and then applying a rejection step, accepting a partition $P$ with probability $e^{-\lambda \cdot \imb{P}}$.

\subsection{The Spanning Tree Distribution via Random Walks}
In this section, we present an adaptation of Wilson's algorithm~\cite{wilson_randomwalk,jamie_grid24} for sampling 2-partitions from the spanning tree distribution \spandist. Although there are already relatively simple algorithms to sample 2-partitions, we modify this specific algorithm to show our results. 
We first start with some simple definitions.

\begin{definition}[Walk~\cite{diestel}]
    A walk $W$ in a graph $G$ is an alternating sequence $v_1e_1v_2e_2v_3\ldots v_{m}$ of vertices and edges of $G$ such that $e_i=\{v_i,v_{i+1}\}$. If $W$ is non-empty, we say that the walk starts at $v_1$ and ends at $v_{m}$.
\end{definition}
Note that since dual graphs of simple graphs contain parallel edges, it will not suffice to consider a walk as merely a sequence of vertices. 
We now define the \textit{loop-erasure} of a walk. Put simply, it is the walk with the cycles removed one-by-one sequentially from the start of the walk. We will discuss loop-erasures and loops at length in \cref{sec:theorem5}. Until then, we will treat the loops slightly informally, for clarity of exposition. See \cref{fig:loop-example} for an example of a loop-erasure.

\begin{definition}[Loop-Erasure]\label{defn:loop-erasure}
    The loop-erasure $D$ of a walk $W=v_1e_1v_2e_2\ldots v_n$ is a directed path defined the following way : Initialize $D=\phi$. 
    For $i=1,\ldots, n$, add the directed edge $(v_{i-1},v_{i})$ to $D$.
    If this created a cycle in $D$, remove the edges of the cycle from $D$.
\end{definition}

\begin{figure}[htbp]
    \centering
    \includegraphics[width=0.3\linewidth]{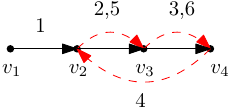}
    \caption{A walk from $v_1$ to $v_4$. The numbers above an edge represent the indices of the sequence where that edge appears. That is, the walk is $W=v_1(v_1,v_2)v_2(v_2,v_3)v_3(v_3,v_4)v_4(v_4,v_2)v_2(v_2,v_3)v_3(v_3,v_4)v_4$.
    The only loop of $W$ is drawn dashed in red.
    That is, the walk $v_2(v_2,v_3)v_3(v_3,v_4)v_4(v_4,v_2)v_2$ is a loop of $W$ whereas $v_3(v_3,v_4)v_4(v_4,v_2)v_2(v_2,v_3)v_3$ is not. The loop erasure is the directed path $\{(v_1,v_2),(v_2,v_3),(v_3,v_4)\}$.}
    \label{fig:loop-example}
\end{figure}

Wilson's algorithm samples a uniformly random spanning tree from a graph. This is shown in \cref{alg1}. We iteratively grow the tree $T$ starting from an arbitrary root whilst adding loop-erased random walks from some vertex not in $T$ so far. 
This returns a uniform spanning tree of any graph \cite{wilson_randomwalk}. We then recall the standard bijection between spanning trees of $G$ and its planar dual $\dualgraph{G}$: Given a spanning tree $T$ of $G$, we can construct a spanning tree $\dualgraph{T}$ of $\dualgraph{G}$ by adding dual edges $\dualedge{x}{y}$ if and only if $\dualcorresp{(u,v)}{\dualedge{x}{y}}$ is \textit{not} in $T$. Thus, we can also obtain a random spanning tree of $G$ by running Wilson's algorithm on the dual graph $\dualgraph{G}$ to get a random spanning tree of the dual graph. Selecting the corresponding primal spanning tree ensures that this selection is also uniformly random.
For the rest of this work, we will deal with random walks only on the dual graph $\dualgraph{G}$. As such, we will present Wilson's algorithm as running on the dual instead.

\begin{algorithm}[htbp]
\caption{Wilson's Algorithm on Dual Graph\label{alg1}}
\begin{algorithmic}[1]
    \STATE  Let $V(\dualgraph{T})\leftarrow \{\dualvertex{x}\}$ for an arbitrary root vertex $\dualvertex{x}\in V(\dualgraph{G})$.
    \WHILE{$V(\dualgraph{T})\neq V(\dualgraph{G})$}
    \STATE Start a random walk $\Pi$ from an arbitrary vertex $\dualvertex{y}\in V(\dualgraph{G})\setminus V(\dualgraph{T})$, until it reaches some vertex of $V(\dualgraph{T})$.
    \STATE
    Let $\pi$ be the loop-erasure of $\Pi$. Add all the edges of the $\pi$ to $\dualgraph{T}$.
    \ENDWHILE
    \RETURN primal spanning tree $T$ constructed as follows: For each edge $e \in E(G)$, $e \in T$ if and only if $\dualcorresp{\dualvertex{e}}{e}$
    is {\em not in} $\dualgraph{T}$.
    
\end{algorithmic}
\end{algorithm}

To proceed from a random spanning tree to a 2-partition, we first run Wilson's algorithm to get some spanning tree $\dualgraph{T}$. We then sample a random edge $e$ of the corresponding primal tree $T$. The subgraph $T\setminus e$ has two components, say $X,Y$. We output $(X,Y)$ with probability $\frac{1}{\abs{E(X,Y)}}$ and restart otherwise.
The probability that we get this partition before the rejection step is proportional to the number of spanning trees $T$ that are formed by the union of a spanning tree of $X$, a spanning tree of $Y$, and a cut edge between $(X,Y)$.
This is precisely $\sp{X}\cdot \sp{Y}\cdot E(X,Y)$. Since we reject this with probability $\frac{1}{E(X,Y)}$, we sample according to $\spandist$.

\section{Separation Fairness of \lstd on a Grid}

We will now  prove \Cref{thm:main-intro}. 
At a high level our approach has two steps. We first modify the random walk algorithm (\Cref{alg1}) for sampling from \spandist so that we can focus on the event that a particular edge of $G$ is cut.
We show a one-to-one correspondence between partitions of the primal grid and cycles in the dual graph. We then observe that if a partition assigns two specified vertices $u,v$ to different parts, then the corresponding dual cycle must also separate the vertices in a natural way.
We present this modification in \Cref{sec:modified-wilson}.
We then show that given a cycle $C$ that separates $u,v$, the random walk algorithm had a similar probability to output a cycle $C'$ that did \textit{not} separate $u,v$ while also corresponding to a (slightly) different partition.
For technical clarity, we divide this step into two sub-arguments.
In \Cref{sec:sep-prob} and \Cref{sec:theorem5}, we first demonstrate that the existence of a specific type of mapping between separating and non-separating partitions would imply a small separation probability for \lstd. Then, in \Cref{sec:reconnect}, we establish the existence of this mapping. The last two steps are the most technically involved and form the majority of the proof.

\subsection{Modified Random Walk Algorithm}\label{sec:modified-wilson}
We now present a modification of \Cref{alg1} that will be used in our analysis. This is shown in \Cref{alg2}. In this algorithm, we run a loop-erased random walk on the dual graph, where the root is a vertex $\dualvertex{x}$ and the random walk is started from a vertex $\dualvertex{y}$ adjacent to $\dualvertex{x}$. If the walk ever includes the edge $\dualedge{x}{y}$, we restart the walk. This process eventually outputs a path $\pi$ from $\dualvertex{y}$ to $\dualvertex{x}$ that does not include the edge $\dualedge{x}{y}$. 
If we continue Wilson's algorithm, we eventually obtain a spanning tree $\dual{T}$ that does not contain the edge $\dualedge{x}{y}$. The main observation, however, is that we will sample such a tree uniformly at random (among those that do not contain $\dualedge{x}{y}$). This follows from the fact that in Wilson's algorithm, the choice of the root and the vertex to start the walk from are arbitrary. Thus, this is equivalent to running Wilson's algorithm as described above, and discarding $\dual{T}$ if it includes $\dualedge{x}{y}$.

\begin{figure}[htbp]
    \centering
    \includegraphics[width=0.5\linewidth]{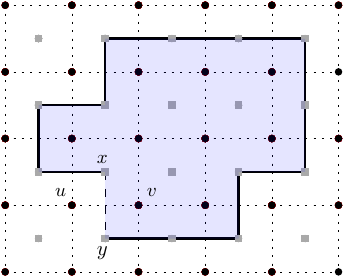}
    \caption{An example of a random walk $\pi$ (shown in solid black) started from $\dualvertex{x}$ to $\dualvertex{y}$. The shaded region is the interior of the cycle $\pi\cup \{\dualedge{x}{y}\}$.
    Here, $\dualcorresp{\dualedge{x}{y}}{(u,v)}$. The grey squares are dual nodes, and the black circles are primal nodes. }
    \label{fig:random-cycle}
\end{figure}

However, in \Cref{alg2}, we will not continue Wilson's algorithm after obtaining $\pi$. We will instead output a partition defined by $\pi$.
Consider the cycle $C=\pi\cup \{\dualedge{x}{y}\}$ in the dual graph $\gmnd$.
This cycle divides the primal vertices of the grid graph $\gmn$ into two sets, typically denoted $X$ and $Y$, corresponding to the vertices on the ``interior'' and ``exterior'' of $C$ (see \Cref{fig:random-cycle}).
Formally, in the standard planar embedding of $\gmn$, the dual cycle $C$ corresponds to a simple closed curve. By the Jordan Curve Theorem, this curve divides the plane into a bounded interior region and an unbounded exterior region. The primal vertices $(X,Y)$ are partitioned based on whether their embeddings lie in the interior or exterior region, respectively. We call $(X,Y)$ the 2-partition induced by the dual cycle $C$, or conversely, we call $C$ the dual cycle corresponding to the partition $(X,Y)$.

\begin{definition}[Dual Cycle of a 2-Partition]
\label{def:dual-cycle-of-partition}
Let $P=(X,Y)$ be a 2-partition of the primal graph $\gmn$ such that both $G[X]$ and $G[Y]$ are connected subgraphs. The \emph{dual cycle corresponding to $P$} is the set of dual edges $\dual{e} \in E(\gmnd)$ such that their corresponding primal edges $e \in E(\gmn)$ have one endpoint in $X$ and the other in $Y$ (i.e., $e$ is a cut edge of the partition $P$).
\end{definition}

We will thus view $(X,Y)$ as the partition corresponding to $C$.
We will also say that $C$ is the cycle sampled by \Cref{alg2}, if its corresponding partition $X,Y$ is sampled. We can then argue that \Cref{alg2} samples $(X,Y)$ proportional to its spanning score. For this, it is critical to sample the starting edge $\dualedge{x}{y}$ uniformly at random, and perform a normalization at the end.
Our normalization factors are defined as follows : For an edge $(u,v)\in E(G)$, let 
$\sp{G,2}$ denote the sum
$$
\sum_{X,Y}\sp{X}\cdot \sp{Y}
$$
where the sum is over all connected partitions $X,Y$.
Similarly, for any pair of vertices $u,v\in V(G)$, let $\sp{G,2}_{u\not\sim v}$  denote the same sum except the sum is over all connected partitions $X,Y$ such that $u\in X$, $v\in Y$.  
We note that these normalization factors do not need to be computed, since we do not intend to actually run our algorithm, only analyze it.\footnote{Nevertheless, the required factor \textit{can} be computed since $\sp{G,2}_{u\not\sim v}$ is simply the number of spanning trees of $G$ containing the edge $(u,v)$. For our purpose, we can replace $\sp{G,2}$ in the denominator with the total number of spanning trees instead. The ratio of $\sp{G,2}_{u\not\sim v}$ and the number of spanning trees is the \textit{effective resistance} between $u,v$ \cite{barrett2019spanning2forestsresistancedistance}, which can be computed in polynomial time.}
We are now ready to state \Cref{alg2}.

\begin{algorithm}[htbp]
\caption{Modified Random Walk Algorithm \label{alg2}}
\label{alg:balanced-partition}
\begin{algorithmic}[1]

    \STATE Sample a dual edge $\dualedge{x}{y}$ uniformly at random.
    \label{alg:sample-edge}
    \STATE Start a random walk $\Pi$ in the dual graph from $\dualvertex{x}$. Run the random walk till it hits vertex $\dualvertex{y}$\footnotemark. Let $\pi$ be the loop-erasure of $\Pi$.
    \IF{$\pi=\dualedge{x}{y}$\footnotemark}  
    \STATE Restart from Step 2.
    \ENDIF
    \STATE Let $C=\pi\cup \{\dualedge{x}{y}\}$. Let $X,Y$ be the vertices in the interior and exterior of $C$. Let $\dualcorresp{(u,v)}{\dualedge{x}{y}}$.
    \label{alg:random-walk-over}
    \STATE Accept with probability
    \[
    \frac{\sp{G,2}_{u\not\sim v}}{\sp{G,2}}\cdot \frac{1}{\abs{E(X,Y)}}.
    \]
    Otherwise, reject and restart.\label{alg:accept}
\end{algorithmic}
\end{algorithm}

\addtocounter{footnote}{-1}

\footnotetext{Although not relevant for the case of the grid, we remark that the algorithm is also applicable when the input graph has cut edges which lead to self-loops in the dual. We let $\pi=\{\}$ when these self-loops are sampled.}
\addtocounter{footnote}{1}
\footnotetext{More precisely, we mean the copy of $\dualedge{x}{y}$ corresponding to $(u,v)$. In the grid, this is only relevant when one of $u,v$ is a corner vertex. In this case, there are two parallel edges to $\outerface$ that correspond to different primal edges.}

\begin{claim}\label{claim:cycles-give-spanning-tree-distribution}
    \Cref{alg:balanced-partition} samples from \spandist.
\end{claim}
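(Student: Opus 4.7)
}
The plan is to compute the probability that Algorithm 2 outputs a specific 2-partition $(X,Y)$ and show that it is proportional to $\sp{X}\cdot\sp{Y}$. First, I would justify the ``equivalence'' statement the paper makes just before the algorithm: namely, that after the random-walk-with-restart loop, the underlying spanning tree $\dual{T}$ that would be produced by completing Wilson's algorithm is uniform over spanning trees of $\gmnd$ that do \emph{not} contain $\dualedge{x}{y}$. The key point is that Wilson's algorithm is valid for any choice of root, and if we root it at $\dualvertex{y}$ and start the first walk at $\dualvertex{x}$, then the restart condition $\pi=\dualedge{x}{y}$ occurs exactly when the final spanning tree would contain $\dualedge{x}{y}$ (since $\pi$ is the unique $\dualvertex{x}$--$\dualvertex{y}$ path in $\dual{T}$). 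Rejecting and restarting is therefore equivalent to conditioning Wilson's output on avoiding $\dualedge{x}{y}$, which by uniformity gives a uniform random spanning tree of $\gmnd$ avoiding $\dualedge{x}{y}$. By planar duality, this corresponds to a uniform random spanning tree $T$ of $\gmn$ that contains the primal edge $(u,v)$.

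Second, I would identify the output partition $(X,Y)$ with the partition obtained by deleting $(u,v)$ from this uniform tree $T$. Since $\pi$ is the unique $\dualvertex{x}$--$\dualvertex{y}$ path in $\dual{T}$, the cycle $C=\pi\cup\{\dualedge{x}{y}\}$ is precisely the fundamental cycle of $\dualedge{x}{y}$ with respect to $\dual{T}$. By the standard duality between fundamental cycles in the dual tree and fundamental cuts in the primal tree, the set of primal edges whose duals lie in $C$ is exactly the fundamental cut of $(u,v)$ in $T$, i.e., the set of edges crossing the partition $(X',Y')$ given by the two components of $T\setminus\{(u,v)\}$. By \Cref{def:dual-cycle-of-partition}, this means $(X,Y)=(X',Y')$.

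Third, I would count. Given $\dualedge{x}{y}$ corresponding to $(u,v)$, the number of spanning trees $T$ of $\gmn$ containing $(u,v)$ whose removal of $(u,v)$ yields partition $(X,Y)$ is $\sp{X}\cdot\sp{Y}$ (choose a spanning tree of each side and add the edge $(u,v)$), while the total number of spanning trees containing $(u,v)$ is $\sum_{(X',Y')}\sp{X'}\sp{Y'}$ summed over connected 2-partitions with $u,v$ on opposite sides, which is exactly $\sp{\gmn,2}_{u\not\sim v}$. Combining with the uniform sampling of $\dualedge{x}{y}$ in Step 1 (probability $1/\abs{E(\gmn)}$), and summing over the $\abs{E(X,Y)}$ cut edges $(u,v)\in E(X,Y)$ that could have been sampled, the probability that $(X,Y)$ is produced and then accepted is
\[
\sum_{(u,v)\in E(X,Y)}\frac{1}{\abs{E(\gmn)}}\cdot\frac{\sp{X}\sp{Y}}{\sp{\gmn,2}_{u\not\sim v}}\cdot\frac{\sp{\gmn,2}_{u\not\sim v}}{\sp{\gmn,2}}\cdot\frac{1}{\abs{E(X,Y)}}=\frac{\sp{X}\sp{Y}}{\abs{E(\gmn)}\cdot\sp{\gmn,2}}.
\]
The $\sp{\gmn,2}_{u\not\sim v}$ factor, which \emph{does} depend on $(u,v)$, cancels exactly against the conditional probability in the middle factor; this cancellation is the role of the unusual-looking acceptance probability in Step 7. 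Normalizing over all acceptances gives $\Pr[(X,Y)]=\sp{X}\sp{Y}/\sp{\gmn,2}$, which is \spandist.

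The main conceptual obstacle is the first step: carefully verifying that rejecting walks with $\pi=\dualedge{x}{y}$ is the same as conditioning a uniform spanning tree on avoiding that edge. The rest is a bookkeeping computation, but one must be careful that the acceptance probability's $\sp{\gmn,2}_{u\not\sim v}/(\sp{\gmn,2}\cdot\abs{E(X,Y)})$ factor is designed exactly so that (i) both $(u,v)$-dependent terms cancel, and (ii) the sum over the $\abs{E(X,Y)}$ possible choices of cut edge for a fixed partition $(X,Y)$ does not introduce a spurious dependence on $\abs{E(X,Y)}$.
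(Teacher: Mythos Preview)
Your proposal is correct and follows essentially the same approach as the paper's proof: both fix $\dualedge{x}{y}$, argue that the completed dual tree is uniform among those avoiding $\dualedge{x}{y}$ (equivalently, the primal tree is uniform among those containing $(u,v)$), count the spanning trees yielding a given partition as $\sp{X}\cdot\sp{Y}$, and then show the acceptance factor cancels the $(u,v)$-dependent normalization before summing over the $\abs{E(X,Y)}$ cut edges. Your explicit invocation of the fundamental-cycle/fundamental-cut duality to identify the partition is a slightly cleaner phrasing of what the paper does implicitly, but the argument is the same.
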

\begin{proof}

First, consider a fixed choice of $\dualedge{x}{y}$.
Suppose that $\dual{T}$ is any spanning tree that is a completion of $\pi$. That is, it is the spanning tree we would have obtained had we let Wilson's algorithm run until all dual vertices were visited. 
As mentioned above, since the choice of the roots and the starting vertex are arbitrary in Wilson's algorithm, the spanning tree $\dual{T}$ obtained will be uniformly chosen among the set of all spanning trees (of $\dual{G}$) that do not contain the edge $\dualedge{x}{y}$.
This implies that the corresponding primal spanning trees $T$ are uniformly chosen among those that contain the edge $\dualcorresp{(u,v)}{\dualedge{x}{y}}$.
Thus, the probability of getting a particular path $\pi$ is exactly proportional to the number of spanning trees $\dual{T}$ that are a completion of $\pi$, since all such trees do not contain the edge $\dualedge{x}{y}$ and any spanning tree is the completion of some path.

Now, consider any primal tree $T$ corresponding any dual tree $\dual{T}$ that is the completion of $\pi$.
Observe that the edge $(u,v)\in T$ splits $T$ into two connected components, where one component connects $X$ and the other $Y$. This is because none of the primal edges corresponding to $\pi$ belong to $T$.
Therefore, the number of such spanning trees that lead to the partition $(X,Y)$ is exactly $\sp{X}\cdot \sp{Y}$, the product of the number of spanning trees in the induced graphs of each region. Thus, we get a partition $X,Y$ of the primal vertices such that the following hold: 

\begin{enumerate}
     \item $(X,Y)$ is chosen with probability proportional to $\sp{X}\cdot \sp{Y}$, and 
     \item If $(u,v)$ is the primal edge corresponding to the dual edge $\dualedge{x}{y}$, then $u\in X$ and $v\in Y$.
 \end{enumerate}

Thus, for a given choice of $(u,v)$, a specific partition $P=X\cup Y$ is chosen with probability
\begin{align*}
    \frac{\sp{X}\cdot \sp{Y}}{\sum_{X,Y,u\in X,v\in Y}\sp{X}\cdot \sp{Y}}
\end{align*}
where the summation in the denominator is over all connected partitions $X,Y$. 
This term is precisely $\sp{G,2}_{u\not\sim v}$, one of the normalization terms we defined before.
Thus, conditioned on a particular $(u,v)$ being chosen in Step~1 of \Cref{alg2}, the probability of a partition $(X,Y)$ being accepted after the rejection step in Line~\ref{alg:accept} is
    \begin{align}\label{eqn:sampling-probability}
        \frac{\sp{X}\cdot \sp{Y}}{\sp{G,2}_{u\not\sim v}}\cdot \frac{\sp{G,2}_{u\not\sim v}}{\sp{G,2}}\cdot \frac{1}{\abs{E(X,Y)}} = \frac{\sp{X}\cdot \sp{Y}}{\abs{E(X,Y)}\sp{G,2}}
    \end{align}
The edge $(u,v)$ is chosen with probability $\frac{1}{E(G)}$. Further, the number of such edges $(u,v)$ that can lead to a particular $(X,Y)$ being sampled is $E(X,Y)$, since these are the only edges whose endpoints are in different parts of $(X,Y)$. Because these events are disjoint (since only one $(u,v)$ is chosen), the total probability of sampling $(X,Y)$ is
    \begin{align*}
    \frac{\sp{X}\cdot \sp{Y}}{\abs{E(G)}\sp{G,2}}.
    \end{align*}
    This completes the proof.
\end{proof}

\Cref{alg:balanced-partition} also has the following property, which will be useful later.

\begin{claim}\label{claim:uniform-edge-prob}
    For partition $P=(X,Y)$, let $C$ be the corresponding dual cycle. For any edge $\dualedge{x}{y}\in C$, conditioned on $P$ being accepted at Step~\ref{alg:accept} of \Cref{alg2}, edge $\dualedge{x}{y}$ was chosen at Step~\ref{alg:sample-edge} with probability $\frac{1}{\abs{C}}$.
\end{claim}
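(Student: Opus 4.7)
The plan is to prove this by a direct application of Bayes' rule, leveraging the computation already done in the proof of \Cref{claim:cycles-give-spanning-tree-distribution}. The key observation is that the joint probability of selecting a particular starting edge in Step~\ref{alg:sample-edge} \emph{and} having partition $P$ accepted at Step~\ref{alg:accept} is the same for every edge $\dualedge{x}{y} \in C$, so conditioning on $P$ spreads the probability uniformly across $C$.

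More concretely, first I would observe that Step~\ref{alg:sample-edge} picks each dual edge with probability $\frac{1}{\abs{E(G)}}$, and that $P=(X,Y)$ can only be produced (and hence accepted) if the sampled edge $\dualedge{x}{y}$ corresponds to a primal edge $(u,v) \in E(X,Y)$, which by \Cref{def:dual-cycle-of-partition} is precisely the condition $\dualedge{x}{y} \in C$. For such an edge, the computation in the proof of \Cref{claim:cycles-give-spanning-tree-distribution} (specifically \eqref{eqn:sampling-probability}) shows that, conditioned on $\dualedge{x}{y}$ being chosen, partition $P$ is accepted with probability
\[
\frac{\sp{X}\cdot\sp{Y}}{\abs{E(X,Y)}\,\sp{G,2}}.
\]
This conditional probability does \textbf{not} depend on which edge of $C$ was picked. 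So the joint probability
\[
\Pr[\dualedge{x}{y}\text{ sampled at Step~\ref{alg:sample-edge}}\,\wedge\,P\text{ accepted}] = \frac{1}{\abs{E(G)}}\cdot\frac{\sp{X}\cdot\sp{Y}}{\abs{E(X,Y)}\,\sp{G,2}}
\]
is a common constant $q$ for every edge of $C$, and is $0$ for every edge not in $C$.

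Then I would sum this joint probability over all $\dualedge{x}{y} \in C$ to obtain the marginal probability $\Pr[P\text{ accepted}] = \abs{C}\cdot q$, using the identity $\abs{E(X,Y)} = \abs{C}$ which follows from \Cref{def:dual-cycle-of-partition}. Dividing, Bayes gives
\[
\Pr[\dualedge{x}{y}\text{ chosen at Step~\ref{alg:sample-edge}} \mid P\text{ accepted}] = \frac{q}{\abs{C}\cdot q} = \frac{1}{\abs{C}},
\]
as claimed. The proof is essentially bookkeeping; the only subtlety worth flagging is making sure that a rejected-and-restarted trial is counted correctly, but this is harmless because the conditioning is on the eventual accepted output, and each independent trial contributes the same ratio $\frac{1}{\abs{C}}$ across edges of $C$. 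Hence I expect no real obstacles — the technical content has already been done in \Cref{claim:cycles-give-spanning-tree-distribution}.
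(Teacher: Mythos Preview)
Your proposal is correct and takes essentially the same approach as the paper: both use \eqref{eqn:sampling-probability} to observe that the conditional probability of $P$ being accepted given a particular starting edge is the same for every $\dualedge{x}{y}\in C$, note that these starting-edge events are equiprobable and disjoint, and conclude the uniform conditional distribution via Bayes. Your write-up is a bit more explicit (computing the joint probability $q$ and summing), but the logical content is identical.
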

\begin{proof}
    From \Cref{eqn:sampling-probability}, the quantity $\Pr[\text{\Cref{alg2} outputs } P\mid \dualedge{x}{y} \text{ is sampled at Step~\ref{alg:sample-edge}}]$ is the same for every $\dualedge{x}{y}\in C$.
    Since every such $\dualedge{x}{y}$ is sampled with the same probability, and these are disjoint events, and $P$ is only sampled if some $\dualedge{x}{y}\in C$ is chosen,  it follows that $\Pr[\dualedge{x}{y} \text{ is sampled at Step~\ref{alg:sample-edge}}\mid \text{\Cref{alg2} outputs } P]=\frac{1}{\abs{C}}$.
\end{proof}

\subsection{The Unseparating Mapping and Cycles in the Dual}\label{sec:sep-prob}

\newcommand{\unsep}{unseparating\xspace}
\newcommand{\Unsep}{Unseparating\xspace}

In the last section, we showed that we can sample partitions by sampling cycles instead. Now, we argue how this can be used to show that \lstd has $\alpha$-separation fairness for constant $\alpha > 0$. 
Throughout this section, we will consider some arbitrary edge $(u,v)$ of $\gmn$. 
We will show that the probability that $u,v$ are separated is bounded away from 1.
In particular, we show that this will hold if an \textit{\unsep mapping} (defined below) exists.

Let $\mathcal{C}$ be some set of cycles of $\gmnd$.
For a cycle $C\in \mathcal{C}$, let $P(C)=(S,T)$ be the partition of $\gmn$ corresponding to $C$. 
With some abuse of notation, define $\imb{C}=\imb{P(C)}$ to be the imbalance of the partition corresponding to $C$.
For vertices $u,v\in \gmn$,
let $\mathcal{C}_{uv}$ be the subset of $\mathcal{C}$ such that for every $C\in \mathcal{C}_{uv}$, the two vertices $u,v$ are in the same part of $P(C)$.
To define the \unsep mapping, we first define a locality property.
\newcommand{\Local}{Local Difference\xspace}
\newcommand{\local}{local difference\xspace}
\newcommand{\locally}{locally different\xspace}

\begin{definition}[\Local]\label{def:local}
    Given two sets $S_1,S_2$ of (possibly directed) edges of $\gmnd$, we say that $S_1$ and $S_2$ are \locally on $\dualgraph{H}$ if $\dualgraph{H}$ is a subgrid\footnote{A subgrid of a graph is any subgraph that is isomorphic to $\boxplus_{a,b}$ for some positive integers $a,b$.} of $\gmnd\setminus \{\outerface\}$ such that every edge in $S_1\Delta S_2$ lies in $\dualgraph{H}$, except for edges incident on the outer face vertex $\outerface$.
\end{definition}

In other words, the sets agree everywhere outside $\dualgraph{H}$, except for edges incident on $\outerface$. However, we specifically apply the local difference property to sets $S_1,S_2$ that correspond to (un)directed simple paths that start and end outside $\dualgraph{H}$. This requirement also ensures that for some edge $\langle\dualvertex{v},\outerface\rangle\in S_1\Delta S_2$, $\dualvertex{v}$ must lie in $\dualgraph{H}$. See \cref{fig:local-difference}.

\begin{figure}[htbp]
    \centering
    \begin{subfigure}{0.49\textwidth}
        \centering
        \includegraphics[width=0.55\linewidth]{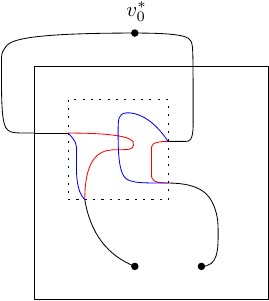}
        \caption{}
\end{subfigure}
\hfill
\begin{subfigure}{0.49\textwidth}
        \centering
        \includegraphics[width=0.55\linewidth]{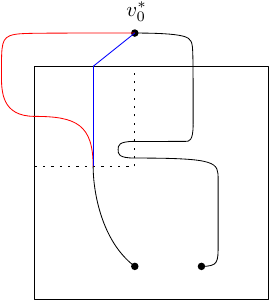}
        \caption{}
\end{subfigure}
\caption{Paths that are locally different on a subgrid $\dualgraph{H}$ (drawn with a dotted boundary). The red and blue portions are part of exactly one path, whereas the black portions are part of both.
Note that if $\dualgraph{H}$ was not adjacent (and part of) the boundary of the grid, then the paths cannot differ on any edges to $\outerface$.}
\label{fig:local-difference}
\end{figure}

\begin{definition}[\Unsep Mapping]
\label{def:unsep}
    An \unsep mapping for vertices $u,v\in \gmn$ on $\mathcal{C}$ with parameters $(\beta,\delta)$ is a mapping $f:\mathcal{C}\to \mathcal{C}_{uv}$ such that
    \begin{enumerate}
        \item If $f(C)=C'$, then $C,C'$ are \locally on some subgrid $\dualgraph{H}$ with at most $\beta$ edges. \label{unsep:differ-few-edges}
        \item If $f(C)=C'$, then the number of edges incident on $\outerface$ in $C'$ is at most the number in $C$. \label{unsep:dual-edges-unchanged}
        \item If $f(C)=C'$, then $\abs{\imb{C}-\imb{C'}}\leq \delta$. \label{unsep:imbalance}
        \item For any $C'\in \mathcal{C}_{uv}$, $\abs{\{C\in \mathcal{C} \mid f(C)=C' \}}\leq \beta$. \label{unsep:not-too-many-mapped}
    \end{enumerate} 
\end{definition}

Recall that we sample a cycle via a random walk. The ratio of the probabilities between two random walks is an inverse exponential function of the difference in the number of edges. 
Though we will use \unsep mappings on the \textit{loop-erasures} of random walks and not the walks themselves, we can show that properties~\ref{unsep:differ-few-edges} and \ref{unsep:dual-edges-unchanged} are sufficient to charge the probability of a random walk $W$ whose loop-erasure is $C\setminus\{u,v\}$ and separates $u,v$ to a random walk $W'$ whose loop-erasure is $C'\setminus \{u,v\}$ that does not separate $u,v$.

After the random walk, we have an additional rejection step proportional to $e^{-\lambda\cdot \imb{C}}$. Property~\ref{unsep:imbalance} ensures that these rejection factors are also similar in magnitude. These properties combine to ensure that given an \unsep mapping, we can lower bound the probability that two vertices \textit{are not} separated with some function of $\delta,\beta$ times the probability that they \textit{are} separated. An example of an \unsep mapping for two cycles is described in \Cref{fig:unsep}. 
In \Cref{sec:reconnect}, we prove \cref{thm:main-exact}, the existence of an unseparating mapping with constant parameters.

\begin{figure}[htbp]
    \centering
    \includegraphics[width=0.65\linewidth]{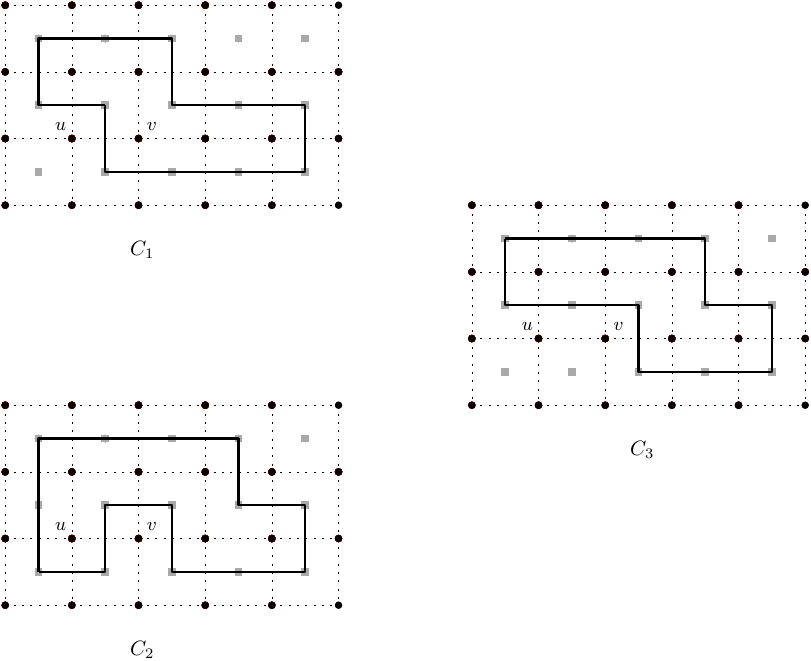}
    \caption{Two separating cycles $C_1,C_2$ that could potentially map to the same unseparating cycle $C_3$. In this case, $\delta=1$ since $\abs{\imb{C_2}-\imb{C_3}}=1$. We also have $\beta=\max(12,2)$ since edges are modified in a $3\times 3$ subgrid to go from $C_1$ to $C_3$ and two cycles map to $C_3$.}
    \label{fig:unsep}
\end{figure}

\begin{theorem}\label{thm:main-exact}
    There are universal constants $\beta,n_0$ such that for any $m,n\geq n_0$, there is a \unsep mapping for every pair of adjacent vertices $(u,v)$ in $\gmn$ on the set of all cycles of $\gmnd$ of length at least $2\beta$ with parameters $(\beta,2)$.
\end{theorem}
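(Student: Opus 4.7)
The plan is to construct $f$ explicitly via a local surgery on the dual cycle. Given $C\in\mathcal{C}$ with corresponding partition $P(C)=(S,T)$ where $u\in S$ and $v\in T$, the goal is to produce $C'=f(C)$ by modifying $C$ inside a constant-sized dual subgrid $\dualgraph{H}$ around $(u,v)$ so that $u,v$ land in the same part. The key structural observation is that the dual edge corresponding to $(u,v)$ must lie in $C$, so $C$ passes immediately between $u$ and $v$; a small, targeted rearrangement of the boundary around $(u,v)$ should suffice to ``unseparate'' them.

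The natural first attempt is to move $u$ (or $v$) from its current part to the other. In the generic ``thick'' case, where both $S$ and $T$ locally contain several vertices near $(u,v)$, this single-vertex flip leaves both parts connected and only alters $C$ on the $O(1)$ dual edges bounding $u$. The imbalance then shifts by exactly $1$, well within $\delta=2$. However, as the introduction warns, this surgery can fail when $(S,T)$ is \emph{thin} near $(u,v)$, in the sense that one part degenerates locally to a peninsula or isthmus; flipping a single vertex then disconnects a part and the resulting boundary is not a valid dual cycle.

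To handle the thin cases, I would invoke the graph-theoretic tools developed in \cref{sec:tools}, specifically \Cref{lem:1-thin,lem:2-thin}, which identify witness structures (an isthmus vertex, a short bottleneck, or a two-wide strip) located within a constant radius of $(u,v)$ whenever the naive flip fails. For each witness pattern, I would specify a bounded surgery: either extending the flip along the peninsula until it terminates in a wider region, or performing a local transposition that swaps a bounded pair of vertices between $S$ and $T$ so that both parts remain connected, $u$ and $v$ share a part, and the imbalance changes by at most $2$. When $(u,v)$ lies near the grid boundary, care is needed to enforce property~\ref{unsep:dual-edges-unchanged} so that no new edges incident on $\outerface$ are introduced; restricting flips to only remove or preserve boundary contact suffices, and choosing $n_0$ large enough keeps $\dualgraph{H}$ safely inside the grid away from the opposite boundary. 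Property~\ref{unsep:not-too-many-mapped} then follows from the fact that $f$ depends only on the restriction $C|_{\dualgraph{H}}$: since $\dualgraph{H}$ has $O(1)$ edges, at most $O(1)$ cycles can share a common image, and $\beta$ is chosen to absorb this bound.

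The main obstacle will be showing that a \emph{constant-radius} window around $(u,v)$ always suffices, so that $\beta$ is truly universal and independent of $m,n$. The intuition that ``local problems admit local fixes'' fails in general: the excerpt's \Cref{fig:bad-rebalance} illustrates configurations where the nearest valid exactly-balance-preserving surgery sits arbitrarily far from $(u,v)$. The crux of the argument is therefore to prove, via the thinness lemmas, that these pathological global configurations cannot obstruct surgery when we are allowed to change the imbalance by up to $2$ rather than insisting on exact preservation. The slack of $\delta=2$ is precisely what lets a bounded-neighborhood surgery succeed, and establishing this tight trade-off between locality and imbalance slack is where the detailed case analysis of \Cref{sec:reconnect} must be concentrated.
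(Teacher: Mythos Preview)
Your proposal is correct and follows essentially the same approach as the paper: reduce to \Cref{thm:main-reconnect}, which is proved by a case analysis on the local configuration around $(u,v)$ using the tools of \Cref{sec:tools}, and then read off the four properties of the \unsep mapping from locality. One refinement worth noting: in the hard cases the paper does not ``extend the flip along the peninsula''; instead it flips $u$ first, observes via \Cref{lem:create-island} that this necessarily creates an \emph{island} among the now-disconnected blue regions, and then uses the island structure (through \Cref{lem:island-walk}) to guarantee that a 1-thin or 2-thin configuration appears within constant distance, which \Cref{lem:1-thin,lem:2-thin} then resolve---so the island/create-island mechanism is the missing link between ``the naive flip disconnects a part'' and ``the thin-structure lemmas apply.''
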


\subsection{Unseparating Mappings imply \Cref{thm:main-intro}}\label{sec:implies}
In the rest of this section, we assume \Cref{thm:main-exact} is true, and complete the proof of \Cref{thm:main-intro}. Again, we fix some edge $(u,v)$ in the primal grid. Recall that \Cref{alg2} constructs a loop-erased random walk in $\gmnd$ from a start vertex $\dualvertex{x}$ to a neighboring end vertex $\dualvertex{y}$, and completes the loop by adding edge $\dualedge{x}{y}$ to obtain a cycle. We will also need the following theorem relating unseparating mappings to random walks, whose proof we present in \Cref{sec:theorem5}.

\begin{restatable}{theorem}{walkcollapse} \label{thm:walk-collapse}
    Consider a pair of cycles $C$ and $C'$ of $\gmnd$ that are \locally on a subgrid $\dualgraph{H}$ with at most $\beta$ edges. 
    Let $\dualvertex{e}=\dualedge{x}{y}\in C\cap C'$ such that $\dualvertex{x},\dualvertex{y}$ lie outside $\dualgraph{H}$.
    Let $D,D'$ be the directed paths from $\dualvertex{x}$ to $\dualvertex{y}$ along $C\setminus\{\dualvertex{e}\},C'\setminus\{\dualvertex{e}\}$ respectively.
    Then there is a one-to-one function $g$ such that for every walk $W$ whose loop-erasure is $D$, there is a walk $\widehat{W} = g(W)$ whose loop-erasure is $D'$, such that $W,\widehat{W}$ differ in at most $3\beta^2$ edges\footnote{That is, the multiset difference of the directed edges in $W,\widehat{W}$ has size $3\beta^2$.}, and the number of edges incident to $\outerface$ in $\widehat{W}$ is at most that in $W$.
\end{restatable}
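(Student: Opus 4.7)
The plan is to construct $g$ by explicit surgery on $W$ using the standard loop-erased-walk decomposition. Write $W = L_0 \cdot d_1 \cdot L_1 \cdot d_2 \cdots L_{k-1} \cdot d_k$, where $d_i = (v_{i-1}, v_i)$ is the $i$-th advancing edge of $D = (v_0, \ldots, v_k)$ and each $L_i$ is a closed sub-walk at $v_i$ that preserves the partial loop-erasure stack $(v_0, \ldots, v_i)$ under Wilson's algorithm. Every walk with loop-erasure $D$ factors uniquely in this form, and each edge of $W$ belongs either to some $d_i$ or to exactly one $L_i$.

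Next, analyze $D$ versus $D'$. Since their edges agree outside $\dualgraph{H}$ together with the $\outerface$-incident edges at vertices of $\dualgraph{H}$, one can decompose $D = A_0 \cdot M_1 \cdot A_1 \cdots M_r \cdot A_r$ and $D' = A_0 \cdot M_1' \cdot A_1 \cdots M_r' \cdot A_r$, where the $A_i$ are maximal shared sub-paths and each pair $(M_j, M_j')$ consists of alternate sub-paths between shared endpoints $a_j, b_j$, localized to $\dualgraph{H}$ (with possible passage through $\outerface$). The total edges across $\cup_j (M_j \cup M_j')$ is $O(\beta)$. I would process each swap $M_j \to M_j'$ independently; the modifications compose additively, so it suffices to handle a single swap carefully.

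For a single swap $D = P \cdot M \cdot Q \to D' = P \cdot M' \cdot Q$, let $W = W_P \cdot W_M \cdot W_Q$ be the associated decomposition. The surgery replaces $W_M$ by $\widehat{W}_M = \widehat{L}^{(a)} \cdot (M')^+$, where $(M')^+$ denotes the advancing-edge sequence along $M'$ and $\widehat{L}^{(a)}$ is an expanded super-loop at $a$ that absorbs $W_M$'s loops $L_{|P|}, L_{|P|+1}, \ldots, L_{|P|+|M|}$. Concretely, $\widehat{L}^{(a)}$ reaches each $m_j \in M$ through a canonical path in $\dualgraph{H}$, executes $L_{|P|+j}$ at $m_j$, and returns to $a$ along the reverse path. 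Setting $\widehat{W} = W_P \cdot \widehat{W}_M \cdot W_Q$, stack-evolution analysis shows $\mathrm{LE}(\widehat{W}) = D'$: each "reach-and-return" sub-excursion pushes and then pops the same vertices, leaving the stack at $P$ before $(M')^+$ then pushes exactly $M'$ to reach $b$ with stack $P \cdot M'$. Since each excursion adds $O(\beta)$ edges and there are $O(\beta)$ loops, the total multiset difference is $O(\beta^2)$ per swap, giving at most $3\beta^2$ overall.

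The main obstacle I expect is the outer-face constraint. A naive choice of canonical path (traversing $M$ itself) would re-cross every $\outerface$-incident edge of $M$ up to $|M|$ times, drastically inflating the $\outerface$-count. I would instead route each excursion through the interior of $\dualgraph{H}$, which is a connected subgrid disjoint from $\outerface$, thereby introducing no new $\outerface$-incident edges; the resulting change in the stack context for $L_{|P|+j}$ can be corrected by a local stack-surgery that only rewrites interior, non-$\outerface$ steps of $L_{|P|+j}$. Combined with the unseparating context (\Cref{def:unsep}, property 2) which ensures $|D'|_{\outerface} \leq |D|_{\outerface}$ on the advancing edges, the count of $\outerface$-incident edges in $\widehat{W}$ is then at most that in $W$. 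Finally, injectivity of $g$ follows because $D, D'$ are fixed: the canonical paths, the order in which loops are absorbed, and the stack surgeries are all determined by $D, D'$, so $W$ can be recovered from $\widehat{W}$ by extracting and inverting these insertions.
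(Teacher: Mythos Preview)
Your construction has a genuine gap in the loop-erasure verification. You keep $W_Q$ intact and claim $\mathrm{LE}(\widehat{W}) = D'$, but a maximal loop $L$ at some $q_i \in Q$ only avoids the \emph{old} stack $P \cup M \cup \{q_0,\ldots,q_{i-1}\}$; it may well visit a vertex $m' \in M' \setminus M$. In $\widehat{W}$ the stack at $q_i$ is $P \cup M' \cup \{q_0,\ldots,q_{i-1}\}$, so executing $L$ there pops the erasure back to $m'$ and destroys the $D'$ structure. The same issue bites inside your super-loop $\widehat{L}^{(a)}$: the canonical path in $\dualgraph{H}$ to reach $m_j$ may pass through vertices of $P$ (which can itself enter $\dualgraph{H}$ when there are multiple swaps), and $L_{|P|+j}$ may visit vertices of that canonical path. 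Your one-line ``local stack-surgery'' does not address any of this; this is precisely the obstacle the paper isolates in the discussion around \Cref{fig:loop-wwhat1,fig:loop-wwhat2}.

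The paper's fix is not to keep loops at their original sites at all. It first builds a fixed walk $W'$ with erasure $D'$ that visits every vertex of $D\cup D'$ (\Cref{lem:directed-cycle}), and then for each maximal loop $L$ of $W$ finds the $\prec_{W'}$-earliest vertex $u$ appearing in $L$ and splices $L$ there. This is exactly what guarantees the stack condition of \Cref{obs:splicing-loops}: by minimality of $u$, no earlier vertex of $W'$'s erasure lies in $L$. Injectivity then requires real work---the order in which several loops from the same $v_\ell\in D$ get distributed among different $u$'s must be recoverable, which is what the loop-sequence/breakdown machinery and \Cref{alg:one-to-one} accomplish.

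Two further points. First, your alternating decomposition $D=A_0M_1A_1\cdots$, $D'=A_0M_1'A_1\cdots$ with the same ordered $A_i$ presupposes that shared edges carry the same orientation in $D$ and $D'$; this is false on general planar graphs (cf.\ \Cref{fig:differing-cycles}) and on the grid is the content of \Cref{lem:edge-cannot-flip}, which you would need to prove. Second, you justify the $\outerface$-edge bound by invoking property~\ref{unsep:dual-edges-unchanged} of \Cref{def:unsep}, but \Cref{thm:walk-collapse} does not assume the unseparating-mapping hypotheses; the bound has to come from the construction itself, as in \Cref{lem:directed-cycle}.
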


We assume \Cref{thm:main-exact,thm:walk-collapse} are true.
    \Cref{claim:cycles-give-spanning-tree-distribution} implies that that we can sample from \lstd as follows: First, run \Cref{alg:balanced-partition}. Accept the resulting partition $P$ with probability $e^{-\lambda\cdot \imb{P}}$ and restart otherwise.
    Fix some pair of adjacent vertices $u,v$ and let $f$ be the promised \unsep mapping for them. Let $\dualgraph{H}$ be the subgrid in Property~\ref{unsep:differ-few-edges}.

    We will analyze the event that in one run of
    \Cref{alg:balanced-partition}, we output a cycle $C$ that separates $u,v$.
    Our goal is to show that there was a similar probability to output a cycle $C'$ that did \textit{not} separate $u,v$. As we will show, this implies \cref{thm:main-intro}.
    First, assume that $C$ has length at least $2\beta$. In the unseparating mapping $f$ for $(u,v)$, let $f(C)=C'$. Since we assume \Cref{thm:main-exact} is true, this mapping exists. 
    Further suppose that in Step~\ref{alg:sample-edge}, we started with some dual edge $\dualvertex{e}=\dualedge{x}{y}\in C\cap C'$ outside $\dualgraph{H}$.
    Call this event $\E_{\dualvertex{e}\in C\cap C'}$.
    Since $\dualgraph{H}$ has size at most $\beta$, and  $\beta\leq \frac{\abs{C}}{2}$,
    there are at least $\frac{\abs{C}}{2}$
    possibilities for such an edge.
    
    Let $D,D'$ be the directed paths from $\dualvertex{x}$ to $\dualvertex{y}$ along $C\setminus\{\dualvertex{e}\},C'\setminus\{\dualvertex{e}\}$ respectively.
    \Cref{alg2} sampled $C$ only if it walked along some walk $\Pi$ 
    from $\dualvertex{x}$ to $\dualvertex{y}$ in Step~\ref{alg:random-walk-over} such that the loop-erasure of $\Pi$ was $D$.
    We will modify every such random walk to a walk $\Pi'$ whose loop-erasure is $D'$ by applying \Cref{thm:walk-collapse}.
    Note that  $\Pi'$ has at most $3\beta^2$ additional edges that were not in $\Pi$, and none of these edges were incident to $\outerface$.
    Thus, $$\Pr[\text{\Cref{alg2} walks along }\Pi'\mid \E_{\dualvertex{e}\in C\cap C'}]\geq \frac{\Pr[\text{\Cref{alg2} walks along }\Pi \mid \E_{\dualvertex{e}\in C\cap C'}]}{4^{3\beta^2}}$$
    where the probability is over the choices of \Cref{alg2} until Step~\ref{alg:random-walk-over}.
    This follows from the observation that the degree of the vertices of $\gmnd$ is at most 4, except for $\outerface$.
    
    Adding this up across all possible random walks $\Pi$, we get that the probability that \Cref{alg2} obtains $C'$ at Step~\ref{alg:random-walk-over} is at least $\frac{1}{4^{3\beta^2}}$ times the probability of getting $C$, again conditioning $\E_{\dualvertex{e}\in C\cap C'}$.
    Note that this step crucially requires that the function $h$ such that $h(\Pi)=\Pi'$ is one-to-one.
    Afterwards, in Step~\ref{alg:accept}, $C$ and $C'$ may be accepted with different probabilities since $|E(X,Y)|$ in this step could be different.
    If $P(C)=(X,Y)$ is the partition corresponding to $C$, note that $\abs{E(X,Y)}=\abs{C}$. Since $\abs{C'}-\abs{C}\leq \beta\leq \frac{\abs{C}}{2}$ because of \local, the acceptance probabilities are different by at most a factor of $3/2$.
    Therefore, we obtain
    \[
    \Pr[C' \text{ is obtained at Step~\ref{alg:accept}} \mid \E_{\dualvertex{e}\in C\cap C'} ] \geq \frac{2}{3\cdot 4^{3\beta^2}}\cdot \Pr[C \text{ is obtained at Step~\ref{alg:accept}} \mid \E_{\dualvertex{e}\in C\cap C'}].
    \] 
    We also have
    \begin{align*}
        \Pr[C \text{ is obtained at Step~\ref{alg:accept}} \mid \E_{\dualvertex{e}\in C\cap C'} ]&=\frac{\Pr[C \text{ is obtained at Step~\ref{alg:accept}  }\wedge \E_{\dualvertex{e}\in C\cap C'} ]}{\Pr[\E_{\dualvertex{e}\in C\cap C'}]}.
        \intertext{From \Cref{claim:uniform-edge-prob}, we get}
        \Pr[C \text{ is obtained at Step~\ref{alg:accept}} \mid \E_{\dualvertex{e}\in C\cap C'} ]&=\frac{\Pr[C \text{ is obtained at Step~\ref{alg:accept}}]\cdot \frac{\abs{C\cap C'}}{\abs{C}}}{\Pr[\E_{\dualvertex{e}\in C\cap C'}]}.
    \end{align*}
    The above similarly holds for $C'$.
    Combining this with our previous equation, we get
    \[
    \Pr[C' \text{ is obtained at Step~\ref{alg:accept}} ] \geq \frac{1}{4^{3\beta^2}}\cdot \frac{2}{3}\cdot \frac{1}{2} \Pr[C \text{ is obtained at Step~\ref{alg:accept}} ]
    \]
    because $\abs{C}-\abs{C'}\leq \frac{\beta}{2}\implies \frac{\abs{C'}}{\abs{C}}\geq \frac{1}{2}$.
    Now, we note that if $C$ is accepted with probability $p=e^{-\lambda\cdot \imb{C}}$, then $C'$ is accepted with probability at least $pe^{-\delta\lambda}$, from Property (\ref{unsep:imbalance}) in \Cref{def:unsep}.
    \begin{equation}\label{eqn:C-to-Cprime}
        \Pr_{\lstdmath}[ C' \text{ is sampled} ] \geq \frac{1}{3e^{\delta\lambda}4^{3\beta^2}}\Pr_{\lstdmath}[ C \text{ is sampled} ]
    \end{equation}
    We can make the above argument for every such $C$ of length at least $2\beta$ that separates $u,v$.
    Let $\E_{2\beta}$ denote the event that \lstd samples a cycle of length at least $2\beta$.
    Adding up the LHS of \Cref{eqn:C-to-Cprime} across all such $C$ with length at least $2\beta$, and noting that each $C'$ can appear at most $\beta$ times on the left hand side (from Property (\ref{unsep:not-too-many-mapped}) of \unsep mappings).
    \begin{equation}\label{eqn:C-to-Cprime2}
    \Pr_{\lstdmath}[u,v \text{ are not separated} \wedge \E_{2\beta}]\geq \frac{1}{3\beta e^{\delta\lambda}4^{3\beta^2}}\Pr_{\lstdmath}[u,v \text{ are separated} \wedge \E_{2\beta}].
    \end{equation}

    For the case when the sampled cycle has length at most $2\beta$, we simply assume that $u,v$ are separated.
    This will not significantly affect the separation probability if we can show that the probability of such an event, i.e. $\Pr[\neg \E_{2\beta}]$, is small.
    We will argue that for small $\beta$, the number of vertices contained inside $C$ is also small and thus, the probability that \lstd samples such a partition is small.
    This follows from the following claim, whose proof we leave to \Cref{app:missing-proofs}.
    
    \begin{restatable}{claim}{shortcycle}\label{claim:-short-cycle}
        For any cycle $C$ on $\gmnd$, if $\abs{C}\leq \min(m,n)-2$, then the number of primal nodes enclosed by $C$ is at most  $3\abs{C}^2$.
    \end{restatable}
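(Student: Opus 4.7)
The plan is a bounding-box argument, splitting on whether the outer-face vertex $\outerface$ lies on $C$; write $\ell=|C|$ throughout. In \emph{Case 1} ($\outerface \notin V(C)$), the cycle $C$ corresponds to a simple closed curve lying entirely within the interior dual lattice, whose vertices sit at half-integer coordinates. Its axis-aligned bounding box $a \times b$ satisfies $a + b \leq \ell/2$, since the bounding-box perimeter is a lower bound on the cycle's length. The enclosed primal vertices are exactly the integer-coordinate points strictly inside this bounding box, numbering at most $ab \leq \ell^2/16$ by AM-GM, well within the desired $3\ell^2$ bound.

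In \emph{Case 2} ($\outerface \in V(C)$), let $P = C \setminus \{\outerface\}$ be the interior path of length $\ell - 2$ from $u^*$ to $v^*$, where both endpoints are ``border'' dual vertices adjacent to $\outerface$ and correspond to boundary primal edges $e_1, e_2$ of $\gmn$. I would first rule out $u^*, v^*$ lying on opposite sides of the grid: such endpoints have $\ell_1$-distance at least $\min(m,n) - 2$, which would force $|P| \geq \min(m,n) - 2$, contradicting $|P| = \ell - 2 \leq \min(m,n) - 4$. Thus $u^*, v^*$ share a side or lie on adjacent sides. In the same-side sub-case (WLOG top), the enclosed region fits in a bounding box of dimensions at most $a \times (b+1)$, where $a \times b$ is $P$'s bounding box (so $a + b \leq \ell - 2$), yielding $a(b+1) \leq \ell^2/4$ by AM-GM. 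In the adjacent-sides sub-case (WLOG top-right), writing $u^* = (i_1 + 1/2, n - 1/2)$ and $v^* = (m - 1/2, j_1 + 1/2)$, the $\ell_1$-distance $(m - i_1 - 1) + (n - j_1 - 1) \leq \ell - 2$ forces the enclosed corner region into a $(m - i_1) \times (n - j_1)$ box, and AM-GM again bounds its size by $\ell^2/4$.

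The main obstacle is verifying, in Case 2, that the enclosed region truly fits in the claimed bounding box, especially when $P$ has complicated loops. If $P$ is monotone, the enclosed region is a simple strip between $P$ and the top boundary; but if $P$ dips downward and loops, the enclosed region may instead consist of primal vertices trapped inside a sub-loop of $P$, potentially disjoint from any top strip. In both situations the enclosed vertices still lie within the bounding box of $P$ (possibly extended by one row toward the boundary), but pinning this down requires careful attention to the planar topology of $C$ and to which side of the cycle contains $\outerface$. The hypothesis $\ell \leq \min(m,n) - 2$ is essential throughout Case 2: it is precisely what excludes the opposite-sides configuration, without which the enclosed region could contain $\Theta(mn)$ primal vertices and the claim would fail badly.
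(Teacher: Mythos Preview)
Your approach is essentially the paper's: split on whether $\outerface$ lies on $C$, use the length hypothesis to exclude the opposite-sides configuration, and bound the enclosed primal nodes by an axis-aligned box with perimeter $O(\ell)$. The paper packages Case~2 slightly differently: rather than separate same-side and adjacent-sides sub-cases, it adds one extra border row and column to the dual grid and replaces the two edges incident to $\outerface$ by the short path along that new border, producing a genuine grid cycle $C'$ with $|C'|\le 2|C|$ in a pure grid (no outer-face vertex). A single bounding-box argument on $C'$ then handles both sub-cases at once and cleanly dissolves the topological worry you correctly flag about loops of $P$ and ``which side is the interior.''

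One small inaccuracy in your adjacent-sides sub-case: the enclosed region need not lie inside the $(m-i_1)\times(n-j_1)$ corner box. If $P$ detours leftward past column $i_1$ before heading toward $v^*$, interior primal nodes can appear to the left of that column (for a concrete instance take $m=n=10$, $\ell=8$, $i_1=j_1=8$, and let $P$ go two steps left, one down, then three right). The repair is exactly your same-side argument: the interior is contained in the bounding box of $P$ extended by one unit to the two relevant grid boundaries, and since $u^*,v^*\in P$ this box has side-lengths summing to at most $\ell$, which still gives the required $O(\ell^2)$ bound.
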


    Since the number of vertices enclosed by $C$ is at most $12\beta^2$, then $\imb{C}\geq \frac{mn}{2}-12\beta^2$.
    For large enough $m,n$ and a constant $c$, such that $\beta\leq \sqrt{mn-\frac{c\log(mn)}{\lambda}}$, the probability that we ever sample such a cycle $C$ from \lstd can be upper bounded by $\frac{1}{2}$, by following the same arguments that we used to show that \lstd is almost-balanced with high probability (See \Cref{sec:imbalance}).
    Then we have
    \begin{align*}
        \Pr[u,v \text{ are separated}]-\Pr[u,v \text{ are separated}\wedge \E_{2\beta}]=\Pr[u,v \text{ are separated}\wedge \neg \E_{2\beta}]
        \leq \Pr[ \neg \E_{2\beta}] \leq \frac{1}{2}.
    \end{align*}
    Plugging this in \Cref{eqn:C-to-Cprime2}, we get
    \begin{align*}
    \Pr_{\lstdmath}[u,v \text{ are not separated} \wedge \E_{2\beta}]&\geq \frac{1}{3\beta e^{\delta\lambda}4^{3\beta^2}}(\Pr_{\lstdmath}[u,v \text{ are separated}] -\frac{1}{2})\\
    \implies \Pr_{\lstdmath}[u,v \text{ are not separated}]&\geq \frac{1}{3\beta e^{\delta\lambda}4^{3\beta^2}}(\Pr_{\lstdmath}[u,v \text{ are separated}]-\frac{1}{2})\\
    \implies\Pr_{\lstdmath}[u,v \text{ are not separated}]&\geq \frac{1}{2+6\beta e^{\delta\lambda}4^{3\beta^2}}.
    \end{align*}

    Since our choice of $u,v$ was arbitrary,  combining the above arguments with \Cref{thm:main-exact} implies \Cref{thm:main-intro}.

\subsection{Extension to Groups of Events}
In this section, we extend \cref{thm:main-intro} by showing that the probability that multiple edges are separated in a partition sampled from \lstd is also small.
We follow an argument similar to the proof of \cref{thm:main-intro}, except we apply the unseparating mapping multiple times. More specifically, instead of comparing the probability that an edge $e$ is separated with the probability that it is not separated, we will compare the probability that $q$ edges are simultaneously separated and the probability that only a subset of them are separated. 

\begin{theorem}\label{thm:group-separation}
    For constant $\lambda$, and $q=o(\sqrt{mn})$, and any set of $q$ edges of $\gmn$, each distance at least $\gamma+1$ apart (where $\gamma$ is the constant from \cref{thm:main-reconnect}), the probability that a 2-partition sampled from \lstd separates all the $q$ edges is $e^{-qc}$, for some constant $c$.
\end{theorem}
\begin{proof}
\newcommand{\edgeset}{\mathcal{E}}
    As before, we will assume the correctness of  \Cref{thm:main-exact,thm:walk-collapse}.
    Fix some set $Q$ of $q$ edges that are each at distance at least $\gamma+1$ from each other.
    Let $\edgeset$ be an arbitrary subset of $Q$ of size $\ell$ (to be determined later).
    Order the edges of $\edgeset$ as $e_1,\ldots, e_\ell$.
    Let $f_1,f_2,\ldots,f_\ell$ be the \unsep mappings for them.

    Consider some cycle $C$ output by 
    \Cref{alg:balanced-partition} that separates every edge in $\edgeset$.
    We assume that $C$ has length at least $2\ell\beta$. 
    Define a set of functions $F_i$ such that $F_0(C)=C$ and $F_i(C)=f_i(F_{i-1}(C))$ for all $1\leq i\leq \ell$.
    Since each unseparating mapping can decrease the length of the cycle by at most $\beta$ (because of Property~\ref{unsep:differ-few-edges}), this is well-defined.
    Since the subset $\edgeset$ and an arbitrary ordering of the initial $q$ edges uniquely defines $F_\ell(C)$, we will also refer to $F_\ell(C)$ as $F(\edgeset,C)$.
    Define $H_i$ to be the subgrid (from  Property~\ref{unsep:differ-few-edges}) where the cycles $F_i(C)$ and $F_{i-1}(C)$ differ. Note that since the edges of $\edgeset$ are at distance at least $\gamma+1$ from each other, the unseparating mappings have the additional property that the graphs $H_i$ are disjoint.
    
    We assume that in Step~\ref{alg:sample-edge}, we started with some dual edge from $C \cap F(\edgeset,C)$ outside of $\bigcup_{i=1}^{\ell} \dualgraph{H_i}$.
    Since $\bigcup_{i=1}^{\ell} \dualgraph{H_i}$ has size at most $\ell\beta$, and  $\abs{C}\geq 2\ell\beta$,
    there are at least $\frac{\abs{C}}{2}$
    possibilities for such an edge. 
    
    Like before, we now apply \Cref{thm:walk-collapse} to $F_i(C)$ and $F_{i+1}(C)$, sequentially for $i=0,1,\ldots, q$, so that for any walk $\Pi$ whose loop erasure was $C$, we get a unique walk $\Pi'$ whose loop erasure is $F(\edgeset,C)$, such that $\Pi'$ has at most  $3\ell\beta^2$ additional edges that were not in $\Pi$, and none of these edges were incident to $\outerface$.
    Repeating the same arguments as before, across all possible random walks $\Pi$,
    and noting that the population imbalance can increase by at most $\ell\delta$, we get
    \begin{equation}\label{eqn:group-C-to-Cprime}
        \Pr_{\lstdmath}[ F(\edgeset,C) \text{ is sampled} ] \geq \frac{1}{3e^{\ell\delta\lambda}4^{3\ell\beta^2}}\Pr_{\lstdmath}[ C \text{ is sampled} ]
    \end{equation}
    
    We now want to add up the above equation over different cycles $C$ and subsets $\edgeset$ of $Q$. To this end, consider when $F(\edgeset,C)=F(\edgeset',C')$ for some other subset $\edgeset'\subseteq Q$ of size $\ell$ and some other cycle $C'$ that separates all the edges in $Q$. Firstly, since every edge from $Q$ except those in $\edgeset$ are still separated in $F(\edgeset,C)$, and both $\edgeset$ and $\edgeset'$ have size $\ell$, it follows that $\edgeset=\edgeset'$. Thus, the only cycles $C'$ for which $F(\edgeset,C)=F(\edgeset,C')$ is when the unseparating mapping maps two cycles that separate one edge in $\edgeset$ to the same cycle that does not separate it. From Property (\ref{unsep:not-too-many-mapped}) of \unsep mappings, there are at most $\beta$ such cycles per edge in $\edgeset$, meaning that there are at most $\beta^\ell$ cycles $C'$ such that $F(\edgeset,C)=F(\edgeset,C')$.

    Recall that we assumed that $C$ has length at least $2\ell\beta$. Let $\E_{2\ell\beta}$ denote the event that \lstd samples a cycle of length at least $2\ell\beta$.
    Adding up \cref{eqn:group-C-to-Cprime} over all cycles $C$ of length at least $2\ell\beta$ that separate every edge in $Q$, and over all subsets $\edgeset$ of $Q$ of size $\ell$, we get 

    \begin{align*}
        \sum_{\edgeset}\sum_C\Pr_{\lstdmath}[ F(\edgeset,C) \text{ is sampled} ] &\geq \binom{q}{\ell}\sum_{C} \frac{1}{3e^{\ell\delta\lambda}4^{3\ell\beta^2}}\Pr_{\lstdmath}[ C \text{ is sampled} ]
    \end{align*}
    The RHS is precisely the probability that every edge in $Q$ is separated intersected with the event $\E_{2\ell\beta}$. The LHS is a lower bound on the complementary event (i.e. not every edge in $Q$ is separated), except we may be over counting each cycle $\beta^\ell$ times. We can bound this above by $\beta^\ell$. Thus,
    \begin{align}\label{eqn:group-c-cprime2}
        \beta^\ell &\geq \binom{q}{\ell} \frac{1}{3e^{\ell\delta\lambda}4^{3\ell\beta^2}}\Pr_{\lstdmath}[ \text{Every edge in } Q \text{ is separated} \wedge \E_{2\ell\beta}]
    \end{align}
    Next, \cref{claim:-short-cycle} implies that a cycle with length smaller than $2\ell\beta$ encloses at most $12\ell^2\beta^2$ vertices, implying $\imb{C}\geq \frac{mn}{2}-12\ell^2\beta^2$.
    Since $\beta,\lambda$ are constants, for large enough $m,n$ and $q= o(\sqrt{mn})$, we can argue similarly as in \Cref{sec:imbalance} that $\Pr[\neg \E_{2\ell\beta}]\leq \frac{1}{2}e^{-qc}$ for some constant $c$.
    Then, since we have
    \begin{align*}
        \Pr[ \text{Every edge in } Q \text{ is separated} \wedge \E_{2\ell\beta}] &\geq \Pr[ \text{Every edge in } Q \text{ is separated}] -\Pr[ \neg \E_{2\ell\beta}],
    \end{align*}
    it must hold that either $\Pr[ \text{Every edge in } Q \text{ is separated}]\leq e^{-qc}$ is true, in which case the proof is done, or it is the case that
    \begin{align*}
        \Pr[ \text{Every edge in } Q \text{ is separated} \wedge \E_{2\ell\beta}] &\geq \frac{1}{2} \Pr[ \text{Every edge in } Q \text{ is separated}].
    \end{align*}
    Plugging this back into \cref{eqn:group-c-cprime2}, we get
    \begin{align*}
        \Pr_{\lstdmath}[ \text{Every edge in } Q \text{ is separated}] &\leq \frac{6}{\binom{q}{\ell}} \left(\beta  e^{\delta\lambda} 4^{3\beta^2}\right)^\ell.
    \end{align*}
    Since $\beta,\lambda$ are constants, $c=\beta  e^{\delta\lambda} 4^{3\beta^2}$ is also a constant. We lower bound $\binom{q}{\ell}$ with $\left(\frac{q}{\ell}\right)^\ell$. The expression is then minimized at $\ell=\frac{q}{ec}$, giving us
    \begin{align*}
        \Pr_{\lstdmath}[ \text{Every edge in } Q \text{ is separated}] &\leq 6e^{-\frac{q}{ec}}
    \end{align*}
    as required.
\end{proof}

\section{Modifying Loop-Erased Random Walks : Proof of \Cref{thm:walk-collapse}}
\label{sec:theorem5}

To fully appreciate the non-trivial nature of \Cref{thm:walk-collapse}, we first outline several obstacles towards a possible proof.
The first obstacle is the distinction between undirected and directed cycles. Recall that \Cref{alg2}, after loop-erasure, samples a directed path $D$ between two adjacent dual vertices, $\dualvertex{x}$ and $\dualvertex{y}$.
We then interpret this as sampling the undirected version of the cycle formed by $D \cup \{\dualedge{x}{y}\}$. We do this for technical reasons, as undirected cycles significantly simplify the proof of existence unseparating mappings in \Cref{sec:reconnect}. However, a significant issue arises: two undirected cycles, $C_1$ and $C_2$, that differ by only a few edges can, when oriented, exhibit differences across a substantially larger number of directed edges. Consider the example (not on a grid) depicted in \Cref{fig:differing-cycles}.

\begin{figure}[H]
    \centering
    \includegraphics[width=0.35\linewidth]{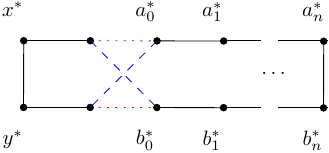}
    \caption{The two (undirected) cycles agree on the black solid edges. One cycle additionally uses the red dotted edges, while the other uses the blue dashed edges.}
    \label{fig:differing-cycles}
\end{figure}

If these cycles are directed to agree on the orientation of $\dualedge{x}{y}$, the number of directed edges in their set difference can become very large, as the direction of every edge of the form $\dualedge{a_i}{a_{i+1}}$ or $\dualedge{b_i}{b_{i+1}}$ would flip. This potentially introduces some complications in the proof of \Cref{thm:main-intro}. However, as we demonstrate in \Cref{lem:edge-cannot-flip}, this specific scenario cannot occur on a grid graph.

A second obstacle relates to the mapping between $W$ and $\widehat{W}$ in the statement of \Cref{thm:walk-collapse}. For this mapping to be one-to-one, every loop\footnote{The notion of a loop will be formally defined in \Cref{def:loop}. Informally, loops are the cycles in a walk that are removed by loop-erasure.} of $W$ must influence the construction of $\widehat{W}$. If, for example, we ignored the loops on some vertex $v$ in $D'\setminus D$ in the process of constructing $\widehat{W}$, this mapping would fail to be one-to-one because several walks that agree with $W$ everywhere except $v$ would all be mapped to $\widehat{W}$.
The natural solution is to traverse all the loops of $W$ in $\widehat{W}$. However, there may be a loop in $W$ that starts from some vertex in $D\setminus D'$ and never visits $D'$.
To walk along this loop, we need to also first visit some vertex on it.
We solve this issue by ensuring that $\widehat{W}$ visits every vertex in $D$. In particular, it will visit the vertices in $D\setminus D'$ on loops so that they do not appear on the loop-erasure of $\widehat{W}$.
We show this in \Cref{lem:directed-cycle}.

Thirdly, we must be careful in how we traverse these loops. For example, consider the instance in \Cref{fig:loop-wwhat1}, where $D,D'$ only differ on the path between $\dualvertex{x}_1$ and $\dualvertex{x}_2$. To resolve the previously discussed obstacle, $\widehat{W}$ can first walk along the red dashed path (from $D$), backtrack along it, then take the blue dotted path (from $D'$) and finally continue along the path common to both $D,D'$. This ensures that the loop-erasure of $\widehat{W}$ is $D'$ while also ensuring that we visit the vertices of $D$.

\begin{figure}[htbp]
    \centering
    \begin{subfigure}{0.49\textwidth}
        \centering
        \includegraphics[width=0.75\linewidth]{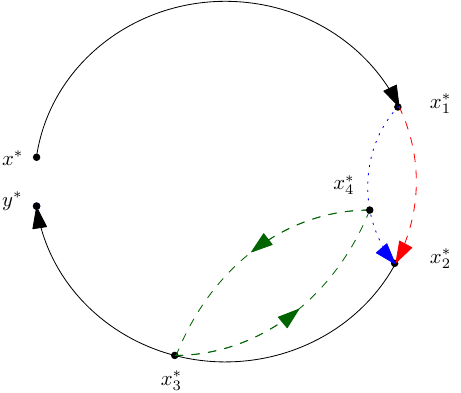}
        \caption{}
\label{fig:loop-wwhat1}
\end{subfigure}
\hfill
\begin{subfigure}{0.49\textwidth}
        \centering
        \includegraphics[width=0.75\linewidth]{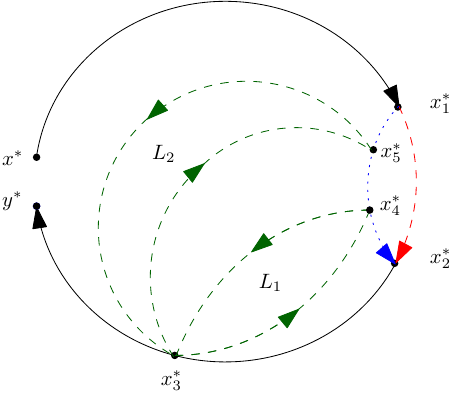}
        \caption{}
        \label{fig:loop-wwhat2}
\end{subfigure}
\caption{We consider two directed paths $D,D'$ from $\dualvertex{x}$ to $\dualvertex{y}$. Both of them share the black solid paths. The paths differ only between $\dualvertex{x}_1$ to $\dualvertex{x}_2$: $D$ contains the red dashed path whereas $D'$ contains the blue dotted path.}
    
\end{figure}

However, this strategy fails if $W$ contains the green dashed loop from $\dualvertex{x_3}$ to $\dualvertex{x_4}$ (which lies on the blue dotted path) and back to $\dualvertex{x_3}$.
If $\widehat{W}$ attempts to trace this green loop starting from $\dualvertex{x_3}$ \textit{after} the blue path segment, this creates a loop that includes the path from $\dualvertex{x_4}$ to $\dualvertex{x_2}$ to $\dualvertex{x_3}$. Since this path must appear on the loop-erasure, we will need to walk along this path again. This can add many edges to $\widehat{W}$ since the length of the $\dualvertex{x_2}$-$\dualvertex{x_3}$ path may be arbitrarily long.
The natural way to fix this is to walk along the green loop from $\dualvertex{x_4}$ instead. 

Our last obstacle, illustrated in \Cref{fig:loop-wwhat2}, arises when $W$ contains multiple loops, say $L_1$ and $L_2$, originating from the same vertex ($\dualvertex{x_3}$), both intersecting $D'\setminus D$ (the blue path).
A naive approach where $\widehat{W}$ incorporates these loops as it encounters their respective intersections on $D'$ (e.g., $L_1$ at $\dualvertex{x_4}$, $L_2$ at $\dualvertex{x_5}$) will not preserve the original sequence of $L_1$ and $L_2$ from $W$. Our construction must distinguish whether $W$ encountered $L_1$ or $L_2$ first from $\dualvertex{x_3}$ for the mapping to be one-to-one.

Our approach to this specific example is as follows: if $L_1$ appeared before $L_2$ in $W$, then $\widehat{W}$ traverses $L_2$ when it reaches $\dualvertex{x_5}$, and later traverses $L_1$ when it reaches $\dualvertex{x_4}$.
If $L_2$ appeared before $L_1$, upon reaching $\dualvertex{x_5}$ we walk along $L_2$ from $\dualvertex{x_5}$ to $\dualvertex{x_3}$, walk along the loop $L_1$, return to $\dualvertex{x_5}$ along the remaining part of $L_2$.
We describe our complete algorithm in \Cref{alg:w-to-w-hat} after presenting the formal definitions and some key lemmas.

\newcommand{\final}{N}
\begin{restatable}{lemma}{edgecannotflip}\label{lem:edge-cannot-flip}
    For undirected simple paths $P,P'$ from adjacent vertices $\dualvertex{x}$ to $\dualvertex{y}$ in $\gmnd$ that are \locally in a subgrid $\dualgraph{H}$, where $\dualvertex{x},\dualvertex{y}\not\in V(\dualgraph{H})$,
    the directed orientations $D,D'$ of $P,P'$ directed from $\dualvertex{x}$ to $\dualvertex{y}$ satisfy
    that if $\dualedge{p}{q}\in D$ and $\dualedge{q}{p}\in D'$, 
    then $\dualvertex{p},\dualvertex{q}$ must be in $\dualgraph{H}$.
\end{restatable}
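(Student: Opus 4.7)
My plan is to lift the question to the primal grid $\gmn$ via planar duality, exploiting the fact that the direction of a dual edge in an oriented dual cycle encodes which side of the corresponding primal edge lies in the ``interior'' primal region.

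Set $e_0 = \dualedge{x}{y}$ and form the cycles $C_P = P \cup \{e_0\}$ and $C_{P'} = P' \cup \{e_0\}$, which by \Cref{def:dual-cycle-of-partition} induce primal 2-partitions $(X,Y)$ and $(X',Y')$, with $X, X'$ the bounded interior regions. Orient both cycles to traverse $e_0$ in the same direction (say $\dualvertex{y} \to \dualvertex{x}$); this ensures $D, D'$ are the induced orientations on $P, P'$ and places each cycle's interior consistently on one side (say, the left) of each traversed edge. The reversed-edge hypothesis that $\dualedge{p}{q} \in D$ and $\dualedge{q}{p} \in D'$ then makes the interior-side of the primal edge $(a,b)$ corresponding to $\{p,q\}$ differ between $C_P$ and $C_{P'}$. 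Consequently, one of $a, b$ lies in $X \setminus X'$ and the other in $X' \setminus X$, so both $a, b \in X \Delta X'$.

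I then argue that $X \Delta X'$ is confined to primal vertices strictly inside the planar rectangular region spanned by $V(\dualgraph{H})$. By a standard parity identity, the primal edges on the boundary of $X \Delta X'$ correspond exactly to the dual edges of $C_P \Delta C_{P'} = P \Delta P'$. By local difference, every such dual edge either lies in $E(\dualgraph{H})$ or is incident to $\outerface$ with the other endpoint in $V(\dualgraph{H})$. Viewing $X \Delta X'$ as a union of dual faces, its topological boundary is confined to $\dualgraph{H}$'s planar rectangle together with edges attaching $V(\dualgraph{H})$ to $\outerface$. Because $P$ and $P'$ agree outside this local region, $X = X'$ there, so $X \Delta X'$ is empty far from $\dualgraph{H}$. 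A Jordan-curve argument, using that $\outerface$ is embedded outside $\dualgraph{H}$'s planar rectangle and that $\dualgraph{H}$'s edges lie within it, then confines every primal vertex of $X \Delta X'$ to a position strictly inside that rectangle.

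The conclusion uses the elementary incidence fact that any primal vertex $v$ strictly inside $\dualgraph{H}$'s planar rectangle has all four of its surrounding dual faces (the four unit squares having $v$ as a corner) in $V(\dualgraph{H})$. Applying this to either $a$ or $b$, whose surrounding dual faces include both $p$ and $q$, forces $p, q \in V(\dualgraph{H})$, as required. The main obstacle I foresee is the topological confinement step in the presence of cycles of $P \Delta P'$ that pass through $\outerface$: here the refinement in the local difference definition---that $\outerface$-incident edges in $P \Delta P'$ attach to $V(\dualgraph{H})$---is precisely what prevents $X \Delta X'$ from leaking toward the boundary of $\gmn$. The edge case where one of $p, q$ equals $\outerface$ is then handled by the same containment: it would force the boundary primal vertices $a, b$ of the corresponding $\gmn$-boundary edge into $X \Delta X'$, contradicting the strict-interior confinement inside $\dualgraph{H}$'s rectangle.
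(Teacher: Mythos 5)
Your approach is genuinely different from the paper's. The paper works entirely on the dual side: it records the clockwise-ordered entry/exit vertices of $D$ on the boundary of $\dualgraph{H}$, proves via a Jordan-curve argument that the interval family $\{I_i\}$ and $\{\hat I_i\}$ is laminar, and then uses a parity argument (each $a_i$ odd, $b_i$ even) to force $D'$ to traverse every exterior subpath in the same direction as $D$. You instead lift to the primal side: you form the two cycles $C_P,C_{P'}$, observe that a flipped edge $\langle p^*,q^*\rangle$ forces both endpoints $a,b$ of the corresponding primal edge into $X\Delta X'$, identify the boundary of $X\Delta X'$ with $P\Delta P'$, and then try to confine $X\Delta X'$ to the strict interior of $\dualgraph{H}$'s planar rectangle. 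The first two steps are correct and give a clean way to see why flips cannot happen far from $\dualgraph{H}$, and when $P\Delta P'$ contains no $\outerface$-incident edges your confinement step does go through (the cut edges of $X\Delta X'$ lie strictly inside the rectangle, and the anchor $u\notin X\Delta X'$ pins $X\Delta X'$ to the bounded side). This buys you a shorter, more geometric argument than the laminarity/parity bookkeeping in the paper.

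The gap is the confinement step when $P\Delta P'$ contains $\outerface$-incident edges, which you correctly identify as the main obstacle but then assert away rather than prove. Consider a subgrid $\dualgraph{H}$ touching the left border, say $\dualgraph{H}=\{f^*_{ij}:i\in\{1,2\},\,j\in\{1,2,3\}\}$, with $\dualvertex{x}=f^*_{14}$, $\dualvertex{y}=f^*_{24}$, $P = f^*_{14}f^*_{13}f^*_{12}\,\outerface\,f^*_{11}f^*_{21}f^*_{22}f^*_{23}f^*_{24}$ and $P' = f^*_{14}f^*_{13}\,\outerface\,f^*_{12}f^*_{11}f^*_{21}f^*_{22}f^*_{23}f^*_{24}$. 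Then $P\Delta P'$ is a $4$-cycle through $\outerface$ and, computing the primal picture, $X\Delta X'=\{(1,2),(1,3)\}$ --- both on the boundary of the rectangle and of $\gmn$, not strictly interior. The edge $\langle f^*_{12},\outerface\rangle$ appears in both $P$ and $P'$ and is traversed in opposite directions in $D$ and $D'$, yet $\outerface\notin V(\dualgraph{H})$. So your claimed strict-interior confinement of $X\Delta X'$ is false here, and the fallback you propose (``the $\outerface$ case is handled by the same containment'') inherits exactly that false claim. In fact this example also shows that the lemma as literally stated is slightly too strong when $P\Delta P'$ has $\outerface$-incident edges; this is consistent with the fact that \cref{lem:directed-cycle} only invokes it with the weaker caveat ``except for edges incident to $\outerface$.'' If you restrict your conclusion to ``$\dualvertex{p},\dualvertex{q}\in V(\dualgraph{H})$ unless one of them is $\outerface$,'' your primal argument goes through cleanly: the no-$\outerface$ case gives strict containment, and the $\outerface$ case is exactly the allowed exception. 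To match the lemma as stated you would instead need to rule out $\outerface$-flips directly, which your topological picture does not do.
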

\begin{proof}
    As we follow the directed orientation $D$ starting at $\dualvertex{x}$, we must repeatedly enter and exit the subgrid $\dualgraph{H}$. Let $\dualvertex{v}_1, \ldots, \dualvertex{v}_{2\final}$ be the set of vertices where $D$ enters or exits $\dualgraph{H}$ in clockwise order along the boundary of $\dualgraph{H}$ starting from the first entry point, $\dualvertex{v}_1$. Define pairs of indices $(a_i,b_i)$ with $i=1, \ldots, \final$ such that $\dualvertex{v}_{a_i}$ and $\dualvertex{v}_{b_i}$ are the $i^{th}$ entry and exit vertex in $\dualgraph{H}$ along $D$. That is, the path $D$ starts from $\dualvertex{x}$, enters $\dualgraph{H}$ at $\dualvertex{v}_1=\dualvertex{v}_{a_1}$, walks along the inside of $\dualgraph{H}$, exits $\dualgraph{H}$ at $\dualvertex{v}_{b_1}$, walks along the outside of $\dualgraph{H}$, then reenters $\dualgraph{H}$ at $\dualvertex{v}_{a_2}$, and so on, until leaving $\dualvertex{H}$ through $\dualvertex{v}_{b_\final}$ and ending at $\dualvertex{y}$. Since $P'$ agrees with $P$ outside of $\dualgraph{H}$, $D'$ has the same entry and exit vertices, though they may appear in a different order and an entry vertex may become an exit vertex or vice versa. For each directed subpath $\dualvertex{v}_{b_i} \to \dualvertex{v}_{a_{i+1}}$ of $D$, $D'$ contains either the subpath $\dualvertex{v}_{b_i} \to \dualvertex{v}_{a_{i+1}}$ or the reversed subpath $\dualvertex{v}_{a_{i+1}} \to \dualvertex{v}_{b_i}$. It therefore suffices to show that the latter case is not possible.

    We first establish a condition for whether a set of values for the indices $a_1, \ldots, a_{\final}$ and $b_1, \ldots, b_{\final}$ is possible. For any $i=1, \ldots, \final$, we know that $D$ contains some path $\dualvertex{v}_{a_i} \to \dualvertex{v}_{b_i}$ within $\dualgraph{H}$. This path, viewed as a curve in the plane between the planar embeddings of the points, separates the interior of the planar embedding of $\dualgraph{H}$ into two disjoint, connected regions of the plane $R_1$ and $R_2$ (see \cref{fig:path-interior}). For any $j$ distinct from $i$, if $\dualvertex{v}_{a_j}$ and $\dualvertex{v}_{b_j}$ are in different regions, then the paths $\dualvertex{v}_{a_i} \to \dualvertex{v}_{b_i}$ and $\dualvertex{v}_{a_j} \to \dualvertex{v}_{b_j}$ must intersect from the Jordan Curve theorem. This is a contradiction since $D$ is simple, so $\dualvertex{v}_{a_j}$ and $\dualvertex{v}_{b_j}$ must either both be in $R_1$ or both be in $R_2$.
    
    Define sets $\{I_i\}_{i=1}^N$ where $I_i$ is the set of integers between $a_i$ and $b_i$ (inclusive). 
    Since the vertices $v_1, \ldots, v_{2 \final}$ are in clockwise order, the previous observation implies that the vertices that are within $R_1$ and $R_2$ correspond to the index sets $I_i$ and $\{1,2, \ldots,2 \final\} \setminus I_i$. Therefore, we can conclude that for $j\neq i$, we have either
    \begin{enumerate}
        \item $I_j \subset I_i$, when $a_j,b_j \in I_i$, or
        \item $I_i \cap I_j = \emptyset$ when $a_j, b_j \in \{1,2, \ldots,2N\}\setminus I_i$.
    \end{enumerate}
    In other words, the set family $\{I_i\}_{i=1}^{\final}$ is laminar. We also define the sets $\{\hat{I}_i\}_{i=1}^{\final}$ where $\hat{I}_i$ denotes the integers between $b_{i}$ and $a_{i+1}$ (inclusive). By looking at the paths $\dualvertex{v}_{b_i} \to \dualvertex{v}_{a_{i+1}}$ and making the same arguments as above with the \textit{exterior} of $\dualgraph{H}$ instead of the interior (see \cref{fig:path-exterior}), we can show that 
    $\{\hat{I}_i\}_{i=1}^{\final}$ is also laminar\footnote{For this, we need to define $a_{\final+1}=a_1$ and the path $\dualvertex{v}_{b_{\final}}\to \dualvertex{v}_{a_1}$ path as the union of the path from $\dualvertex{v}_{b_{\final}}$ to $\dualvertex{y}$, the edge $\dualedge{x}{y}$, and the path from $\dualvertex{x}$ to $\dualvertex{v}_{a_{1}}$.}.
    Note that $\outerface$ is not in $\dualgraph{H}$ because $\dualgraph{H}$ is a subgrid (unless it is a trivial subgrid, in which case the statement trivially follows). It also does not affect our arguments if $\outerface$ belongs to one of the $\dualvertex{v}_{b_i} \to \dualvertex{v}_{a_{i+1}}$ paths, since this path still divides the exterior of $\dualgraph{H}$ into two regions. 

    \begin{figure}[htbp]
    \centering
    \begin{subfigure}{0.49\textwidth}
        \centering
        \includegraphics[width=0.75\linewidth]{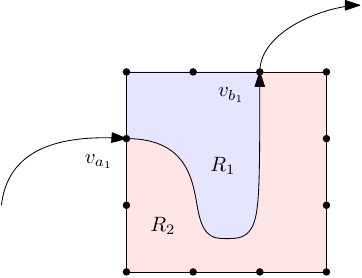}
        \caption{}
\label{fig:path-interior}
\end{subfigure}
\hfill
\begin{subfigure}{0.49\textwidth}
        \centering
        \includegraphics[width=0.55\linewidth]{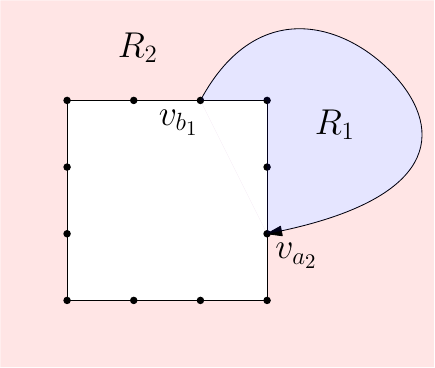}
        \caption{}
        \label{fig:path-exterior}
\end{subfigure}
\caption{The paths divide the interior and exterior of $\dualgraph{h}$ into regions $R_1$ and $R_2$.}
\label{fig:path-interior-exterior}
\end{figure}

    Lastly, we claim that each $a_i$ is odd and each $b_i$ is even. To do this, first note that each value in $\{1,2,\ldots,2 \final\}$ is used as an index exactly once. In, say $I_i$, the number of indices strictly between $a_i$ and $b_i$ is exactly $\abs{b_i-a_i}-1$. However, these indices come in pairs $(\dualvertex{v}_{a_j},\dualvertex{v}_{b_j})$ such that either both lie between $a_i$ and $b_i$, or neither do. Thus, $b_i-a_i-1$ must be even for each $1\leq i\leq N$. Similarly, $b_{i}-a_{i+1}-1$ is also even from the laminarity of $\{\hat{I}_i$\}.
    This implies that the terminal points of every set have opposite parity. Since we defined $a_1=1$,  $b_1$ must be even.
    Since $b_1-a_2-1$ is even,  $a_2$ must be odd, and so on. Thus, each $a_i$ is odd and each $b_i$ is even. In particular, it must be true that for any directed orientation with entry and exit vertices $\dualvertex{v}_1, \ldots, \dualvertex{v}_{2\final}$, the path enters at odd indices and exits at even indices.
    For example, it may be the case that in $D'$, after visiting $\dualvertex{v}_{a_1}$ the path traverses some edges in $\dualgraph{H}$ before exiting out of, say, $\dualvertex{v}_{b_3}$ (instead of $\dualvertex{v}_{b_2}$ as in $D$). 
    However, it cannot be the case that it exits out of some $\dualvertex{v}_{a_i}$ because that would imply that there are an odd number of exit/entry points between $\dualvertex{v}_{a_1}$ and $\dualvertex{v}_{a_i}$, causing a contradiction.
    Thus, for any path $\dualvertex{v}_{b_i} \to \dualvertex{v}_{a_{i+1}}$ of $D$, it must be that $D'$ also contains $\dualvertex{v}_{b_i} \to \dualvertex{v}_{a_{i+1}}$ since $\dualvertex{v}_{b_i}$ is fixed as an exit vertex.
\end{proof}
\begin{restatable}{lemma}{directedcycle}
\label{lem:directed-cycle}
    For directed simple paths $D,D'$ from vertices $\dualvertex{x}$ to $\dualvertex{y}$ in $\gmnd$ that are \locally in a subgrid $\dualgraph{H}$, where $\dualvertex{x},\dualvertex{y}\not\in V(\dualgraph{H})$, there exists a walk $W'$ whose loop-erasure is $D'$ and visits every vertex in $D\cup D'$, such that the number of edges incident on $\outerface$ in $W'$ is at most as that in $D$, and $W'$ contains at most $3\beta^2$ edges more than $D$.
\end{restatable}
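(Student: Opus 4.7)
The plan is to construct $W'$ by walking along $D'$ from $\dualvertex{x}$ to $\dualvertex{y}$, and at each vertex where $D$ enters $\dualgraph{H}$, inserting a closed-loop detour that traces the corresponding interior sub-path of $D$ and immediately returns along its reverse. Because each such detour is a closed walk based at its insertion vertex, loop-erasure eliminates it entirely, so the loop-erasure of $W'$ will coincide with that of $D'$, namely $D'$ itself.

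The key structural input is \Cref{lem:edge-cannot-flip} together with the analysis in its proof. Listing the boundary crossings of $D$ in clockwise order as $\dualvertex{v}_1, \ldots, \dualvertex{v}_{2N}$, that analysis shows the same vertices are also the boundary crossings of $D'$, that odd-indexed ones are entries and even-indexed ones are exits for both paths, and that each exterior sub-path $\dualvertex{v}_{b_i} \to \dualvertex{v}_{a_{i+1}}$ is identical in $D$ and $D'$ up to edges incident on $\outerface$. This decomposes $D$ into alternating interior segments $\sigma_i \subseteq \dualgraph{H}$ (each from an entry $\dualvertex{v}_{a_i}$ to $D$'s paired exit $\dualvertex{v}_{b_i}$) separated by these shared exterior segments. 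A short additional argument using the simplicity of $D$ and $D'$ shows that $\outerface \in V(D) \iff \outerface \in V(D')$: otherwise, WLOG $\outerface \in V(D) \setminus V(D')$, so $D$ contains a sub-path $\dualvertex{v_{b_i}} \to \outerface \to \dualvertex{v_{a_{i+1}}}$ with distinct $\dualvertex{v_{b_i}}, \dualvertex{v_{a_{i+1}}} \in V(\dualgraph{H})$ (by \locally); the pairing from \Cref{lem:edge-cannot-flip} forces $D'$ to traverse externally from $\dualvertex{v_{b_i}}$ to $\dualvertex{v_{a_{i+1}}}$ without $\outerface$-edges, yet every non-$\outerface$ exterior edge of $D'$ is shared with $D$ (which has none between these vertices), a contradiction.

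With this structure in hand I would define $W'$ as follows: walk along $D'$ from $\dualvertex{x}$, and whenever $W'$ first arrives at an entry vertex $\dualvertex{v}_{a_i}$ (which happens via a shared exterior segment), insert the detour $\sigma_i \cdot \sigma_i^{-1}$ before continuing along $D'$'s interior sub-path from $\dualvertex{v}_{a_i}$. The $D'$-trace visits all of $V(D')$, the detours collectively visit all of $V(D) \cap V(\dualgraph{H})$, and $\outerface$ is covered iff it lies in $V(D) = V(D')$; together this covers $V(D \cup D')$. For the edge count, each edge of $\dualgraph{H}$ appears in at most one $\sigma_i$ and is walked twice in $\sigma_i \cdot \sigma_i^{-1}$, so detours add at most $2\abs{E(\dualgraph{H})} \leq 2\beta$ edges; the exterior portions of $D$ and $D'$ have the same edge count (identical non-$\outerface$ edges and equal $\outerface$-edge counts by the observation above), so the two path lengths differ by at most $\beta$. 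Combining, the total number of edges in $W'$ exceeds that of $D$ by at most $3\beta \leq 3\beta^2$. The $\outerface$-edge count of $W'$ equals that of $D'$ (detours lie in $\dualgraph{H}$), which equals that of $D$.

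The main obstacle is the structural analysis in the second paragraph — establishing that $D$ and $D'$ share the same boundary-crossing structure on $\dualgraph{H}$ and the same $\outerface$-incidence. Once that is in hand, the remaining steps reduce to direct accounting. One subtle point worth verifying explicitly is that detours inserted at distinct entry vertices do not interfere during loop-erasure: they do not, because each detour is a closed walk at $\dualvertex{v}_{a_i}$ that is erased in its entirety before the main $D'$-trace advances past $\dualvertex{v}_{a_i}$.
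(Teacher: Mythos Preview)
Your construction has a genuine gap in the loop-erasure argument. You claim that each detour $\sigma_i\cdot\sigma_i^{-1}$ inserted at $v_{a_i}$ ``is erased in its entirety before the main $D'$-trace advances past $v_{a_i}$.'' But a closed walk spliced at $v_{a_i}$ is cleanly erased only if it avoids every vertex of the current loop-erasure other than $v_{a_i}$ (this is exactly the hypothesis of the paper's \Cref{obs:splicing-loops}). Your detour $\sigma_i$ ends at $v_{b_i}$, and there is no reason $v_{b_i}$ appears \emph{after} $v_{a_i}$ in $D'$: although $D$ and $D'$ share the exterior segments $v_{b_j}\to v_{a_{j+1}}$, they need not traverse them in the same order. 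Concretely, take $N=3$ with $D$'s interior pairing $(a_1,b_1,a_2,b_2,a_3,b_3)=(1,4,3,2,5,6)$ and $D'$'s interior pairing $(1,2),(5,4),(3,6)$; both are laminar and both use the exterior segments $v_4\!\to\! v_3$, $v_2\!\to\! v_5$, $v_6\!\to\! y^*$. Then $D'$ visits the crossings in order $v_1,v_2,v_5,v_4,v_3,v_6$, so when you reach $v_{a_2}=v_3$ the vertex $v_{b_2}=v_2$ is already on the loop-erasure. Walking $\sigma_2\colon v_3\to v_2$ creates a cycle that deletes the $D'$-segment $v_2\to v_5\to v_4\to v_3$; after $\sigma_2^{-1}$ the loop-erasure is $D'[\,x^*\!:\!v_2\,]$ followed by $\sigma_2^{-1}$, not $D'[\,x^*\!:\!v_3\,]$, and the final loop-erasure is not $D'$. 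More generally $\sigma_i$ can meet an earlier interior segment of $D'$ at an internal vertex, with the same effect.

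The paper's proof sidesteps this entirely by doing one big detour at the \emph{first} entry vertex $p_0$, before any interior portion of $D'$ has been traced: from $p_0$ it walks all of $D$'s interior segments in sequence, connecting $p_i$ to $q_i$ along the boundary of $\dualgraph{H}$ rather than via the exterior segment, then backtracks. Because the loop-erasure at that moment lies entirely outside $\dualgraph{H}$ while the detour lies entirely inside, \Cref{obs:splicing-loops} applies cleanly. The price is the boundary walks, which is why the paper's edge-overhead is $O(\beta^2)$ rather than the $O(\beta)$ you were aiming for. (A smaller point: your claim that $\outerface\in V(D)\iff\outerface\in V(D')$ also relies on $D$ and $D'$ having identical boundary-crossing sets, which can fail precisely when the only exterior link between two crossings is through $\outerface$; the paper treats $\outerface\in V(D)\setminus V(D')$ as a genuine case and routes the detour through $\outerface$ explicitly.)
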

\begin{proof}
    To get the requisite walk $W'$, we first follow $D$ from $\dualvertex{x}$ until we first hit a vertex $p_{0}\in H$. 
    We know that $D,D'$ completely agree outside of $\dualgraph{H}$ from \Cref{lem:edge-cannot-flip}, except for edges incident to $\outerface$.
    Thus, $W'$ contains the edges of $D$ from $\dualvertex{x}$ to $\dualvertex{p}_{0}$.
    
    Suppose that $D$ leaves $\dualgraph{H}$ at vertices $\dualvertex{p}_1,\dualvertex{p}_2,\ldots,\dualvertex{p}_{\final}$ and reenters at vertices $\dualvertex{q}_1,\dualvertex{q}_2,\ldots$.
    That is, $D$ first enters $\dualgraph{H}$ at $\dualvertex{p}_{0}$, walks some path (of possibly zero length) inside $\dualgraph{H}$, exits $\dualgraph{H}$ through $\dualvertex{p}_1$, walks some non-zero length path outside $\dualgraph{H}$, re-enters $\dualgraph{H}$ through $\dualvertex{q}_1$, walks some non-zero length path inside $\dualgraph{H}$, exits through $\dualvertex{p}_2$, and so on, until $D$ finally exits $\dualgraph{H}$ for the last time through $\dualvertex{p}_{\final}$.
    
     Recall that we described $W'$ till it reached $\dualvertex{p}_{0}$. From this point, we will follow $D$ inside $\dualgraph{H}$ from $\dualvertex{p}_{0}$ to $\dualvertex{p}_1$.
     Then, we walk along the boundary  of the subgrid $\dualgraph{H}$ from $\dualvertex{p}_1$ to $\dualvertex{q}_1$, and walk along $D$ from $\dualvertex{q}_1$ to $\dualvertex{p}_2$ and so on, until we reach $\dualvertex{p}_{\final}$.
     At this point, we backtrack along the path we took until we reach $\dualvertex{p}_{0}$. 
     In the loop-erasure, this path  will be erased because they must all be on some loop.
    Once we return to $\dualvertex{p}_{0}$, we now simply walk along $D'$, leaving through $\dualvertex{p}_{\final}$ to $\dualvertex{y}$.

    The number of vertices $\dualvertex{p}_i$ or $\dualvertex{q}_j$ is at most $\beta$, because they are distinct (since $D$ is a simple path) and since $\dualgraph{H}$ has at most $\beta$ edges (and hence, $O(\beta)$ vertices, since it is a subgrid).
    The path along the boundary from each $\dualvertex{p}$ vertex to the next $\dualvertex{q}$ vertex contains at most $\beta$ edges, again from the bound on the size of $\dualgraph{H}$. 
    Thus, the loops added from $\dualvertex{p}_{0}$ to $\dualvertex{p}_{\final}$ added at most $2\beta^2$ edges to $W'$ compared to $D$. The parts of $D'$ inside $\dualgraph{H}$ add at most another $\beta$ edges to $W'$ that were not in $D$.
    In total, we add at most $2\beta^2+\beta\leq 3\beta^2$ edges.
    This also visits every vertex in $D\cup D'$, with the exception of $\outerface$ if it belonged to $D\setminus D'$. Since $D,D'$ agreed outside $\dualgraph{H}$ except for edges incident to $\outerface$, it must be that $\outerface$ was adjacent to $\dualgraph{H}$ and it must have been visited by $D$ through some exit and re-entry vertex $\dualvertex{p}_i,\dualvertex{q}_i$. In this case, in the forward walk from $\dualvertex{p}_{0}$ to $\dualvertex{p}_{\final}$, we don't walk along the boundary from $\dualvertex{p}_i$ to $\dualvertex{q}_i$ but instead take a detour through the edges $(\dualvertex{p}_i,\outerface),(\outerface,\dualvertex{q}_i)\in D$. In the reverse walk, we don't take this detour and walk along the boundary from $\dualvertex{q}_i$ to $\dualvertex{p}_i$ as usual. This maintains that $W'$ has no more edges incident to $\outerface$ than $D$, while also ensuring that we visit $\outerface$.
\end{proof}

\subsection{Preliminaries for Analyzing Loops}
We now consider random walks and their loop-erasure more deeply.
For two walks $W_1,W_2$, if $W_1$ ends where $W_2$ starts, we use the notation $W_1W_2$ to denote the walk that continues along $W_2$ after first traversing $W_1$.
We also use the notation $W_1\subseteq W_2$ to indicate that $W_1$ is a substring of $W_2$. That is, if $W_2=w_1e_1w_2\ldots w_n$, then $W_1=w_ie_i\ldots w_j$ for some $i,j$ with $1\leq i\leq j\leq n$.
We will use $W_1=W_2[i:j]$ to indicate this.
If $j=n$, we will also say that $W_1=W_2[i:]$.

Recall that in the definition of the loop-erasure of a walk (\Cref{defn:loop-erasure}), we iterated over $i=1,\ldots,n$ and added edges $(v_i,v_{i+1})$ to our directed path, removing cycles whenever formed.
We now formally define loops.


\begin{definition}[Loops]\label{def:loop}
Given a walk $W=w_1e_1w_2e_2\ldots w_n$, 
we initialize the loop-erasure $D=\phi$. We also initialize an auxiliary set of indices $I_D=\phi$.
For $i=2,\ldots, n$, add the directed edge $e_{i-1}=(w_{i-1},w_{i})$ to $D$ and add the index $i-1$ to $I_D$.
If this created a cycle in $D$, remove the edges of the cycle from $D$ and the corresponding indices from $I_D$. Let $j$ be the smallest index among those removed. We say that $L=W[j:i]$ is a loop of $W$ starting at $j$. 
\end{definition}

A simpler and necessary, but not sufficient, definition is that a loop is a walk contained in $W$ that starts and ends at the same vertex, and whose edges were removed by the loop-erasure (not necessarily all at once). We recall the instance in \Cref{fig:loop-example} as an example. The only loop of $W$ is $W[2:5]$.
We say that $L_1=W[i_1:i_2],L_2=W[i_2:i_3],\ldots, L_m=W[i_{m}:i_{m+1}]$ are the loops at index $i_1$ if $L_1,\ldots, L_m$ are all loops of $W$, and there is no index $j$ such that $L[i_{m+1}:j]$ is a loop.
We will also say that these are the loops at $w_{i_1}$ when it is clear that we mean the appearance of $w_{i_1}$ at index $i_1$.

\begin{definition}[Maximal Loops]
    A maximal loop $L\subseteq W$ of a walk $W$ is a loop of $W$ that is not contained in any other loop of $W$. 
\end{definition}

Now, observe that we can break down any walk $W$ into its loop-erasure augmented with loops in $W$. Suppose the loop-erasure $D$ of a walk $W$ is $\{e_1,e_2,\ldots,e_m\}$ where $e_i=(w_i,w_{i+1})$.
Then in $W$, we start from $w_1$, walk along the (possibly zero) loops at $w_1$, walk along the edge $e_1$ to reach $w_2$ for the last time in the walk, walk along the loops at $w_2$ and so on.
Observe that such loops must be maximal, since the loop-erasure does not contain any cycles. Further, since these loops contain all edges erased by the loop-erasure, these must be the only maximal loops.


\begin{observation}\label{obs:maximal-loop-cannot-reverse}
    Consider any walk $W$ whose loop-erasure is $\{e_1,e_2,\ldots,e_{m}\}$ where $e_i=(w_i,w_{i+1})$. For any $i\in [m+1]$, every vertex $w_{i'}$ that appears on any maximal loop $L$ at $w_i$ satisfies $i'\geq i$.
\end{observation}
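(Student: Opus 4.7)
The plan is to encode each vertex of the loop-erasure $D$ by its last-appearance position in the walk $W$. For each $i \in [m+1]$, let $\ell_i$ denote the largest index in $W$ at which vertex $w_i$ occurs. The proof then follows directly from two structural facts about these positions, together with a short contradiction.

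First I would prove the monotonicity property $\ell_1 < \ell_2 < \cdots < \ell_{m+1}$. The key observation is that the edge $(v_{\ell_i}, v_{\ell_i + 1})$ of $W$ cannot be erased at any later stage of the loop-erasure algorithm, because any such erasure would require $w_i = v_{\ell_i}$ to reappear in $W$ at some later position (to close a loop passing through this edge), contradicting the definition of $\ell_i$. Hence this edge belongs to the final loop-erasure $D$; being an outgoing edge of $w_i$ in $D$, it must be $e_i$, giving $v_{\ell_i + 1} = w_{i+1}$ and thus $\ell_{i+1} \geq \ell_i + 1$.

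Second, I would establish the canonical decomposition $W = M_1 \cdot e_1 \cdot M_2 \cdot e_2 \cdots e_m \cdot M_{m+1}$, where $M_i$ is the concatenation of all maximal loops at $w_i$ and occupies precisely the positions $W[\ell_{i-1}+1 : \ell_i]$ (with the convention $\ell_0 = 0$). This amounts to showing by induction on $i$ that the intermediate loop-erased path after processing the $\ell_i$-th step of $W$ equals exactly $w_1 \to w_2 \to \cdots \to w_i$. The core inductive step asserts that between positions $\ell_i + 1$ and $\ell_{i+1}$, the intermediate path is never pruned back past $w_{i+1}$; such pruning would require returning to some $w_j$ with $j \leq i$ and then later revisiting $w_i$, which is impossible because all occurrences of $w_i$ lie at positions $\leq \ell_i$ by definition.

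Given these two facts the conclusion is immediate: if some vertex $w_{i'}$ with $i' < i$ were to appear as $v_k$ on a maximal loop at $w_i$, then by the canonical decomposition $k > \ell_{i-1}$, while by monotonicity $\ell_{i'} \leq \ell_{i-1}$, so $k > \ell_{i'}$, contradicting that $\ell_{i'}$ is the last appearance of $w_{i'}$. Therefore $i' \geq i$. I expect the canonical decomposition to be the main technical obstacle, as it requires careful bookkeeping of the state of the loop-erasure algorithm at the ``checkpoint'' times $\ell_i$; once that structure is in place, both the monotonicity claim and the final contradiction reduce to one-line arguments.
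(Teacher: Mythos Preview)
Your proposal is correct and follows essentially the same route as the paper: both rely on the canonical decomposition of $W$ into the loop-erasure edges interleaved with the maximal loops at each $w_i$, and then derive a contradiction from a maximal loop at $w_i$ visiting some earlier $w_{i'}$. The paper states this decomposition just before the observation and gives only a one-sentence argument (``it would create a loop at $w_{i'}$ instead''), whereas you unpack it more carefully via the last-appearance indices $\ell_i$ and their monotonicity; this extra bookkeeping is sound but not required beyond what the paper assumes.
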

This follows from the aforementioned decomposition of a walk into the loop-erasure along with the loops. If a maximal loop at $w_i$ visited some $w_{i'}$ where $i'<i$, it would create a loop at $w_{i'}$ instead, since $w_{i'}$ is visited before $w_i$ in the loop-erasure, contradicting that this was a maximal loop at $w_i$.

\begin{definition}[Ordering $\prec_S$]
For any directed path or walk $S$, let  $\prec_S$ be an ordering on the vertices appearing in $S$ based on the sequence they appear in. We say that for $s_1\prec_S s_2$ if $s_1$ appears earlier than $s_2$ in the sequence or path $S$. If they appear multiple times, we only consider their first appearance.
\end{definition}

\begin{definition}[Splicing a walk]
    Let $W=w_1e_1w_2\ldots w_m$ be a walk and let $L=w'_1e'_1w'_2\ldots e'_{m'} w'_1$ be some walk starting and ending at the same vertex. For some  $1\leq i\leq m$, let $w_i$ be some vertex also appearing in both $W,L$.
    The walk obtained by splicing $L$ onto $W$ at position $i$ is the walk that traverses the first $i$ edges of $W$, then traverses the cyclic permutation of $L$ starting from the first appearance of $w_i$ in $L$, and then continues the traversal on $W$. That is, if $j=\min\{j': w'_{j'}=w_i\}$, it is the walk 
    $$
    w_1e_1w_2\ldots w_i\underbrace{\textcolor{red}{e'_jw'_{j+1}\ldots e'_{m'} w'_{1}e'_1w'_2\ldots e'_{j-1}w_i}}_{\text{Cyclic permutation of $L$}}e_iw_{i+1}\ldots w_m.
    $$    
\end{definition}
Note that the cyclic permutation is well-defined since $L$ is a sequence starts and ends at the same vertex.
We are ready to prove \Cref{thm:walk-collapse}.

\subsection{Proof of \Cref{thm:walk-collapse}}

    Let $\dualvertex{e}=\dualedge{x}{y}$.
    Let $D,D'$ be the directed paths from $\dualvertex{x}$ to $\dualvertex{y}$ along $C\setminus\{\dualvertex{e}\},C'\setminus\{\dualvertex{e}\}$ respectively.
    We will describe \Cref{alg:w-to-w-hat}, that takes as input a walk $W$ whose loop-erasure is $D$ and outputs a walk $\widehat{W}$ whose loop-erasure is $D'$. 
    We are given that $C,C'$ are \locally on a subgrid $\dualgraph{H}$ of size $\beta$.
    Apart from the guarantees on the number of edges in the theorem statement, we also need to guarantee that our algorithm does not output $\widehat{W}$ for any other input walk whose loop-erasure is $D$.
    We first observe that $D,D'$ satisfy \local in $\dualgraph{H}$ from \Cref{lem:edge-cannot-flip}. Hence, we can apply \cref{lem:directed-cycle}.
    The output $W'$ of \Cref{lem:directed-cycle} will be our starting point in \Cref{alg:w-to-w-hat}.
    \Cref{claim:wwhat-setdifference} proves the required properties about the number of edges in $\hat{W}$ compared to $W$. \Cref{claim:alg-wwhat-invariant} proves that the loop-erasure of $\widehat{W}$ is $D'$. Finally, \Cref{lem:alg-wwhat-oneone} proves that the mapping between $W$ and $\widehat{W}$ is one-to-one, completing the proof of \Cref{thm:walk-collapse}.
    We first walk through a run of the algorithm on an example below.
    
\renewcommand\algorithmiccomment[1]{
  \hfill\textcolor{blue}{\# \ #1}
}
\newcommand\LONGCOMMENT[2]{
  \hfill \begin{minipage}[t]\parbox{#1}{\textcolor{blue}{\# \ #2}}\strut\end{minipage}
}

\begin{algorithm}[htbp]
\caption{Algorithm for \Cref{thm:walk-collapse}}
\noindent \textbf{Input:} Paths $D,D'$ and a walk $W$ satisfying the requirements in the statement of \Cref{thm:walk-collapse}\\
\noindent \textbf{Output:} A walk $\widehat{W}=\widehat{W}(D,D',W)$ that is different from the output of any other input $(D,D',W)$.

\vspace{-0.3cm}
\hrulefill

\begin{algorithmic}[1]\label{alg:w-to-w-hat}
    \STATE Let $W'$ be the walk guaranteed by \Cref{lem:directed-cycle} for $D,D'$.
    \STATE For every vertex $u \in W'$, initialize an ordered set $\mathcal{M}_u = \{ ~ \}$.
    \STATE Initialize $\widehat{W} \gets W'$.
    \STATE Let the vertices of $D$ be indexed as $v_1,v_2,\ldots$, in order of their appearance (i.e. following $\prec_{D}$).
    
    \COMMENT{Phase 1: Identify loops from $W$ and assign them to vertices $u \in W'$.}

    \FOR{vertices $v_\ell \in D$ in ascending order of indices (i.e., following $\prec_D$)} \label{step:wwhat-iteration}
        \STATE Let $\mathcal{L}=\{L_1,L_2,\ldots,L_{j_{\max}}\}$ be the maximal loops in $W$ that start at $v_\ell$, indexed by their appearance order in $W$.
        \STATE Let $U=\{u_1,u_2,\ldots\}$ be the unique vertices from $W'$ that appear in any loop in $\mathcal{L}$, themselves indexed according to $\prec_{W'}$.
        
        \FOR{$u_i \in U$ in ascending order of indices (i.e., following $\prec_{W'}$)} 
            \IF {$\mathcal{L}$ is empty} 
                \STATE Break. 
            \ENDIF
            
            \STATE Let $j = \min \{ j' \mid 1 \le j' \le j_{\max}, u_i \in L_{j'} \}$.
            \IF[That is, if $u_i$ is not in any loop in $\mathcal{L}$.]{$j=\phi$} 
            \STATE Continue to the next iteration.
            \ENDIF
            \STATE Let $M_\ell = L_j L_{j+1} \ldots L_{j_{\max}}$.\label{step:w-what-minj}
            \STATE Add $(M_\ell,v_\ell)$ to $\mathcal{M}_{u_i}$ such that it is ordered before all previously added elements. 
            \STATE Remove $L_j, L_{j+1}, \ldots, L_{j_{\max}}$ from $\mathcal{L}$; update $j_{\max} \gets j-1$. 
        \ENDFOR
    \ENDFOR
    
    \COMMENT{Phase 2: Splice the collected loop sequences from $\mathcal{M}_u$ into $\widehat{W}$.}
    
    \FOR{vertices $u \in W'$ in order of $\prec_{W'}$} 
        \STATE Let $\mathcal{M}_u=\{(M_{i_1},v_{i_1}), (M_{i_2},v_{i_2}),\ldots (M_{i_{\min}},v_{i_{\min}})\}$. 
         \hfill \begin{minipage}[t]{0.35\textwidth}
        \parbox{\textwidth}{\textcolor{blue}{\# \ Note that $i_1 > i_2 > \ldots > i_{\min}$ and $M_{i_1}$ was added last.}}
        \end{minipage}

        \FOR{$(M_i,v_i) \in \mathcal{M}_u$ in descending order of indices}
            \STATE Let $j$ be the index of the earliest appearance of $u$ in $\widehat{W}$, ignoring any instances of $u$ added from splices.\label{step:w-what-earliest-j}
            \STATE Update $\widehat{W}$ by splicing $M_i$ into the current $\widehat{W}$ at $j$. \label{step:w-what-splice}
        \ENDFOR
    \ENDFOR
\end{algorithmic}
\end{algorithm}

\begin{figure}[htbp]
    \centering
    \begin{subfigure}{0.49\textwidth}
        \centering
        \includegraphics[width=0.75\linewidth]{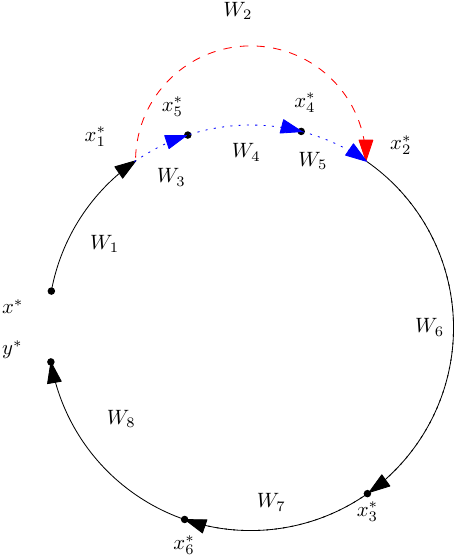}
        \caption{Both $D,D'$ share the solid black paths. The red dashed path is used only by $D$ whereas the blue dotted path is used only by $D'$.}
\label{fig:alg-wwhat1}
\end{subfigure}
\hfill
\begin{subfigure}{0.49\textwidth}
        \centering
        \includegraphics[width=0.75\linewidth]{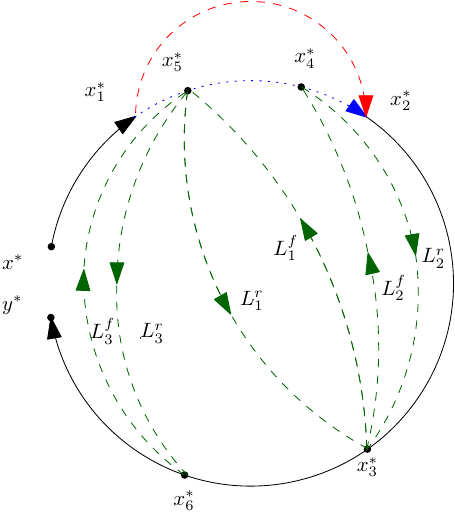}
        \caption{The walk $W$ has additional loops $L_1,L_2,L_3$. We will vary the precise location and sequence of these loops to illustrate the algorithm.}
        \label{fig:alg-wwhat2}
\end{subfigure}
\caption{Description of an instance. 
Each $W_i$ is a walk. Here, $D$ is the directed path corresponding to the walk $W_1W_2W_6W_7W_8$ and $D'$ is the directed path corresponding to the walk $W_1W_3W_4W_5W_6W_7W_8$.}
\label{fig:example-wwhat}
\end{figure}
\newcommand{\loopcol}[1]{\textcolor{ForestGreen}{#1}}

\begin{example}
    We will consider the example in \Cref{fig:example-wwhat}. For a marked walk $W_i$, let $\overline{W}_i$ denote the walk in reverse. For illustrative purposes, we assume that the $W'$ guaranteed by \Cref{lem:directed-cycle} is the walk $W_1W_2\overline{W_2}W_3W_4\ldots W_8$. 
    We describe the outputs of several cases depending on where the loops appear on $W$. 
    We break each loop $L_i$ into two walks $L_i^f$ and $L_i^r$ so that $L_i=L_i^fL_i^r$.
    \begin{itemize}
        \item $L_1,L_2$ appear in that sequence at $\dualvertex{x_3}$ and $L_3$ appears at $\dualvertex{x_6}$. That is, $W=W_1W_2W_6\loopcol{L_1L_2}W_7\loopcol{L_3}W_8$.
        The loops of the walk will be colored green in the exposition.
        \Cref{alg:w-to-w-hat} adds $(L_1L_2,\dualvertex{x_3})$ to $\mathcal{M}_{\dualvertex{x_5}}$. And then later adds $(L_3,\dualvertex{x_6})$. The output walk is 
        $$\widehat{W}=W_1W_2\overline{W_2}W_3\loopcol{L_1^rL_2L_1^fL_3^rL_3^f}W_4\ldots W_8.$$
        
        \item $L_2,L_1$ appear in that sequence at $\dualvertex{x_3}$ and $L_3$ appears at $\dualvertex{x_6}$. That is, $W=W_1W_2W_6\loopcol{L_2L_1}W_7\loopcol{L_3}W_8$.
        \Cref{alg:w-to-w-hat} first adds $(L_1,\dualvertex{x_3})$ to $\mathcal{M}_{\dualvertex{x_5}}$, then $(L_2,\dualvertex{x_3})$ to $\mathcal{M}_{\dualvertex{x_4}}$,  and then $(L_3,\dualvertex{x_6})$ to $\mathcal{M}_{\dualvertex{x_5}}$. The output walk is 
        $$\widehat{W}=W_1W_2\overline{W_2}W_3\loopcol{L_1^rL_1^fL_3^rL_3^f}W_4\loopcol{L_2^rL_2^f}W_5\ldots W_8.$$
        
        \item All loops appear at $\dualvertex{x_3}$. We consider the case where $W=W_1W_2W_6\loopcol{L_2L_1^fL_3^rL_3^fL_1^r}W_7W_8$.
        \Cref{alg:w-to-w-hat} first adds $(L_1^fL_3^rL_3^fL_1^r,\dualvertex{x_3})$ to $\mathcal{M}_{\dualvertex{x_5}}$, then $(L_2,\dualvertex{x_3})$ to $\mathcal{M}_{\dualvertex{x_4}}$. The output walk is 
        $$\widehat{W}=W_1W_2\overline{W_2}W_3\loopcol{L_3^rL_3^fL_1^rL_1^f}W_4\loopcol{L_2^rL_2^f}W_5\ldots W_8.$$
    \end{itemize}
\end{example}

We now argue that \Cref{alg:w-to-w-hat} has the requisite properties from \Cref{thm:walk-collapse}.

\begin{claim}\label{claim:wwhat-setdifference}
    In \Cref{alg:w-to-w-hat}, the multi-set difference between directed edges in the final $\widehat{W}$ and $W$ has at most $3\beta^2$ edges, and $\widehat{W}$ has no more edges to $\outerface$ than $W$.
\end{claim}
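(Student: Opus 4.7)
The plan is to decompose the edges of both $\widehat{W}$ and $W$ in a way that cancels out all the ``loopy'' contributions and leaves only the gap between $W'$ and $D$, which \Cref{lem:directed-cycle} already controls. The key observation is that every maximal loop of $W$ starts at some vertex $v_\ell$ of $D$ (by the standard loop-erasure decomposition of a walk into its loop-erasure together with its maximal loops, discussed in the preliminaries), so that $W$ itself can be written as $D$ with its maximal loops inserted at their starting vertices. If I can show that $\widehat{W}$ is $W'$ with exactly the \textit{same} collection of maximal loops spliced back in, then the multisets of directed edges of $\widehat{W}$ and $W$ will differ only in the multiset symmetric difference of the edges of $W'$ and $D$, and a single application of \Cref{lem:directed-cycle} finishes the proof.

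The bookkeeping part is showing that Phase 1 of \Cref{alg:w-to-w-hat} partitions the maximal loops of $W$ into the groups $M_\ell$ and Phase 2 splices all of them back in without losing or duplicating edges. For the first point, the outer Phase 1 loop iterates over every $v_\ell \in D$, and within a single iteration the inner loop repeatedly grabs a suffix $L_j L_{j+1} \cdots L_{j_{\max}}$ of the remaining maximal loops at $v_\ell$. The iteration only terminates once $\mathcal{L}$ is empty; crucially, because $v_\ell$ itself lies in every maximal loop starting at $v_\ell$ and $v_\ell \in W'$ (since \Cref{lem:directed-cycle} guarantees that $W'$ visits every vertex of $D$), $v_\ell$ appears in $U$, so by the time the inner loop processes $v_\ell$ any leftover loops are absorbed into a final $M_\ell$. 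For the second point, splicing in Phase 2 inserts a cyclic permutation of a concatenation of maximal loops, and concatenation plus cyclic permutation both preserve the multiset of directed edges.

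Having established both points, the multiset of directed edges of $\widehat{W}$ equals the multiset of edges of $W'$ together with, for each maximal loop $L$ of $W$, the edges of $L$; similarly, the multiset of edges of $W$ equals the multiset of edges of $D$ together with those same loop contributions. The multiset symmetric difference of $\widehat{W}$ and $W$ therefore equals that of $W'$ and $D$, which by \Cref{lem:directed-cycle} has size at most $3\beta^2$. The claim about $\outerface$-incident edges follows from the same decomposition: the loop contributions to $\outerface$-incidence are identical on both sides, and \Cref{lem:directed-cycle} guarantees that $W'$ has at most as many $\outerface$-edges as $D$. The main place where care is required is the Phase 1 partition argument; once that is pinned down, everything else is a conservation identity and no new geometric or probabilistic input about the grid is needed.
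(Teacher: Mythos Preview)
Your proposal is correct and follows essentially the same approach as the paper's proof: decompose $W$ as $D$ plus its maximal loops, observe that $\widehat{W}$ is $W'$ with the very same multiset of loop edges spliced back in, and reduce to the $W'$-versus-$D$ bound from \Cref{lem:directed-cycle}. In fact you supply a detail the paper leaves implicit, namely the verification that Phase~1 exhausts $\mathcal{L}$ for every $v_\ell$ (via $v_\ell\in W'$ and hence $v_\ell\in U$); the paper simply asserts that each maximal loop gets spliced in.
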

\begin{proof}
    At a high-level, \Cref{alg:w-to-w-hat} uses the output $W'$ from applying \Cref{lem:directed-cycle} as its starting point. Recall that $W'$ already satisfies two key properties -- its loop-erasure is $D'$ and it contains at most $3\beta^2$ edges more than $D$.
    Then, we note that we can decompose the edges of $W$ into the edges of the loop-erasure (which is $D$) along with the edges in the maximal loops. Since the final walk $\widehat{W}$ is obtained after splicing each maximal loop from $W$ onto $W'$, it follows that the multi-set difference between directed edges in the final $\widehat{W}$ and $W$ also has at most $3\beta^2$ edges. The required property regarding the number of edges incident on $\outerface$ also holds by a similar argument.
\end{proof}
Before we argue that the loop-erasure invariant is maintained, we first make the following observation about splicing loops onto walks.

\begin{claim}\label{obs:splicing-loops}
    Let $M$ be a walk starting and ending at some vertex $u$. For some walk $\widehat{W}$, if $u$ appears at index $i$, and the loop-erasure of $\widehat{W}[1:i]$ does not contain any vertex of $M$ (other than $u$), then the loop-erasure of $\widehat{W}[1:i]$ and the loop-erasure $\widehat{W}[1:i]M$ are the same.
    This also implies that the loop-erasure of $\widehat{W}$ and $\widehat{W}[1:i]M\widehat{W}[i:]$ are the same.
\end{claim}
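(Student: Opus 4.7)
The plan is to exploit the state-based nature of loop-erasure: the running loop-erased directed path at any point in the processing is a sufficient statistic for everything that follows. Under this view, the first assertion reduces to showing that if we begin in the state ``loop-erased path $D_1$ ending at $u$'' (where $D_1$ is the loop-erasure of $\widehat{W}[1:i]$) and then process the edges of $M$ in order, the final state is again $D_1$.

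To prove this, I would write $M = u_0 e_1 u_1 \cdots e_k u_k$ with $u_0 = u_k = u$ and maintain, by induction on $j$, the invariant that after processing $e_1, \ldots, e_j$ the current loop-erased path has the form $D_1 \cdot P_j$, where $P_j$ is a simple directed subpath starting at $u$ whose vertex set is contained in $V(M)$. The invariant holds trivially for $j = 0$ with $P_j$ empty. In the inductive step, when we append $e_{j+1}$ and land at $u_{j+1} \in V(M)$, either $u_{j+1}$ does not lie on the current path (and we extend $P_j$ by one edge) or it does. In the latter case, the hypothesis $V(M) \cap D_1 \subseteq \{u\}$ forces any appearance of $u_{j+1}$ on the current path to lie within $P_j \cup \{u\}$, so the formed loop is confined to the $P_j$-portion and its erasure produces a new simple path $P_{j+1}$ from $u$ using only $M$-vertices, preserving the invariant.

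At termination, $j = k$ and $u_k = u$, so the path $D_1 \cdot P_k$ ends at $u$; but $P_k$ also starts at $u$, and simplicity forces $P_k$ to be the empty path. Hence the loop-erasure of $\widehat{W}[1:i] \cdot M$ equals $D_1$, the loop-erasure of $\widehat{W}[1:i]$. The second assertion is then immediate from the state-based characterization: since $\widehat{W}[1:i]$ and $\widehat{W}[1:i] \cdot M$ both end at $u$ with the same running loop-erased path, appending the common suffix $\widehat{W}[i:]$ yields walks with identical loop-erasures. The only mildly subtle step is the case analysis in the inductive step---specifically, correctly handling the possibility that $u_{j+1} = u$, which is the sole vertex of $M$ that can be found in $D_1$; once the invariant is phrased in the ``$D_1 \cdot P_j$'' form, this case is absorbed without incident and the rest is syntactic.
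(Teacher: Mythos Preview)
Your proof is correct and takes essentially the same approach as the paper's: both argue that because no vertex of $M$ other than $u$ lies on $D_1$, all cycles formed while processing $M$ are confined to the $M$-portion and never erode $D_1$, so after processing a closed walk at $u$ the state returns to $D_1$. Your version is more rigorous than the paper's brief sketch---you make the ``confinement'' precise via the explicit inductive invariant $D_1 \cdot P_j$ with $V(P_j) \subseteq V(M)$, whereas the paper simply asserts the conclusion from the definition.
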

\begin{proof}
    This follows from the definition (\Cref{defn:loop-erasure}) of the loop-erasure procedure. If the loop-erasure of of $\widehat{W}[1:i]$ is empty, then the loop-erasure of $\widehat{W}[1:i]M$ is also empty, since $M$ is a walk that starts and ends at the same vertex.
If the loop-erasure of $\widehat{W}[1:i]$ is some non-empty $D$, then observe that $u$ must be the last vertex of $D$. If no vertex in $M$ other than $u$ appears on $D$, then the only cycles removed after adding the edges of $M$ to $D$ will be cycles that start and end at $u$, and include no other vertex on $D$.
Further, since $M$ is a walk that starts and ends at $u$, all edges of it must be removed by the loop-erasure.
The last statement follows trivially from the definition of loop-erasures, since adding later edges only creates cycles with edges in $D$, and it is irrelevant whether some edges of $M$ were previously added and removed.
\end{proof}

\begin{claim}\label{claim:alg-wwhat-invariant}
    \Cref{alg:w-to-w-hat} always maintains the invariant that $\widehat{W}$ has loop-erasure $D'$.
\end{claim}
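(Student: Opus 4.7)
I would prove the claim by induction on the number of splice operations performed in Phase~2 of \Cref{alg:w-to-w-hat}. Before any splice, $\widehat{W}=W'$, and by \Cref{lem:directed-cycle} the loop-erasure of $W'$ is $D'$. For the inductive step, assume the invariant holds just before the next splice, which inserts some $M_i$ at position $j$ in $\widehat{W}$, where $u$ is the current vertex being processed in Phase~2 and $j$ is the earliest appearance of $u$ in $\widehat{W}$ ignoring splice-added instances. The plan is to apply \Cref{obs:splicing-loops}: it then suffices to verify that no vertex of $M_i$ other than $u$ lies on the loop-erasure of $\widehat{W}[1:j]$.

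The first step is to relate the loop-erasure of $\widehat{W}[1:j]$ to the loop-erasure of a prefix of the original $W'$. A sub-induction on the splices performed so far, repeatedly applying \Cref{obs:splicing-loops} together with the general fact that extending two walks of equal loop-erasure by the same continuation preserves equal loop-erasure, shows that the loop-erasure of $\widehat{W}[1:j]$ equals the loop-erasure of $W'[1:j']$, where $j'$ is the earliest appearance of $u$ in the original $W'$. Consequently every vertex on the loop-erasure of $\widehat{W}[1:j]$ appears somewhere in $W'[1:j']$, and by the definition of $\prec_{W'}$ as a first-appearance ordering, each such vertex $u''$ satisfies $u''\prec_{W'} u$ or $u'' = u$.

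The second step analyzes the structure of $M_i$ built in Phase~1. I claim $u$ is the earliest $W'$-vertex (in $\prec_{W'}$ order) among the vertices of the loops $L_j,\ldots,L_{j_{\max}}$ comprising $M_i$. The inner loop of Phase~1 processes $U$ in $\prec_{W'}$ order, and once a previous $u_{i'}$ (with $u_{i'}\prec_{W'} u$) has been handled, all remaining loops are stripped of $u_{i'}$: the earliest loop containing $u_{i'}$ and all later loops are removed together in step~\ref{step:w-what-minj}, while no earlier loop contained $u_{i'}$ by the minimality in the choice of that index. Therefore any $W'$-vertex $u''\neq u$ appearing in $M_i$ satisfies $u\prec_{W'} u''$, and any vertex of $M_i$ not in $W'$ trivially cannot appear on the loop-erasure of $\widehat{W}[1:j]$, since that loop-erasure consists of vertices drawn from $W'[1:j']$.

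Combining the two steps, the loop-erasure of $\widehat{W}[1:j]$ contains only vertices $u''$ with $u''\prec_{W'} u$ or $u''=u$, while the non-$u$ vertices of $M_i$ are either $W'$-vertices strictly greater than $u$ in $\prec_{W'}$, or vertices not in $W'$ at all. In either case they cannot lie on the loop-erasure of $\widehat{W}[1:j]$, so \Cref{obs:splicing-loops} applies and the splice preserves $D'$ as the loop-erasure. The main obstacle is the bookkeeping in the first step: positions in $\widehat{W}$ shift as each splice is inserted, so one has to carefully match up positions in spliced and unspliced walks and confirm that, by the algorithm's convention in step~\ref{step:w-what-earliest-j} of ignoring splice-added instances, the prefix we care about never ends inside an earlier spliced block. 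Once this bookkeeping is in place, the rest of the argument is a direct consequence of the $\prec_{W'}$-monotone design of Phases~1 and~2.
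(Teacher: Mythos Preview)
Your proposal is correct and follows essentially the same approach as the paper's proof: induct on splices, argue via the Phase~1 construction that $u$ is the $\prec_{W'}$-earliest $W'$-vertex appearing in $M_i$, observe that the loop-erasure of the prefix up to the splice point coincides with the loop-erasure of the corresponding prefix of $W'$, and then invoke \Cref{obs:splicing-loops}. Your write-up is simply more explicit about the bookkeeping (the sub-induction relating prefixes of $\widehat{W}$ and $W'$, and the case of vertices of $M_i$ not lying in $W'$) than the paper's terser version.
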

\begin{proof}
    Consider the first splice of some walk $M$ onto $u$ in \Cref{alg:w-to-w-hat}. At this point, $\widehat{W}=W'$.
    In step~\ref{step:w-what-minj} of \Cref{alg:w-to-w-hat}, observe that when $M$ is added to $\mathcal{M}_u$, it has the property that $u$ is the earliest vertex in $M$ according to $\prec_{W'}$.
    If there was some vertex $u'\prec_{W'} u$ that appeared in $M$, then $M$ would've been added to $\mathcal{M}_{u'}$ and hence, spliced onto $u'$ instead. From \Cref{obs:splicing-loops}, the updated $\widehat{W}$ after splicing $M$ has the same loop-erasure as the previous $\widehat{W}$ (which was $W'$).
    For further splices, in step~\ref{step:w-what-minj}, since we always add loops to the earliest appearance of $u$ that was \textit{not} added by a splice, we can repeat our arguments by noting that the loop-erasure of $\widehat{W}$ till this appearance of $u$ remains the same as that of the loop-erasure of $W'$ till the first appearance of $u$ in $W'$.
\end{proof}
\begin{lemma}\label{lem:alg-wwhat-oneone}
    For fixed directed paths $D,D'$ satisfying the requirements in \Cref{thm:walk-collapse}, no two walks $W$ with loop-erasure $D$ input to \Cref{alg:w-to-w-hat} yield the same output walk $\widehat{W}$.
\end{lemma}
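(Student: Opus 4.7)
To prove injectivity, I will exhibit an explicit inverse: given $\widehat{W}$ together with the fixed inputs $D, D'$, we reconstruct $W$ uniquely. Since $W'$ is fully determined by $D, D'$ via \Cref{lem:directed-cycle}, what must be recovered from $\widehat{W}$ is the sequence of maximal loops of $W$ at each vertex of $D$.

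The first step is to localize the splice blocks in $\widehat{W}$. Walking through $W'$ and $\widehat{W}$ in parallel, at each vertex $u$ appearing in $W'$ for the first time, any content $\widehat{W}$ inserts before resuming with the next $W'$-edge out of $u$ is precisely the splice block $S_u$, namely the concatenation $M_{i_{\min}}^{(u)} \cdots M_{i_1}^{(u)}$ that Phase 2 spliced at that position. Because splices at distinct vertices are inserted at distinct first occurrences, and splices at $u' \succ_{W'} u$ lie strictly later in $\widehat{W}$ than $S_u$, the splice blocks are pairwise disjoint and unambiguously located.

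The second step is to decompose each $S_u$ into its ordered constituents. Phase 2 processes $\mathcal{M}_u$ in descending index order with each new block placed immediately after the original $u$, so the innermost block in $S_u$ (adjacent to the resumption of $W'$) is $M_{i_1}^{(u)}$, corresponding to the largest $v_{i_1}$ in $\prec_D$ that contributed at $u$. Peeling off the layers outward recovers the ordered list $\mathcal{M}_u$. Combining the recovered pieces across all $u_i$, and invoking the observation discussed just above the algorithm---that the $M_\ell$-fragments assigned to successive $u_i$'s in $\prec_{W'}$-order concatenate back to the full maximal-loop sequence $L_1 L_2 \cdots L_{j_{\max}}$---recovers the maximal loops of $W$ at each $v_\ell$, and hence $W$ itself.

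The principal obstacle lies in the decomposition inside a single $S_u$: because an individual $M_i^{(u)}$ may revisit $u$, one cannot simply split $S_u$ at every $u$-occurrence. The remedy is to use the anchor vertex $v_i$ attached to each layer. The $v_i$'s of distinct layers in $S_u$ are distinct and strictly ordered by $\prec_D$, and each $M_i^{(u)}$ visits its $v_i$ a specific number of times determined by how many component loops $L_j \cdots L_{j_{\max}}$ were concatenated. This anchoring pins down each $M_i^{(u)}$'s boundary inside $S_u$, which is the technical heart of the inversion.
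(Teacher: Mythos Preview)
Your overall strategy matches the paper's: construct an explicit inverse that first isolates the splice block $S_u$ at each $u \in W'$, then decomposes $S_u$ back into the ordered pieces $N_i$ (the cyclic rotations of the $M_i$). The first step is essentially correct, though you omit one detail the paper handles explicitly: $W'$ itself may already have $\ell_u$ loops at the first occurrence of $u$, and those end up as the \emph{last} $\ell_u$ loops at that position in $\widehat{W}$; only the preceding $m-\ell_u$ loops constitute the spliced content.

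The genuine gap is in your second step. You correctly flag the obstacle---an individual $N_i$ may return to $u$ several times, so $S_u$ cannot be split at every $u$-occurrence---but your remedy is not a procedure. The claim that ``each $M_i$ visits its $v_i$ a specific number of times determined by how many component loops were concatenated'' is true but circular: from $\widehat{W}$ alone you know neither which vertex is $v_i$ nor how many component loops $M_i$ had, so this observation does not pin down any boundary. Worse, occurrences of $v_i$ in $S_u$ are not confined to $N_i$: any earlier block $N_j$ (with $v_j \prec_D v_i$) may freely visit $v_i$, so counting $v_i$-visits globally tells you nothing about where $N_i$ ends.

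The paper's actual mechanism is what carries the proof, and it is different from what you sketch. For each loop $L$ of $S_u$ at $u$, one records the $\prec_D$-minimum vertex $w_L \in D$ appearing on $L$. Because each $M_i$ is a concatenation of maximal loops at $v_i$, \Cref{obs:maximal-loop-cannot-reverse} forces $w_L \succeq_D v_i$ for every loop $L$ coming from $N_i$; and the way the cyclic rotation is taken guarantees that $v_i$ itself lies on the \emph{last} loop of $N_i$. Combined with $v_1 \prec_D v_2 \prec_D \cdots$, the sequence $(w_L)$ admits a unique ``breakdown'' into consecutive segments whose minima are exactly $v_1, v_2, \ldots$ in order, and these segments recover the $N_i$. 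That breakdown---its definition, its uniqueness, and the verification that it matches the $N_i$-boundaries---is the technical heart you alluded to but did not supply.
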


We spend the rest of this section proving this lemma.
We prove this by finding an inverting procedure for \Cref{alg:w-to-w-hat}. That is, given $D,D'$, and a walk $\widehat{W}$ that is guaranteed to be the output of \Cref{alg:w-to-w-hat} with inputs $D,D'$ and a walk $W$, we will argue that we can uniquely reconstruct $W$.
We detail this procedure in \Cref{alg:one-to-one}, but first give some intuition.
Our idea will be to first reconstruct the ordered sets $\mathcal{M}_u$ that \Cref{alg:w-to-w-hat} constructed, and then argue that given these sets and $D,D'$, we can reconstruct $W$.
To construct these sets, we will need to observe the loops of $\widehat{W}$ and argue that the way they were added by \Cref{alg:w-to-w-hat} leaves a ``signature'' that we can detect, and thereafter recover the sets $\mathcal{M}_u$.
We first have the following definition, which are related to these signatures.

    \begin{definition}[Loop-Sequence and Breakdown]
        Given an ordered set of walks that start and end at the same vertex $\mathcal{L}=\{L_1,L_2,\ldots,L_m\}$, the loop-sequence of $\mathcal{L}$ is the ordered set $\{v_1,v_2,\ldots,v_m\}$ where for every $i\in [m]$, $v_i\in D$ is the earliest vertex according to $\prec_D$ appearing in $L_i$. If no such vertex exists, let $v_i=\varphi$.

        The breakdown of a loop-sequence is a maximal ordered set of indices $\{i_1,i_2,\ldots, i_n\}$ with $n\leq m$ such that for all $j\in \{1,2,\ldots,n\}$,
        \begin{enumerate}
            \item $v_{i_j}\neq \varphi$.
            \item $\forall~i_{j-1}<i'<i_{j}$, either $  v_{i_{j}}\preceq_D v_{i'}$ or $v_i'=\varphi$.
            \item $\forall~i'>i_j$, either $ v_{i_j}\prec_D v_{i'}$ or $v_i'=\varphi$.
        \end{enumerate}
    \end{definition}

    \begin{claim}
        The breakdown of any loop-sequence is unique.
    \end{claim}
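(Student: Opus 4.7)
Plan. The approach is to give an explicit characterization of the breakdown as the set of strict suffix-minima of the loop-sequence, and then to invoke maximality to rule out any proper subset. Uniqueness of the breakdown will follow because this characterization does not depend on any arbitrary choice.

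Define
\[
B^\ast := \{\, i \in \{1,\dots,m\} : v_i \neq \varphi \text{ and } v_i \prec_D v_{i'} \text{ for every } i' > i \text{ with } v_{i'} \neq \varphi\,\},
\]
and write $B^\ast = \{i_1^\ast < i_2^\ast < \dots < i_k^\ast\}$ with the convention $i_0^\ast = 0$. I would first observe that any valid breakdown $B = \{i_1 < \dots < i_n\}$ satisfies $B \subseteq B^\ast$: condition~1 gives $v_{i_j} \neq \varphi$, and condition~3 is literally the defining property of $B^\ast$, so every $i_j$ lies in $B^\ast$.

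The main step is to verify that $B^\ast$ itself is a valid breakdown. Conditions~1 and~3 are immediate from the definition of $B^\ast$, so only condition~2 needs work. I would fix consecutive indices $i_{j-1}^\ast, i_j^\ast$ in $B^\ast$ and argue by contradiction: suppose some $i' \in (i_{j-1}^\ast, i_j^\ast)$ has $v_{i'} \neq \varphi$ and $v_{i'} \prec_D v_{i_j^\ast}$. Taking $i'$ to be the \emph{largest} such index in the gap, I would show $i' \in B^\ast$, which contradicts the fact that $i_{j-1}^\ast, i_j^\ast$ are consecutive in $B^\ast$. To check $i' \in B^\ast$, consider any $i'' > i'$ with $v_{i''} \neq \varphi$: if $i'' \geq i_j^\ast$, then the suffix-minimality of $i_j^\ast$ gives $v_{i_j^\ast} \preceq_D v_{i''}$, which combined with $v_{i'} \prec_D v_{i_j^\ast}$ yields $v_{i'} \prec_D v_{i''}$; if instead $i'' \in (i', i_j^\ast)$, then by the maximality of $i'$ we must have $v_{i''} \succeq_D v_{i_j^\ast}$, whence once more $v_{i'} \prec_D v_{i_j^\ast} \preceq_D v_{i''}$.

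Putting the two steps together: any valid breakdown is contained in the valid breakdown $B^\ast$, so the maximality clause in the definition forces every maximal valid ordered set to equal $B^\ast$, proving uniqueness. The main obstacle, and the reason a naive argument fails, is the selection of the witness $i'$ in the contradiction step; choosing the smallest candidate leaves us unable to control intermediate values $v_{i''}$ that lie between $i'$ and $i_j^\ast$, whereas choosing the largest candidate lets suffix-minimality from $i_j^\ast$ propagate cleanly down to $i'$ without any recursion.
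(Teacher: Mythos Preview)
Your proof is correct but takes a different route from the paper. The paper argues directly by contradiction: it assumes two distinct breakdowns, finds the first position $\ell+1$ where they disagree, and plays condition~3 for one breakdown (at index $i_{\ell+1}$) against condition~2 for the other (at index $j_{\ell+1}$, with $i_{\ell+1}$ sitting in the gap $(j_\ell,j_{\ell+1})$) to get $a_{i_{\ell+1}} < a_{j_{\ell+1}} \le a_{i_{\ell+1}}$. Your approach instead gives an explicit characterization of the breakdown as the set $B^\ast$ of strict suffix-minima, verifies $B^\ast$ satisfies the three conditions, and observes that condition~3 alone already forces every valid set inside $B^\ast$, so maximality pins down the answer uniquely. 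Your version buys a concrete description of what the breakdown actually is (useful if one wants to compute or reason about it later), at the cost of the small extra work in verifying condition~2 for $B^\ast$; the paper's version is slightly shorter and needs only the two conditions once each, but is purely non-constructive.
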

    \begin{proof}
        Without loss of generality, assume that no $v_i=\varphi$, since they do not affect the breakdown. If $v_i=\varphi$ for every $i$, the breakdown is uniquely empty.

        For simplicity, replace the loop-sequence with real numbers $\{a_i\}_{i\in[m]}$  such that $v_i\prec_D v_{i'}\iff a_i< a_{i'}$. This does not affect the arguments.
        Suppose for a contradiction that $i_1,i_2,\ldots,$ and $j_1,j_2,\ldots,$ are two distinct breakdowns of the loop sequence.
        Suppose they agree until the $\ell^{th}$ element in their sequences.
        If there is no $\ell+1^{th}$ element in either sequence, they are the same sequence.
        If only one of them has an $\ell+1^{th}$ element, this contradicts the maximality of the other.
        Thus, we assume that both of them have an $\ell+1^{th}$ element, which differ.
        Without loss of generality, assume that $i_{\ell+1}<j_{\ell+1}$. 
        Then we have the following inequalities
        \begin{enumerate}
            \item $\forall~i'>i_{\ell+1}$, $ a_{i_{\ell+1}}< a_{i'}$.
            \item $\forall~i':i_{\ell}=j_\ell<i'<i_{\ell+1}<j_{\ell+1}$, $a_{j_{\ell+1}}\leq a_{i'}$.
        \end{enumerate}
        This is a contradiction since the second inequality implies that $a_{j_{\ell+1}}\leq a_{i_{\ell+1}}$ whereas the first inequality implies $a_{i_{\ell+1}}< a_{j_{\ell+1}}$.
    \end{proof}
\begin{algorithm}[htbp]
\caption{Algorithm to invert \Cref{alg:w-to-w-hat}}
\noindent \textbf{Input:} Paths $D,D'$ and a walk $\widehat{W}$ that was the output of \Cref{alg:w-to-w-hat} with inputs $D,D'$ and some unknown $W$.\\
\noindent \textbf{Output:} The walk $W$.

\vspace{-0.3cm}
\hrulefill

\begin{algorithmic}[1]\label{alg:one-to-one}
    \STATE Let $W'$ be the walk guaranteed by \Cref{lem:directed-cycle} for $D,D'$.
    \STATE Initialize $\widetilde{W}\gets \widehat{W}$.
    \STATE For every vertex $u \in W'$, initialize an ordered set $\mathcal{M}'_u = \{ ~ \}$.
    \FOR{vertices $u \in W'$ in order of $\prec_{W'}$}
    \STATE Let $\ell_u$ be the number of loops at the first appearance of $u$ on $W'$. 
    \STATE Let $i$ be the index of the earliest appearance of $u$ in $\widetilde{W}$. Let $L_1,L_2,\ldots,L_m$ be the loops at $i$ in $\widetilde{W}$.
    \IF {$m\leq \ell_u$}
    \STATE Set $\mathcal{M}'_u=\{ ~ \}$. 
    \ELSE 
    \STATE Let $\mathcal{L}=\{L_1,L_2,\ldots,L_{m-\ell_u}\}$.
    \STATE Let $\{j_1,j_2,\ldots,j_{q} \}$ be the breakdown of the loop-sequence of $\mathcal{L}$. Let $j_0:= 0$.
    \FOR{$p=1,2,\ldots,q$}\label{step:iteration-one-to-one}
    \STATE Let $N=L_{j_{p-1}+1}L_{j_{p-1}+2}\ldots L_{j_{p}}$.
    \STATE Let $M$ be the cyclic permutation of $N$ starting from the last appearance of $v_{j_p}$ in $N$.
    \STATE Add $(M,v_{j_p})$ to $\mathcal{M}'_u$ such that it is ordered before all previously added elements.
    \ENDFOR    
    \STATE Update $\widetilde{W}$ by removing the loops $L_1,L_2,\ldots,L_{m-\ell_u}$ at $i$. That is, if $L_{m-\ell_u}=\widetilde{W}[i':j']$, we set
    \[
    \widetilde{W}\gets \widetilde{W}[1:i]\widetilde{W}[j':]
    \]
    \ENDIF
    \ENDFOR
    
    \STATE Initialize  $W\gets D$.
    \STATE For every $v\in D$, define $\mathcal{S}_v=\{(M_i,u_i)\mid \exists u_i\in W', \text{ s.t. } (M_i,v)\in \mathcal{M}'_{u_i}\}$.\label{step:oneone-sv}
    \FOR{vertices $v \in D$ in order of $\prec_{D}$}
    \STATE Let $\mathcal{S}_v=\{(M_{i_1},u_{i_1}), (M_{i_2},u_{i_2}),\ldots (M_{i_{\min}})\}$ where $u_{i_{\min}}\prec_{W'}\ldots \prec_{W'} u_{i_2}\prec_{W'} u_{i_1}$. 
    \STATE Let $M=M_{i_1}M_{i_2}\ldots M_{i_{\min}}$
    \STATE Let $j$ be the index of the earliest appearance of $v$ in $W$, ignoring any instances of $v$ added from splices.
    \STATE Update $W$ by splicing $M$ into the current $W$ at $j$. \label{step:one-one-splice}
    \ENDFOR
\end{algorithmic}
\end{algorithm}

\begin{claim}\label{claim:one-one-mm'}
    For any walk $W$ such that \Cref{alg:w-to-w-hat} with input $D,D',W$ constructs sets $\{\mathcal{M}_{u}\}_{u\in W'}$ and yields a walk $\widehat{W}$, \Cref{alg:one-to-one} with input $D,D',\widehat{W}$ constructs sets $\{\mathcal{M}'_{u}\}_{u\in W'}$ such that $\forall~u\in W',\mathcal{M}'_u=\mathcal{M}_u$.
\end{claim}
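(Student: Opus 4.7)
The plan is to proceed by induction on $u \in W'$ in $\prec_{W'}$-order, maintaining the invariant that after Algorithm~\ref{alg:one-to-one} finishes processing $u$, $\mathcal{M}'_{u'} = \mathcal{M}_{u'}$ for every $u' \preceq_{W'} u$, and the current $\widetilde{W}$ equals $W'$ augmented only by the Phase~2 splices of Algorithm~\ref{alg:w-to-w-hat} that target vertices $u'' \succ_{W'} u$. The base case is immediate: at initialization $\widetilde{W} = \widehat{W}$ carries every splice and both collections of sets are empty.

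For the inductive step at $u$, I first observe that each Phase~2 splice at some $u' \succ_{W'} u$ is inserted at the earliest non-splice appearance of $u'$, which sits strictly after the earliest non-splice appearance of $u$ in $W'$; therefore those splices cannot contribute any loops at position $i$ in $\widetilde{W}$. Second, applying the entries $(M_{i_k}, v_{i_k}) \in \mathcal{M}_u$ in descending index order in Algorithm~\ref{alg:w-to-w-hat} stacks them as $W'[1{:}i]\, M_{i_n} M_{i_{n-1}} \cdots M_{i_1}\, W'[i{:}]$, where each $M_{i_k}$ is the rotation to $u$ of a product of maximal loops at $v_{i_k}$ in $W$. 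Since such an $M_{i_k}$ may visit $u$ internally, it contributes several consecutive sub-loops at $u$ in $\widetilde{W}$; the $\ell_u$ trailing loops are precisely the original loops of $W'$ at $u$. Consequently the first $m - \ell_u$ sub-loops harvested by Algorithm~\ref{alg:one-to-one} are exactly the spliced pieces, organized into blocks $(M_{i_n})\,(M_{i_{n-1}}) \cdots (M_{i_1})$.

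The central technical content is to show that the breakdown $\{j_1, \ldots, j_q\}$ of the loop-sequence of these $m - \ell_u$ sub-loops cuts precisely between consecutive blocks. By \Cref{obs:maximal-loop-cannot-reverse}, every vertex touched by the maximal loops composing $M_{i_k}$ is $\succeq_D v_{i_k}$, so every sub-loop in the block for $M_{i_k}$ has earliest $D$-vertex $\succeq_D v_{i_k}$. Moreover, the final sub-loop of each block includes the cyclic wrap through the last occurrence of $v_{i_k}$ inside $M_{i_k}$, so its earliest $D$-vertex is exactly $v_{i_k}$. Because the sequence $v_{i_n} \prec_D v_{i_{n-1}} \prec_D \cdots \prec_D v_{i_1}$ is strictly increasing in $\prec_D$, condition~3 of the breakdown forces each $j_p$ to be the last sub-loop of some block, while condition~2 absorbs all intermediate sub-loops (which may share $v_{i_k}$ or have a strictly larger earliest $D$-vertex) into the correct block.

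With the groups identified correctly, the walk $N = L_{j_{p-1}+1} \cdots L_{j_p}$ constructed in Algorithm~\ref{alg:one-to-one} equals the full spliced $M_{i_k}$, so rotating $N$ to start at its last appearance of $v_{j_p} = v_{i_k}$ inverts the rotation that Algorithm~\ref{alg:w-to-w-hat} applied and recovers the original pair stored in $\mathcal{M}_u$. Inserting these pairs into $\mathcal{M}'_u$ in the stipulated order rebuilds $\mathcal{M}_u$ entry-for-entry, and the subsequent removal of those sub-loops from $\widetilde{W}$ re-establishes the invariant for the next iteration. The main obstacle is the third paragraph: one must carefully track where $v_{i_k}$ sits inside a spliced walk that revisits $u$ many times and verify that the strict-monotonicity condition of the breakdown cleanly separates blocks, even when $v_{i_k}$ appears inside a non-final sub-loop of its own block or when original $W'$ loops at $u$ happen to share earliest $D$-vertices with spliced sub-loops.
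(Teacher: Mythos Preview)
Your approach is essentially the paper's: induct on $u$ in $\prec_{W'}$-order, identify the first $m-\ell_u$ loops at the first appearance of $u$ as exactly the material spliced at $u$, and show that the breakdown of their loop-sequence lands precisely at the block boundaries (using \Cref{obs:maximal-loop-cannot-reverse} plus the strict $\prec_D$-monotonicity of the $v_{i_k}$). Your explicit invariant on $\widetilde{W}$ is in fact cleaner than the paper's somewhat informal ``identical to the case $\mathcal{M}_{u_{i'}}=\emptyset$'' reduction.

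Two small points. First, the step ``splices at $u'\succ_{W'} u$ cannot contribute any loops at position $i$'' needs one more fact you did not state: the walk spliced at $u'$ contains no vertex $\prec_{W'} u'$ (this is exactly how Phase~1 of \Cref{alg:w-to-w-hat} chooses $u'$), hence contains no occurrence of $u$; without this, a later splice inserted inside one of the last $\ell_u$ loops could in principle split it and shift which loops are ``the first $m-\ell_u$''. Second, your closing worry about ``original $W'$ loops at $u$ sharing earliest $D$-vertices with spliced sub-loops'' is a non-issue: \Cref{alg:one-to-one} discards the trailing $\ell_u$ loops before forming the loop-sequence, so the original $W'$ loops never enter the breakdown at all.
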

\begin{proof}
    Let the vertices of $W'$ be indexed as $u_1,u_2,\ldots,u_m$ according to $\prec_{W'}$.
    We first show the following subclaim.

    \begin{subclaim}\label{subclaim:oneone}
    For any $i$,  if $\mathcal{M}_{u_{i'}}=\phi$ for all $i'<i$, then
    $\mathcal{M}_{u_i}=\mathcal{M}'_{u_i}$.    
    \end{subclaim}
    Observe that if $\mathcal{M}_{u_{i'}}=\phi$ for all $i'<i$, then $W'$ is unchanged by \Cref{alg:w-to-w-hat} until the first appearance of $u_i$, say at index $j$.
    Further, recall that in line~\ref{step:w-what-splice} of \Cref{alg:w-to-w-hat}, we always splice the walks at this first appearance of $u_i$.
    This means that if there were $\ell_{u_i}$ loops at $j$ in $W'$, and we added $m-\ell_{u_i}$ loops from the splice, these loops that were originally in $W'$ will be the last $\ell_{u_i}$ of the $m$ loops at $j$ in $\widehat{W}$.
    Thus, it suffices to consider the first $m-\ell_{u_i}$ loops at $j$ to recover $\mathcal{M}_{u_i}$,  if there are $m$ loops at $j$ in $\widehat{W}$.
    If $\mathcal{M}_{u_i}=\phi$, this also means that $m=\ell_{u_i}$ and \cref{alg:one-to-one} also sets $\mathcal{M}'_{u_i}=\phi$. We consider the alternative case.
    
    Consider step~\ref{step:w-what-minj} of \Cref{alg:w-to-w-hat}, where for some $v\in D$, we add $(M,v)$ to $\mathcal{M}_{u_i}$, where $M$ is a walk starting from $v$.
    Observe that $v$ cannot reappear in $M$ before the first appearance of $u_i$ in $M$. If it did, it would create a maximal loop that would not have been included in $M$ in step~\ref{step:w-what-minj}.
    Later, we spliced $M$ onto $W'$ at $j$, since it is the first appearance of $u_i$ in $W'$.
    Observe that from the definition of splicing, the corresponding walk $N$ added to $W'$ is the cyclic permutation of $M$ starting from the first appearance of $u_i$ in $M$.
    Since $M$ starts at $v$ and $v$ does not appear in $M$ before $u_i$ except at the start, it follows that $M$ is the cyclic permutation of $N$ from the last appearance of $v$ in $N$. 
    
    Let $\mathcal{M}_{u_{i}}=\{(M_p,v_p),(M_{p-1},v_{p-1}),\ldots,(M_1,v_1)\}$ so that $v_1\prec_D v_2\prec_D \ldots \prec_D v_p$.
    The walks that \Cref{alg:w-to-w-hat} adds to $W'$ at $j$ are the cyclic permutations of these walks $M_{p'}$ starting from the first appearance of $u_i$ in them.
    Let these cyclic permutations be $N_{p},N_{p-1},\ldots,N_1$ respectively. 
    Note that in $\widehat{W}$, we walk along $N_1$ first, then $N_2$ and so on.
    Let $\mathcal{L}=\{L_1,L_2,\ldots, L_{m-\ell_{u_i}}\}$ be the first $m-\ell_{u_i}$ loops at $j$ in $\widehat{W}$. Observe that each loop must have come from a single $N_{p}$ because each $N_{p}$ is a walk that starts and ends at $u_i$. It cannot be the case that part of some loop came from, say $N_1$, and another part came from $N_2$.
    This follows from an argument similar to \Cref{claim:alg-wwhat-invariant}. \Cref{alg:w-to-w-hat} always adds these loops to the earliest appearance of $u_i$ in $\widehat{W}$. Since $\mathcal{M}_{u_{i'}}$ is empty for all $i'<i$, this means that $W'$ is unchanged till the first appearance of $u_i$. Further, since $u_i$ is the earliest vertex in $N$ according to $\prec_{W'}$, this satisfies the conditions in \Cref{obs:splicing-loops}. Thus, adding $N$ does not affect the loop-erasure, meaning that $N$ must be composed of loops at $u_i$.
    
    Now, recall that \Cref{alg:w-to-w-hat} added these walks to $W'$ in such a way that $N_1$ appears first in $\widehat{W}$, $N_2$ appears just after that, and so on, so that $N_{p}$ appears last.
    Let $L_1,L_2,\ldots, L_{m'}$ be the loops added because of $N_{1}$. That is, $N_1=L_1L_2\ldots L_{m'}$. 
    We want to show that in the first iteration of step~\ref{step:iteration-one-to-one} in \Cref{alg:one-to-one}, we correctly identified $N_{1}$.
    For any $1\leq p\leq m'$, let $w_p$ be the smallest element in $L_p$ according to $\preceq_D$, so that $\{w_1,w_2,\ldots,w_{m'}\}$ is the loop sequence of these loops.
    Since each $L_p$ (or possibly a cyclic permutation of it) was a part of $M_1$, which was a maximal loop at $v_i$ in $W$, we can apply \Cref{obs:maximal-loop-cannot-reverse} to get that either $w_p=\varphi$ or $v_1\preceq_D w_p$.
    Further, since $N_1$ must be the cyclic permutation of $M_1$ from the first appearance of $v_1$, $v_1$ must appear on $L_{m'}$. If $v_1$ did not appear on $L_{m'}$ but only appeared on, say, $L_{m'-1}$ instead, this would mean that $N_1$ is the cyclic permutation of $M_1$ from the \textit{second} appearance of $u_i$.
    Thus, $w_{m'}=v_1$.
    We now argue that $m'$ is the first element of the breakdown of the loop-sequence of $\mathcal{L}$.
    This will complete our argument that $N_1$ was correctly identified in step~\ref{step:iteration-one-to-one}.
    To see this, observe that by construction of $\mathcal{M}_{u_i}$, $v_1\prec_D v_2 \prec _D \ldots \prec_D v_m$.
    Further, the walks $M_2,M_3,\ldots,$ were comprised of maximal loops from $v_2,v_3,\ldots$ respectively. From \Cref{obs:maximal-loop-cannot-reverse}, for any $p>m'$, it must be that $w_{m'}=v_1\prec_D w_p$, or $w_p=\varphi$.
    Combined with our earlier observation that  $0<p<m'\implies v_1\preceq_D w_p$, or $w_p=\varphi$, this means that $v_1$ is in the breakdown of the loop-sequence of $\mathcal{L}$.
    It also must be the first element of the breakdown, since if there exists some $w_p$ with $p<m'$ and $w_p \prec_D v_1$, this creates a contradiction with our observations.
    Repeating the arguments for $N_2,N_3,\ldots,$ in the same way completes the proof of \Cref{subclaim:oneone}, when $\mathcal{M}_{u_{i'}}=\phi$ for all $i'<i$. 
    
    Now, in the run of \Cref{alg:one-to-one}, let $\widetilde{W}_i$ be the walk stored in $\widetilde{W}$ before the $i^{th}$ iteration of the outermost loop, so that $\widetilde{W}_1=\widehat{W}$. We make the following subclaim.

\begin{subclaim}\label{claim:agree-after-iterations}
    For any $i$, let $j$ be the index of the earliest appearance of $u_i$ in $W'$. Then $W'[1:j]=\widetilde{W}_{i}[1:j]$.
\end{subclaim}
We prove this by induction. Let $j$ be the first appearance of $u_1$ in $\widehat{W}$. The base case is trivial since $\widetilde{W}_1=\widehat{W}$ and $\widehat{W}[1:j]=W'[1:j]$ because the earliest location that \Cref{alg:w-to-w-hat} adds loops to is at the first appearance of $u_1$.
For the inductive step for $i$, recall that after applying the induction hypothesis, we can apply \Cref{subclaim:oneone}, since from the perspective of \Cref{alg:one-to-one} and the previous proof, the situation is exactly the same as if $\mathcal{M}_{u_{i'}}=\phi$ for all $i'<i$.
Then, \Cref{subclaim:oneone} shows that $\mathcal{M}_{u_i}=\mathcal{M'}_{u_i}$. This means that \Cref{alg:one-to-one} correctly identifies the loops at $u_i$ and removes them in the $i^{th}$ iteration, proving the case of $i+1$ and completing the inductive step.

To complete the proof of \Cref{claim:one-one-mm'}, we do induction (on the statement of \cref{claim:one-one-mm'}) on $i$, noting that for the inductive step, we can use \Cref{claim:agree-after-iterations}, after which we simply repeat the proof of \Cref{subclaim:oneone}, since it is identical to the case where $\mathcal{M}_{u_{i'}}=\phi$ for every $i'<i$, from the perspective of \Cref{alg:one-to-one} and the proof.
\end{proof}

\Cref{claim:one-one-mm'} implies that given $\widehat{W},D,D'$, we can recover the ordered sets $\{\mathcal{M}_{u_i}\}_{u_i\in W'}$. These sets contain all the maximal loops of $W$. For the proof of \Cref{lem:alg-wwhat-oneone}, it only remains to argue that these sets uniquely define $W$. That is, \Cref{alg:w-to-w-hat} on two distinct walks will not create the same set of $\{\mathcal{M}_{u}\}_{u\in W}$.

Towards this last step, note that the sets $\{\mathcal{S}_v\}_{v\in D}$ constructed in step~\ref{step:oneone-sv} of \Cref{alg:one-to-one} is uniquely defined by $\{\mathcal{M}_u\}_{u\in W'}$.
Then, the following claim (which follows by construction of the walks $M$ in \Cref{alg:w-to-w-hat}) indicates that we can recover the maximal loops of $W$, as well as the sequence in which they appear in $W$.
Since $W$ uniquely decomposes into its loop-erasure (which we know to be $D$) and its maximal loops at each vertex on the loop-erasure, this recovers the decomposition of $W$, and hence, uniquely recovers $W$. 


\begin{claim}\label{claim:maximal-loops-order}
For $v\in D$, let $L_1,L_2,\ldots L_m$ be the maximal loops at $v$ in $W$ indexed in the order in which they appear in $W$. If $\mathcal{S}_v=\{(M_{i_1},v),(M_{i_2},v),\ldots, (M_{i_{\min}},v)\}$ where $(M_i,v)\in \mathcal{M}_{u_i}$ and $u_{i_{\min}}\prec_{W'}\ldots \prec_{W'}u_{i_2}\prec_{W'} u_{i_{1}}$, then 
$$M_{i_{i}}M_{i_2}\ldots M_{i_{\min}}=L_1L_2\ldots L_m.$$
\end{claim}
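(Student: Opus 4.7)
The plan is to prove \cref{claim:maximal-loops-order} by a direct trace of the iteration of the outer loop of \cref{alg:w-to-w-hat} corresponding to $v_\ell = v$. At the start of this iteration, $\mathcal{L}$ is initialized to $\{L_1, L_2, \ldots, L_m\}$ indexed in the order the loops appear in $W$, and $U$ is the set of vertices of $W'$ appearing in some loop in $\mathcal{L}$, ordered by $\prec_{W'}$. I will show that the inner loop carves $L_1, \ldots, L_m$ into contiguous blocks, that every loop ends up in some block, and that when read in the listing order of $\mathcal{S}_v$ the blocks reassemble into $L_1 L_2 \ldots L_m$.

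First, I would establish the invariant that throughout the inner loop the current $\mathcal{L}$ is always a prefix $L_1, L_2, \ldots, L_{j_{\max}}$ of the original sequence. When a $u_i$ is processed, either (a) $u_i$ lies in no $L_{j'}$ with $j' \le j_{\max}$ and the iteration is skipped, or (b) the smallest such index $j$ is found and the suffix $M = L_j L_{j+1} \ldots L_{j_{\max}}$ is assigned to $u_i$, after which $j_{\max}$ drops to $j-1$. In case (b) the remaining sequence is exactly $L_1, \ldots, L_{j-1}$, so the invariant holds. Consequently, the successfully assigned blocks are pairwise disjoint, each is a suffix of what remained at the moment of its assignment, and in processing order they accumulate from the right end of the original sequence leftwards.

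Second, I would check that the process exhausts $\mathcal{L}$, so that the union of the assigned blocks is all of $\{L_1, \ldots, L_m\}$. Because $v \in D$ and \cref{lem:directed-cycle} guarantees that $W'$ visits every vertex of $D \cup D'$, we have $v \in V(W')$. Each $L_j$ starts at $v$, so $v$ lies in every loop, hence $v \in U$. When the inner loop reaches $u_i = v$, the smallest index $j$ with $v \in L_j$ is $1$, so the remaining prefix $L_1, \ldots, L_{j_{\max}}$ is assigned to $v$ as a single block and $\mathcal{L}$ becomes empty. Thus every $L_j$ ends up in exactly one block.

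Finally, I would align the indexing. Let $u_{a_1}, u_{a_2}, \ldots, u_{a_k}$ denote, in processing order (equivalently, in $\prec_{W'}$-ascending order), the $u$'s that actually receive a block, and let $M_{a_1}, \ldots, M_{a_k}$ be the corresponding blocks. By the previous step these are consecutive substrings of $L_1 \ldots L_m$ arranged right-to-left in processing order, i.e., $M_{a_p}$ is the $(k+1-p)$-th block counting from $L_1$. In the claim's notation, $\mathcal{S}_v$ lists these pairs in $\prec_{W'}$-descending order, so $u_{a_p} = u_{i_{k+1-p}}$ and $M_{a_p} = M_{i_{k+1-p}}$. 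Hence the concatenation $M_{i_1} M_{i_2} \ldots M_{i_{\min}}$ reads the blocks from left to right and reconstructs $L_1 L_2 \ldots L_m$. The only delicate step is verifying the invariant in the first paragraph (in particular, that the ``if $j = \phi$, continue'' clause genuinely leaves $\mathcal{L}$ unchanged so blocks remain consecutive); once the invariant is in place, the rest reduces to a bookkeeping match between the processing order and the listing order.
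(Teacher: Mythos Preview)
Your proposal is correct and takes essentially the same approach as the paper, which simply states that the claim ``follows by construction of the walks $M$ in \Cref{alg:w-to-w-hat}'' without further detail. You have filled in that construction trace carefully; the only minor wrinkle is that in your exhaustion step the inner loop may break before reaching $u_i = v$, but in that case $\mathcal{L}$ is already empty, so your conclusion still holds.
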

This concludes the proof of \Cref{lem:alg-wwhat-oneone}.
Along with 
\Cref{claim:wwhat-setdifference,claim:alg-wwhat-invariant}, this also completes the proof of \Cref{thm:walk-collapse}.

\section{Existence of Unseparating Mappings: Proof of \Cref{thm:main-exact}}\label{sec:reconnect}

For this section, we assume that we are given some cycle $C$ on the dual graph that corresponds to a partition that separates two specified vertices $u,v$.
Informally, our objective in this section is to show that this cycle can be transformed, through a constant number of modifications, to a cycle that does \textit{not} separate $u,v$, while not affecting the (im)balance of the partition too much.
We will implicitly show the second requirement in the statement of \cref{thm:main-exact}, that only a constant number of cycles separating $u,v$ get mapped to a certain cycle not separating $u,v$,
We do this by only making modifications to $C$ within a small local neighborhood (say, within distance $10$) from $u,v$. Since the number of cycles that differ from $C$ only in this local neighborhood and nowhere else is some constant, this is enough.
Our main theorem statement for this section is the following.
\begin{theorem}\label{thm:main-reconnect}
    There exists universal constants $\gamma,n_0$ such that if $m,n\geq n_0$, for any feasible partition $(S,T)$ of $\gmn$ such that $\abs{S},\abs{T}\geq n_0$,
    and for any adjacent vertices $u,v\in \gmn$ with $u\in S,v\in T$,
    there exists a feasible partition $(S',T')$ of $\gmn$ such that
    \begin{enumerate}[itemsep=0.05cm]
        \item $u,v\in S'$.
        \item Every vertex in $S\Delta S'$ is within distance $\gamma$ of $u$ and $v$.
        \item $\abs{\abs{S'}-\abs{S}}\in \{0,1,2,3\}$
        \item If $C$ is the dual cycle corresponding to $(S,T)$ and $C'$ is the dual cycle corresponding to $(S',T')$, then the degree of $\outerface$ in $C'$ is at most its degree in $C$.
    \end{enumerate}
\end{theorem}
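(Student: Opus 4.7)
The plan is to confine all modifications to a small neighborhood $H$ of $\{u,v\}$ and perform a case analysis driven by the local ``thickness'' of $S$ and $T$ across the cut edge $(u,v)$. Fix $\gamma$ to be a small absolute constant and let $H$ consist of all primal vertices within grid-distance $\gamma$ of $u$ or $v$; the bounds $m,n,|S|,|T|\geq n_0$ allow us to assume $H$ is small compared to both parts and interior enough to maneuver in. By construction, $S\Delta S'\subseteq H$, which will give condition~(2). The natural starting point is $S'=S\cup\{v\}$, which puts $u,v\in S'$ and yields $|S'|-|S|=1$. The only failure mode is that $v$ is a cut vertex of $G[T]$, in which case $T\setminus\{v\}$ splits into at most four components (one per neighbor of $v$ lying in $T$), and we must reconnect them locally while keeping $u,v$ in $S'$.

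The case split, driven by \Cref{lem:1-thin,lem:2-thin}, is between a ``thick'' and a ``thin'' regime. In the thick case, $T\setminus\{v\}$ — possibly after also removing one or two neighbors of $v$ deep inside $T$ — remains connected, so the trivial move, or a slight variant that additionally shifts at most two nearby vertices, suffices. The induced dual-cycle modification simply replaces the dual edge crossing $(u,v)$ by a short detour around the moved vertices, entirely contained in $H$. In the thin case, $v$ genuinely separates $G[T]$ inside $H$; thinness forces $T$ near $v$ to have one of a short list of canonical shapes (narrow strips, peninsulas, or bottlenecks) characterized by the two thin lemmas. For each such shape, we carry out an explicit surgery that swaps a constant number of additional vertices between $S$ and $T$ in order to reconnect the disconnected components of $T$ through an alternate route inside $H$, while simultaneously placing $v$ into $S'$. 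The total change $|S'|-|S|$ stays in $\{0,1,2,3\}$ by tracking the vertices exchanged in each surgery.

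The principal obstacle is arguing that this case split is exhaustive and that the surgeries always exist within a constant-sized window: for general planar partitions this is false, as witnessed by the locked-polyomino examples of \cite{tuckerfoltz2024lockedpolyominotilings}. The saving grace specific to the grid is that whenever the trivial move fails, the thin lemmas constrain $S$ or $T$ near $v$ so tightly that its entire local shape inside $H$ is forced, leaving only a finite catalogue of cases that can be verified by hand. A further subtlety is condition~(4) on the outer-face degree: when $u$ or $v$ lies close to the grid boundary, $\outerface$ is adjacent to $H$, and surgeries near the boundary could inadvertently introduce new dual edges incident to $\outerface$. This is handled by a separate boundary sub-analysis that chooses the orientation of each surgery so that any new dual edges into $\outerface$ replace existing ones from the original cycle $C$, ensuring the degree of $\outerface$ in $C'$ never exceeds its degree in $C$.
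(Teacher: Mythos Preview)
Your high-level plan—confine changes to a constant-radius window around $(u,v)$, run a case analysis, invoke the thin lemmas for surgery, and treat the boundary separately—matches the paper's strategy. But there is one genuine gap and one mischaracterization.

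The gap is two-sided connectivity. You correctly note that moving $v$ into $S$ may disconnect $T$, and you propose to reconnect the pieces of $T$ by swapping a few vertices inside $H$. What you never address is why those swaps do not in turn disconnect $S$. In the paper this is not automatic: several of the harder sub-cases (e.g.\ the end of Subcase~1 in \Cref{sec:case56}, culminating in \Cref{fig:branch1-fig4c}, and similarly \Cref{fig:branch2-fig10b}, \Cref{fig:boundary-3-10}) flip vertices of $S$ over to $T$ to build a bridge, and feasibility of the result hinges on the Island Walk (\Cref{lem:island-walk}). After the initial flip, \Cref{lem:create-island} guarantees that one of the broken pieces is an island; the island walk around it then furnishes an $S$-path that survives the surgery. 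Your plan has no analogue of this, so there is nothing preventing the ``alternate route inside $H$'' you build for $T$ from severing $S$.

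Relatedly, you overload \Cref{lem:1-thin,lem:2-thin}. They do not say ``if $T$ is thin near $v$ then it has one of a few canonical shapes.'' They say: in an \emph{infeasible} intermediate partition where two regions of one color (one of them an island) are separated by a width-$1$ or width-$2$ strip of the other color, that strip is disposable. The paper does not use them to drive the case split; the split is on the degrees of the dual vertices $x,y$ adjacent to the dual edge of $(u,v)$—eight top-level cases, four easy and four hard, the hard ones branching further. Within each hard branch the paper flips vertices, invokes \Cref{lem:create-island} to know an island appears, uses the Elbow Lemma (\Cref{lem:elbow}) and the Cross-Structure Lemma (\Cref{lem:xoox}) to progressively pin down nearby colors when a target vertex is not disposable, and only then locates and resolves a thin structure. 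Your ``thick vs.\ thin'' dichotomy collapses this machinery and loses the mechanism by which one shows a resolvable thin structure is always present.
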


In words, this means that any partition that separates $u,v$ can be modified locally (i.e. in the distance $\gamma=O(1)$-neighborhood of $u,v$) to create a new partition that does not separate $u,v$. Crucially, we also increase the imbalance of the initial partition by at most a constant.
We first prove \cref{thm:main-exact} assuming the correctness of \cref{thm:main-reconnect}.

\begin{proof}[Proof of \cref{thm:main-exact}]
Consider a dual cycle $C$ corresponding to a partition $(S,T)$. Since $\abs{E(S,T)}=\abs{C}$, and primal vertices of $\gmn$ have degree at most 4, it follows that $\abs{S},\abs{T}\geq \frac{\abs{C}}{4}$.
For the constant $n_0$ from \cref{thm:main-reconnect}, we ensure that $\beta\geq 4n_0$ in \cref{thm:main-exact} so that we can apply the theorem statement.

Our \unsep mapping $f$ is easily defined.
For any cycle $C$ corresponding to a partition $(S,T)$ that separates $u,v$, we apply \cref{thm:main-reconnect} to get a dual cycle $C'$ corresponding to partition $(S',T')$ that does not separate $u,v$. We let $f(C)=C'$.
We now show that this is an \unsep mapping.
The \Local property (\cref{def:local}) follows from the second property of \cref{thm:main-reconnect}.
Further, the number of edges incident on $\outerface$ does not change from the fourth property of \cref{thm:main-reconnect}.
We remark that the presence of $\outerface$ does not affect any of our arguments. In particular, the partition only changes within a distance of $\gamma$ from $u,v$ in the \textit{primal} graph.

We can think of moving from $(S,T)$ to $(S',T')$ by switching the vertices in $S\Delta S'$ from $S$ to $T'$ or from $T$ to $S'$ one at a time. Note that these switches may result in some intermediate infeasible partitions, but this does not matter.
Consider the dual edges corresponding to the primal edges incident to some vertex $w\in S\Delta S'$. 
Every such switch of $w$ requires adding or removing some or all of these dual edges from the cycle corresponding to the partition.
Further, the cycle corresponding to the initial partition $(S,T)$ is $C$ and the final partition $(S',T')$ is $C'$.
But since these are the \textit{only} dual edges affected, and since every vertex in $S\Delta S'$ is within distance $\gamma$ of $u,v$, it follows that $C,C'$ are \locally on a subgrid $\dualgraph{H}$ of dimension at most $2\gamma\times 2\gamma$ of $\gmnd$, which has $O(\gamma^2)$ edges.
This also implies that number of cycles $C$ such that $f(C)=C'$ is some constant. This is because we have the stronger property that \textit{every} cycle $C$ separating $u,v$ is modified only in $\dualgraph{H}$. The number of cycles that agree with $C'$ everywhere else except $\dualgraph{H}$ is at most the number of subgraphs of $\dualgraph{H}$, since any two such cycles must induce different subgraphs on $\dualgraph{H}$. This number is at most $2^{O(\gamma^2)}$.
The other properties of the \unsep mapping follow directly from the theorem statement. Choosing $\beta$ larger than $4n_0, 2^{O(\gamma^2)},O(\gamma^2)$ completes the proof.
\end{proof}

\subsection{Additional Definitions and Notation}

We spend the rest of this subsection proving \cref{thm:main-reconnect}.
Given any cycle $C$ on the dual graph, recall that it defines a partition of the primal vertices based on the its interior and exterior. We call this a \textit{feasible partition}, since it is a 2-partition of the graph into connected components.
Given a feasible partition, we can color the vertices of the two parts red and blue respectively.

Consider some cycle $C\in \mathcal{C}_{m,n}$ that separates $u,v$.
We assume that $u,v$ are at least one row/column away from the boundary of the grid. We will deal with the case where they are not in \Cref{sec:boundary-cases}. The local picture looks like \cref{fig:local-uv}. Square nodes are nodes of the dual graph. Since $C$ separates $u,v$, $C$ contains the dual edge corresponding to the primal edge $(u,v)$.

\begin{figure}[htbp]
    \centering
    \includegraphics[width=0.25\linewidth]{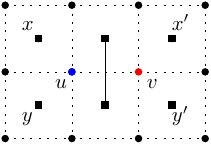}
    \caption{}
    \label{fig:local-uv}
\end{figure}

Dashed edges are edges of the primal graph, and solid edges are edges of the dual graph that are in $C$. 
A black circular node is a node whose color can be either red or blue, and for whom we currently do not know any other information. 

We do a case analysis based on the degrees of $x,y$ (and when necessary, $x'$ and $y'$). Recall that since $C$ is a cycle, every vertex in the dual graph has degree 0 or 2 in $C$. 
If a node other than $u,v$ is colored red or blue, then we mean that in the case being analyzed, that node is definitively in the part containing $u$ or $v$ respectively.
Note that if a dual edge is not drawn, it may or may not be in $C$. However, if both the endpoints of the primal edge corresponding to a dual edge are of the same color, then that naturally implies that the dual edge is not in $C$ in the case being analyzed.
We list all possible cases for $x,y$ in \cref{fig:case_analysis_uv_separated} as Cases 1-8. In some subcases, we will need to look at the degree of $x',y'$ as well. In this case, we will refer to them as `Case (3,2)' which denotes that the degrees of $x,y$ are in Case 3 and the degrees of $x',y'$ are in Case 2.
\renewcommand\thesubfigure{Case~\arabic{subfigure}}
\begin{figure}
     \centering
     \begin{subfigure}[b]{0.3\textwidth}
         \centering
         \includegraphics[width=\textwidth,cframe=green]{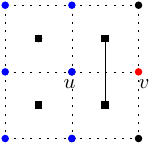}
         \caption{}
         \label{case:1}
     \end{subfigure}
     \hfill
     \begin{subfigure}[b]{0.3\textwidth}
         \centering
         \includegraphics[width=\textwidth,cframe=green]{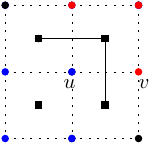}
         \caption{}
         \label{case:2}
     \end{subfigure}
     \hfill
    \begin{subfigure}[b]{0.3\textwidth}
         \centering
         \includegraphics[width=\textwidth,cframe=green]{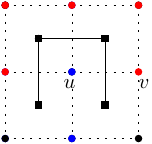}
         \caption{}
         \label{case:3}
     \end{subfigure}
     
     \bigskip
     
     \begin{subfigure}[b]{0.29\textwidth}
    \centering
\includegraphics[width=\textwidth,cframe=green]{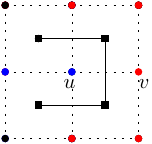}
         \caption{}
         \label{case:4}
     \end{subfigure}
     \hfill
    \begin{subfigure}[b]{0.347\textwidth}
         \centering
         \includegraphics[width=\textwidth]{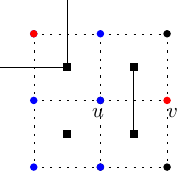}
         \caption{}
         \label{case:5}
     \end{subfigure}
     \hfill
\begin{subfigure}[b]{0.347\textwidth}
         \centering
         \includegraphics[width=\textwidth]{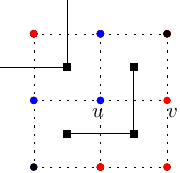}
         \caption{}
         \label{case:6}
     \end{subfigure}
     
     \bigskip 
     \begin{subfigure}[b]{0.3\textwidth}
         \centering
         \includegraphics[width=\textwidth]{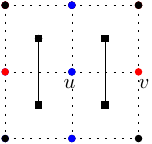}
         \caption{}
         \label{case:7}
     \end{subfigure}
     \hfill
         \begin{subfigure}[b]{0.3\textwidth}
         \centering
         \includegraphics[width=\textwidth]{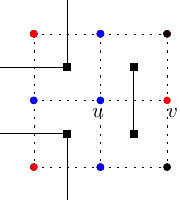}
         \caption{}
         \label{case:8}
     \end{subfigure}
     \hfill
     
    \caption{Cases for analysis. Easy cases are framed green.}
    \label{fig:case_analysis_uv_separated}
\end{figure}

\paragraph{Case Notation} 
Broadly, we have three branches for our case analysis. 
\begin{enumerate}
    \item If both $x,y$ have degree 0 - This is $\fbox{\mbox{Case 1}}$.
    \item If $x$ has degree 2 and $y$ has degree 0. This also covers the symmetric case where $y$ has degree 2 and $x$ has degree 0. We have 2 cases in this branch.
    \begin{enumerate}
        \item If $x$ has its right edge. This is $\fbox{\mbox{Case 2}}$.
        \item If $x$ does not have its right edge. This is $\fbox{\mbox{Case 5}}$.
    \end{enumerate}
    \item If both of $x,y$ have degree 2. We have 5 cases in this branch.
    \begin{enumerate}
        \item If $(x,y)\in C$ and exactly one of $x,y$ have its right edge\footnote{Note that both $x,y$ cannot have their right edge in this case, since the size of the blue component would be just 1.}. This is $\fbox{\mbox{Case 3}}$.
        \item If $(x,y)\in C$ and neither one of $x,y$ have its right edge. This is $\fbox{\mbox{Case 7}}$.
        \item If $(x,y)\not\in C$ and both of $x,y$ have its right edge. This is $\fbox{\mbox{Case 4}}$.
        \item If $(x,y)\not\in C$ and exactly one of $x,y$ have its right edge. This is $\fbox{\mbox{Case 6}}$.
        \item If $(x,y)\not\in C$ and neither one of $x,y$ have its right edge. This is $\fbox{\mbox{Case 8}}$.
    \end{enumerate}
\end{enumerate}

We number the cases so that the `easy' cases are together. In Cases 1-4,
we can directly modify the cycle to include $u$ into the red component. The graphs after the modification are drawn in \cref{fig:easy_cases_uv_separated}.

\begin{figure}[htbp]
     \centering
     \hfill
     \begin{subfigure}[b]{0.2\textwidth}
         \centering
         \includegraphics[width=\textwidth]{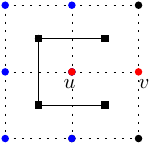}
         \caption{}
     \end{subfigure}
     \hfill
     \begin{subfigure}[b]{0.2\textwidth}
         \centering
         \includegraphics[width=\textwidth]{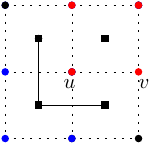}
         \caption{}
     \end{subfigure}
     \hfill
    \begin{subfigure}[b]{0.2\textwidth}
         \centering
         \includegraphics[width=\textwidth]{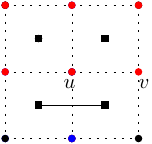}
         \caption{}
     \end{subfigure}
     \hfill
    \begin{subfigure}[b]{0.2\textwidth}
         \centering
         \includegraphics[width=\textwidth]{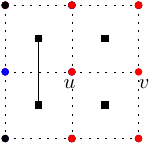}
         \caption{}
     \end{subfigure}
     \hfill
     
    \caption{Modifications for easy cases.}
    
    \label{fig:easy_cases_uv_separated}
\end{figure}

\renewcommand{\thesubfigure}{\alph{subfigure}}

We will use several tools to prove the harder cases. We first start with some formal definitions, before proving some lemmas that we will repeatedly use.

Recall that we colored the vertices of the feasible 2-partitions red and blue. We call these connected sets of vertices of the same color as regions. Though trivial for a feasible 2-partition, we will need this definition later when we consider intermediate infeasible partitions in our proof, where the all the vertices of one part are not necessarily connected.
That is, there are several `pockets' of red and blue vertices.
See \cref{fig:grid-definition-example} for an example. The coloring in \cref{fig:grid-definition-example-island} contains two blue regions. 
Note that in such infeasible 2-partitions, we have several dual cycles, instead of just one.

\begin{definition}[Region]
    A region is a maximal connected component of vertices of the same color. Any feasible partition has exactly one red and one blue region.
\end{definition}
\begin{definition}[Boundary]
The boundary of a region is the set of vertices of that region that do not have 4 adjacent vertices that are also from that region.     
\end{definition}

\begin{figure}[htbp]
     \centering
     \hfill
     \begin{subfigure}[b]{0.3\textwidth}
         \centering
         \includegraphics[width=\textwidth]{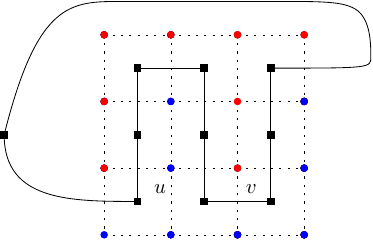}
         \caption{Starting feasible partition where $u,v$ are separated}
         \label{fig:grid-definition-example-disposable}
     \end{subfigure}
     \hfill
     \begin{subfigure}[b]{0.3\textwidth}
         \centering
         \includegraphics[width=\textwidth]{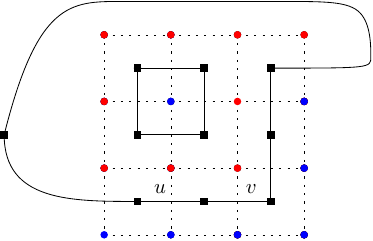}
         \caption{Intermediate infeasible partition where $u,v$ are not separated.}\label{fig:grid-definition-example-island}
     \end{subfigure}
     \hfill
    \begin{subfigure}[b]{0.3\textwidth}
         \centering
         \includegraphics[width=\textwidth]{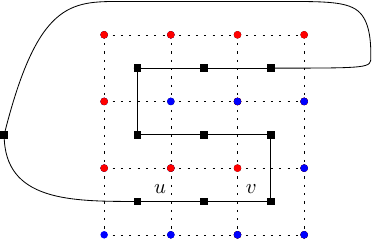}
         \caption{Final feasible partition where $u,v$ are not separated.}
     \end{subfigure}
     \hfill
     
    \caption{A full example on a $4\times 4$ grid.}
    
    \label{fig:grid-definition-example}
\end{figure}

\begin{definition}[Island]
	An island is a region encircled by vertices of the opposite color. Formally, a region $R$ is an island 
    if every vertex on the boundary of $R$ has degree 4. That is, no vertex from $R$ lies on the boundary of the grid.
\end{definition}

In \cref{fig:grid-definition-example-island}, the blue vertex that is not adjacent to any other blue vertices is an island.

\begin{definition}[Disposable Vertex and Sets]
    For a region $R$, a vertex $v\in R$ is called disposable if $R\setminus \{v\}$ is still a region. A set of vertices $R'\subseteq R$ is called disposable if $R\setminus R'$ is still a region.
\end{definition}

In \cref{fig:grid-definition-example-disposable}, $v$ is a disposable vertex for its region whereas $u$ is not. However, $u$ together with the blue vertex above it form a disposable set for the blue region.
We will usually drop the mention of the region $R$ and only mention that a vertex is a disposable when its region $R$ is obvious from context. Note that every vertex in a disposable set might not be individually disposable.

\begin{observation}\label{obs:degree1-disposable}
Every vertex $v\in R$ such that the degree of $v$ in the induced graph $G[R]$ is at most 1 is a disposable vertex.
\end{observation}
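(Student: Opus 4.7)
The plan is to prove this observation by a direct appeal to elementary graph connectivity, splitting into cases based on whether the degree of $v$ in $G[R]$ is $0$ or $1$. Recall that by definition, $v \in R$ is disposable precisely when $R \setminus \{v\}$ is still a region, i.e., still a maximal connected set of same-colored vertices. Since the maximality of $R \setminus \{v\}$ is inherited from that of $R$ (removing a vertex cannot introduce a neighbor of the opposite color to swallow up into $R \setminus \{v\}$), the claim reduces to showing that $G[R \setminus \{v\}]$ is connected.

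For the degree-$0$ case, $v$ has no neighbor in $R$. Since $R$ is connected by assumption, this forces $R = \{v\}$, so $R \setminus \{v\} = \emptyset$, which is (vacuously) a region. In the usages of this observation later in the proof, $R$ will always be non-trivial, so the operative case is really the degree-$1$ case.

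For the degree-$1$ case, let $w$ be the unique neighbor of $v$ in $R$. Then $v$ is a leaf of the connected graph $G[R]$, and it is a standard fact that removing a leaf from a connected graph leaves it connected: any two vertices $a, b \in R \setminus \{v\}$ were joined by a path in $G[R]$, and since $v$ has degree $1$, no such path could have used $v$ as an internal vertex (the path would have to both enter and exit $v$ through its unique incident edge). Hence $a, b$ remain connected in $G[R \setminus \{v\}]$, so $R \setminus \{v\}$ is a connected same-colored set, i.e., a region, and $v$ is disposable.

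There is no real obstacle here — the observation is a direct consequence of the leaf-removal fact from elementary graph theory, and its main purpose is as a convenient lemma to invoke repeatedly in later case analyses when identifying vertices that can be safely reassigned to the other part without disconnecting their region.
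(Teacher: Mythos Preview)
Your proof is correct; the paper states this as an observation without proof, treating it as self-evident. Your argument via the standard leaf-removal fact is exactly the elementary justification one would supply, and your handling of the degree-$0$ singleton case matches how the paper uses the observation later (e.g., in the Elbow Lemma, where a singleton region is declared disposable).
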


The easy cases we solved above were precisely the cases where $u$ was a disposable vertex. Changing the color of $u$ did not affect the feasibility of the partition.

\subsection{Technical Lemmas}
\label{sec:tools}
We now describe a few configurations that we can easily exploit, or rule out.
From here on out, we will consider structures that may appear in infeasible partitions.

\begin{lemma}[Cross-Structure]\label{lem:xoox}
    The coloring in \cref{fig:figure-lemma-xoox} cannot appear in any feasible partition.
    In an infeasible partition, if both vertices of one color (say, blue) vertices belong to the same region, then the two vertices of the other color (red) belong to two different regions, at least one of which is an island. 
    \begin{figure}[htbp]
        \centering
        \includegraphics[width=0.2\linewidth]{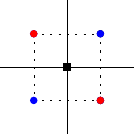}
        \caption{}
        \label{fig:figure-lemma-xoox}
    \end{figure}
\end{lemma}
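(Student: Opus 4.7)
My plan is to handle the two assertions separately, both exploiting planarity --- a dual degree-count for the first and the Jordan Curve Theorem for the second.

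For the first assertion, let $F$ be the face of $\gmn$ whose four corner vertices are exactly the four vertices of the $2\times 2$ block. The diagonal coloring (two opposite corners blue, the other two red) makes every one of the four primal edges of $F$ bichromatic, so all four dual edges incident to $\dualvertex{F}$ are cut edges and $\dualvertex{F}$ has degree $4$ in the cut. In a feasible 2-partition, however, the cut forms a simple cycle in $\gmnd$ (Definition~\ref{def:dual-cycle-of-partition}), so every dual vertex has degree $0$ or $2$ in it; this contradicts the degree of $\dualvertex{F}$ being $4$.

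For the infeasible case, label the blue corners $b_1, b_2$ and the red corners $r_1, r_2$ so that $\{b_1,b_2\}$ and $\{r_1,r_2\}$ are opposite diagonals of $F$. Assuming $b_1, b_2$ share a blue region, fix a simple blue path $P$ from $b_1$ to $b_2$ in $\gmn$, and form $\gamma := P \cup \sigma$, where $\sigma$ is the straight segment from $b_2$ back to $b_1$ drawn through the open interior of the face $F$. Since $\sigma$ lies strictly in the open face it meets $P$ only at its endpoints, so $\gamma$ is a simple closed curve; by the Jordan Curve Theorem it bounds a bounded interior $I$ and an unbounded exterior $E$. Because $\sigma$ separates the two remaining corners of $F$, exactly one of $r_1,r_2$ lies in $I$ and the other in $E$. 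Any red grid path between them would have to cross $\gamma$, but every grid vertex on $\gamma$ is a vertex of $P$ (hence blue), and $\sigma$ contains no grid vertex at all. So $r_1, r_2$ belong to distinct red regions.

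It remains to show that the red region on the $I$ side is an island. Let $r \in \{r_1, r_2\}$ be the corner inside $I$ and $R$ its region; the same crossing argument forces $R \subseteq I$. The delicate step --- which I expect to be the main obstacle, particularly when $P$ chooses to use grid-boundary vertices --- is that $I$ contains no grid-boundary vertex of $\gmn$. My plan is to argue topologically: $\gamma$ lies inside the closed planar region $D$ bounded by the outer rectangle of $\gmn$, so $I \subseteq D$ as well. If a grid-boundary vertex $v$ lay in $I$ but not on $\gamma$, then a small open disk around $v$ disjoint from $\gamma$ would contain points strictly outside $D$; those points lie in $E$ (using $I \subseteq D$), and path-connectedness of the disk inside $\mathbb{R}^2 \setminus \gamma$ would then put $v$ into $E$, a contradiction. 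Hence the only grid-boundary vertices in $I$ lie on $P$ and are blue, so $R$ contains no grid-boundary vertex and is an island.
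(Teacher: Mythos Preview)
Your argument is correct and takes a genuinely different route from the paper.

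For the first assertion, the paper argues by contradiction via a $K_{3,3}$ minor: assuming both colors' pairs share a region, it takes monochromatic paths $P_r,P_b$ and case-splits on the local neighborhood to exhibit a $K_{3,3}$ minor, violating planarity. Your dual degree-count is considerably more direct --- the face $F$ contributes a degree-$4$ vertex to the cut, incompatible with the cut being a simple dual cycle --- and avoids any case analysis. The paper itself invokes exactly this degree-$0$-or-$2$ observation later (in the case setup of Section~5.1), so your tool is already in the paper's kit even if not deployed for this lemma.

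For the second assertion, the paper reuses its $K_{3,3}$ machinery: it first deduces that the red vertices lie in distinct regions from the feasibility contradiction, and then, to show one region is an island, it plants an auxiliary vertex $z$ on the outer face adjacent to hypothetical grid-boundary vertices of both red regions and reruns the minor argument. Your approach instead builds a Jordan curve $\gamma = P \cup \sigma$ and argues directly that the bounded side of $\gamma$ is trapped inside the grid rectangle $D$, so the red region on that side cannot reach the grid boundary. One point worth tightening: the sentence ``because $\sigma$ separates the two remaining corners of $F$'' is a local statement about $\sigma$, not about $\gamma$; to globalize it cleanly, note that the diagonal segment $r_1r_2$ meets $\gamma$ transversally exactly once (at the center of $F$) and nowhere on $P$ (since $P$ avoids the red corners and the open face), so $r_1,r_2$ lie in different Jordan components.

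Both approaches are sound; yours is more geometric and self-contained, while the paper's has the virtue of reusing a single graph-minor argument uniformly for both parts.
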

\begin{proof}
    We start by proving the first part of the lemma.
    For contradiction, assume that the partition is feasible. Hence, both vertices of each color belong to one region.
    Let the red vertices be $r_1,r_2$ and the blue vertices be $b_1,b_2$.
    Consider the red vertex $r_1$ on the right. Since it is in the same region as $r_2$, there is a path $P_r$ between them in their region. Consider the vertex $r_x$ adjacent to $r_1$ in $P_r$. Without loss of generality, let $r_x$ be to the right of $r_1$. Similarly, consider the blue vertex $b_1$ on the right. If the vertex to the right of $b_1$ is on $P_b$, call it $b_x$.
    Thus, vertices $b_x$ and $r_x$ have an edge between them. In this case, the graph contains a $K_{3,3}$ minor on $r_1,r_2,r_x,b_1,b_2,b_x$  after contracting every vertex on $P_b\setminus\{b_1,b_2\}$ to $b_x$ and every vertex on $P_r\setminus\{r_1,r_2\}$ to $r_x$. This contradicts planarity.

    Suppose the vertex to the right of $b_1$ is not on $P_b$. Call it $z$.
    In this case, the vertex above $b_1$ must necessarily be on $P_b$.
    Call this $b_x$. 
    Let $y$ be the vertex to the right of $b_x$. If $y$ is not in $P_b$, we contract the edges $(y,z),(z,r_x)$. If $y$ is in $P_b$, we contract $(b_x,y)$.
    We again have a $K_{3,3}$ minor on $r_1,r_2,r_x,b_1,b_2,b_x$  after contracting every vertex on $P_b\setminus\{b_1,b_2\}$ to $b_x$ and every vertex on $P_r\setminus\{r_1,r_2\}$ to $r_x$.

    Thus, both vertices of each color cannot belong to the same region, implying that the partition is not feasible. Now, suppose that in an infeasible partition, the blue vertices $b_1$ and $b_2$ are in the same region. The red vertices must belong to different regions. Let $R_1,R_2$ be the red regions containing $r_1$ and $r_2$ respectively. We will show that at least one of them is an island. That is, both $R_1,R_2$ cannot have a vertex on the boundary of the grid. Suppose they do, let these vertices be $r'_1$ and $r'_2$. We add a new vertex to the grid, call it $z$, and add edges from $z$ to $r'_1$ and $r'_2$. This can be done without violating planarity, by adding $z$ on the outer face of the standard planar embedding of the grid. We now have a path between $r_1$ and $r_2$ through $z$, because $r_1$ and $r_2$ have paths to $r'_1$ and $r'_2$. The exact same arguments as before apply.
    
\end{proof}

\begin{lemma}[Island Walk]
	\label{lem:island-walk}
    Suppose a given (feasible or infeasible) partition has no cross-structures and there is an island $R$ of red vertices. Then there is a directed closed walk $P$ on the blue vertices such that
    \begin{enumerate}
        \item All blue vertices are from the same region.
		\item If the directed arc $(u,v)\in P$, then there is a vertex from $R$ to the right (facing the arc $(u,v)$ in the planar embedding of $G$) of $u$ or $v$.
		\item For every vertex $u$ on the boundary of $R$, there every blue vertex adjacent to $u$ is in $P$.
	\end{enumerate}
\end{lemma}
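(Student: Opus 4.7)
Since $R$ is an island, no vertex of $R$ lies on the boundary of $\gmn$, and the primal edges with exactly one endpoint in $R$ form a Jordan-like separating boundary whose duals constitute a cycle $C^*$ in $\gmnd \setminus \{\outerface\}$ enclosing $R$. Every dual edge of $C^*$ crosses a primal edge $(r,b)$ with $r \in R$ and $b \notin R$; moreover $b$ must be blue, since any red neighbor of an $R$-vertex would itself lie in the same red region and hence in $R$. I orient $C^*$ clockwise around $R$ so that $R$ sits to the right of the traversal, and let $e_1,\ldots,e_k$ be the resulting cyclic sequence of dual edges with distinguished blue primal endpoints $b_1,\ldots,b_k$.

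\textbf{Defining the walk $P$ face by face.}
Between consecutive dual edges $e_i,e_{i+1}$ the cycle passes through a common face $F$ of $\gmn$, and there are three cases. In the \emph{straight} case, $e_i,e_{i+1}$ are opposite sides of $F$; then $b_i$ and $b_{i+1}$ are adjacent in $\gmn$ and I append the arc $(b_i,b_{i+1})$. In the \emph{red-turn} case, $e_i,e_{i+1}$ share a red primal endpoint $r \in R$; then $b_i,b_{i+1}$ are diagonal corners of $F$, and the fourth corner $c$ of $F$ must be blue, for otherwise $F$ would display red on the diagonal $\{r,c\}$ and blue on the diagonal $\{b_i,b_{i+1}\}$, which is precisely the configuration forbidden by \Cref{lem:xoox}. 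I then append the arcs $(b_i,c)$ and $(c,b_{i+1})$. Finally, in the \emph{blue-turn} case, $e_i,e_{i+1}$ share a blue primal endpoint, so $b_i=b_{i+1}$ and I append nothing. Concatenating these pieces around $C^*$ gives a directed closed walk $P$.

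\textbf{Verifying the three properties and the main difficulty.}
Property~1 is immediate because every pair of consecutive vertices of $P$ is joined by a blue primal edge I appended, so $P$ is contained in a single connected blue subgraph and hence in one blue region. Property~3 follows directly: for any $u$ on the boundary of $R$ and any blue neighbor $b$ of $u$, the primal edge $(u,b)$ has one endpoint in $R$ and one outside $R$, so its dual lies on $C^*$, making $b$ one of the $b_i$ and thus a vertex of $P$. Property~2 follows from the clockwise orientation of $C^*$ by a direct planar check: each straight arc has the parallel red edge of $F \cap R$ to its right, and each pair of arcs inserted at a red turn has the shared red vertex $r$ to its right. The argument's main leverage point, and the one nontrivial step, is the red-turn case: it is the unique place where the no-cross-structure hypothesis is used, and without it, $c$ could be red and there would be no local blue vertex available to route the walk between $b_i$ and $b_{i+1}$. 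The remaining ingredients—the existence of $C^*$, the three-case split at a face, and the right-hand-side bookkeeping—are routine planar verifications.
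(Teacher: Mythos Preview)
Your proof is correct and follows essentially the same approach as the paper's. The paper phrases the construction in terms of ``spokes'' (primal edges from $R$ to a blue neighbor) visited one by one, while you phrase it in terms of consecutive dual edges of the boundary cycle $C^*$; these are the same objects, and your three cases (straight, red-turn, blue-turn) are exactly the three local spoke arrangements the paper handles via its figures, with the no-cross-structure hypothesis invoked at the identical spot (the red-turn case forcing the fourth corner to be blue). Your verification of Properties~1--3 is somewhat more explicit than the paper's, but both arguments share the same implicit assumption that the cut duals form a single cycle (equivalently, that the spoke traversal reaches every spoke), which is what is needed in all the paper's applications.
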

\begin{proof}
We prove this constructively. Define a \emph{spoke} as an edge from a red vertex on the boundary of the island to a neighboring blue vertex. We construct $P$ by visiting spokes one-by-one and adding edges to $P$. The spokes can be arranged in 3 different ways as below in \cref{fig:island-walk-cases}, and we show how to deal with them in each of the cases. Any other arrangement will lead to a cross-structure. The construction can be verified to satisfy the above properties.

Noting that each spoke is visited exactly once and the spokes eventually circle around the boundary, we get a closed walk. This again follows by considering the arrangements below in \cref{fig:island-walk-cases}. In each case, we always cover one spoke, and the number of spokes is finite. Furthermore, if it were the case that spokes looped around without visiting a particular spoke, we would get a walk that separates some of the red vertices from $R$ from the others, contradicting that $R$ was a region. Note that we crucially require the absence of cross-structures. 
An example construction of an island walk is drawn in \cref{fig:exmample_island_walk}.

\begin{figure}[H]
	\centering
	\hfill
	\begin{subfigure}[b]{0.1\textwidth}
		\centering
		\includegraphics[width=\textwidth]{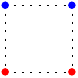}
	\end{subfigure}
	\hfill
	\begin{subfigure}[b]{0.1\textwidth}
		\centering
		\includegraphics[width=\textwidth]{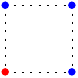}
	\end{subfigure}
	\hfill
	\begin{subfigure}[b]{0.1\textwidth}
		\centering
		\includegraphics[width=\textwidth]{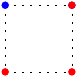}
	\end{subfigure}
	\hfill
	\bigskip

    \centering
	\hfill
		\begin{subfigure}[b]{0.1\textwidth}
		\centering
		\includegraphics[width=\textwidth]{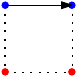}
	\end{subfigure}
	\hfill
	\begin{subfigure}[b]{0.1\textwidth}
		\centering
		\includegraphics[width=\textwidth]{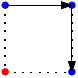}
	\end{subfigure}
	\hfill
	\begin{subfigure}[b]{0.1\textwidth}
		\centering
		\includegraphics[width=\textwidth]{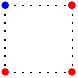}
	\end{subfigure}
    \hfill
	\bigskip 
    
	\caption{Possible arrangements of spokes and the construction of $P$ in the corresponding arrangments. Dashed edges are edges of the graph. Thick dashed edges are visited spokes.}
	
	\label{fig:island-walk-cases}
\end{figure}
\end{proof}

\begin{figure}[h]
\centering
\includegraphics[width=0.5\linewidth]{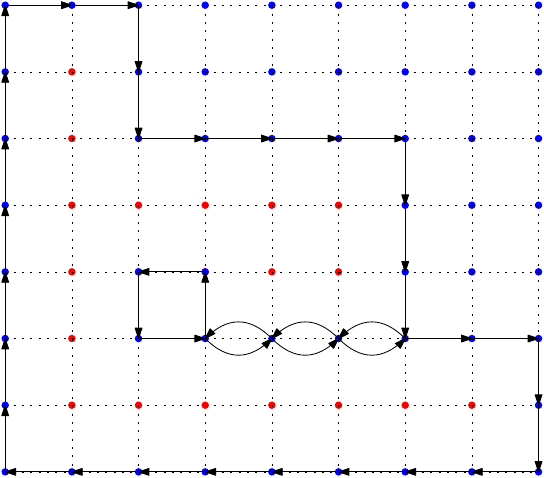}
\caption{Example of an island walk.}
\label{fig:exmample_island_walk}
\end{figure}

\begin{lemma}[1-thin structure]
\label{lem:1-thin}
    Suppose the following configuration (or a 90\textdegree  rotation of it) appears in an infeasible partition. Suppose the two red vertices are from different regions, one of which is an island. If the graph partition does not have any cross-structures, then the blue vertex is disposable and flipping its color does not add any cross-structures nor does it increase the degree of the outer face node $\outerface$.
    \begin{figure}[H]
        \centering
        \includegraphics[width=0.2\linewidth]{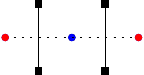}
    \end{figure}
\end{lemma}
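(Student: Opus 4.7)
The plan is to establish three properties of the blue vertex $b$ in turn: disposability, preservation of the no-cross-structures property after flipping, and non-increase of the degree of $\outerface$ in the dual cycle.

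First, I would rule out the case where both of $b$'s vertical neighbors (top and bottom) are red. Applying the no-cross-structure hypothesis to each $2\times 2$ block containing $b$ would force every diagonal neighbor of $b$ that exists to be red; chaining adjacencies through the row immediately above (or below) $b$ would then connect $r_1$ and $r_2$ by a red path, contradicting the assumption that they lie in different regions. Hence at least one vertical neighbor of $b$ is blue. If exactly one is blue, $b$ has degree at most $1$ in the induced blue subgraph and Observation~\ref{obs:degree1-disposable} directly yields disposability.

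The substantive case is when both vertical neighbors of $b$ are blue. Here I would apply the Island Walk Lemma (\Cref{lem:island-walk}) to the red island $R_1$ containing $r_1$, obtaining a closed walk $P$ on blue vertices around $R_1$. Since $b$'s unique neighbor inside $R_1$ is $r_1$, $b$ contributes exactly one spoke and is visited by $P$ exactly once. Using the orientation property that $R_1$ lies to the right of every directed arc of $P$, combined with the fact that $r_1$ is immediately west of $b$, I would argue that the only incident arcs of $P$ compatible with this constraint are $(\text{top},b)$ and $(b,\text{bottom})$ (or their rotation in the $90^\circ$ symmetric configuration). Closing $P$ around $R_1$ on the side opposite the top--$b$--bottom segment then produces a blue walk from top to bottom that avoids $b$, establishing that $b$ is disposable.

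For the cross-structure check, I would examine each of the (at most four) $2\times 2$ blocks containing $b$. In each such block $b$ sits adjacent to either $r_1$ or $r_2$; after flipping $b$ to red these two adjacent corners of the block are both red, which places a red vertex on each diagonal of the block and forbids a cross-structure, since a cross-structure requires the two diagonals to be monochromatic in distinct colors. For the $\outerface$-degree claim, I would note that since $r_1$ and $r_2$ exist, $b$ cannot lie in the leftmost or rightmost column; if $b$ is interior the claim is vacuous, and otherwise $b$ lies on the top or bottom row, in which case the perimeter edges incident to $b$ are exactly $(b,r_1)$ and $(b,r_2)$. Both are cut before the flip and neither is cut after, so flipping removes two dual edges at $\outerface$; any newly cut edge produced by the flip joins $b$ to a blue vertical neighbor, which is an interior primal edge and contributes nothing to $\outerface$. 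The degree of $\outerface$ therefore does not increase.

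The main obstacle I expect is the disposability argument in the two-blue-neighbor case: carefully tracking the Island Walk's orientation at $b$ and confirming that $P$ visits $b$ exactly once and enters/exits on opposite vertical sides relies on both the unique-spoke structure at $b$ and the no-cross-structure hypothesis. The remaining claims reduce to local $2\times 2$ checks and a perimeter-edge analysis that should go through cleanly.
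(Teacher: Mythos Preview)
Your proposal is correct and follows essentially the same approach as the paper: both proofs hinge on the Island Walk Lemma to establish disposability in the two-blue-vertical-neighbor case, and both treat the cross-structure and $\outerface$-degree checks as local verifications. The differences are minor and presentational. You rule out the both-vertical-neighbors-red case by chaining no-cross constraints to connect $r_1$ and $r_2$ (which is correct), whereas the paper simply invokes \Cref{obs:degree1-disposable} since $b$ then has degree at most one; your route is valid but slightly longer than needed. In the island-walk step, you argue directly that the only admissible arcs at $b$ are $(\text{top},b)$ and $(b,\text{bottom})$ using the right-of-arc orientation, while the paper phrases the same constraint as ``not both $(u,v)$ and $(v,u)$ can lie in $P$''; these are equivalent once one notes that the vertices east of $b$ adjacent to $r_2$ cannot lie in the island $R_1$. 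One small imprecision: your claim that $b$ is visited \emph{exactly once} by $P$ is not strictly needed and may not be literal (the island walk is a walk, not a path), but your conclusion only uses that $P$ is closed and that every passage through $b$ enters from one vertical neighbor and exits to the other, which suffices to extract a top--bottom blue walk avoiding $b$. Your cross-structure and $\outerface$-degree arguments are more explicit than the paper's (which simply asserts these can be verified), and they are correct.
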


\begin{proof}

    Let $u$ be the blue vertex and let $R$ be the region that $u$ belongs to. If the vertex above $u$ is red (or does not exist), then $u$ is disposable from \cref{obs:degree1-disposable} and we are done. Similarly for the vertex below $u$. Thus, we assume that both these vertices (say, $v,w$) are blue. Let $R_v$, $R_w$ be the sets of vertices in $R\setminus \{u\}$ reachable from $v$ and $w$ respectively in $R\setminus \{u\}$. If $R_v=R_w=R\setminus \{u\}$, then we are done and $u$ is disposable.

    Since one of the red vertices belongs to some region $R$ that is an island, then we know from \cref{lem:island-walk} that there is an island walk $P$ through $u$. If the arc $(u,v)$ and the arc $(v,u)$ are in $P$, then we have a contradiction because this would imply that both the red vertices are from $R$, since there must be a vertex from $R$ to the left of, and to the right of the one of the endpoints of the arc $(u,v)$.
	Thus, we must have only one of the arcs $(u,v)$ or $(v,u)$. 
    But in this case, it must be that $R_v=R_w$ because there is a cycle containing $u,v,w$ in $P$, because $P$ is a closed walk.

    In any of the cases above, it can be verified that flipping the color of $u$ does not add any cross-structures. Further, whether $u$ was adjacent to the boundary or not, flipping the color of $u$ to red does not add any edges to $\outerface$ regardless of the existence or color of $v,w$. 
\end{proof}

\begin{lemma}[2-thin structure]
\label{lem:2-thin}
    Suppose the following configuration (or a 90\textdegree  rotation of it) appears in an infeasible partition. Suppose the two red vertices are from different regions, one of which is an island. If the graph partition does not have any cross-structures, the blue vertices form a disposable set. Further, flipping both of them to red cannot create a cross-structure nor increase the degree of the outer face node $\outerface$.
    \begin{figure}[H]
        \centering
        \includegraphics[width=0.266\linewidth]{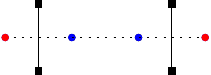}
    \end{figure}
\end{lemma}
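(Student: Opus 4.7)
The plan is to mirror the strategy of Lemma~\ref{lem:1-thin} applied to two adjacent blue vertices. Place $u_1 = (0,0)$ and $u_2 = (1,0)$ with $r_1 = (-1,0) \in R_1$ (the island) and $r_2 = (2,0) \in R_2$, and denote by $a_i, b_i$ the vertices directly above and below $u_i$ respectively. Let $R_B$ be the blue region containing the pair. The goal is to show $R_B \setminus \{u_1, u_2\}$ is connected. Since every blue vertex in $R_B$ is connected to $u_1$ or $u_2$ through the last edge on some path in $R_B$, it suffices to show that the blue vertices among $\{a_1, b_1, a_2, b_2\}$ all lie in a single component of $R_B \setminus \{u_1, u_2\}$.

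First, handle the degenerate sub-cases. If both $a_1, b_1$ are non-blue (red or outside the grid), then the only blue neighbor of $u_1$ is $u_2$, so $u_1$ has degree at most one in $G[R_B]$ and is individually disposable by Observation~\ref{obs:degree1-disposable}. Flipping $u_1$ first transforms the configuration into the 1-thin structure at $u_2$ (whose left neighbor is now $u_1$, merged into $R_1$), and Lemma~\ref{lem:1-thin} applies. The case when both $a_2, b_2$ are non-blue is symmetric. In all such degenerate sub-cases the result follows by sequential application of Lemma~\ref{lem:1-thin}.

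In the remaining case, at least one of $a_1, b_1$ is blue and at least one of $a_2, b_2$ is blue. Apply Lemma~\ref{lem:island-walk} to the island $R_1$ to obtain a closed blue walk $P$ that must pass through $u_1$ (and through $u_2$ exactly when $a_2 \in R_1$ or $b_2 \in R_1$). The right-hand rule in the island-walk definition restricts which arcs incident to $u_1$ can appear in $P$: for instance, the arc $(u_1, a_1)$ heading north requires an $R_1$-vertex immediately east of $u_1$ or $a_1$; since $u_2$ is blue, this forces $a_2 \in R_1$. An analogous enumeration at each of the four candidate directions at $u_1$ and $u_2$ yields a small number of valid topologies for $P$ near the pair. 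In each such topology the blue vertices among $\{a_1, b_1, a_2, b_2\}$ lie on a connected arc of $P$ whose interior avoids $\{u_1, u_2\}$. Crucially, whenever $a_1$ and $a_2$ are both blue they are horizontally adjacent in $\gmn$ (and likewise for $b_1, b_2$), so the edge $(a_1, a_2)$ supplies a connection in $R_B \setminus \{u_1, u_2\}$ that is independent of $P$. Combining these observations shows the blue neighbors of $\{u_1, u_2\}$ remain in one component after removal, establishing disposability.

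For the remaining conclusions, view the double flip as two sequential single-vertex flips. After flipping $u_1$, the merged red region $R_1 \cup \{u_1\}$ (possibly further merged with the regions of $a_1, b_1$ if those are red) must remain an island and distinct from $R_2$; these checks follow from the island hypothesis on $R_1$ together with the initial absence of cross-structures, which precludes $R_2$ from wrapping around to reach $R_1$ locally. Lemma~\ref{lem:1-thin} applied to $u_2$ in the intermediate partition then provides absence of new cross-structures and non-increase of $\outerface$-degree for the second flip, and a direct local check gives the same for the first flip. The main obstacle is the case analysis in the main sub-case: the valid topologies of $P$ near $u_1, u_2$ branch according to whether $u_2$ is adjacent to $R_1$ (i.e., whether $a_2$ or $b_2$ lies in $R_1$), and each sub-sub-case must simultaneously preserve connectivity of $R_B \setminus \{u_1, u_2\}$, the island property needed to chain the two 1-thin invocations, and the absence of newly created cross-structures.
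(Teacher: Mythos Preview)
Your disposability argument follows the same high-level approach as the paper: fix coordinates, invoke the island walk of Lemma~\ref{lem:island-walk} around the island red region, and argue by a local case analysis on the colours immediately above and below the two blue vertices that their blue neighbours stay in one component after removal. The paper splits the easy case by rows (both top neighbours red, or both bottom neighbours red), whereas you split by columns (both neighbours of $u_1$ non-blue, or both neighbours of $u_2$ non-blue); these are different but equally valid entry points into the same island-walk case analysis, and both proofs leave the enumeration at a comparable level of sketchiness.

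There is, however, a real gap in your treatment of the ``no new cross-structure'' and $\outerface$-degree conclusions. You propose to view the double flip as two sequential single flips and chain Lemma~\ref{lem:1-thin}, but Lemma~\ref{lem:1-thin} requires the partition to be free of cross-structures, and the \emph{intermediate} partition after flipping only one of $u_1,u_2$ can contain one. Concretely, take $a_1$ blue, $a_2$ red, $b_1$ red, $b_2$ blue (with $(-1,-1)$ and $(2,1)$ coloured red so that the original configuration is cross-free). After flipping $u_1$ to red, the $2\times 2$ block $\{u_1,u_2,a_1,a_2\}$ has one diagonal $\{u_1,a_2\}$ red and the other $\{u_2,a_1\}$ blue---a cross-structure---so Lemma~\ref{lem:1-thin} does not apply to $u_2$. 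Flipping $u_2$ first instead produces a cross at $\{u_1,u_2,b_1,b_2\}$, so neither order rescues the chaining argument.

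The paper sidesteps this by a direct local check after flipping \emph{both} vertices simultaneously, and you can do the same: each of the six $2\times 2$ blocks incident to $\{u_1,u_2\}$ contains an adjacent pair from $\{u_1,u_2\}$, $\{r_1,u_1\}$, or $\{u_2,r_2\}$, all of which are now red. Since adjacent vertices of a $2\times 2$ block lie on opposite diagonals, both diagonals of every such block contain a red vertex, so no cross-structure can form. The $\outerface$-degree claim is likewise a direct inspection of which primal cut edges incident to $u_1,u_2$ are created or destroyed.
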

\begin{proof}
    Let $u,v$ be the blue vertices and let $R$ be the region they both belong to. Let the red vertex to the right of $v$ belong to an island $R$. Let $a,b$ be the vertices above $u,v$ respectively and let $c,d$ be the vertices below $u,v$. If both of $a,b$ or both of $c,d$ are red, then $\{u,v\}$ is disposable. 
    
    Recall how the Island Walk $P$ was constructed in \cref{lem:island-walk}. If $b$ is a red vertex and $a$ is a blue vertex, then $P$ must contain the arcs $(v,u)$ and $(u,a)$.
    If the reverse edge $(a,u)$ is also present, then we can draw a contradiction similar to \cref{lem:1-thin}. If the reverse edge is not present, there must similarly be a cycle in $P$ that contains the edge $(u,v)$. In this case, $\{u,v\}$ is disposable. 
    If $b$ is a blue vertex and $a$ is red, $P$ must have the arc $(v,b)$. If $(b,v)$ is also present, then we again have a contradiction. If it is not present, then we have a cycle in $P$ that contains $v,b$. We can consider several cases depending on whether $u$ is in the cycle or not, and the colors of $c,d$. 
    In all the cases, we can argue that $\{u,v\}$ is disposable.
    
    Further, it can be verified that flipping them to red cannot add any cross-structures nor can it increase the degree of $\outerface$.
\end{proof}

In our case analysis, we will often deal with infeasible partitions where there is only one region of a certain color (say blue), and there are multiple regions of the other color (say red). In this case, we can flip the colors of the blue vertices in the thin structure to decrease the number of red regions. We call this \textit{resolving} the thin structure.
Note that this does not affect the blue region since these blue vertices were disposable.

\begin{lemma}[Elbow Lemma]\label{lem:elbow}
If a graph partition has a structure as in \cref{fig:elbow} and the highlighted vertex is not disposable, then its neighborhood must look like \cref{fig:elbow_resolution}.
If the graph partition is feasible and the highlighted vertex is disposable, flipping its color cannot create a cross-structure or increase the degree of $\outerface$, the outer face dual node.
\begin{figure}[H]
\hfill
\begin{subfigure}{0.2\textwidth}
    \centering
    \includegraphics[width=\linewidth]{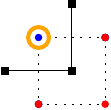}
    \caption{An elbow}
    \label{fig:elbow}
\end{subfigure}
\hfill
\begin{subfigure}{0.2\textwidth}
    \centering
    \includegraphics[width=\linewidth]{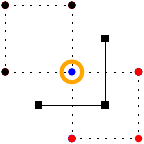}
    \caption{}
    \label{fig:elbow_borderless}
\end{subfigure}
\hfill
\begin{subfigure}{0.24\textwidth}
    \centering
    \includegraphics[width=\linewidth]{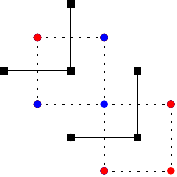}
    \caption{}
    \label{fig:elbow_resolution}
\end{subfigure}
\hfill
\bigskip
\caption{}
\end{figure}

\end{lemma}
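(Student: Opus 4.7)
The plan is to combine Observation~\ref{obs:degree1-disposable} (low induced degree implies disposability) with Lemma~\ref{lem:xoox} (no cross-structure in a feasible partition) to pin down the local picture, and then verify the two conclusions of the lemma separately.

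For the first assertion, let $v$ denote the highlighted vertex of the elbow. By definition of the elbow, two of $v$'s four grid-neighbors are forced to have the color opposite to $v$; call the remaining two grid-neighbors $a$ and $b$. If either $a$ or $b$ has color opposite to $v$, then $v$ has induced degree at most $1$ in its region $R$ and Observation~\ref{obs:degree1-disposable} makes it disposable. Since we are in the non-disposable case, both $a$ and $b$ share $v$'s color. This already fixes three cells of the $2\times 2$ block completing the elbow. The remaining diagonal cell must have the opposite color to $v$, since otherwise this $2\times 2$ block would form the cross-structure forbidden by Lemma~\ref{lem:xoox}. Matching this forced coloring against the figure yields exactly the configuration of elbow\_resolution.

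For the second assertion, assume feasibility and disposability, and flip $v$ from its color (say blue) to red. We verify the two required properties in turn. No new cross-structure can appear: any new one must contain $v$, so I inspect each of the (at most) four $2\times 2$ blocks containing $v$. In the two blocks meeting the ``corner'' of the elbow, the forced red neighbors from the elbow leave at most one blue vertex in the block after the flip, so the alternating pattern of a cross-structure is impossible. In the other two blocks, a newly created cross-structure would force the pre-flip coloring to already violate Lemma~\ref{lem:xoox}, contradicting feasibility. For the outer-face degree, edges incident to $\outerface$ in the dual cycle correspond to boundary grid edges of $\gmn$ whose incident primal vertex is on the outer border. If $v$ is not on the boundary of $\gmn$, flipping $v$ touches no dual edges incident to $\outerface$. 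If $v$ is on the boundary, the elbow orientation determines exactly which of $v$'s boundary-adjacent dual edges currently belong to the cycle; a short orientation-by-orientation check shows that flipping $v$ either removes such edges or leaves them unchanged but cannot introduce new ones.

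The main obstacle is the bookkeeping in the outer-face part: verifying that flipping never increases the degree of $\outerface$ requires a short enumeration over the (up to four) rotated elbow orientations together with the (up to four) boundary positions of $v$, including the corner of $\gmn$. The cross-structure verification, by contrast, is a clean constant-size check per $2\times 2$ block, provided one is careful to invoke only the already-established absence of cross-structures in the \emph{pre-flip} partition rather than any property of intermediate infeasible states.
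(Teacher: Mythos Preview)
Your Part 1 argument correctly shows that non-disposability forces both remaining neighbors $a,b$ of $v$ to share $v$'s color, but then you pin down the wrong diagonal cell. The cell you force red via Lemma~\ref{lem:xoox} is the one sitting \emph{between} the two given red neighbors (the ``inside corner'' of the elbow). That cell is indeed red in any feasible partition, but it is not what distinguishes \cref{fig:elbow_resolution} from the disposable cases. The crucial cell is the diagonal on the \emph{opposite} side---the one adjacent to both $a$ and $b$. If that cell were blue, then $a$ and $b$ would be connected through it after removing $v$, making $v$ disposable; hence non-disposability forces it red. This is precisely the paper's case analysis (the three ``black vertices'' of \cref{fig:elbow_borderless} are $a$, $b$, and this outside diagonal, and the only non-disposable subcase is $a,b$ blue with the outside diagonal red), and it is the content of \cref{fig:elbow_resolution}. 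Your argument never touches this cell.

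This gap propagates to Part 2. Your claim that a post-flip cross-structure in ``the other two blocks'' would force a pre-flip violation of Lemma~\ref{lem:xoox} is false for the $2\times 2$ block on $v$, $a$, $b$, and the outside diagonal. If $a,b$ are blue and that diagonal is red, the pre-flip block has three blues and one red (not a cross), while the post-flip block has $a,b$ blue on one diagonal and $v$ together with the outside cell red on the other---a genuine cross-structure. The paper closes this case by observing that this configuration is exactly \cref{fig:elbow_resolution}, and that $v$ being disposable there would force the original partition to be infeasible; so under the Part~2 hypotheses (feasible and disposable) this configuration cannot occur. You need that step; the block-by-block check alone does not finish the argument.
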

\begin{proof}
    
This follows from a simple case analysis. If the highlighted vertex is adjacent to a border, then it looks like \cref{fig:elbow-boundary} (where the unmarked black vertex may not exist).
The blue vertex has degree at most $1$ in the induced graph of blue vertices, so it is disposable from \cref{obs:degree1-disposable}.
Otherwise, the structure resembles \cref{fig:elbow_borderless}. If the 3 black vertices in \cref{fig:elbow_borderless} are all blue (as in \cref{fig:elbow-1}), then removing the vertex cannot affect the region. Similarly if exactly one neighbor of the highlighted vertex is blue (like \cref{fig:elbow-3}), then it is disposable from \cref{obs:degree1-disposable}. If none of its neighbors are blue, then it forms a singleton region as in \cref{fig:elbow-4} and is hence, disposable.

\begin{figure}[H]

\begin{subfigure}{0.2\textwidth}
    \centering
    \includegraphics[width=\linewidth]{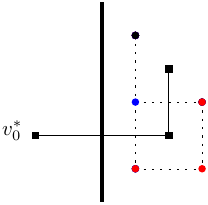}
    \caption{}
    \label{fig:elbow-boundary}
\end{subfigure}
\hfill
\begin{subfigure}{0.2\textwidth}
    \centering
    \includegraphics[width=\linewidth]{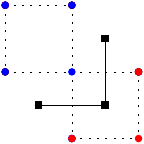}
    \caption{}
    \label{fig:elbow-1}
\end{subfigure}
\hfill
\begin{subfigure}{0.2\textwidth}
    \centering
    \includegraphics[width=\linewidth]{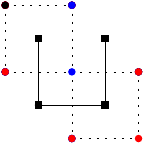}
    \caption{}
    \label{fig:elbow-3}
\end{subfigure}
\hfill
\begin{subfigure}{0.2\textwidth}
    \centering
    \includegraphics[width=\linewidth]{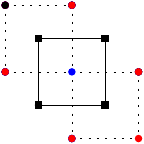}
    \caption{}
    \label{fig:elbow-4}
\end{subfigure}
\caption{}
\end{figure}

Now, assume that the graph partition is feasible and the highlighted blue vertex is disposable. Then the first part of the lemma shows that the neighborhood must look like any of \cref{fig:elbow-boundary,fig:elbow-1,fig:elbow-3,fig:elbow-4}. In this case, it can be easily verified that we can flip the color of the blue vertex without increasing the degree of $\outerface$ or creating cross-structures. The only case where we can create a cross-structure is in \cref{fig:elbow_resolution}. However, the partition must be infeasible in this case if the highlighted blue vertex was disposable.
\end{proof}

Often, in our case analysis, we will condition on whether a vertex like the highlighted one in \cref{fig:elbow} is disposable or not. If it is disposable, then \cref{lem:elbow} shows that we can safely flip its color without increasing the number of dual edges, or creating cross-structures. If it is not disposable, and given that our case analysis started from a feasible partition, \cref{lem:elbow} also implies that the highlighted vertex is not adjacent to the boundary. This crucial fact significantly decreases the number of cases, since we can primarily look at two cases: The first case where at least one of $u,v$ is adjacent to the boundary, and the second where neither of them are. This lets us avoid having to add additional cases where they are at larger distances away from the boundary.

\begin{lemma}\label{lem:create-island}
If a feasible partition contains any of the structures in \subref{case:5},\subref{case:6}, or \subref{case:7}, after flipping the vertex $u$ to red, the new partition has 1 red region and exactly 2 blue regions, of which at least 1 must be an island.

If it contains the structure in \subref{case:8}, after flipping the vertex $u$ to red, the new partition has exactly 1 red region and 3 blue regions, of which at least 2 are islands.
\end{lemma}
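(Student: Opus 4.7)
The plan is to prove each of the four cases by a direct local examination, and the unifying observation is that \emph{flipping $u$ from blue to red can never disconnect the red side} (the new $u$ is adjacent to the red vertex $v$, so the red region remains a single component), so the entire task reduces to counting the connected components of the blue region after deletion of $u$ and determining which of those components are islands. I would structure the proof as follows, running the same template in each case.

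First, for the chosen case (one of 5, 6, 7, 8) I would read off from the dual-cycle conditions on $x,y$ (and when needed $x',y'$) exactly which primal edges around $u,v$ are cut. This pins down the colors of several primal vertices in the $O(1)$-neighborhood of $u,v$; the remaining nearby vertices are treated as sub-case parameters. Next, I would locate the blue neighbors of $u$, call them $b_1,\dots,b_k$, and analyze for each pair $b_i,b_j$ whether they remain in the same blue component after $u$ is removed. The key tool here is \cref{lem:xoox}: any hypothetical blue path from $b_i$ to $b_j$ that avoids $u$ but loops through a faraway region of the grid would, together with the red vertex $v$ and the forced red vertices on the opposite side of $u$, produce a cross-structure on four vertices in the planar graph, which is forbidden. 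This restricts path-joining of the $b_i$'s to what can happen inside the local $O(1)$ window, which is small enough to inspect directly.

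Once the component count is determined locally, I would verify the specific claims: in Cases 5, 6, and 7, the local structure around $u$ forces exactly one ``small'' blue pocket (entirely inside the local window) to separate from the rest of the blue region, so the blue side splits into exactly $2$ components, one of which (the small pocket) lies within constant distance from $u,v$. Since the hypothesis $m,n,|S|,|T|\ge n_0$ together with the earlier assumption that $u,v$ are at least one row/column away from the boundary guarantees that no boundary vertex of $\gmn$ lies in that small pocket, that pocket is an island by definition. Case 8 is handled identically but yields two independent small pockets because the configuration is ``doubly pinched'' at $u$ (neither of $x,y$ has its right edge and $(x,y)\notin C$, so each of the top-right and bottom-right blue neighborhoods of $u$ becomes its own pocket); each of these two pockets is local and hence an island.

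The main obstacle is not conceptual but organizational: each case admits several sub-cases depending on the colors of the unforced nearby vertices, and one has to apply \cref{lem:xoox} carefully in each sub-case to rule out ``hidden'' long-range blue connections between pockets. Once that bookkeeping is done, the counting of components and the island identification are immediate from the locality of the surgery and the assumption that $u,v$ are deep in the interior of the grid.
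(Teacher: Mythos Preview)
Your skeleton is right—flipping $u$ keeps red connected (it is adjacent to $v$), and the number of blue components after the flip is bounded above by the number of blue neighbors of $u$—but the island argument contains a genuine error. You assert that one blue component is a ``small pocket entirely inside the local window'' and hence cannot reach the grid boundary. This is false: after deleting $u$, each blue component is a global object that can extend arbitrarily far from $u,v$. In Case~7, for instance, the two blue regions could each be long paths winding through half the grid; nothing in the local configuration bounds their size. The same objection applies to your Case~8 reasoning about ``two local pockets.''

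The paper's argument is shorter and avoids this trap. For Cases~5 and~6, flipping $u$ produces a literal cross-structure (a $2\times2$ block with diagonally matching colors) in the new partition; since the red side is a single region, the second part of \cref{lem:xoox} immediately gives that the two blue vertices of that block lie in different regions \emph{and} that at least one of those regions is an island. The island conclusion is a planarity consequence baked into \cref{lem:xoox} (the $K_{3,3}$-minor argument), not a size bound. The ``exactly two'' then comes from the trivial observation that $u$ has only two blue neighbors in these cases, so removing $u$ cannot create more than two components. Case~7 has no literal cross-structure after the flip, so one reruns the planarity argument from the proof of \cref{lem:xoox} directly; Case~8 is two such applications simultaneously, yielding three components with two islands. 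You already cite \cref{lem:xoox} in your step~3, but only to rule out long blue paths between the $b_i$'s; if you instead use it to certify both the disconnection \emph{and} the island property in one shot, your outline collapses to the paper's proof and the anticipated sub-case enumeration on ``unforced nearby vertices'' becomes unnecessary.
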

\begin{proof}
    For \subref{case:5} and \subref{case:6}, flipping $u$ to red creates a cross-structure. Note that adding $u$ to the red region makes it continue to be a region. Thus, since we started with a feasible partition, there continues to be exactly 1 red region.
    From \cref{lem:xoox}, the cross structure ensures that there are at least 2 blue regions (since it must not be a feasible partition) and one of them must be an island. We only need to argue that there are exactly two blue regions. 
    Let $x,y$ be the vertices above and to the left of $u$.
    Every vertex except $u$ in the region $R$ that $u$ belonged must be reachable from at least one of $x,y$, since they were all reachable from $u$ (by definition of $R$ being a region).
    For \subref{case:7}, we can come up with an argument similar to \cref{lem:xoox} to show that there must be at least 2 blue regions, and then we repeat the argument above.
    \subref{case:8} follows similarly follows from applying the arguments above from different cases simultaneously.
\end{proof}

\subsection{Illustrative Example: Case $(7,8)$}
\label{sec:case-78-mainbody}

We illustrate a simple case to showcase how we apply our tools.
Recall that Case (7,8) means that the neighborhood of $u$ is in \subref{case:7} and the neighborhood of $v$ is in \subref{case:8}.
We first flip $u$ to red. This creates two blue regions from \cref{lem:create-island}. Consider the highlighted vertices in \cref{fig:ex-fig1b}. If they were not in the same region, we have a 1-thin structure that we can resolve to get a partition with no change in the number of red or blue vertices (since we flipped 1 red vertex to blue and 1 blue vertex to red).
Thus, we assume they are in the same region.
Similarly, we assume that the highlighted vertices in \cref{fig:ex-fig1c} are in the same region.

\begin{figure}[H]
     \centering
     \hfill
     \begin{subfigure}[b]{0.3\textwidth}
         \centering
           \includegraphics[width=\linewidth]{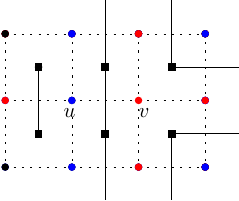}
         \caption{Step 1.}
         \label{fig:ex-fig1a}
     \end{subfigure}
     \hfill
     \begin{subfigure}[b]{0.3\textwidth}
         \centering
             \includegraphics[width=\linewidth]{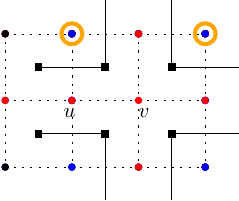}
         \caption{Step 2.}
         \label{fig:ex-fig1b}
     \end{subfigure}
     \hfill
    \begin{subfigure}[b]{0.3\textwidth}
         \centering
    \includegraphics[width=\linewidth]{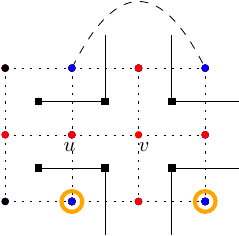}
         \caption{Step 3.}
         \label{fig:ex-fig1c}
     \end{subfigure}
     \hfill
     \bigskip
    \caption{}
    \label{}
\end{figure}

We have a 1-thin structure between the blue regions in \cref{fig:ex-fig2a}, and can flip the highlighted vertex. The final coloring is in \cref{fig:ex-fig2b}, and is a feasible partition, regardless of the color of the black vertex.

\begin{figure}[H]
     \centering
     \hfill
     \begin{subfigure}[b]{0.3\textwidth}
         \centering
           \includegraphics[width=\linewidth]{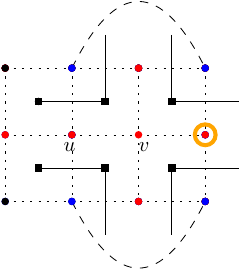}
         \caption{Step 4.}
         \label{fig:ex-fig2a}
     \end{subfigure}
     \hfill
     \begin{subfigure}[b]{0.3\textwidth}
         \centering
             \includegraphics[width=\linewidth]{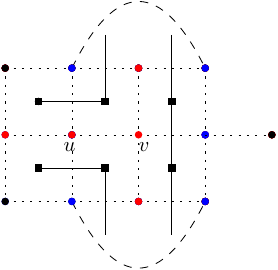}
         \caption{Final state.}
         \label{fig:ex-fig2b}
     \end{subfigure}
     \hfill
     \bigskip
    \caption{}
    \label{}
\end{figure}

This concludes Case (7,8). We look at all the other cases in \cref{sec:reconnect-cases}.
\section{Conclusion}\label{sec:discussion}

We showed that for the case of $k=2$, the $\lambda$-smooth spanning tree distribution achieves constant separation fairness. There are two main questions that arise. 

First, does our result extend to the \textit{exact} spanning tree distribution?
To extend our results to this setting, one approach would be to argue that there exists a mapping similar to an \Unsep Mapping (\Cref{def:unsep}) that does not increase the imbalance. However, unlike \unsep mappings, which we were able to create with local modifications to partitions (\Cref{thm:main-exact}), this cannot be done locally (See \Cref{fig:bad-rebalance}).
Another approach would require a mapping between partitions with imbalance $i$ and $i+1$, for $i=mn/2-O(1)$, with the restriction that the mappings are between partitions that differ only in a few edges and that every pre-image set of the mapping has constant size. As a concrete example, a result of the following kind would imply constant separation fairness on the spanning tree distribution restricted to partitions with an additive imbalance of at most 3. 
The fairness bound then follows by using this in the last part of the proof of \Cref{thm:main-intro}.

\newtheorem{Conjecture}{conjecture}
\begin{conjecture}
    Let $p_{j}$ denote the probability that the spanning tree distribution on the $m\times n$ gird samples a 2-partition where the smaller part has size $j$.
    Then for $i\in \{1,2,3\}$, $p_{mn/2}=\Theta(p_{mn/2-i})$.
\end{conjecture}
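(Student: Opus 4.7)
The plan is to prove the conjecture---which I interpret as $p_{mn/2}=\Theta(p_{mn/2-i})$, comparing exactly balanced partitions to partitions of imbalance $i$ (since $p_{mn/2+i}$ as stated would vanish for $i \geq 1$)---by constructing local-modification maps between partition classes. Let $\mathcal{P}_k$ denote the set of feasible 2-partitions whose smaller part has size $k$, and put $N_k := \sum_{(S,T)\in\mathcal{P}_k}\sp{G[S]}\sp{G[T]}$, so the claim reduces to $N_{mn/2}=\Theta(N_{mn/2-i})$. I would build maps $\phi:\mathcal{P}_{mn/2}\to\mathcal{P}_{mn/2-i}$ and $\psi:\mathcal{P}_{mn/2-i}\to\mathcal{P}_{mn/2}$ with three properties: (a) corresponding partitions differ only inside a constant-size subgrid, (b) the ratio of spanning-tree products is sandwiched between two positive absolute constants, and (c) each preimage has $O(1)$ size. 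These three properties together yield the desired $\Theta$ relation.

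The forward modification would leverage the surgery toolkit from \Cref{sec:tools}. To move from $\mathcal{P}_{mn/2}$ to $\mathcal{P}_{mn/2-i}$ I would flip $i$ vertices from the larger part to the smaller part. A vertex of the larger part with at least three neighbors in the opposite part---a concave corner---can always be flipped while preserving connectivity on both sides, by \Cref{obs:degree1-disposable}. To show such corners always exist, a discrete Gauss--Bonnet argument on the dual cycle shows that the total turning sums to $\pm 4$, so every non-trivial closed dual cycle contains at least four more corners of one sign than of the other; since convex corners of one part are concave corners of the other, at least one side always admits a flippable corner. Iterating at most three flips yields $\phi_i$, and canonical selection (e.g., the lexicographically first flippable vertex along the dual cycle starting from a fixed basepoint) makes $\psi_i$ well-defined with $O(1)$ preimage multiplicity.

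For property (b), the single-vertex-flip ratio $\sp{G[S\cup\{v\}]}\sp{G[T\setminus\{v\}]}\big/\bigl(\sp{G[S]}\sp{G[T]}\bigr)$ must lie in a bounded interval. By the matrix-tree theorem this ratio factors into local Schur complements governed by the effective resistance at $v$ within $G[S\cup\{v\}]$ and within $G[T]$. For a boundary vertex of a ``thick'' connected subgraph of the grid, these effective resistances are $\Theta(1)$, as follows by comparison with standard Green's-function estimates on the infinite grid (the effective resistance between neighbors on $\mathbb{Z}^2$ is a fixed constant, and monotonicity under edge deletion / Rayleigh's principle transfers a two-sided constant bound to any sufficiently thick induced subgraph). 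Composing up to three such flips preserves a bounded ratio, establishing constant-factor preservation of the spanning-tree score.

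The main obstacle is handling pathological partitions whose cuts are short and nearly straight---for instance, a clean vertical bisection of the $m \times n$ grid. For these, concave corners exist only where the cut meets the outer boundary of the grid, forcing the modification to be boundary-adjacent (mirroring the case analysis needed for \Cref{sec:boundary-cases}). Two difficulties compound here: first, distinct exceptional balanced partitions may be forced to map to the same imbalanced partition, so maintaining $O(1)$ multiplicity requires a carefully tuned canonical choice, possibly together with an explicit charging scheme; second, the effective-resistance bound from the previous paragraph relies on thickness and can degrade for long, thin induced subgraphs. I expect the crux of a complete proof to be a detailed case analysis of these exceptional configurations---likely mirroring the extensive enumeration in \Cref{sec:reconnect}---showing either that their spanning-tree-weighted mass is negligible on both sides, or explicitly exhibiting, for each exceptional balanced partition, a constant-size family of imbalance-$i$ partitions of comparable weight.
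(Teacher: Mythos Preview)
This statement appears in the paper as an open \emph{conjecture} in the concluding discussion; the paper gives no proof, framing it only as a sufficient condition that would extend the separation-fairness result to exactly balanced partitions. So there is no paper proof to compare against. Your reinterpretation of the indexing (reading $p_{mn/2+i}$ as $p_{mn/2-i}$) is correct---as literally stated the right-hand side vanishes.

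Your strategy of local flips with $O(1)$ preimage multiplicity and bounded spanning-tree-score ratio is precisely the kind of mapping the paper itself says would suffice. But what you have written is a plan with an acknowledged hole, and that hole is more serious than you allow. The nearly-straight-cut partitions you call exceptional are not a negligible tail: a straight bisection minimizes cut edges and therefore sits among the \emph{highest}-weight partitions under $\spandist$ (this is exactly the compactness bias discussed throughout the paper). So your first proposed escape---dismissing exceptional configurations as carrying negligible mass---is unavailable; only the explicit-construction route remains, and you have not carried it out.

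The effective-resistance step is also not closed. For a degree-$2$ disposable vertex $v$ with neighbors $u_1,u_2$ in $S$ one has $\sp{G[S]}/\sp{G[S\setminus\{v\}]}=2+R_{\mathrm{eff}}^{G[S\setminus\{v\}]}(u_1,u_2)$, and you need this bounded. Rayleigh monotonicity gives only the lower bound; the upper bound fails for thin regions, where removing $v$ can leave $u_1,u_2$ at effective distance $\Theta(|S|)$. Your Gauss--Bonnet count guarantees that corners exist but says nothing about this analytic condition, and the corner-existence argument itself only applies cleanly when the dual cycle avoids $\outerface$ (i.e., when one part is an island)---which is exactly the regime where straight bisections live. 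A complete proof would need either to show that every balanced partition has a disposable vertex with bounded residual resistances, or that partitions lacking one contribute negligibly on \emph{both} sides of the comparison; neither is established here, and the paper treats the matter as open for good reason.
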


The second question is to show similar guarantees for larger $k$. Though we analyzed the spanning tree distribution on the grid only for 2-partitions, empirical suggest that the spanning tree distributions are also fair on real-world graphs, even with $k=14$ (See \Cref{fig:experiment}).
To show fairness for larger $k$ via our method, we would need to extend \Cref{thm:main-reconnect} to feasible partitions of connected subgraphs of grid graphs. Without new ideas, this would increase the number of cases in the proof of \Cref{thm:main-reconnect} by several-fold, which motivates the research question of developing techniques that go beyond case-analysis.


\begin{figure}[htbp]
    \centering
    \includegraphics[width=0.5\linewidth]{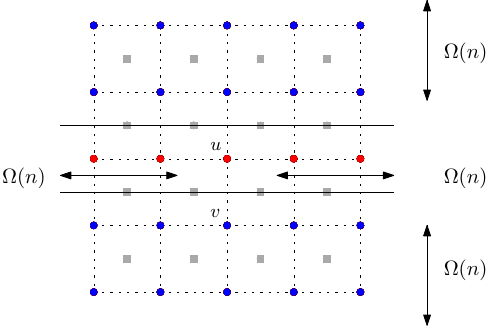}
    \caption{A close-up of a partition on the $n\times n$ grid. The arrows indicate that the red partition has a thin horizontal path of length $\Omega(n)$, that is at least $\Omega(n)$ vertically away from other red vertices. This partition cannot be modified locally near $u,v$ to a partition where $u,v$ are not separated, without increasing the imbalance.}
    \label{fig:bad-rebalance}
\end{figure}

\section*{Acknowledgements}
We thank Jamie Tucker-Foltz for helpful discussions and comments.
\newpage
\bibliography{references}
\bibliographystyle{plain}
\newpage
\appendix
\section{\lstd has Small Population Imbalance}\label{sec:imbalance}

\begin{lemma}
For large enough $m,n$ with $mn$ being even,
with probability $1-\frac{1}{mn}$, \lstd samples a 2-partition $P$ of $\gmn$ such that $\imb{P}=O(\frac{\log (mn)}{\lambda})$.
\end{lemma}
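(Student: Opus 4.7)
The plan is to directly compare, from the closed form of \lstd, the total weight of highly imbalanced partitions against a lower bound on the total weight. Let $N_i = \sum_{P : \imb{P} = i} \sp{P}$, so that the normalizing constant is $Z = \sum_{i \geq 0} N_i e^{-\lambda i}$ and
\[
\Pr_{\lstdmath}[\imb{P} \geq t] \;=\; \frac{\sum_{i \geq t} N_i e^{-\lambda i}}{Z} \;\leq\; \frac{e^{-\lambda t}\sum_i N_i}{Z}.
\]
I will bound the numerator above and the denominator below, then take $t = c\log(mn)/\lambda$.

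First, for the upper bound on $\sum_i N_i = \sum_P \sp{P}$, I would use the standard bijection between (spanning tree of $\gmn$, distinguished tree edge) pairs and triples (spanning tree of $S$, spanning tree of $T$, cut edge $e\in E(S,T)$) associated to the connected 2-partition $P=(S,T)$ obtained by deleting $e$. This is exactly the identity used to prove \Cref{claim:cycles-give-spanning-tree-distribution} and gives
\[
\sum_P \sp{P}\cdot |E(S,T)| \;=\; \sp{\gmn}\cdot (mn-1).
\]
Since $|E(S,T)|\geq 1$ for every connected partition in the support, this yields $\sum_i N_i \leq (mn-1)\,\sp{\gmn}$, and in particular $\sum_{i \geq t} N_i e^{-\lambda i} \leq (mn-1)\sp{\gmn}\, e^{-\lambda t}$.

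Second, I would lower bound $Z$ by $N_0$. For $mn$ even and $m,n$ large enough, a straight bisection witnesses $N_0 > 0$. The sharper bound comes from the polynomial-time guarantee of~\cite{jamie_grid24}: their vanilla rejection sampler picks a uniform spanning tree of $\gmn$, cuts a uniformly random tree edge, accepts the resulting $P=(S,T)$ with probability $1/|E(S,T)|$, and retains only exactly balanced partitions. A direct calculation (same bijection as above) shows the probability of acceptance equals $N_0/(\sp{\gmn}(mn-1))$, and~\cite{jamie_grid24} establishes this is at least $1/\text{poly}(mn)$. Hence $N_0 \geq \sp{\gmn}/\text{poly}(mn)$ and $Z \geq N_0 \geq \sp{\gmn}/\text{poly}(mn)$.

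Combining the two bounds,
\[
\Pr_{\lstdmath}[\imb{P}\geq t] \;\leq\; \frac{(mn-1)\sp{\gmn}\, e^{-\lambda t}}{\sp{\gmn}/\text{poly}(mn)} \;=\; \text{poly}(mn)\cdot e^{-\lambda t}.
\]
Choosing $t = c\log(mn)/\lambda$ with $c$ large enough to absorb the polynomial factor drives the right-hand side below $1/(mn)$, which is the claimed bound. The only non-elementary ingredient is the lower bound $N_0 \geq \sp{\gmn}/\text{poly}(mn)$, which is essentially restating~\cite{jamie_grid24}'s polynomial-time rejection sampling result; everything else is a one-line counting identity together with the trivial tail bound $\sum_{i\geq t} N_i e^{-\lambda i} \leq e^{-\lambda t}\sum_i N_i$.
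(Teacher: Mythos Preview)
Your proposal is correct and follows essentially the same approach as the paper's proof. The paper works with the normalized probabilities $p_k=\Pr_{\spandist}[\text{one side has size }k]$ instead of your unnormalized $N_i$, bounds the numerator by $e^{-\lambda t}\sum_k p_k\le e^{-\lambda t}$, and lower bounds the denominator by $p_{mn/2}\ge 1/(3(mn)^4)$ using Theorem~11 of~\cite{jamie_grid24} together with the same tree-edge/partition-cut-edge bijection you invoke; your explicit identity $\sum_P \sp{P}\,|E(S,T)|=\sp{\gmn}(mn-1)$ is just the unnormalized form of that step.
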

\begin{proof}

Let $\mathcal{P}_k$ be the set of 2-partitions where one side has size exactly $k$. Let $\mathcal{P}$ be the set of all 2-partitions.
Let 
\[p_k=\Pr[\spandist \text{ samples some }P\in \mathcal{P}_k] = \frac{\sum_{P\in \mathcal{P}_k}\sp{P} }{\sum_{P\in \mathcal{P}}\sp{P}}\]

From Theorem 11 in \cite{jamie_grid24}, we know 
that a uniformly sampled spanning tree has probability $\geq \frac{1}{(mn)^2}$ of having an edge whose removal splits it into a balanced 2-partition. To get a partition proportional to the spanning score, we first sample a uniformly random edge of a random spanning tree, and then remove it. Then, we reject the resulting partition, say $S,T$, with probability $\frac{1}{E(S,T)}$.
Since each tree has $mn-1$ edges, and the total number of edges in the grid is $\leq 3mn$, we get that $p_{mn/2}\geq \frac{1}{3(mn)^4}$.

Now, we sample from \lstd where we additionally have a rejection step where we accept the resulting partition $P$ with probability $e^{-\lambda \cdot \imb{P}}$. 
Then the probability that the smaller part has size at most $\delta=mn/2-\frac{6\log (mn) }{\lambda}$ is
\begin{align*}
\frac{\displaystyle\sum_{k=0}^{\delta}p_k e^{-\lambda(mn/2-k)}}{\displaystyle\sum_{k=0}^{mn/2}p_k e^{-\lambda(mn/2-k)}}.
\intertext{We want an upper bound on this quantity. We first note that the $e^{-\lambda \cdot \imb{P}}$ term for each summand in the numerator is at most $e^{-\lambda \cdot \frac{6\log (mn)}{\lambda}}$, since we are only considering partitions with imbalance more than  $\frac{6\log (mn)}{\lambda}$. Next, we can lower bound the sum in the denominator with the summand corresponding to $k=mn/2$, which is just $p_{mn/2}$. Thus, our upper bound is}
\frac{e^{-\lambda \cdot \frac{6\log (mn)}{\lambda}}\displaystyle\sum_{k=0}^{\delta}p_k }{p_{mn/2}}.
\intertext{Noting that the sum in the numerator is at most 1 since the elements $\{p_k\}$ form a distribution, and applying our earlier bound on $p_{mn/2}$, this is at most}
\frac{1}{(mn)^6 p_{mn/2}}&\leq \frac{1}{mn}.
\end{align*}
\end{proof}
\section{Missing Proofs from \Cref{sec:sep-prob}}
\label{app:missing-proofs}
\shortcycle*
\begin{proof}
    
    Intuitively, this should be $O(\abs{C}^2)$, since the maximum is obtained by enclosing a grid with the cycle. We will argue that this intuition is correct, assuming that the grid is large enough relative to $C$.

    \newcommand{\gmnh}{\widehat{\gmn}}
    Consider $\gmnd$. Instead of the outer face node, we will extend the grid one more row/column beyond the boundary to form a new graph $\gmnh$.
    We make a copy $C'$ of $C$ in $H$, where we will traverse these new edges instead of adding edges to the outer face node.
    If $C$ did not include any edges to the outer face node, we leave it unchanged in $\gmnh$. 
    If $C$ did include any such edges, we will formally describe the new cycle $C'$.
    
    Let the non-outer-face nodes of $\gmnd$ be embedded in the plane at points $(i,j)$ where $1\leq i\leq m-1$, $1\leq j\leq n-1$. Node $\outerface$ is embedded outside this region.
    We now describe the embedding of $\gmnh$. For the points common between $\gmnd$ and $\gmnh$, we embed them at the same coordinates. The vertices of the new column/row are $(i,j)$ where $i\in \{0,m\}, 0\leq j\leq n$ or $0\leq i\leq m, j\in \{0,n\}$.
    Since $C$ is a simple cycle, if it contains edges incident on $\outerface$, it contains exactly two of them.
    Let these edges be to the points that have coordinates $(i_1,j_1)$ and $(i_2,j_2)$ in the embedding of $\gmnd$. Both these vertices lie on the sides of the square formed by the lines $x=1,x=m-1,y=1$ or $y=n-1$.
    Observe that since $\abs{C}\leq \min (n,m)$, they cannot lie on opposite sides of the square. They must either lie on the same side, or on adjacent sides. Further, there is a path between them in $C$ that avoids visiting $\outerface$. This path must be at least as long as the shortest path between them in $\gmnd\setminus\{\outerface\}$ which has length exactly $\abs{i_1-i_2}+\abs{j_1-j_2}$. Thus, $\abs{C}\geq \abs{i_1-i_2}+\abs{j_1-j_2}+2$, where the $+2$ comes from including the edges to $\outerface$.
    Consider the case that the points are on the same side. Without loss of generality, let $i_1=m-1=i_2$. Then, in $C'$, in place of the edges to $\outerface$, we add the edges from $(m-1,j_1)$ to $(m,j_1)$, from $(m-1,j_2)$ to $(m,j_2)$, and the shortest path from $(m,j_1)$ to $(m,j_2)$ which has length $\abs{j_1-j_2}$. As required, $C'$ is a simple cycle.
    If the points are on adjacent sides, again without loss of generality assume that $i_1=m-1$ and $j_2=n-1$. We add the edges from
    $(m-1,j_1)$ to $(m,j_1)$, from $(i_2,n-1)$ to $(i_2,n)$, and the shortest path along the boundary from $(m,j_1)$ to $(i_2,n)$. This shortest path is the path from $(m,j_1)$ to $(m,n)$ to $(i_2,n)$
and has length $(n-j_1)+(m-i_2)=(j_2-j_1)+(i_1-i_2)$.
    In either case, we added at most $\abs{C}$ edges, since we argued that $\abs{C}\geq \abs{i_1-i_2}+\abs{j_1-j_2}+2$.
    Thus, $\abs{C'}\leq 2\abs{C}$.
    
    Now, instead of counting the primal nodes inside $C$ in $\gmnd$, we can count the dual nodes enclosed by $C'$ in $\gmnh$ instead, as a simple argument shows that this, along with the number of vertices in $C'$, form an upper bound on the number of primal nodes enclosed by $C$.
    Thus, we only need to argue that a cycle of length $k$ on a grid graph encloses $O(k^2)$ vertices.
    
Let $C$ be a cycle of length $k$ in the grid graph. Let $I$ be the set of vertices contained inside $C$.
Let $x_{min},x_{max},y_{min},y_{max}$ be the minimum and maximum coordinates of the vertices in $I$. Let $w=x_{max}-x_{min}$ and $h=y_{max}-y_{min}$.
The length of the cycle $k$ must be at least $2w+2h$, so $w+h\leq k/2$.
The interior vertices $(x,y)\in I$ must satisfy $x_{min}\leq x\leq x_{max}$ and $y_{min}\leq y \leq y_{max}$. The number of such points is $wh\leq (w+h)^2\leq k^2/4$.
Thus, the total number of vertices enclosed by $C$ is at most $\frac{(2\abs{C})^2}{4}+2\abs{C}\leq 3\abs{C}^2$.
\end{proof}

\section{Case Analysis for Cases 5-8}
\label{sec:reconnect-cases}

\newcommand{\plusblue}[1]{\textcolor{blue}{+#1}}
\newcommand{\plusred}[1]{\textcolor{red}{+#1}}

We now use the tools defined in \cref{sec:tools} to prove \cref{thm:main-reconnect}.
In several of our cases, when we apply \cref{lem:elbow} to a specific non-disposable vertex, we also implicitly use the fact that that vertex cannot be on the boundary. Whenever we change the colors of vertices, we also take care to never introduce (the possibility of) a cross-structure, although this is not explicitly pointed out.

To keep track of the imbalance, we say that a partition is \plusblue{$x$}, for $x\geq 0$ if both $u,v$ are colored blue, the number of vertices added to the blue partition is $x$, and one of $u,v$ is adjacent to a red vertex. We analogously define \plusred{$x$}.
We show that all of our partitions are \plusred{$x$} or \plusblue{$x$} for $0\leq x\leq 3$. Along with our other restrictions that we only make local changes, this is no longer possible when $u$ or $v$ are adjacent to the border (see \cref{sec:boundary-cases}).

\subsection{Cases 5 and 6}
\label{sec:case56}

We will look at these together for as much as possible due to their similarities. Combining the cases, the picture looks like the one in \cref{fig:branch1-fig1a}. 
The first step is to apply \cref{lem:elbow} to the highlighted vertex.
If the highlighted vertex in \cref{fig:branch1-fig1a} is disposable, then we can flip it to blue. In this case, $u$ becomes disposable regardless of whether we started from Case 5 or 6, and we are done by flipping $u$ to red, giving us a \plusred{0} partition.

Otherwise, after applying \cref{lem:elbow} again, we reveal more information about the neighborhood of the highlighted vertex. We again apply \cref{lem:elbow} to the highlighted vertex in \cref{fig:branch1-fig1b}. If it was disposable, then we can flip it to red. In this case, that makes the previously considered vertex (the one highlighted in \cref{fig:branch1-fig1a}) disposable, and we repeat the same argument there. This would give us a \plusred{1} partition.

If none of the cases above apply, the partition must look like the one in \cref{fig:branch1-fig1c}. Depending on the color of the highlighted vertex in \cref{fig:branch1-fig1c}, we consider two cases.

\begin{figure}[H]
     \centering
     \hfill
     \begin{subfigure}[b]{0.25\textwidth}
         \centering
           \includegraphics[width=\linewidth]{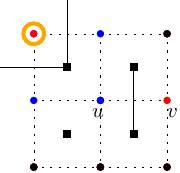}
         \caption{Step 1.}\label{fig:branch1-fig1a}
     \end{subfigure}
     \hfill
     \begin{subfigure}[b]{0.25\textwidth}
         \centering
             \includegraphics[width=\linewidth]{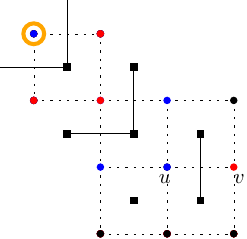}
         \caption{Step 2.}\label{fig:branch1-fig1b}
     \end{subfigure}
     \hfill
    \begin{subfigure}[b]{0.3\textwidth}
         \centering
    \includegraphics[width=\linewidth]{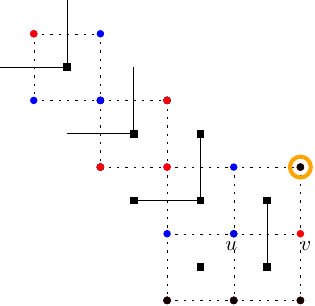}
         \caption{Step 3.}
         \label{fig:branch1-fig1c}
     \end{subfigure}
     \hfill
     \bigskip
    \caption{}
    \label{}
\end{figure}

\paragraph{Subcase 1.} The highlighted vertex in \cref{fig:branch1-fig1c} is red.
In this case, the highlighted vertex in \cref{fig:branch1-fig2a} must have a blue vertex above it. Otherwise, it is disposable (from \cref{obs:degree1-disposable}) and we can flip it to red. But after this step, $u$ becomes disposable regardless of which case we started from, and $u$ can also be flipped to red to give a \plusred{2} partition.
If this does not happen, then the picture must be as in \cref{fig:branch1-fig2b}. Now, we flip the color of $u$ and the vertex above it, giving us the coloring in \cref{fig:branch1-fig2c}. This creates two blue regions, one of which is an island, from \cref{lem:create-island}.
We also know that the blue vertices to the left of $u$ and 2 rows above $u$ are in different regions.

\begin{figure}[H]
     \centering
     \hfill
     \begin{subfigure}[b]{0.3\textwidth}
         \centering
           \includegraphics[width=\linewidth]{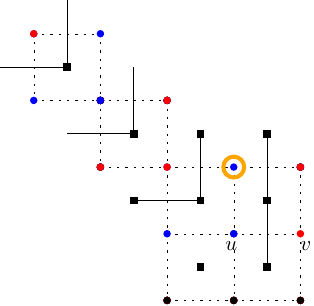}
         \caption{Step 4.}\label{fig:branch1-fig2a}
     \end{subfigure}
     \hfill
     \begin{subfigure}[b]{0.3\textwidth}
         \centering
             \includegraphics[width=\linewidth]{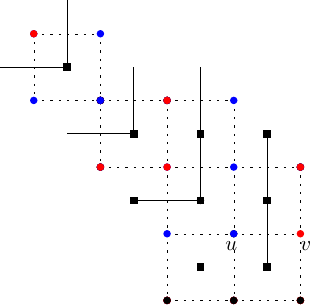}
         \caption{Step 5.}\label{fig:branch1-fig2b}
     \end{subfigure}
     \hfill
    \begin{subfigure}[b]{0.3\textwidth}
         \centering
    \includegraphics[width=\linewidth]{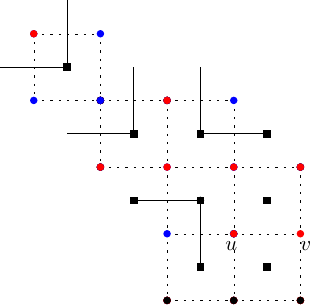}
         \caption{Step 6.}
         \label{fig:branch1-fig2c}
     \end{subfigure}
     \hfill
     \bigskip
    \caption{}
    \label{}
\end{figure}

Consider the highlighted blue vertices in \cref{fig:branch1-fig3a}.
They must be in the same region, otherwise we are done since we have a 1-thin structure. We can resolve this thin structure to get a \plusred{1} partition. 
We represent the path between the blue vertices with a dotted line between them.
Now, the highlighted vertices in \cref{fig:branch1-fig3b} must be red.
If either of them were blue (from either region), they would form 1-thin or 2-thin structures between the blue regions, or connect the two blue regions which would contradict that they were distinct blue regions. These can be resolved to get a \plusred{1} coloring.
If the highlighted vertex in \cref{fig:branch1-fig3c} has a red vertex below it, then we know that one of the blue regions is a singleton vertex. We can just flip this vertex to red to get a feasible $\plusred{3}$ partition.
Thus, we consider the case where the vertex below is blue.

\begin{figure}[H]
     \centering
     \hfill
     \begin{subfigure}[b]{0.3\textwidth}
         \centering
           \includegraphics[width=\linewidth]{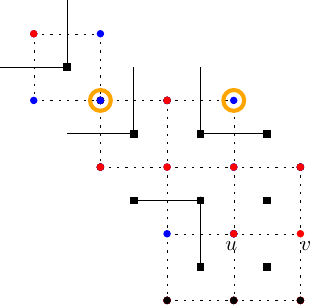}
         \caption{Step 7.}
         \label{fig:branch1-fig3a}
     \end{subfigure}
     \hfill
     \begin{subfigure}[b]{0.3\textwidth}
         \centering
             \includegraphics[width=\linewidth]{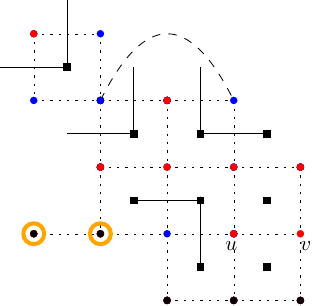}
         \caption{Step 8.}
         \label{fig:branch1-fig3b}
     \end{subfigure}
     \hfill
    \begin{subfigure}[b]{0.3\textwidth}
         \centering
    \includegraphics[width=\linewidth]{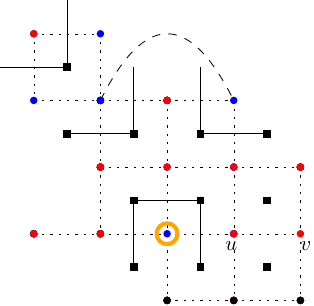}
         \caption{Step 9}
         \label{fig:branch1-fig3c}
     \end{subfigure}
     \hfill
     \bigskip
    \caption{}
    \label{}
\end{figure}

In this case, the highlighted vertices in \cref{fig:branch1-fig4a} must be red.
If they were blue (from either region), we would again have thin structures between the blue regions or the regions would not be distinct. These would resolve to a \plusred{1} or a \plusred{2} partition depending on which region the blue vertices belonged to. 
Thus, we get the coloring in \cref{fig:branch1-fig4b}, and flipping some vertices gives us the final coloring in \cref{fig:branch1-fig4c}.
Regardless of the color of the remaining black vertices, we can show that the final \plusred{0} partition is feasible.
The only issue that may arise from this change is that the red vertices in the bottom left may become disconnected from the red region that $u,v$ belong to.
However, this is not possible.
The key idea is in noting that one of the two blue regions in \cref{fig:branch1-fig4b} is an island. If the bottom blue region is an island, then the island walk (see \cref{lem:island-walk}) of this island ensures that there is a path between the two sets of red vertices that is not affected by the flips in the last step.
Similarly if the top blue region is an island. See \cref{fig:island-walk-connection-1}.


\begin{figure}[H]
     \centering
     \hfill
     \begin{subfigure}[b]{0.3\textwidth}
         \centering
           \includegraphics[width=\linewidth]{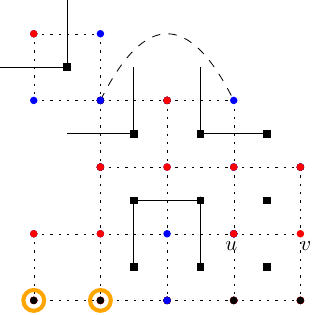}
         \caption{Step 10.}
         \label{fig:branch1-fig4a}
     \end{subfigure}
     \hfill
     \begin{subfigure}[b]{0.3\textwidth}
         \centering
             \includegraphics[width=\linewidth]{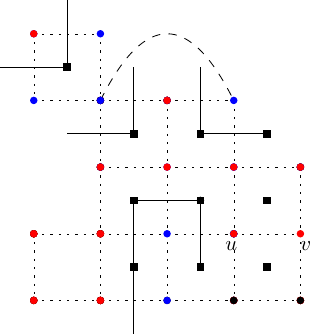}
         \caption{Step 11.}
         \label{fig:branch1-fig4b}
     \end{subfigure}
     \hfill
    \begin{subfigure}[b]{0.3\textwidth}
         \centering
    \includegraphics[width=\linewidth]{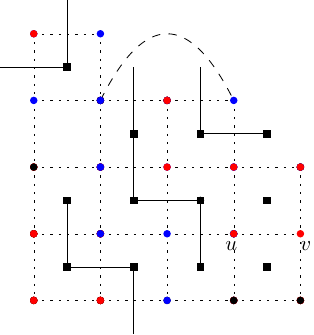}
         \caption{Final coloring.}
         \label{fig:branch1-fig4c}
     \end{subfigure}
     \hfill
     \bigskip
    \caption{}
    \label{}
\end{figure}

\begin{figure}[H]
     \centering
     \hfill
     \begin{subfigure}[b]{0.4\textwidth}
         \centering
           \includegraphics[width=\linewidth]{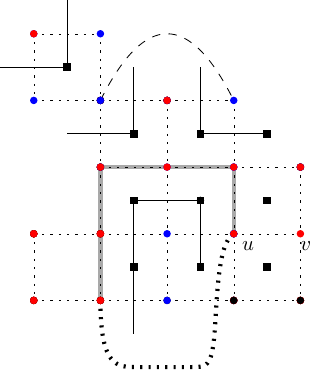}
         \caption{If the bottom blue region is an island.}
     \end{subfigure}
     \hfill
     \begin{subfigure}[b]{0.4\textwidth}
         \centering
             \includegraphics[width=\linewidth]{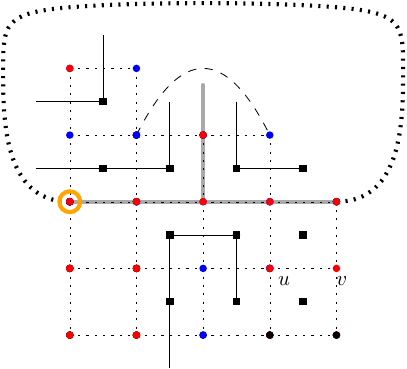}
         \caption{If the top blue region is an island. We consider the case where the highlighted vertex is red. If it is blue, then the island walk goes through the vertex below it.}
     \end{subfigure}
     \hfill
     \bigskip
    \caption{The thick gray path represents the island walk. In either case, the red vertices continue to be connected in \cref{fig:branch1-fig4c} (and hence there is only one red region) via the thick dotted curve, which is part of the island walk.}
    \label{fig:island-walk-connection-1}
\end{figure}

\paragraph{Subcase 2.} If the highlighted vertex in \cref{fig:branch1-fig1c} is blue, we have the coloring in \cref{fig:branch1-fig5a}.
If the vertex below $v$ was blue, then $v$ would be disposable from \Cref{obs:degree1-disposable}, and we can flip it to blue. In this case, we get a \plusblue{1} partition.
Thus, the vertex below $v$ must be red. A similar argument can be made if the vertex to the right of $v$ is blue, or if the vertex to the right of $v$ is blue, but the highlighted vertex in \cref{fig:branch1-fig5b} was red. In short, $v$ must be in \subref{case:6}.
We now apply the elbow lemma (\cref{lem:elbow}) as in Steps 2 and 3 (\cref{fig:branch1-fig1b,fig:branch1-fig1c}) starting from the highlighted vertex in \cref{fig:branch1-fig5b} since it is symmetric.
If the neighborhood of $u$ was in \subref{case:5}, then we have the coloring in \cref{fig:branch1-fig5c}. We note that this case is precisely the symmetric version of the previous subcase, and we can repeat the steps starting from \cref{fig:branch1-fig2a}.
The only remaining case is when the neighborhood of $u$ is in \subref{case:6}, just like the neighborhood of $v$. We now have the coloring in \cref{fig:branch2-fig6a}. 
We have two cases based on the color of the highlighted vertex in \cref{fig:branch2-fig6a}.

\begin{figure}[H]
     \centering
     \hfill
     \begin{subfigure}[b]{0.3\textwidth}
         \centering
           \includegraphics[width=\linewidth]{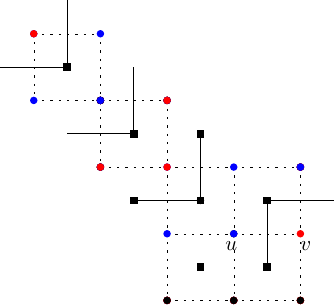}
         \caption{Step 4.}
         \label{fig:branch1-fig5a}
     \end{subfigure}
     \hfill
     \begin{subfigure}[b]{0.3\textwidth}
         \centering
             \includegraphics[width=\linewidth]{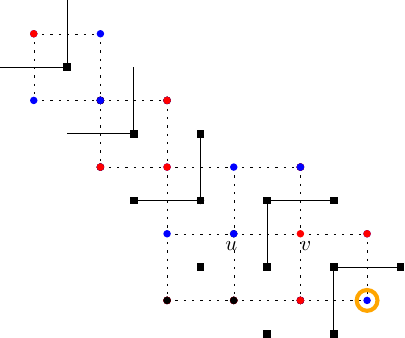}
         \caption{Step 5.}
         \label{fig:branch1-fig5b}
     \end{subfigure}
     \hfill
    \begin{subfigure}[b]{0.3\textwidth}
         \centering
    \includegraphics[width=\linewidth]{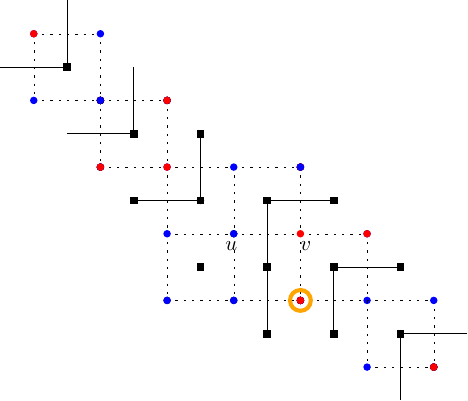}
         \caption{Step 6.}
         \label{fig:branch1-fig5c}
     \end{subfigure}
     \hfill
     \bigskip
    \caption{}
    \label{}
\end{figure}

\paragraph{Subcase 2a.} The highlighted vertex in \cref{fig:branch2-fig6a} is blue.
In this case, the highlighted vertex in
\cref{fig:branch2-fig6b} must have a red vertex to its right. Otherwise, we can flip the color of it and $v$ to blue to be done to get a \plusblue{2} partition. Thus, we get the coloring in \cref{fig:branch2-fig6c}. 
We now flip the color of $v$ and the vertex to its right. This creates two red regions, one of which must be an island, from \cref{lem:create-island}.

\begin{figure}[H]
     \centering
     \hfill
     \begin{subfigure}[b]{0.3\textwidth}
         \centering
           \includegraphics[width=\linewidth]{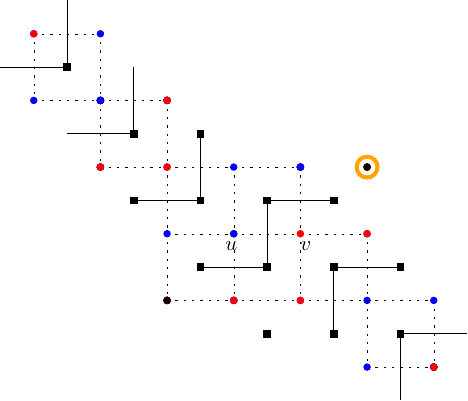}
         \caption{Step 7}
         \label{fig:branch2-fig6a}
     \end{subfigure}
     \hfill
     \begin{subfigure}[b]{0.3\textwidth}
         \centering
             \includegraphics[width=\linewidth]{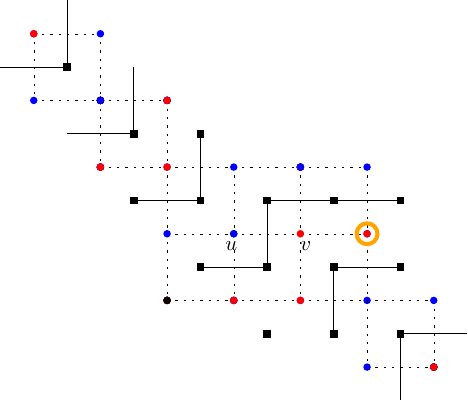}
         \caption{Step 8.}
         \label{fig:branch2-fig6b}
     \end{subfigure}
     \hfill
    \begin{subfigure}[b]{0.3\textwidth}
         \centering
    \includegraphics[width=\linewidth]{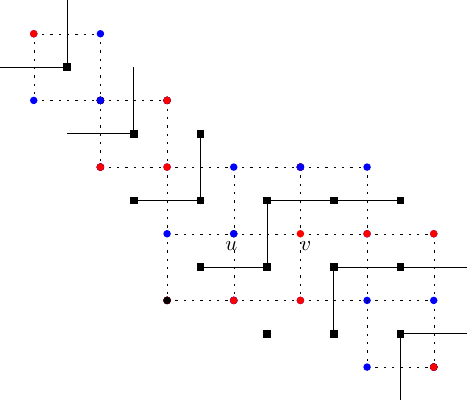}
         \caption{Step 9.}
         \label{fig:branch2-fig6c}
     \end{subfigure}
     \hfill
     \bigskip
    \caption{}
    \label{}
\end{figure}

Consider the highlighted vertices in \cref{fig:branch2-fig7a}.
If they are not in the same red region, we have a 1-thin structure, and we can resolve them to get a \plusblue{1} partition. Thus, assume that they are in the same region. We denote this with a dotted line between the two.
Next, the highlighted vertices in \cref{fig:branch2-fig7b} must be blue. Otherwise, we have a thin structure between the two red regions which can be resolved to get a \plusblue{0} or a \plusblue{1} partition.
Next, the highlighted vertex in \cref{fig:branch2-fig7c} must be red. If it was blue, the two red vertices below $u,v$ can be flipped to blue, and the vertex to the right of $v$ can be flipped back to red, to get a \plusblue{3} coloring.

\begin{figure}[H]
     \centering
     \hfill
     \begin{subfigure}[b]{0.3\textwidth}
         \centering
           \includegraphics[width=\linewidth]{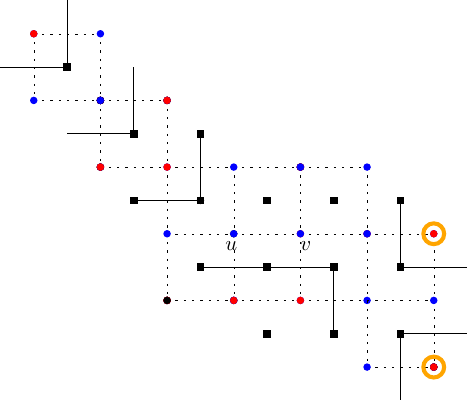}
         \caption{Step 10.}
         \label{fig:branch2-fig7a}
     \end{subfigure}
     \hfill
     \begin{subfigure}[b]{0.3\textwidth}
         \centering
             \includegraphics[width=\linewidth]{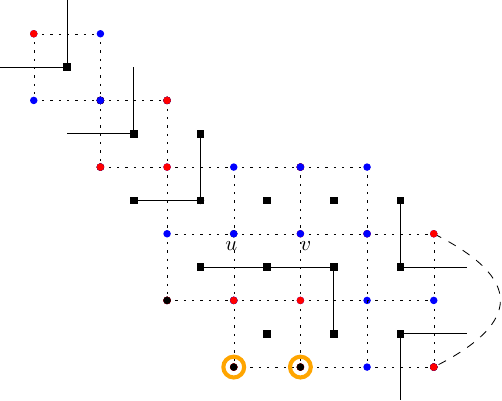}
         \caption{Step 11.}
         \label{fig:branch2-fig7b}
     \end{subfigure}
     \hfill
    \begin{subfigure}[b]{0.3\textwidth}
         \centering
    \includegraphics[width=\linewidth]{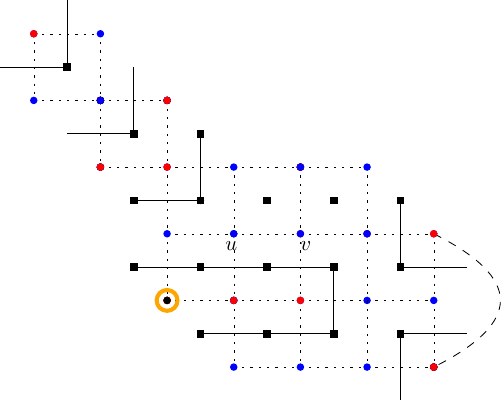}
         \caption{Step 12.}
         \label{fig:branch2-fig7c}
     \end{subfigure}
     \hfill
     \bigskip
    \caption{}
    \label{}
\end{figure}

Next, the highlighted vertex in \cref{fig:branch2-fig7d} must be blue. If it was red, $u$ and the vertex to its left can be flipped to red after flipping $v$ and the vertex to its right back to red. This is a \plusred{2} coloring.
The highlighted red vertices in \cref{fig:branch2-fig8a} must be in the same region, otherwise we have a 1-thin structure that would resolve to a \plusblue{1} partition. We will represent these two vertices with a pink highlight because of space constraints.
Now, we undo an earlier step and flip $v$ and the vertex to its right back to red to get the coloring in \cref{fig:branch2-fig8c}.
Now, we flip $u$ and the vertex to its left to red. 
From \cref{lem:create-island}, we again create two blue regions, one of which is an island.

\begin{figure}[H]
     \centering
     \hfill
     \begin{subfigure}[b]{0.3\textwidth}
         \centering
    \includegraphics[width=\linewidth]{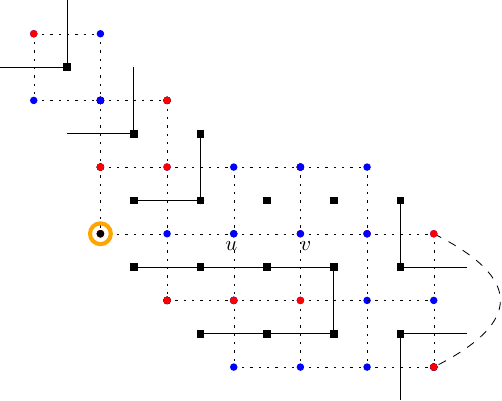}
         \caption{Step 13.}
         \label{fig:branch2-fig7d}
     \end{subfigure}
     \begin{subfigure}[b]{0.3\textwidth}
         \centering
           \includegraphics[width=\linewidth]{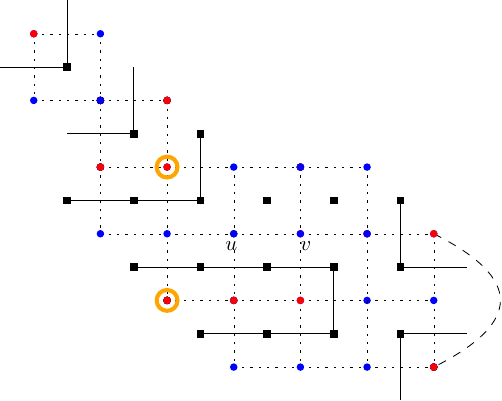}
         \caption{Step 14.}
         \label{fig:branch2-fig8a}
     \end{subfigure}
     \hfill
    \begin{subfigure}[b]{0.3\textwidth}
         \centering
    \includegraphics[width=\linewidth]{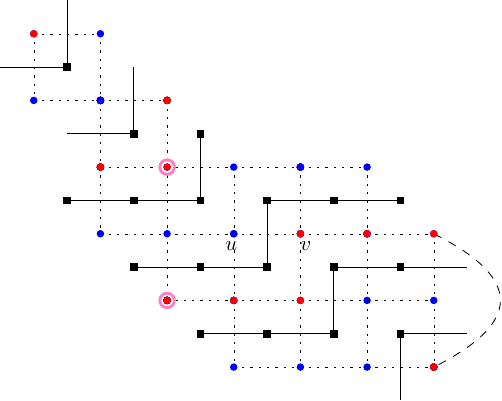}
         \caption{Step 15.}
         \label{fig:branch2-fig8c}
     \end{subfigure}
     \hfill
     \bigskip
    \caption{}
    \label{}
\end{figure}

The highlighted vertices in \cref{fig:branch2-fig9a} must be in the same region, otherwise we have a 1-thin structure that can be resolved to get a \plusred{1} partition. We represent the path between them with a dotted line.
The highlighted vertices in \cref{fig:branch2-fig9b} must also be in the same region, otherwise we again have a 1-thin structure whose resolution gives us a \plusred{1} partition. We represent this with a cyan highlight.
The highlighted vertices in \cref{fig:branch2-fig9c} must all be red. Otherwise we have some thin structure between the blue regions. These give us a \plusred{0} or a \plusred{1} partition depending on which ones are blue and which region they belong to.

\begin{figure}[H]
     \centering
     \hfill
     \begin{subfigure}[b]{0.3\textwidth}
         \centering
           \includegraphics[width=\linewidth]{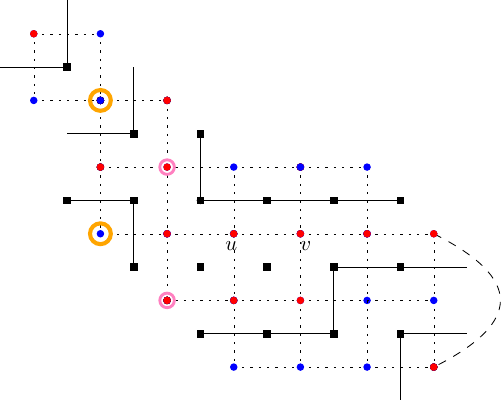}
         \caption{Step 16.}
         \label{fig:branch2-fig9a}
     \end{subfigure}
     \hfill
     \begin{subfigure}[b]{0.3\textwidth}
         \centering
             \includegraphics[width=\linewidth]{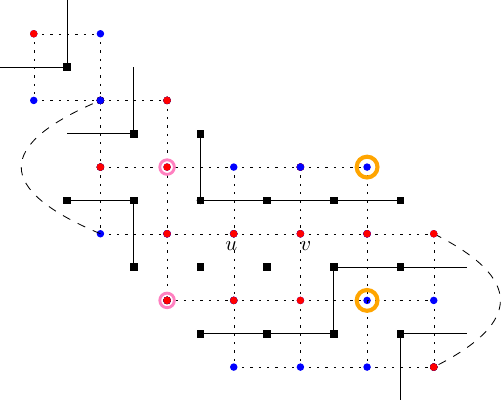}
         \caption{Step 17.}
         \label{fig:branch2-fig9b}
     \end{subfigure}
     \hfill
    \begin{subfigure}[b]{0.3\textwidth}
         \centering
    \includegraphics[width=\linewidth]{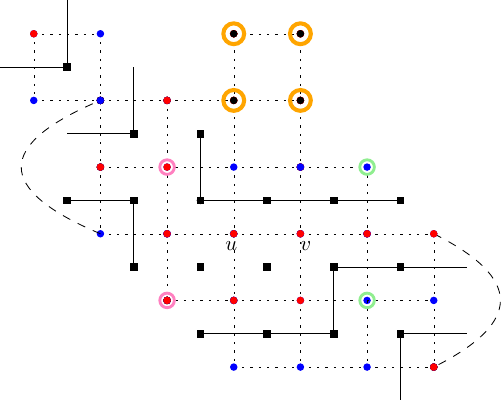}
         \caption{Step 18.}
         \label{fig:branch2-fig9c}
     \end{subfigure}
     \hfill
     \bigskip
    \caption{}
    \label{}
\end{figure}

Thus, the coloring looks like the one in \cref{fig:branch2-fig10a}. We flip some red vertices to connect the two blue regions, to get the coloring in \cref{fig:branch2-fig10b}.
Regardless of the color of the black vertex, we can show that the the resulting \plusred{0} partition is feasible.
This follows a similar argument as we made for the coloring in \cref{fig:branch1-fig4c}.
Regardless of whether the left blue region or the right blue region is the island, its island walk ensures that this last flip does not separate the red region into two disconnected regions.

\begin{figure}[H]
     \centering
     \hfill
     \begin{subfigure}[b]{0.4\textwidth}
         \centering
           \includegraphics[width=\linewidth]{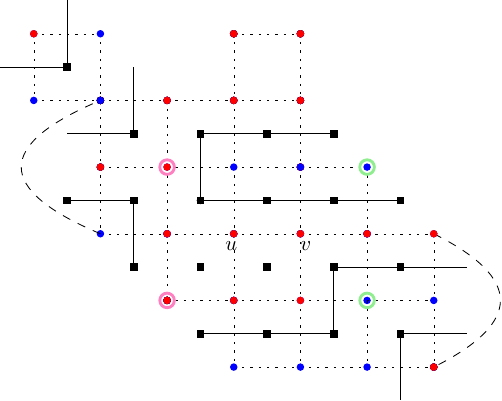}
         \caption{Step 19.}
         \label{fig:branch2-fig10a}
     \end{subfigure}
     \hfill
     \begin{subfigure}[b]{0.4\textwidth}
         \centering
             \includegraphics[width=\linewidth]{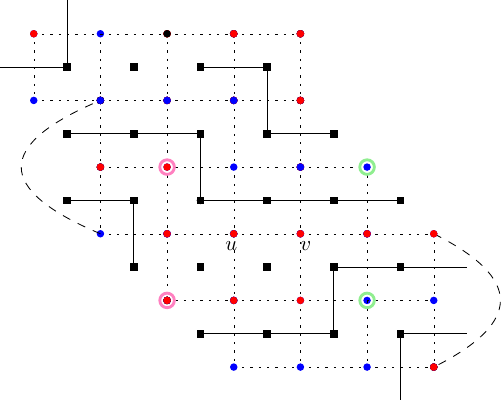}
         \caption{Final state.}
         \label{fig:branch2-fig10b}
     \end{subfigure}
     \hfill
     \bigskip
    \caption{}
    \label{}
\end{figure}

\paragraph{Subcase 2b.} Suppose the highlighted vertex in \cref{fig:branch2-fig6a} was red. We get the coloring in \cref{fig:branch2-fig11a}.
If both the highlighted vertices in \cref{fig:branch2-fig11a} were red, then we can flip $u$ and the vertices above $u$ and $v$ to red. This gives us a \plusred{3} partition. 

Thus, at least one of the highlighted vertices in \cref{fig:branch2-fig11a}  must be blue. 
We consider the case where the right one is blue (the left one may or may not be blue). The other case is similar.
We flip the color of $u$ and the two highlighted vertices in \cref{fig:branch2-fig11b} to red. From \cref{lem:create-island} creates two blue regions, one of which is an island.
The highlighted blue vertices in \cref{fig:branch2-fig11c} must be in the same region to avoid a thin structure that resolves to \plusred{1} or a \plusred{2} partition (depending on the color of the black vertex).
We represent the path between them with a dotted line.

\begin{figure}[H]
     \centering
     \hfill
     \begin{subfigure}[b]{0.3\textwidth}
         \centering
           \includegraphics[width=\linewidth]{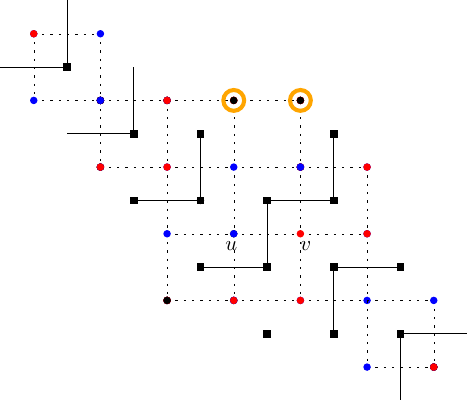}
         \caption{Step 8.}
         \label{fig:branch2-fig11a}
     \end{subfigure}
     \hfill
     \begin{subfigure}[b]{0.3\textwidth}
         \centering
             \includegraphics[width=\linewidth]{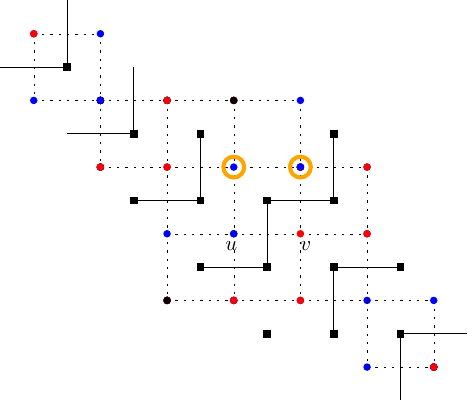}
         \caption{Step 9.}
         \label{fig:branch2-fig11b}
     \end{subfigure}
     \hfill
    \begin{subfigure}[b]{0.3\textwidth}
         \centering
    \includegraphics[width=\linewidth]{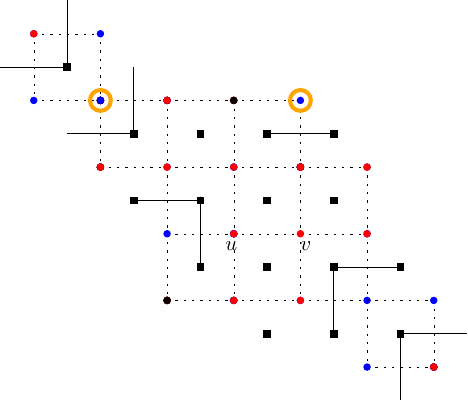}
         \caption{Step 10.}
         \label{fig:branch2-fig11c}
     \end{subfigure}
     \hfill
     \bigskip
    \caption{}
    \label{}
\end{figure}

If the highlighted vertex in \cref{fig:branch2-fig12a} was blue, we would get a thin-structure that can be resolved to get a \plusred{2} partition.
Thus, we assume it is red. This causes the highlighted vertex in \cref{fig:branch2-fig12b} to be blue. Otherwise, the vertex to the left of $u$ is the only vertex in its region, and we can flip it. The vertices above $u$ and $v$ can then be flipped back to blue to give us a \plusred{2} partition.
Next, the highlighted vertex in \cref{fig:branch2-fig12c} must be red to avoid a 2-thin structure whose resolution leads to a \plusred{1} partition.
\begin{figure}[H]
     \centering
     \hfill
     \begin{subfigure}[b]{0.3\textwidth}
         \centering
           \includegraphics[width=\linewidth]{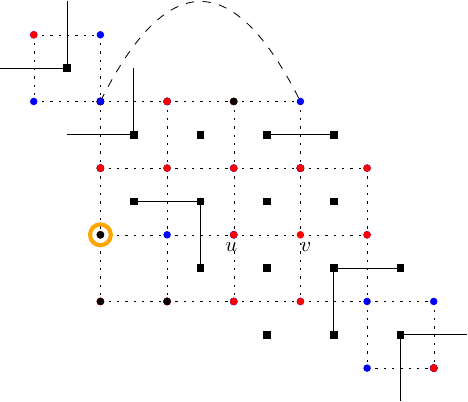}
         \caption{Step 11.}
         \label{fig:branch2-fig12a}
     \end{subfigure}
     \hfill
     \begin{subfigure}[b]{0.3\textwidth}
         \centering
             \includegraphics[width=\linewidth]{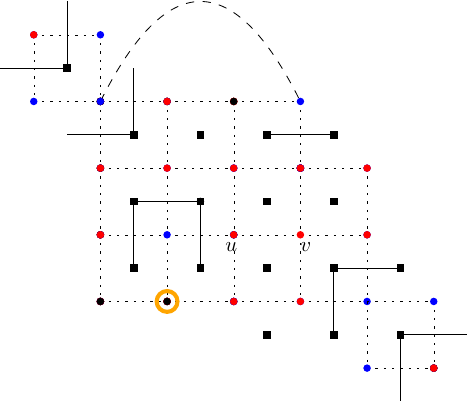}
         \caption{Step 12.}
         \label{fig:branch2-fig12b}
     \end{subfigure}
     \hfill
    \begin{subfigure}[b]{0.3\textwidth}
         \centering
    \includegraphics[width=\linewidth]{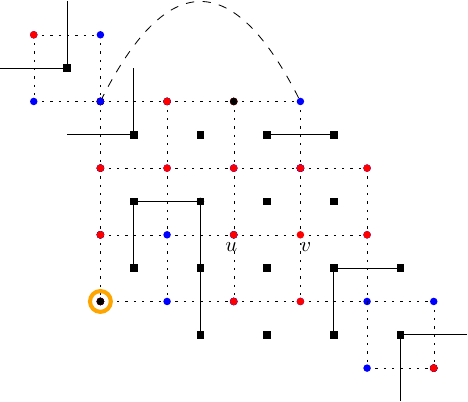}
         \caption{Step 13.}
         \label{fig:branch2-fig12c}
     \end{subfigure}
     \hfill
     \bigskip
    \caption{}
    \label{}
\end{figure}

The highlighted vertices in \cref{fig:branch2-fig13a} must be red to avoid a thin-structure between the blue regions. Again, their resolution would lead to a \plusred{1} or a \plusred{2} partition.
We flip the highlighted vertices in \cref{fig:branch2-fig13b} to blue to connect the blue regions.
This leads to the coloring in \cref{fig:branch2-fig13c}, which is \plusred{1} partition and we can argue is feasible using an argument similar to the colorings in \cref{fig:branch1-fig4c} and \cref{fig:branch2-fig10b}.

\begin{figure}[H]
     \centering
     \hfill
     \begin{subfigure}[b]{0.3\textwidth}
         \centering
           \includegraphics[width=\linewidth]{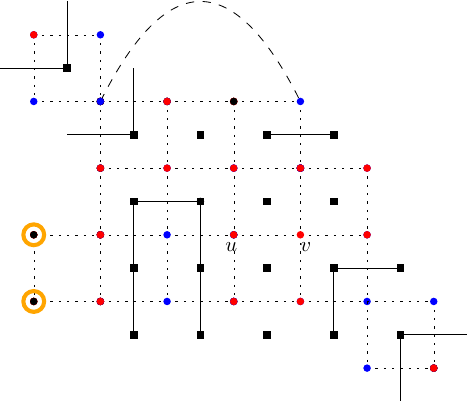}
         \caption{Step 14.}
         \label{fig:branch2-fig13a}
     \end{subfigure}
     \hfill
     \begin{subfigure}[b]{0.3\textwidth}
         \centering
           \includegraphics[width=\linewidth]{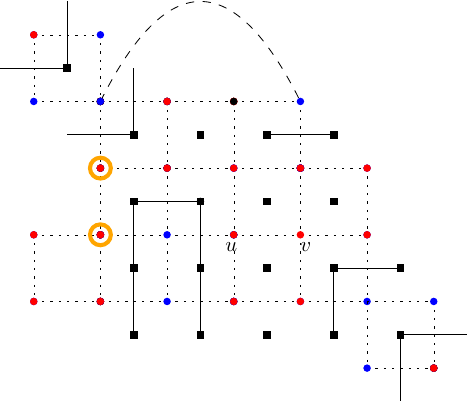}
         \caption{Step 15.}
         \label{fig:branch2-fig13b}
     \end{subfigure}
     \hfill
     \begin{subfigure}[b]{0.3\textwidth}
         \centering
             \includegraphics[width=\linewidth]{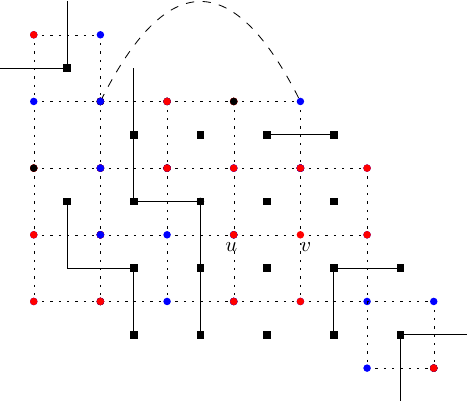}
         \caption{Final state.}
         \label{fig:branch2-fig13c}
     \end{subfigure}
     \hfill
     \bigskip
    \caption{}
    \label{}
\end{figure}

\subsection{Case 7}
\label{sec:case7}

We deal with Cases (7,7) and (7,8) separately.
We look at Case (7,7), since we dealt with Case (7,8) in \cref{sec:case-78-mainbody}.
Recall that this means that both the neighborhoods of $u$ and $v$ are in \subref{case:7}.

\paragraph{Case (7,7).}

We first flip $u$ to red. From \cref{lem:create-island}, this creates two blue regions, at least one of which is an island. Since there are 2 regions, we assume without loss of generality that the highlighted vertices in \cref{fig:branch3-fig1b} in the same region.
We represent the path between them with a dotted line.
Next, consider the highlighted vertex in \cref{fig:branch3-fig1c}. If it was blue, then it would lead to a 1-thin structure whose resolution would lead to \plusred{0} partition. Thus, it must be red.

\begin{figure}[H]
     \centering
     \hfill
     \begin{subfigure}[b]{0.3\textwidth}
         \centering
\includegraphics[width=\linewidth]{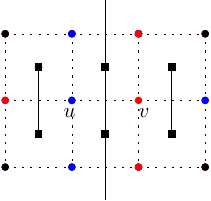}
         \caption{Step 1.}
         \label{fig:branch3-fig1a}
     \end{subfigure}
     \hfill
     \begin{subfigure}[b]{0.3\textwidth}
         \centering
             \includegraphics[width=\linewidth]{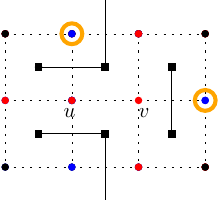}
         \caption{Step 2.}
         \label{fig:branch3-fig1b}
     \end{subfigure}
     \hfill
    \begin{subfigure}[b]{0.3\textwidth}
         \centering
    \includegraphics[width=\linewidth]{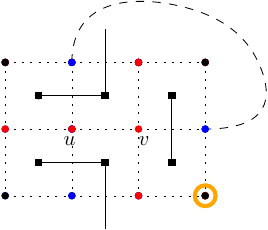}
         \caption{Step 3.}
         \label{fig:branch3-fig1c}
     \end{subfigure}
     \hfill
     \bigskip
    \caption{}
    \label{}
\end{figure}

We flip $u$ back to blue and flip $v$ to blue as well. Again, this creates two red regions, one of which is an island.
Now, consider the highlighted vertices in \cref{fig:branch3-fig2a}. If they were in the same region, it can be argued (in a similar way as in \cref{lem:xoox}) that it violates planarity.
The main observation is that since they were in the same region even after $v$ was flipped to blue, they have a path between them that does not involve $v$.
Thus, the coloring must be as in \cref{fig:branch3-fig2b}, with the leftmost red vertex being in the same region as the bottom red vertices.
Now, consider the highlighted vertex in \cref{fig:branch3-fig2b}. It must be blue. If it were red, it would make a 1-thin structure that we can resolve to get a \plusblue{0} partition.
Similarly, the highlighted vertex in \cref{fig:branch3-fig2c} must be blue.

\begin{figure}[H]
     \centering
     \hfill
     \begin{subfigure}[b]{0.3\textwidth}
         \centering
    \includegraphics[width=\linewidth]{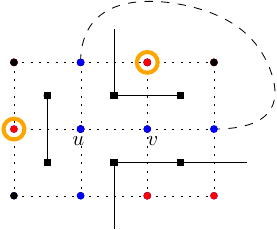}
         \caption{Step 1.}
         \label{fig:branch3-fig2a}
     \end{subfigure}
     \hfill
     \begin{subfigure}[b]{0.3\textwidth}
         \centering
             \includegraphics[width=\linewidth]{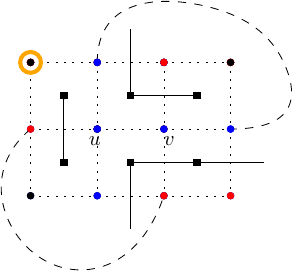}
         \caption{Step 2.}
         \label{fig:branch3-fig2b}
     \end{subfigure}
     \hfill
    \begin{subfigure}[b]{0.3\textwidth}
         \centering
    \includegraphics[width=\linewidth]{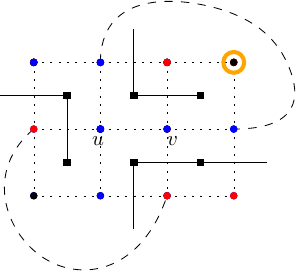}
         \caption{Step 3.}
         \label{fig:branch3-fig2c}
     \end{subfigure}
     \hfill
     \bigskip
    \caption{}
    \label{}
\end{figure}

Consider the vertex highlighted in \cref{fig:branch3-fig3a}. If it is adjacent to the boundary or if the vertex above it is blue, then the highlighted vertex would be the only red vertex in that region. It can be flipped to blue to get a \plusblue{2} partition. Thus, we assume that the vertex above exists and is red.
Now, we flip $u,v$ to red to get the coloring in \cref{fig:branch3-fig3c}.
Again, from \cref{lem:create-island}, we know that there are two blue regions, one of which is an island.
The highlighted vertex in \cref{fig:branch3-fig3c} must be red. If it were blue, it would create a 1-thin structure that we can resolve to get a \plusred{0} partition.

\begin{figure}[H]
     \centering
     \hfill
     \begin{subfigure}[b]{0.3\textwidth}
         \centering
           \includegraphics[width=\linewidth]{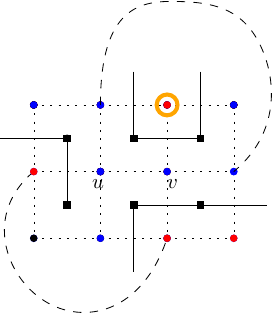}
         \caption{Step 4.}
         \label{fig:branch3-fig3a}
     \end{subfigure}
     \hfill
     \begin{subfigure}[b]{0.3\textwidth}
         \centering
             \includegraphics[width=\linewidth]{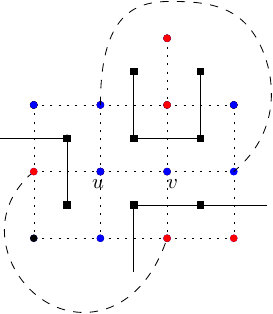}
         \caption{Step 5.}
         \label{fig:branch3-fig3b}
     \end{subfigure}
     \hfill
    \begin{subfigure}[b]{0.3\textwidth}
         \centering
    \includegraphics[width=\linewidth]{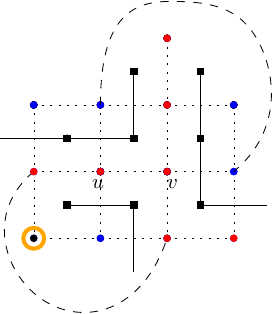}
         \caption{Step 6.}
         \label{fig:branch3-fig3c}
     \end{subfigure}
     \hfill
     \bigskip
    \caption{}
    \label{}
\end{figure}

If the highlighted vertex in \cref{fig:branch3-fig4a} was on the border or if the vertex below was red, then the highlighted vertex would be a singleton region that we can flip to red to get a \plusred{2} partition.
Thus, the vertex below exists and is blue. Consider the highlighted vertex in \cref{fig:branch3-fig4b}. If it were blue (from either region), it would create a 1-thin structure that can be resolved to get a \plusred{0} partition. Thus, we assume it is red.
We now have two cases depending on the color of the highlighted vertex in \cref{fig:branch3-fig4c}.

\begin{figure}[H]
     \centering
     \hfill
     \begin{subfigure}[b]{0.3\textwidth}
         \centering
           \includegraphics[width=\linewidth]{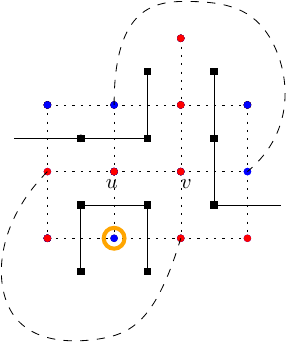}
         \caption{Step 7.}
         \label{fig:branch3-fig4a}
     \end{subfigure}
     \hfill
     \begin{subfigure}[b]{0.3\textwidth}
         \centering
             \includegraphics[width=\linewidth]{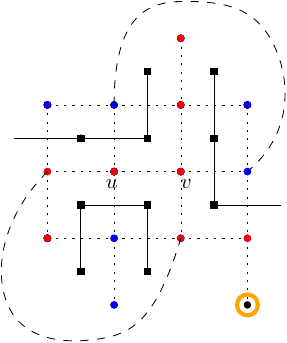}
         \caption{Step 8.}
         \label{fig:branch3-fig4b}
     \end{subfigure}
     \hfill
    \begin{subfigure}[b]{0.3\textwidth}
         \centering
    \includegraphics[width=\linewidth]{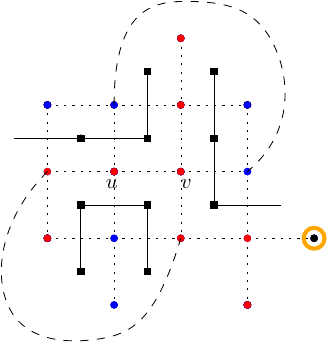}
         \caption{Step 9.}
         \label{fig:branch3-fig4c}
     \end{subfigure}
     \hfill
     \bigskip
    \caption{}
    \label{}
\end{figure}

\paragraph{Case (7,7) : Subcase 1.} The highlighted vertex in \cref{fig:branch3-fig4c} is blue. Though we have a 2-thin structure, we cannot resolve it since it would lead to us flipping more red vertices to blue than blue vertices to red.
We need to do some more work in this case.
Consider the highlighted vertex in \cref{fig:branch3-fig5a}. It must be blue to avoid a cross-structure in the original coloring.
If the highlighted vertex in \cref{fig:branch3-fig5b} was blue, we have a 2-thin structure. We can resolve this, and additionally flip the vertex below $u$ to red to get a \plusred{0} partition. Thus, we assume it is red.
We now flip $u,v$ to blue to get the coloring in \cref{fig:branch3-fig6a}.

\begin{figure}[H]
     \centering
     \hfill
     \begin{subfigure}[b]{0.3\textwidth}
         \centering
           \includegraphics[width=\linewidth]{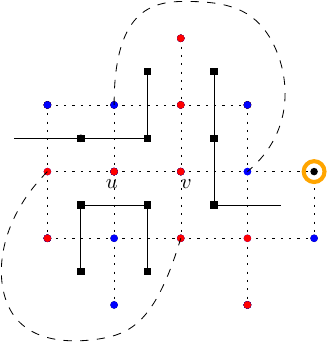}
         \caption{Step 10.}
         \label{fig:branch3-fig5a}
     \end{subfigure}
     \hfill
     \begin{subfigure}[b]{0.3\textwidth}
         \centering
             \includegraphics[width=\linewidth]{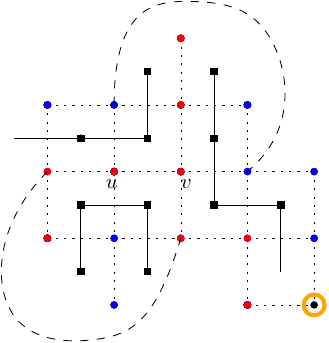}
        \caption{Step 11.}
         \label{fig:branch3-fig5b}
     \end{subfigure}
     \hfill
    \begin{subfigure}[b]{0.3\textwidth}
         \centering
    \includegraphics[width=\linewidth]{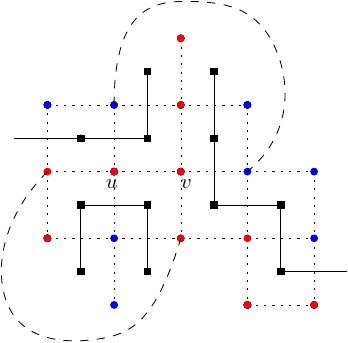}
         \caption{Step 12.}
         \label{fig:branch3-fig5c}
     \end{subfigure}
     \hfill
     \bigskip
    \caption{}
    \label{}
\end{figure}

If the highlighted vertex in \cref{fig:branch3-fig6a} was a red vertex from the bottom red region, then we have a 1-thin structure that we can resolve to get a \plusblue{0} partition. If it was a red vertex from the top red region, then we can resolve the resulting 2-thin structure, as well as flip the vertex above $v$ to blue to get a \plusblue{0} partition.
Thus, we assume that it is blue.
This makes the vertex to the right of $v$ disposable, so we flip that vertex, along with $u,v$ to red. We can now resolve the 2-thin structure between the vertices in \cref{fig:branch3-fig6b} to get the \plusred{0} partition in \cref{fig:branch3-fig6c}.

\begin{figure}[H]
     \centering
     \hfill
     \begin{subfigure}[b]{0.3\textwidth}
         \centering
           \includegraphics[width=\linewidth]{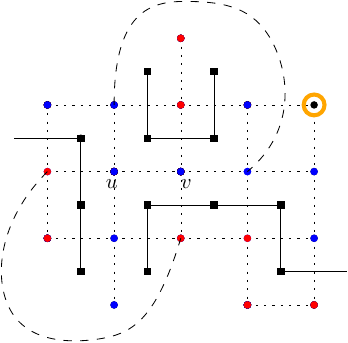}
         \caption{Step 13.}
         \label{fig:branch3-fig6a}
     \end{subfigure}
     \hfill
     \begin{subfigure}[b]{0.3\textwidth}
         \centering
             \includegraphics[width=\linewidth]{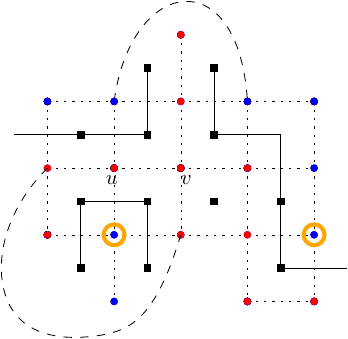}
         \caption{Step 14.}
         \label{fig:branch3-fig6b}
     \end{subfigure}
     \hfill
    \begin{subfigure}[b]{0.3\textwidth}
         \centering
    \includegraphics[width=\linewidth]{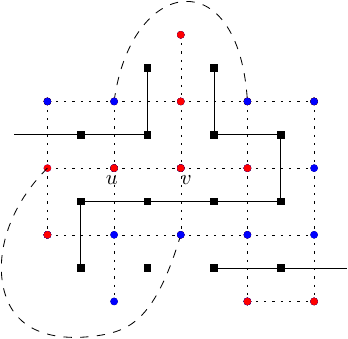}
         \caption{Step 15.}
         \label{fig:branch3-fig6c}
     \end{subfigure}
     \hfill
     \bigskip
    \caption{}
    \label{}
\end{figure}

\paragraph{Case (7,7) : Subcase 2.} The highlighted vertex in \cref{fig:branch3-fig4c} is red. Consider the highlighted vertex in \cref{fig:branch3-fig7a}. If it was red, we have a 2-thin structure that we can resolve. After flipping the  vertex above $v$ to blue, we have a \plusblue{0} partition. Thus, we assume it is blue. 
Both the highlighted vertices from \cref{fig:branch3-fig7b} must be blue for similar reasons. If either of them were a red vertex from the bottom red region, we directly have a 1-thin structure whose resolution gives us a \plusblue{0} partition. If the lower highlighted vertex was red from the top red region, we again have a 1-thin structure that resolves to a \plusblue{0} partition. If the higher highlighted vertex was red from the top red region, we have a 2-thin structure that we can resolve. After flipping the vertex above $v$ to blue, we have a \plusblue{0} partition.
Noting that all of the highlighted vertices in \cref{fig:branch3-fig7c} are disposable, we flip them to get the coloring in \cref{fig:branch3-fig8a}. Regardless of the colors of the black vertices, it is a feasible \plusblue{0} partition.

\begin{figure}[H]
     \centering
     \hfill
     \begin{subfigure}[b]{0.3\textwidth}
         \centering
           \includegraphics[width=\linewidth]{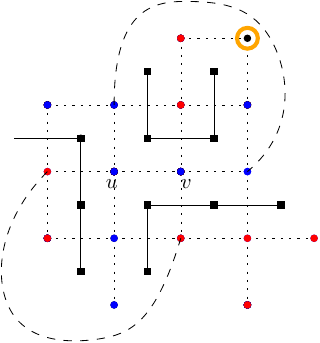}
         \caption{Step 10.}
         \label{fig:branch3-fig7a}
     \end{subfigure}
     \hfill
     \begin{subfigure}[b]{0.3\textwidth}
         \centering
             \includegraphics[width=\linewidth]{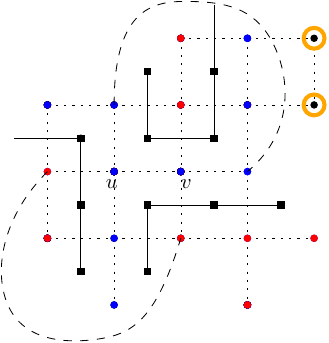}
         \caption{Step 11.}
         \label{fig:branch3-fig7b}
     \end{subfigure}
     \hfill
    \begin{subfigure}[b]{0.3\textwidth}
         \centering
    \includegraphics[width=\linewidth]{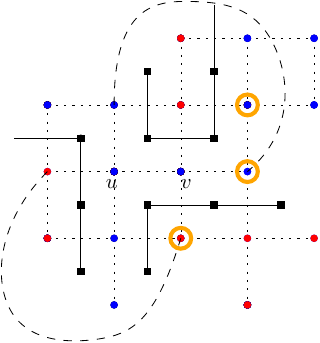}
         \caption{Step 12.}
         \label{fig:branch3-fig7c}
     \end{subfigure}
     \hfill
     \bigskip
    \caption{}
    \label{}
\end{figure}

\begin{figure}[H]
     \centering
    \includegraphics[width=0.3\linewidth]{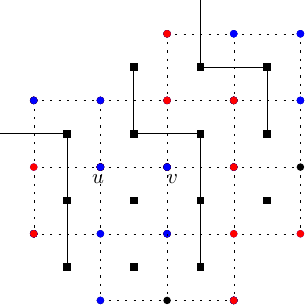}
         \caption{Final state.}
         \label{fig:branch3-fig8a}
\end{figure}

\subsection{Case 8}

Note that since we covered every other case, we only need to take care of Case (8,8).
We first consider the highlighted vertex in \cref{fig:branch4-fig1a}. If it was adjacent to a border or if the vertex to the left was red, then we can flip the highlighted vertex to red to return to Case (7,8). The \plusred{0} partitions we constructed there instead become \plusred{1} partitions.
Thus, we assume the vertex to the left exists and is blue. Now, we consider the two highlighted vertices in \cref{fig:branch4-fig1b}. We consider 3 cases depending on whether they are disposable or not.

\paragraph{Subcase 1:} If both of them are disposable, we can simply flip them both and then flip $u$ to red to get a \plusred{3} partition as depicted in \cref{fig:branch4-fig1c}.

\begin{figure}[H]
     \centering
     \hfill
     \begin{subfigure}[b]{0.3\textwidth}
         \centering
           \includegraphics[width=\linewidth]{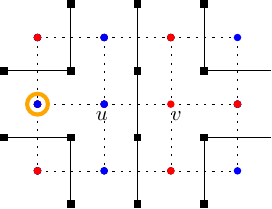}
         \caption{Step 1.}
         \label{fig:branch4-fig1a}
     \end{subfigure}
     \hfill
     \begin{subfigure}[b]{0.3\textwidth}
         \centering
             \includegraphics[width=\linewidth]{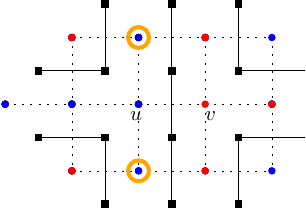}
         \caption{Step 2.}
    \label{fig:branch4-fig1b}
    \end{subfigure}
     \hfill
    \begin{subfigure}[b]{0.3\textwidth}
         \centering
    \includegraphics[width=\linewidth]{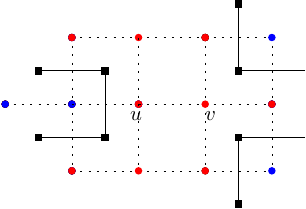}
         \caption{Step 3.}
         \label{fig:branch4-fig1c}
     \end{subfigure}
     \hfill
     \bigskip
    \caption{}
    \label{}
\end{figure}

\paragraph{Subcase 2:} If exactly one of the highlighted vertices in \cref{fig:branch4-fig1b} was disposable, then we can flip it to red to get the coloring in \cref{fig:branch4-fig3b} (or a symmetric version of it). If the highlighted vertex in \cref{fig:branch4-fig3b} was disposable, we can flip it to blue, and flip $u$ to red to get a \plusred{1} partition. Thus, we assume it is not disposable and from \cref{lem:elbow}, we get the coloring in \cref{fig:branch4-fig3c}.
We flip the colors of $u$, and the vertex above $u$ to get the coloring in \cref{fig:branch4-fig4a}. This creates two regions, one of which must be an island (\cref{lem:create-island}). Now, we must have a 1-thin structure between the 3 highlighted vertices in \cref{fig:branch4-fig4a}. Resolving this will give us a \plusred{2} partition.

\begin{figure}[H]
     \centering
     \hfill
     \begin{subfigure}[b]{0.3\textwidth}
         \centering
             \includegraphics[width=\linewidth]{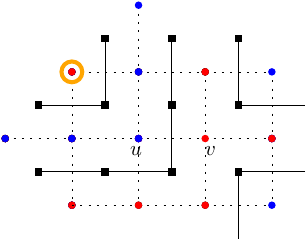}
         \caption{Step 3.}
         \label{fig:branch4-fig3b}
     \end{subfigure}
     \hfill
    \begin{subfigure}[b]{0.3\textwidth}
         \centering
    \includegraphics[width=\linewidth]{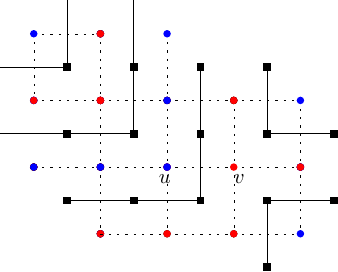}
         \caption{Step 4.}
         \label{fig:branch4-fig3c}
     \end{subfigure}
     \hfill
           \begin{subfigure}[b]{0.3\textwidth}
         \centering
           \includegraphics[width=\linewidth]{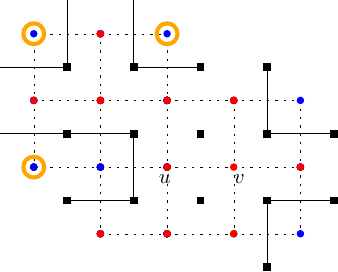}
         \caption{Step 5.}
         \label{fig:branch4-fig4a}
     \end{subfigure}
     
     \bigskip
    \caption{}
    \label{}
\end{figure}

\paragraph{Subcase 3:} If none of the highlighted vertices in \cref{fig:branch4-fig1b} were disposable, we get the figure in \cref{fig:branch4-fig4b}. We can assume that the symmetric red copies of the blue vertices we considered are also not disposable, since we can deal with them in a previous subcase otherwise. Consider the two red and two blue highlighted vertices in \cref{fig:branch4-fig4b}. If at least one of the red and at least one of the blue highlighted vertices were disposable, we can flip the red one to blue and the blue one to red. We did not affect the total number of red or blue vertices, and the neighborhood of $u$ is now in \subref{case:5}. We can continue the arguments therein.
Thus, we assume that for one color, neither highlighted vertex is disposable. Without loss of generality, assume that both the red ones are not disposable. By applying \cref{lem:elbow}, we get the coloring in \cref{fig:branch4-fig4c}.

\begin{figure}[H]
     \centering
      \hfill
     \begin{subfigure}[b]{0.3\textwidth}
         \centering
             \includegraphics[width=\linewidth]{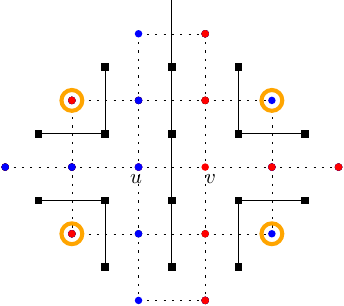}
         \caption{Step 3.}
         \label{fig:branch4-fig4b}
     \end{subfigure}
       \hfill
    \begin{subfigure}[b]{0.3\textwidth}
         \centering
    \includegraphics[width=\linewidth]{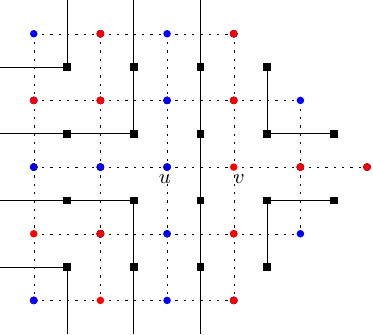}
         \caption{Step 4.}
         \label{fig:branch4-fig4c}
     \end{subfigure}
   \hfill
     \begin{subfigure}[b]{0.3\textwidth}
         \centering
           \includegraphics[width=\linewidth]{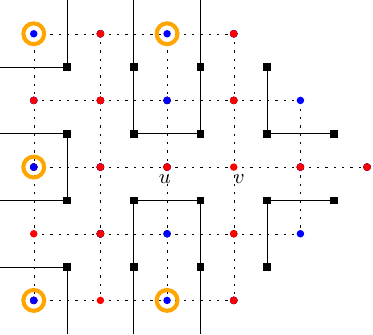}
         \caption{Step 5.}
         \label{fig:branch4-fig5a}
     \end{subfigure}
     \bigskip 
    \caption{}
    \label{}
\end{figure}

We now flip the color of $u$ and the vertex to the left of it to red. From \cref{lem:create-island}, this creates three regions. The resulting coloring is in \cref{fig:branch4-fig5a}. We note that the 5 highlighted vertices must belong to 3 regions, of which at least 2 must be islands.
This implies that there must exist two 1-thin structures between them. Resolving these gives us a \plusred{0} partition. One such way to resolve it along with the connections between the regions is shown in \cref{fig:branch4-fig5b}.

\begin{figure}[H]
     \centering
     \includegraphics[width=0.4\linewidth]{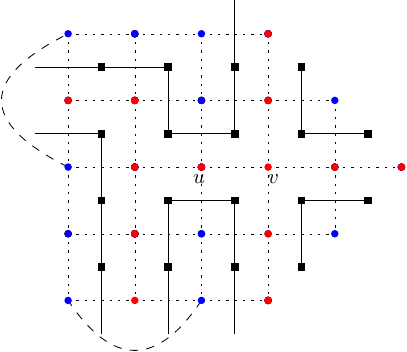}
         \caption{Final state.}
         \label{fig:branch4-fig5b}
\end{figure}

\section{Boundary Case Analysis}
\label{sec:boundary-cases}

\graphicspath{ {./img/grid_boundary} }

\newcommand{\minusred}[1]{\textcolor{red}{$-$#1}}

We now handle the cases where either $u$ or $v$ is adjacent to the border of the grid. We remark that we do not explicitly state that the degree of the outer face vertex $\outerface$ does not increase, although all our transformations do ensure this fact.

As in \cref{sec:reconnect-cases}, we say that a partition is \minusred{$x$} if both $u,v$ are colored red and the number of vertices removed from the red partition is $x$. We show that all of our partitions in these cases are \plusblue{$x$}, \plusred{$x$}, or \minusred{$x$} for $0 \leq x \leq 3$.

The border cases are depicted in \cref{fig:bcases1} and \cref{fig:bcases2}, with symmetric cases removed. The solid black line depicts the boundary of the grid. For visual simplicity, we display the graphs without the cycles in this section.

\paragraph{Case Notation} 
Broadly, we again have three branches for our case analysis. 
\begin{enumerate}
    \item Both $u,v$ are adjacent to the border. This is $\fbox{\mbox{Case 1}}$.

For the other cases, without loss of generality, we assume that $v$ is adjacent to the right border and that $u$ is to the left of $v$.
    
    \item At least one of the top and bottom neighbors of $v$ is blue. This is $\fbox{\mbox{Case 2}}$.
    \item Both the top and bottom neighbors of $v$ are red. We have several cases in this branch.
    \begin{enumerate}
        \item At least one of the top and bottom neighbors of $u$ is red. This is $\fbox{\mbox{Case 3}}$.
        \item Both the top and bottom neighbors of $u$ are blue. We further branch based on the colors of the vertices to the left of $u$. This covers $\fbox{\mbox{Cases 4-12}}$.
    \end{enumerate}
\end{enumerate}

\renewcommand\thesubfigure{Case~\arabic{subfigure}}
\begin{figure}[H]
     \centering
     
     \begin{subfigure}[c]{0.3\textwidth}
         \centering
           \includegraphics[width=0.6\linewidth]{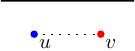}
         \caption{}
         \label{bcase:1}
     \end{subfigure}
     \hfill
     \begin{subfigure}[c]{0.29\textwidth}
         \centering
             \includegraphics[width=0.6\linewidth,cframe=green]{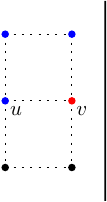}
        \caption{}
         \label{bcase:2}
     \end{subfigure}
     \hfill
    \begin{subfigure}[c]{0.3\textwidth}
         \centering
    \includegraphics[width=0.6\linewidth]{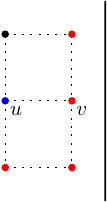}
        \caption{}
         \label{bcase:3}
     \end{subfigure}

     \bigskip

    \begin{subfigure}[c]{0.3\textwidth}
         \centering
           \includegraphics[width=\linewidth]{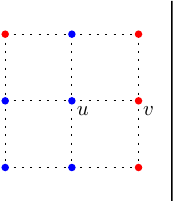}
         \caption{}
         \label{bcase:4}
     \end{subfigure}
     \hfill
     \begin{subfigure}[c]{0.3\textwidth}
         \centering
             \includegraphics[width=\linewidth]{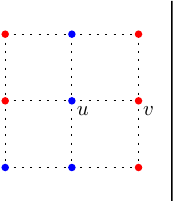}
        \caption{}
         \label{bcase:5}
     \end{subfigure}
     \hfill
    \begin{subfigure}[c]{0.29\textwidth}
         \centering
    \includegraphics[width=\linewidth,cframe=green]{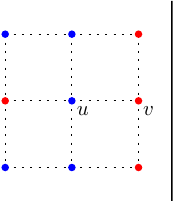}
        \caption{}
         \label{bcase:6}
     \end{subfigure}
     
    \caption{Easy cases are drawn with a green box surround them.}
    \label{fig:bcases1}
\end{figure}

\renewcommand\thesubfigure{Case \arabic{subfigure}}
\begin{figure}[H]
     \centering

     \begin{subfigure}[c]{0.29\textwidth}
        \addtocounter{subfigure}{6}
         \centering
           \includegraphics[width=0.8\linewidth,cframe=green]{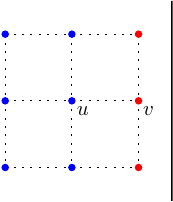}
         \caption{}
         \label{bcase:7}
     \end{subfigure}
     \hfill
     \begin{subfigure}[c]{0.29\textwidth}
         \centering
             \includegraphics[width=\linewidth]{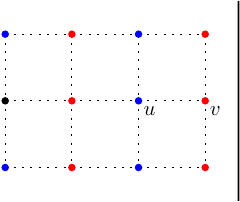}
        \caption{}
         \label{bcase:8}
     \end{subfigure}
     \hfill
    \begin{subfigure}[c]{0.3\textwidth}
         \centering
    \includegraphics[width=\linewidth]{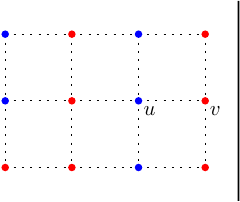}
        \caption{}
         \label{bcase:9}
     \end{subfigure}

     \bigskip

     \begin{subfigure}[c]{0.3\textwidth}
         \centering
           \includegraphics[width=\linewidth]{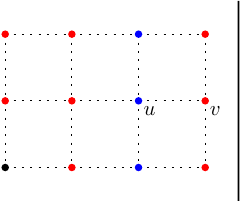}
         \caption{}
         \label{bcase:10}
     \end{subfigure}
     \hfill
     \begin{subfigure}[c]{0.3\textwidth}
         \centering
             \includegraphics[width=\linewidth]{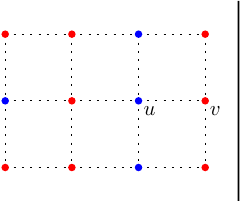}
        \caption{}
         \label{bcase:11}
     \end{subfigure}
     \hfill
    \begin{subfigure}[c]{0.3\textwidth}
         \centering
    \includegraphics[width=0.8\linewidth]{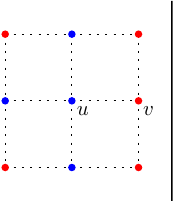}
        \caption{}
         \label{bcase:12}
     \end{subfigure}
     
    \caption{}
    \label{fig:bcases2}
\end{figure}
\renewcommand{\thesubfigure}{\alph{subfigure}}

In Case 2, Case 6, and Case 7, we can directly modify the graph with the final states drawn in \cref{fig:boundary_easy}.

\renewcommand\thesubfigure{Case \arabic{subfigure}}
\begin{figure}[H]
     \centering
     
     \begin{subfigure}[c]{0.3\textwidth}
         \centering
           \includegraphics[width=0.6\linewidth]{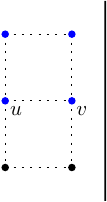}
        
        \setcounter{subfigure}{1}
         \caption{}
     \end{subfigure}
     \hfill
     \begin{subfigure}[c]{0.3\textwidth}
         \centering
             \includegraphics[width=\linewidth]{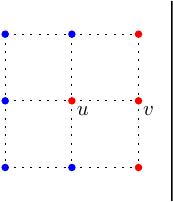}

        \setcounter{subfigure}{5}
        \caption{}
     \end{subfigure}
     \hfill
    \begin{subfigure}[c]{0.3\textwidth}
         \centering
    \includegraphics[width=\linewidth]{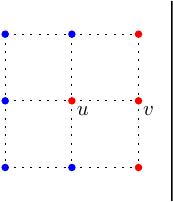}
    
        \setcounter{subfigure}{6}
        \caption{}
     \end{subfigure}

    \caption{Modifications for easy cases.}
    \label{fig:boundary_easy}
\end{figure}
\renewcommand{\thesubfigure}{\alph{subfigure}}

\subsection{Case 1}

Uniquely, Case 1 handles the scenario where both $u$ and $v$ are both adjacent to a border. Firstly, if either $u$ or $v$ in \cref{fig:boundary-1-1} were disposable, then we can flip it to obtain either a \plusred{1} or \plusblue{1} partition.
The neighborhood of $u,v$ must look like \cref{fig:boundary-1-2} since any other configuration would imply that one of $u$ or $v$ is disposable.
Next, we consider the highlighted vertices in \cref{fig:boundary-1-2}.
If, say, the red highlighted vertex in \cref{fig:boundary-1-2} was disposable, then we can flip it to blue, and flip $u$ to red to get a \plusred{0} partition.
Similarly for the blue highlighted vertex.
Assuming they are not disposable, we can again apply \cref{lem:elbow} and reason that the state of the graph resembles \cref{fig:boundary-1-3}. Next, we examine the highlighted vertices of \cref{fig:boundary-1-3}. If the rightmost highlighted vertex is blue, then we can flip $v$ and the vertex below it to obtain a \plusblue{2} partition. Likewise, if either of the left two highlighted vertices is red, then we can obtain a \plusred{2} partition. Thus, we can assume that the left two vertices are blue, and the right vertex is red.

\begin{figure}[H]
    \centering
    \begin{subfigure}[b]{0.3\textwidth}
        \centering
            \includegraphics[width=0.7\linewidth]{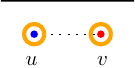}
        \caption{Step 1.}
        \label{fig:boundary-1-1}
    \end{subfigure}
    \hfill
    \begin{subfigure}[b]{0.3\textwidth}
        \centering
            \includegraphics[width=\linewidth]{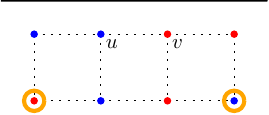}
        \caption{Step 2.}
        \label{fig:boundary-1-2}
    \end{subfigure}
    \hfill
    \begin{subfigure}[b]{0.3\textwidth}
        \centering
            \includegraphics[width=\linewidth]{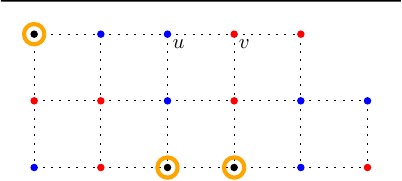}
        
        \caption{Step 3.}
        \label{fig:boundary-1-3}
    \end{subfigure}
    
    \caption{}
    \label{}
\end{figure}

Next, we flip $u$ and the vertex below to red, as depicted in \cref{fig:boundary-1-4}. From arguments similar to \cref{lem:create-island}, this creates two blue regions, one of which is an island. We then have at least one 1-thin structures at the highlighted vertices in \cref{fig:boundary-1-5}. Resolving the 1-thin structure results in a \plusred{1} partition. One such final partition is \cref{fig:boundary-1-6}.

\begin{figure}[H]
    \centering
    \begin{subfigure}[b]{0.3\textwidth}
        \centering
            \includegraphics[width=\linewidth]{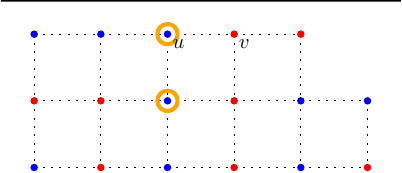}
        \caption{Step 4.}
        \label{fig:boundary-1-4}
    \end{subfigure}
    \hfill
    \begin{subfigure}[b]{0.3\textwidth}
        \centering
            \includegraphics[width=\linewidth]{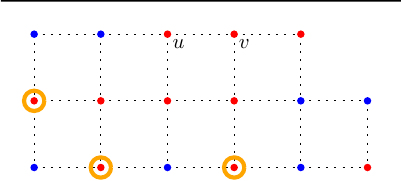}
        \caption{Step 5.}
        \label{fig:boundary-1-5}
    \end{subfigure}
    \hfill
    \begin{subfigure}[b]{0.3\textwidth}
        \centering
            \includegraphics[width=\linewidth]{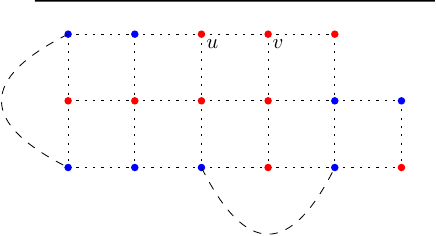}
        \caption{Final state.}
        \label{fig:boundary-1-6}
    \end{subfigure}
    
    \caption{}
    \label{}
\end{figure}

\subsection{Cases 3-5}

We handle Case 3, Case 4, and Case 5 in sequence since they resolve similarly. 

\paragraph{Case 3:} If $u$ is disposable, then we can simply flip $u$ to obtain a \plusred{1} partition. Otherwise, we can reveal more information about the neighborhood of $u$ using \cref{lem:elbow}. Repeating this logic on the highlighted vertex in \cref{fig:boundary-3-2} yields either a \plusred{0} partition or further information. We can once more apply \cref{lem:elbow} on the left highlighted vertex in \cref{fig:boundary-3-3} to get either a \plusred{1} partition, or again reveal more information about the neighboring vertices. Furthermore, if the right highlighted vertex in \cref{fig:boundary-3-3} is red, then we can flip $u$ and the vertex above it to yield a \plusred{2} partition. Thus, we can assume that the right highlighted vertex is blue.

\begin{figure}[H]
    \centering
    \begin{subfigure}[b]{0.3\textwidth}
        \centering
            \includegraphics[width=0.6\linewidth]{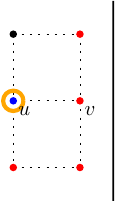}
        \caption{Step 1.}
        \label{fig:boundary-3-1}
    \end{subfigure}
    \hfill
    \begin{subfigure}[b]{0.3\textwidth}
        \centering
            \includegraphics[width=\linewidth]{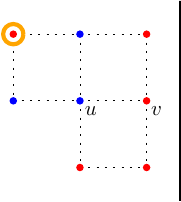}
        \caption{Step 2.}
        \label{fig:boundary-3-2}
    \end{subfigure}
    \hfill
    \begin{subfigure}[b]{0.3\textwidth}
        \centering
            \includegraphics[width=\linewidth]{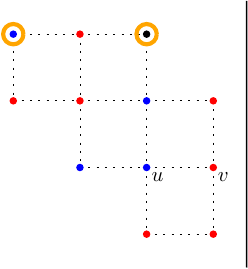}
        \caption{Step 3.}
        \label{fig:boundary-3-3}
    \end{subfigure}
    
    \caption{}
    \label{}
\end{figure}

We next flip the two highlighted vertices in \cref{fig:boundary-3-4}. By \cref{lem:create-island}, this creates two blue regions, at least one of which is an island. If the highlighted vertices in \cref{fig:boundary-3-5} belong to different regions, then we can resolve the resulting 1-thin structure to obtain a \plusred{1} partition. Otherwise, consider the highlighted vertices in \cref{fig:boundary-3-6}. If either were blue, then a 1-thin structure would form that resolves into a \plusred{1} partition. We therefore assume that they are red. 

\begin{figure}[H]
    \centering
    \begin{subfigure}[b]{0.3\textwidth}
        \centering
            \includegraphics[width=\linewidth]{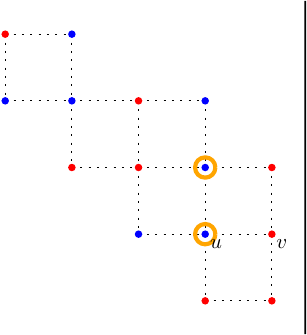}
        \caption{Step 4.}
        \label{fig:boundary-3-4}
    \end{subfigure}
    \hfill
    \begin{subfigure}[b]{0.3\textwidth}
        \centering
            \includegraphics[width=\linewidth]{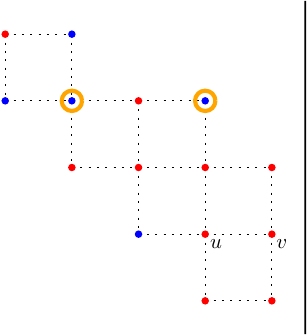}
        \caption{Step 5.}
        \label{fig:boundary-3-5}
    \end{subfigure}
    \hfill
    \begin{subfigure}[b]{0.3\textwidth}
        \centering
            \includegraphics[width=\linewidth]{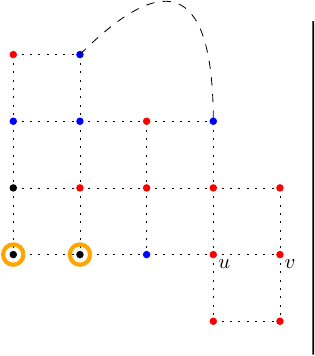}
        \caption{Step 6.}
        \label{fig:boundary-3-6}
    \end{subfigure}
    
    \caption{}
    \label{}
\end{figure}

If the highlighted vertex in \cref{fig:boundary-3-7} was red, then flipping the vertex to the left of $u$ to red
creates a \plusred{3} partition. If not, then we can examine the highlighted vertices in \cref{fig:boundary-3-8}. If any are blue, then we can resolve either a 1-thin or 2-thin structure to produce a \plusred{0} or \plusred{1} partition. If both are red, then we flip the highlighted vertices in \cref{fig:boundary-3-9} to reach a \plusred{0} partition. To justify that the right and bottom-left sets of red vertices in \cref{fig:boundary-3-10} remain connected, note that one of the two blue regions in \cref{fig:boundary-3-8} is an island. By \cref{lem:island-walk}, the island walk around that island guarantees that there remains a path connecting the two sets of red vertices after flipping the vertices in Step 9. A similar idea is used for \cref{fig:branch1-fig4c}.


\begin{figure}[H]
    \centering
    \begin{subfigure}[b]{0.3\textwidth}
        \centering
            \includegraphics[width=\linewidth]{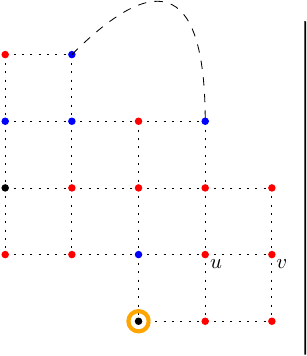}
        \caption{Step 7.}
        \label{fig:boundary-3-7}
    \end{subfigure}
    \hfill
    \begin{subfigure}[b]{0.3\textwidth}
        \centering
            \includegraphics[width=\linewidth]{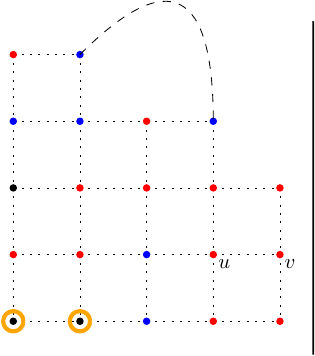}
        \caption{Step 8.}
        \label{fig:boundary-3-8}
    \end{subfigure}
    \hfill
    \begin{subfigure}[b]{0.3\textwidth}
        \centering
            \includegraphics[width=\linewidth]{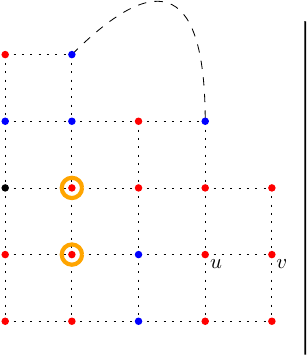}
        \caption{Step 9.}
        \label{fig:boundary-3-9}
    \end{subfigure}
    
    \caption{}
    \label{}
\end{figure}

\begin{figure}[H]
    \centering
    \begin{subfigure}[b]{0.4\textwidth}
        \centering
            \includegraphics[width=\linewidth]{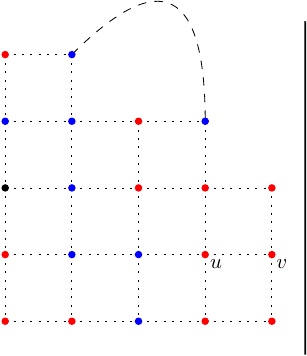}
        \caption{Final state.}
        \label{fig:boundary-3-10}
    \end{subfigure}
    
    \caption{}
    \label{}
\end{figure}

\paragraph{Case 4:} For Case 4, we focus on reaching the same state as in Case 3, after which we can follow the same logic as before. We first examine the highlighted vertex in \cref{fig:boundary-4-1}. If it were disposable, then we can flip that vertex and $u$ to obtain a \plusred{0} partition. If not, then we use \cref{lem:elbow} to reveal more information. We can repeat this line of reasoning on the left highlighted vertex in \cref{fig:boundary-4-2} to obtain either a \plusred{1} partition or more information. We also know that, if the right highlighted vertex in \cref{fig:boundary-4-2} was red, then we can flip $u$ and the vertex above $u$ to obtain a \plusred{2} partition. Thus, we assume that that vertex is blue, resulting in the state depicted in \cref{fig:boundary-4-3}. We next flip both of the highlighted vertices in \cref{fig:boundary-4-3}. Doing so creates two blue regions, one of which must be an island by \cref{lem:create-island}.

\begin{figure}[H]
    \centering
    \begin{subfigure}[b]{0.3\textwidth}
        \centering
            \includegraphics[width=\linewidth]{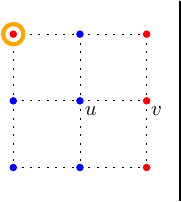}
        \caption{Step 1.}
        \label{fig:boundary-4-1}
    \end{subfigure}
    \hfill
    \begin{subfigure}[b]{0.3\textwidth}
        \centering
            \includegraphics[width=\linewidth]{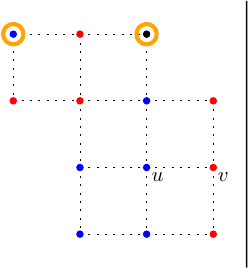}
        \caption{Step 2.}
        \label{fig:boundary-4-2}
    \end{subfigure}
    \hfill
    \begin{subfigure}[b]{0.3\textwidth}
        \centering
            \includegraphics[width=\linewidth]{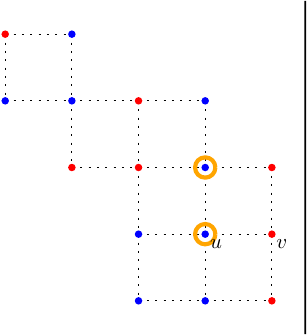}
        \caption{Step 3.}
        \label{fig:boundary-4-3}
    \end{subfigure}
    
    \caption{}
    \label{}
\end{figure}

If the two highlighted vertices in \cref{fig:boundary-4-4} belong to different regions, then we can resolve the 1-thin structure between them to achieve a \plusred{1} partition. If they belong to the same region, then we can examine the highlighted vertices in \cref{fig:boundary-4-5}. If either was blue, then we can resolve a 1-thin structure to get a \plusred{1} partition. If both are red, then we obtain the state in \cref{fig:boundary-4-6}. This is nearly identical to the state in \cref{fig:boundary-3-8}, and we can follow the exact same steps to produce either a \plusred{0} or \plusred{1} partition. One such final partition is shown in \cref{fig:boundary-4-7}.

\begin{figure}[H]
    \centering
    \begin{subfigure}[b]{0.3\textwidth}
        \centering
            \includegraphics[width=\linewidth]{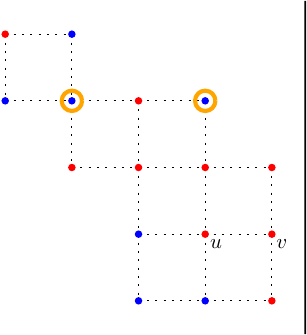}
        \caption{Step 4.}
        \label{fig:boundary-4-4}
    \end{subfigure}
    \hfill
    \begin{subfigure}[b]{0.3\textwidth}
        \centering
            \includegraphics[width=\linewidth]{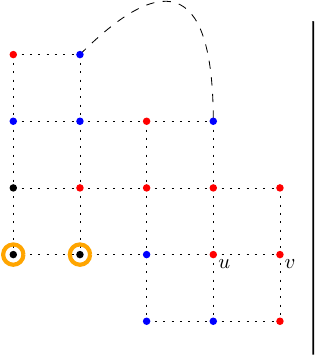}
        \caption{Step 5.}
        \label{fig:boundary-4-5}
    \end{subfigure}
    \hfill
    \begin{subfigure}[b]{0.3\textwidth}
        \centering
            \includegraphics[width=\linewidth]{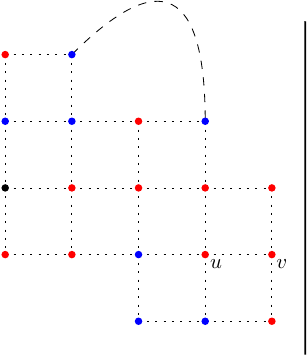}
        \caption{Step 6.}
        \label{fig:boundary-4-6}
    \end{subfigure}
    
    \caption{}
    \label{}
\end{figure}

\begin{figure}[H]
    \centering
    \begin{subfigure}[b]{0.4\textwidth}
        \centering
            \includegraphics[width=\linewidth]{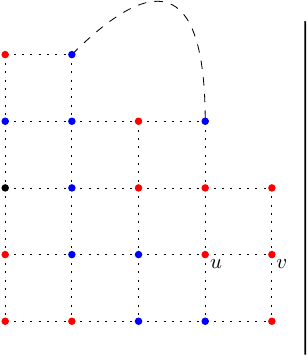}
        \caption{Final state.}
        \label{fig:boundary-4-7}
    \end{subfigure}
    
    \caption{}
    \label{}
\end{figure}

\paragraph{Case 5:} Firstly, if the highlighted vertex in \cref{fig:boundary-5-1} was disposable, then we can flip it to return to Case 4. This would give any of a \minusred{1}, \plusred{0}, or \plusred{1} partition. This allows us to assume that it is not disposable, and so \cref{lem:elbow} tells us more information about the neighboring vertices. Similarly, if the highlighted vertex in \cref{fig:boundary-5-2} was disposable, then we can flip it to red, so that the vertex to the left of $u$ becomes disposable, and by the same argument as before, we obtain a \plusred{0}, \plusred{1}, or \plusred{2} partition. If it is not disposable, then we again apply \cref{lem:elbow} to get \cref{fig:boundary-5-3}.
Next, we flip $u$ to red. By \cref{lem:create-island}, this creates two blue regions, one of which is an island. 

\begin{figure}[H]
    \centering
    \begin{subfigure}[b]{0.3\textwidth}
        \centering
            \includegraphics[width=\linewidth]{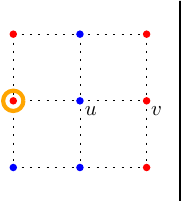}
        \caption{Step 1.}
        \label{fig:boundary-5-1}
    \end{subfigure}
    \hfill
    \begin{subfigure}[b]{0.3\textwidth}
        \centering
            \includegraphics[width=\linewidth]{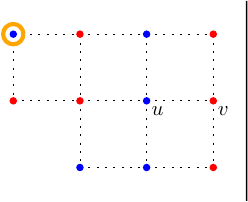}
        \caption{Step 2.}
        \label{fig:boundary-5-2}
    \end{subfigure}
    \hfill
    \begin{subfigure}[b]{0.3\textwidth}
        \centering
            \includegraphics[width=\linewidth]{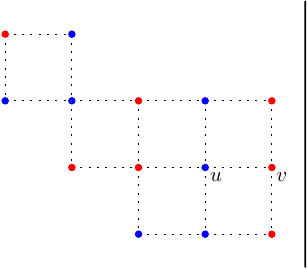}
        \caption{Step 3.}
        \label{fig:boundary-5-3}
    \end{subfigure}
    
    \caption{}
    \label{}
\end{figure}

If the two highlighted vertices in \cref{fig:boundary-5-4} are in different regions, then we can resolve the 1-thin structure between them to make a \plusred{0} partition. Otherwise, we consider the highlighted vertices in \cref{fig:boundary-5-5}. If either was blue, then this creates a 1-thin structure that can be resolved to produce a \plusred{0} partition. We therefore assume that they are red. Next, if both of the pink-highlighted vertices in \cref{fig:boundary-5-6} are red (or don't exist), then we can flip the vertices above them to red to produce a \plusred{3} partition.
Otherwise, one of the pink-highlighted vertices is blue. If any of the yellow-highlighted vertices are also blue, then this creates a 1-thin or 2-thin structure that can be resolved to make a \minusred{1} or \plusred{0} partition. We assume from here that the two yellow-highlighted vertices are red.

\begin{figure}[H]
    \centering
    \begin{subfigure}[b]{0.3\textwidth}
        \centering
            \includegraphics[width=\linewidth]{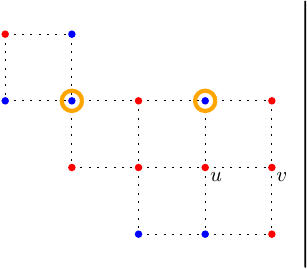}
        \caption{Step 4.}
        \label{fig:boundary-5-4}
    \end{subfigure}
    \hfill
    \begin{subfigure}[b]{0.3\textwidth}
        \centering
            \includegraphics[width=\linewidth]{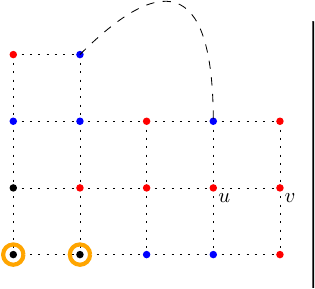}
        \caption{Step 5.}
        \label{fig:boundary-5-5}
    \end{subfigure}
    \hfill
    \begin{subfigure}[b]{0.3\textwidth}
        \centering
            \includegraphics[width=\linewidth]{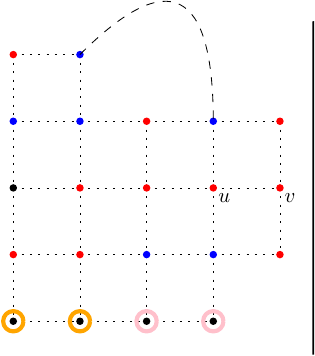}
        \caption{Step 6.}
        \label{fig:boundary-5-6}
    \end{subfigure}
    
    \caption{}
    \label{}
\end{figure}

We lastly flip the two highlighted vertices in \cref{fig:boundary-5-7} to create a \minusred{1} partition. The right and bottom-left sets of red vertices of \cref{fig:boundary-5-8} remain connected, through an argument similar to \cref{fig:boundary-3-9}.

\begin{figure}[H]
    \centering
    \begin{subfigure}[b]{0.4\textwidth}
        \centering
            \includegraphics[width=\linewidth]{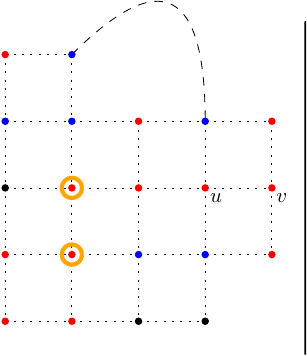}
        \caption{Step 7.}
        \label{fig:boundary-5-7}
    \end{subfigure}
    \hfill
    \begin{subfigure}[b]{0.4\textwidth}
        \centering
            \includegraphics[width=\linewidth]{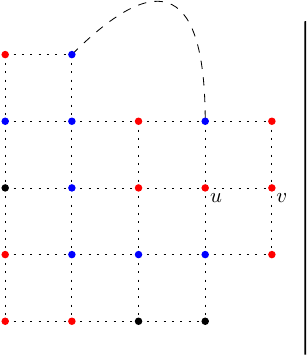}
        \caption{Final state.}
        \label{fig:boundary-5-8}
    \end{subfigure}
    
    \caption{}
    \label{}
\end{figure}

\subsection{Case 8}

We first flip $u$ to red, creating two blue regions including one island as argued by \cref{lem:create-island}. Then, at least one of the highlighted vertices in \cref{fig:boundary-8-2} must be in a 1-thin structure, so we can flip that vertex to get a \plusred{0} partition. A possible final partition is depicted in \cref{fig:boundary-8-3}.

\begin{figure}[H]
    \centering
    \begin{subfigure}[b]{0.3\textwidth}
        \centering
            \includegraphics[width=\linewidth]{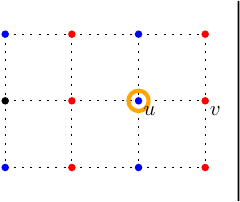}
        \caption{Step 1.}
        \label{fig:boundary-8-1}
    \end{subfigure}
    \hfill
    \begin{subfigure}[b]{0.3\textwidth}
        \centering
            \includegraphics[width=\linewidth]{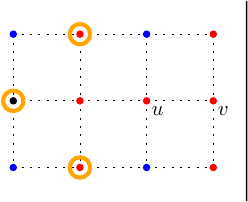}
        \caption{Step 2.}
        \label{fig:boundary-8-2}
    \end{subfigure}
    \hfill
    \begin{subfigure}[b]{0.3\textwidth}
        \centering
            \includegraphics[width=\linewidth]{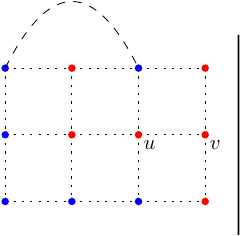}
        \caption{Final state.}
        \label{fig:boundary-8-3}
    \end{subfigure}
    
    \caption{}
    \label{}
\end{figure}

\subsection{Case 9}

We first examine the highlighted vertices in \cref{fig:boundary-9-1}. If one of them was disposable, then we could flip it and flip $u$ to obtain a \plusred{2} partition. We therefore assume that both vertices are not disposable. This can only be the case if neither is adjacent to a border and they are each adjacent to other blue vertices. This state is depicted in \cref{fig:boundary-9-1}. Next, we flip $u$ to red which creates two blue regions. One of these regions in an island by \cref{lem:create-island}. Then, if the two highlighted vertices in \cref{fig:boundary-9-3} are in different regions, then they form a 1-thin structure that we can resolve to make a \plusred{0} partition. We assume then that the two highlighted vertices belong to the same region.

\begin{figure}[H]
    \centering
    \begin{subfigure}[b]{0.3\textwidth}
        \centering
            \includegraphics[width=\linewidth]{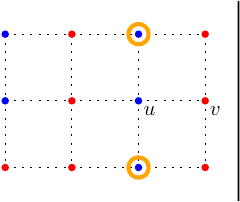}
        \caption{Step 1.}
        \label{fig:boundary-9-1}
    \end{subfigure}
    \hfill
    \begin{subfigure}[b]{0.3\textwidth}
        \centering
            \includegraphics[width=\linewidth]{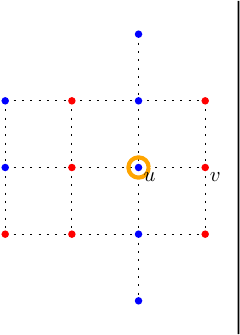}
        \caption{Step 2.}
        \label{fig:boundary-9-2}
    \end{subfigure}
    \hfill
    \begin{subfigure}[b]{0.3\textwidth}
        \centering
            \includegraphics[width=\linewidth]{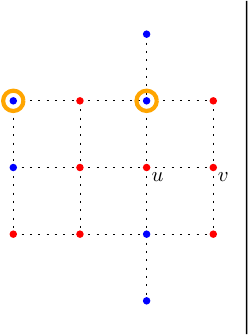}
        \caption{Step 3.}
        \label{fig:boundary-9-3}
    \end{subfigure}
    
    \caption{}
    \label{}
\end{figure}

If the highlighted vertex in \cref{fig:boundary-9-4} is blue, then there is a 1-thin structure that resolves to form a \plusred{0} partition. If the vertex is red, then we flip the highlighted vertices in \cref{fig:boundary-9-5}. Note that at least one of the blue regions remains an island. This means that flipping the highlighted vertices in \cref{fig:boundary-9-6} does not disconnect the red region, as discussed for \cref{fig:boundary-3-9}. Thus, we get a \plusred{0} partition as shown in \cref{fig:boundary-9-7}.

\begin{figure}[H]
    \centering
    \begin{subfigure}[b]{0.3\textwidth}
        \centering
            \includegraphics[width=\linewidth]{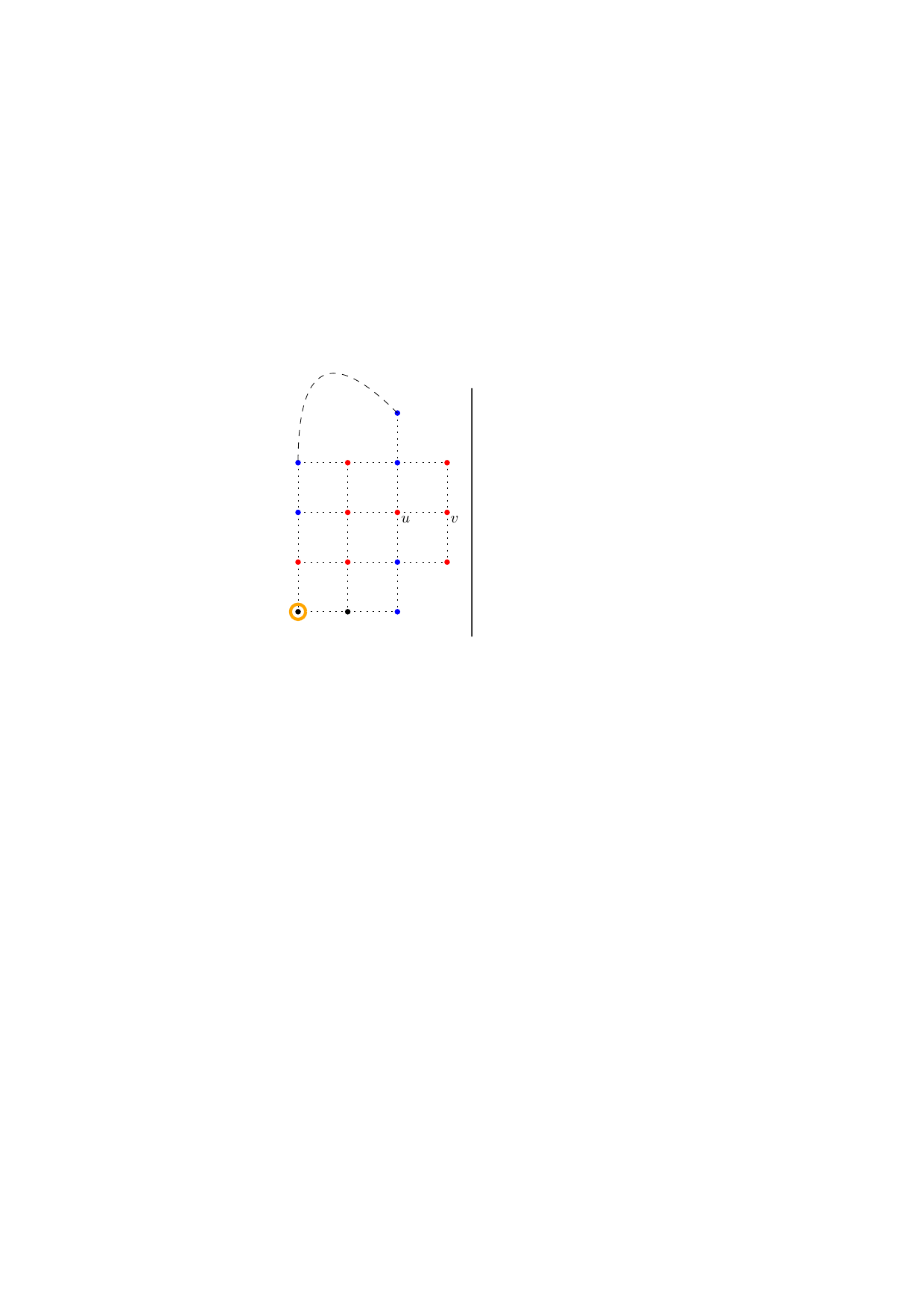}
        \caption{Step 4.}
        \label{fig:boundary-9-4}
    \end{subfigure}
    \hfill
    \begin{subfigure}[b]{0.3\textwidth}
        \centering
            \includegraphics[width=\linewidth]{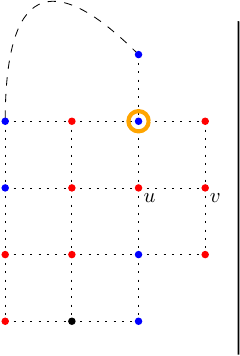}
        \caption{Step 5.}
        \label{fig:boundary-9-5}
    \end{subfigure}
    \hfill
    \begin{subfigure}[b]{0.3\textwidth}
        \centering
            \includegraphics[width=\linewidth]{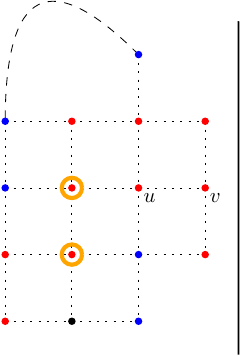}
        \caption{Step 6.}
        \label{fig:boundary-9-6}
    \end{subfigure}
    
    \caption{}
    \label{}
\end{figure}

\begin{figure}[H]
    \centering
    \begin{subfigure}[b]{0.4\textwidth}
        \centering
            \includegraphics[width=\linewidth]{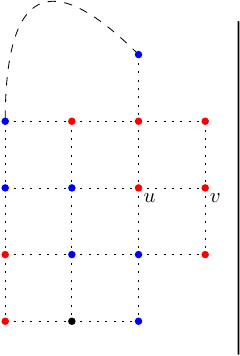}
        \caption{Final state.}
        \label{fig:boundary-9-7}
    \end{subfigure}
    
    \caption{}
    \label{}
\end{figure}

\subsection{Cases 10 and 11}

\paragraph{Case 10:} If the highlighted vertex in \cref{fig:boundary-10-1} is disposable, then we can flip it to reduce to Case 5. This produces any of a \minusred{2}, \minusred{1}, \plusred{0}, or \plusred{1} partition. 
If it is not disposable, the coloring must look like \cref{fig:boundary-10-2}, since if the highlighted vertex in \cref{fig:boundary-10-1} had a blue vertex (or nothing else) above it, it would have been disposable. Similarly, if it had a red vertex above it, but the vertex above and to the left of it was red, it would have been disposable.
We can use \cref{lem:elbow} to reveal more information for the left highlighted vertex in \cref{fig:boundary-10-2} to similarly conclude that if it is disposable, we can flip it to red, flip the highlighted vertex in \cref{fig:boundary-10-1} to blue, and then go back to a previous case can produce any of a \minusred{1}, \plusred{0}, \plusred{1}, or \plusred{2} partition. Otherwise, that vertex is not disposable and so we can reveal more information using \cref{lem:elbow}. Furthermore, if the right highlighted vertex in \cref{fig:boundary-10-2} is red, then we can flip $u$ and the vertex above it to obtain a \plusred{2} partition. We can therefore assume that that vertex is blue. We thus get the coloring in \cref{fig:boundary-10-3}. We next flip $u$, creating two blue regions. We know by \cref{lem:create-island} that one of these regions is an island.

\begin{figure}[H]
    \centering
    \begin{subfigure}[b]{0.3\textwidth}
        \centering
            \includegraphics[width=\linewidth]{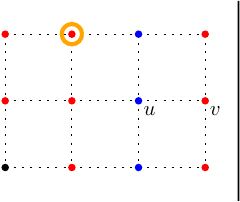}
        \caption{Step 1.}
        \label{fig:boundary-10-1}
    \end{subfigure}
    \hfill
    \begin{subfigure}[b]{0.3\textwidth}
        \centering
            \includegraphics[width=\linewidth]{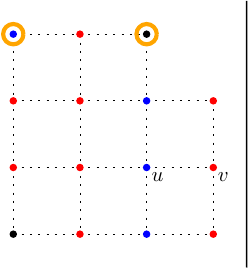}
        \caption{Step 2.}
        \label{fig:boundary-10-2}
    \end{subfigure}
    \hfill
    \begin{subfigure}[b]{0.3\textwidth}
        \centering
            \includegraphics[width=\linewidth]{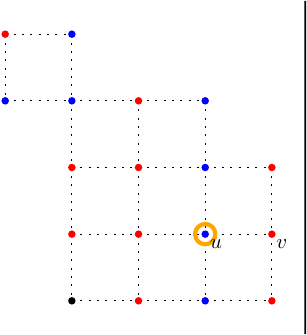}
        \caption{Step 3.}
        \label{fig:boundary-10-3}
    \end{subfigure}
    
    \caption{}
    \label{}
\end{figure}

If the two highlighted vertices in \cref{fig:boundary-10-4} are in different regions, then they form a 1-thin structure that can be flipped to produce a \plusred{0} partition. If they are in the same region, then we examine the three highlighted vertices in \cref{fig:boundary-10-5}. In the case where any of them are blue (from either region), we can resolve a 1-thin structure or 2-thin structure to form a \plusred{0} partition or a \minusred{1} partition respectively. Thus, we assume that the vertices are red. Next, we examine the two highlighted vertices in \cref{fig:boundary-10-6}. If the right vertex is red (or does not exist), then we can flip the vertex below $u$ to obtain a \plusred{2} partition. If the left vertex is blue, then we can flip it to red since it is disposable (\cref{obs:degree1-disposable}). Thus, we assume that the left vertex is red and the right vertex is blue. 

\begin{figure}[H]
    \centering
    \begin{subfigure}[b]{0.3\textwidth}
        \centering
            \includegraphics[width=\linewidth]{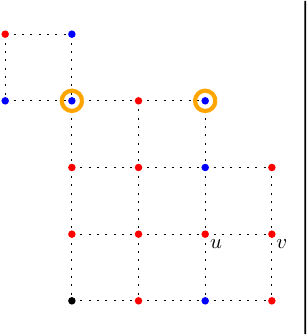}
        \caption{Step 4.}
        \label{fig:boundary-10-4}
    \end{subfigure}
    \hfill
    \begin{subfigure}[b]{0.3\textwidth}
        \centering
            \includegraphics[width=\linewidth]{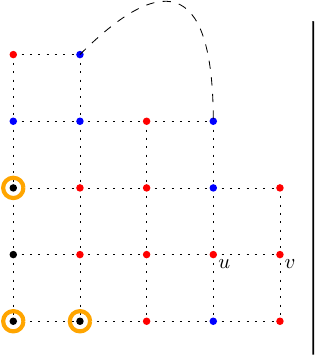}
        \caption{Step 5.}
        \label{fig:boundary-10-5}
    \end{subfigure}
    \hfill
    \begin{subfigure}[b]{0.3\textwidth}
        \centering
            \includegraphics[width=\linewidth]{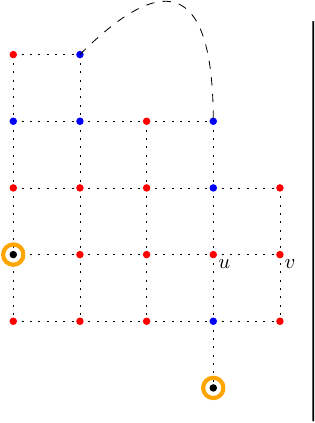}
        \caption{Step 6.}
        \label{fig:boundary-10-6}
    \end{subfigure}
    
    \caption{}
    \label{}
\end{figure}

We flip the highlighted vertex in \cref{fig:boundary-10-7}. Then, if the highlighted vertex in \cref{fig:boundary-10-8} is blue, then we can resolve a 1-thin or 2-thin structure to obtain a \minusred{2} partition. If it is red, then we can flip the highlighted vertex in \cref{fig:boundary-10-9}. We note that at least one of the blue regions remains an island.

\begin{figure}[H]
    \centering
    \begin{subfigure}[b]{0.3\textwidth}
        \centering
            \includegraphics[width=\linewidth]{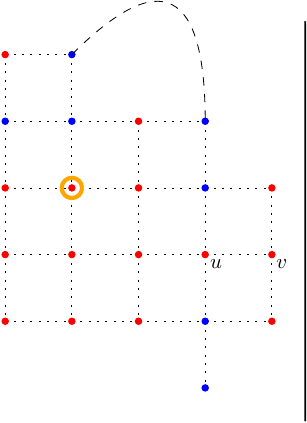}
        \caption{Step 7.}
        \label{fig:boundary-10-7}
    \end{subfigure}
    \hfill
    \begin{subfigure}[b]{0.3\textwidth}
        \centering
            \includegraphics[width=\linewidth]{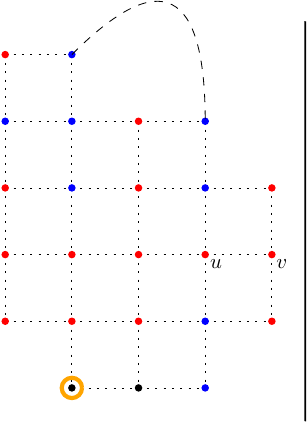}
        \caption{Step 8.}
        \label{fig:boundary-10-8}
    \end{subfigure}
    \hfill
    \begin{subfigure}[b]{0.3\textwidth}
        \centering
            \includegraphics[width=\linewidth]{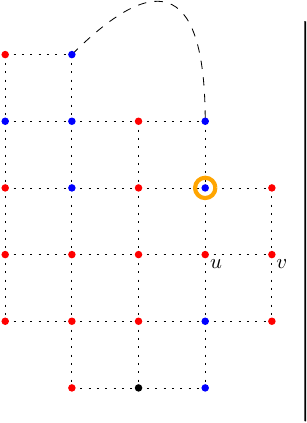}
        \caption{Step 9.}
        \label{fig:boundary-10-9}
    \end{subfigure}
    
    \caption{}
    \label{}
\end{figure}

We lastly flip all of the highlighted vertices in \cref{fig:boundary-10-10}, with the red region remaining connected as argued for \cref{fig:boundary-3-9}. This achieves a \minusred{2} partition, and the final state is depicted in \cref{fig:boundary-10-11}.

\begin{figure}[H]
    \centering
    \begin{subfigure}[b]{0.38\textwidth}
        \centering
            \includegraphics[width=\linewidth]{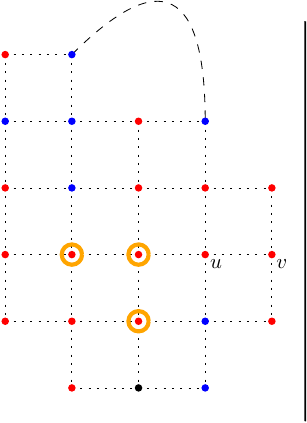}
        \caption{Step 10.}
        \label{fig:boundary-10-10}
    \end{subfigure}
    \hfill
    \begin{subfigure}[b]{0.38\textwidth}
        \centering
            \includegraphics[width=\linewidth]{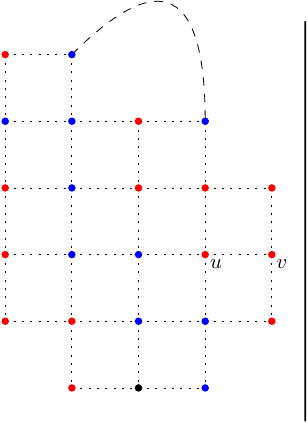}
        \caption{Final state.}
        \label{fig:boundary-10-11}
    \end{subfigure}
    
    \caption{}
    \label{}
\end{figure}

\paragraph{Case 11:} Case 11 can be easily reduced to Case 10 by flipping the highlighted vertex in \cref{fig:boundary-11-1}. This produces any of a \minusred{1}, \plusred{0}, \plusred{1}, \plusred{2}, or \plusred{3} partition.

\begin{figure}[H]
    \centering
    \begin{subfigure}[b]{0.4\textwidth}
        \centering
            \includegraphics[width=\linewidth]{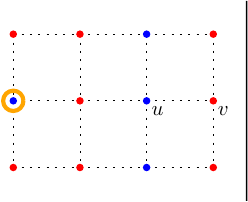}
        \caption{Step 1.}
        \label{fig:boundary-11-1}
    \end{subfigure}
    \hfill
    \begin{subfigure}[b]{0.4\textwidth}
        \centering
            \includegraphics[width=\linewidth]{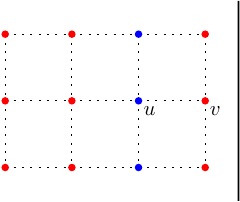}
        \caption{Step 2.}
        \label{fig:boundary-11-2}
    \end{subfigure}
    
    \caption{}
    \label{}
\end{figure}

\subsection{Case 12}

For Case 12, we primarily reduce down to the other cases whenever possible. To begin, if either of the highlighted vertices in \cref{fig:boundary-12-1} is disposable, then we can flip that vertex to go to Case 4. This produces any of a \minusred{1}, \plusred{0}, or \plusred{1} partition. If neither of them are disposable, then we can use \cref{lem:elbow} to reveal more information. Next, if the left highlighted vertex in \cref{fig:boundary-12-2} is red, then we can flip the vertex to the left of $u$ to reduce to Case 10. This creates any of a \minusred{3}, \minusred{2}, \minusred{1}, \plusred{0}, or \plusred{1} partition. Similarly, if either the top or bottom highlighted vertex in \cref{fig:boundary-12-2} is red, then we can flip the adjacent blue vertex to red to reduce to Case 3. This results in any of a \minusred{1}, \plusred{0}, or \plusred{1} partition. Thus, we assume that each of these highlighted vertices are blue. We next flip each of the highlighted vertices in \cref{fig:boundary-12-3}. By \cref{lem:create-island}, this creates three blue regions, of which two are islands.

\begin{figure}[H]
    \centering
    \begin{subfigure}[b]{0.3\textwidth}
        \centering
            \includegraphics[width=\linewidth]{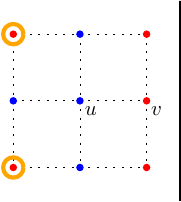}
        \caption{Step 1.}
        \label{fig:boundary-12-1}
    \end{subfigure}
    \hfill
    \begin{subfigure}[b]{0.3\textwidth}
        \centering
            \includegraphics[width=\linewidth]{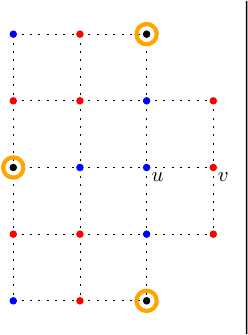}
        \caption{Step 2.}
        \label{fig:boundary-12-2}
    \end{subfigure}
    \hfill
    \begin{subfigure}[b]{0.3\textwidth}
        \centering
            \includegraphics[width=\linewidth]{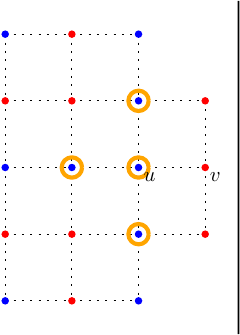}
        \caption{Step 3.}
        \label{fig:boundary-12-3}
    \end{subfigure}
    
    \caption{}
    \label{}
\end{figure}

Lastly, we note that at least two of the highlighted vertices in \cref{fig:boundary-12-4} are in 1-thin structures. These structures can be flipped to create a \plusred{2} partition. An example of a final state is shown in \cref{fig:boundary-12-5}.

\begin{figure}[H]
    \centering
    \begin{subfigure}[b]{0.4\textwidth}
        \centering
            \includegraphics[width=\linewidth]{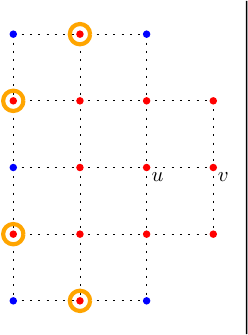}
        \caption{Step 4.}
        \label{fig:boundary-12-4}
    \end{subfigure}
    \hfill
    \begin{subfigure}[b]{0.4\textwidth}
        \centering
            \includegraphics[width=\linewidth]{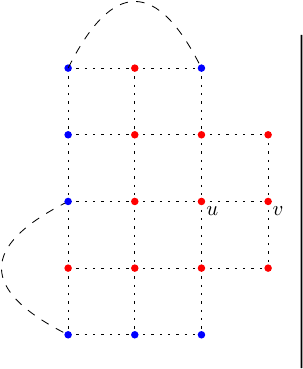}
        \caption{Final state.}
        \label{fig:boundary-12-5}
    \end{subfigure}
    
    \caption{}
    \label{}
\end{figure}

\end{document}